\documentclass[a4paper,UKenglish,leqno]{article}
\usepackage[utf8]{inputenc}
\usepackage{authblk}

\usepackage[a4paper, top=25.4mm, bottom=25.4mm, left=25.4mm, right=25.4mm, includefoot]{geometry}
\usepackage{amsmath,amssymb,mathtools}
\usepackage{enumitem}
\usepackage{tabularx}

\usepackage{xcolor}
\definecolor{dark-blue}{rgb}{0.05,0.25,0.85}
\usepackage{caption}

\usepackage{amsthm}

\usepackage{hyperref}
\hypersetup{colorlinks=true,linkcolor=dark-blue,citecolor=dark-blue}
\usepackage[capitalise,nameinlink]{cleveref}

\newtheorem{theorem}{Theorem}[section]
\crefname{theorem}{Theorem}{Theorems}
\newtheorem{lemma}[theorem]{Lemma}
\crefname{lemma}{Lemma}{Lemmas}
\newtheorem{corollary}[theorem]{Corollary}
\crefname{corollary}{Corollary}{Corollaries}
\newtheorem{observation}[theorem]{Observation}
\crefname{observation}{Observation}{Observations}
\newtheorem{conjecture}[theorem]{Conjecture}
\crefname{conjecture}{Conjecture}{Conjectures}

\theoremstyle{definition}
\newtheorem{definition}[theorem]{Definition}
\crefname{definition}{Definition}{Definitions}
\newtheorem{example}[theorem]{Example}
\crefname{example}{Example}{Examples}

\theoremstyle{remark}
\newtheorem*{remark}{Remark}
\newtheorem{claim}{Claim}[theorem]
\crefname{claim}{Claim}{Claims}

\newenvironment{claimproof}{\begin{proof}[Proof.]}{\end{proof}}

\crefformat{enumi}{#2#1#3}
\Crefformat{enumi}{#2#1#3}

\newcommand{\crefthm}[2]{%
  \hyperref[#2]{Part \ref*{#1:#2} of \cref{#1}}%
}

\newcommand{\crefdef}[2]{%
  \hyperref[#2]{\cref{#1}~\ref{#1:#2}}%
}

\usepackage{imakeidx}

\makeindex
\makeindex[name=Symbol,title=Symbols]

\newcommand{\Symbol}[1]{\index[Symbol]{#1}}

\usepackage{tikz}
\usetikzlibrary{arrows}
\usetikzlibrary{shapes.geometric}
\usetikzlibrary{positioning,arrows.meta}
\usetikzlibrary{decorations.pathmorphing}

\newcommand{\N}{\mathbb{N}}

\newcommand{\AAA}{\mathcal{A}}
\newcommand{\BBB}{\mathcal{B}}
\newcommand{\CCC}{\mathcal{C}}
\newcommand{\DDD}{\mathcal{D}}
\newcommand{\FFF}{\mathcal{F}}
\newcommand{\GGG}{\mathcal{G}}
\newcommand{\HHH}{\mathcal{H}}
\newcommand{\KKK}{\mathcal{K}}
\newcommand{\LLL}{\mathcal{L}}
\newcommand{\MMM}{\mathcal{M}}
\newcommand{\PPP}{\mathcal{P}}
\newcommand{\SSS}{\mathcal{S}}
\newcommand{\TTT}{\mathcal{T}}

\newcommand{\problemdef}[3]{
  \begin{center}
    \begin{minipage}{0.95\textwidth}
      \noindent
         \vspace{-0.5em}\colorbox{gray!8!white}{\textsc{#1}}

      \vspace{4pt}
      \setlength{\tabcolsep}{3pt}
      \begin{tabularx}{\textwidth}{@{}lX@{}}
        \textbf{Input:} 		& #2 \\
        \textbf{Question:} 	& #3
      \end{tabularx}
    \end{minipage}
  \end{center}
 }

\newcommand*{\Abs}[1]{| #1 |}
\newcommand\restr[2]{{%
  \left.\kern-\nulldelimiterspace
  #1
  \vphantom{|}
  \right|_{\text{\scriptsize $#2$}}
  }}

\newcommand{\sth}{\mathrel : }

\newcommand{\head}{\operatorname{head}}
\newcommand{\tail}{\operatorname{tail}}
\newcommand{\leaves}[1]{\mathrm{leaves}(#1)}
\newcommand{\loops}{\mathrm{loops}}
\newcommand{\spl}{\operatorname{split}}

\newcommand{\insc}{\operatorname{ins}}
\newcommand{\outc}{\operatorname{outs}}
\newcommand{\LL}{\mathfrak{L}}
\newcommand{\MM}{\mathfrak{M}}
\newcommand{\m}{\mathfrak{m}}
\newcommand{\n}{\mathfrak{n}}
\newcommand{\dom}{\operatorname{dom}}
\newcommand{\stitch}{\operatorname{stitch}}
\newcommand{\rip}{\operatorname{rip}}
\newcommand{\knit}{\operatorname{knit}}
\newcommand{\torso}{\operatorname{torso}}

\newcommand{\cw}[1]{\mathrm{cw}(#1)}
\newcommand{\tw}[1]{\mathrm{tw}(#1)}
\newcommand{\w}{\operatorname{width}}
\newcommand{\rt}{\operatorname{root}}

\newcommand{\lenks}[1]{\mathrm{left}(#1)}
\newcommand{\riets}[1]{\mathrm{right}(#1)}

\author[1]{Dario Cavallaro}
\author[2,3]{Ken-ichi Kawarabayashi}
\author[1]{Stephan Kreutzer}

\affil[1]{\small Technische Universität Berlin}
\affil[2]{\small National Institute of Informatics, Japan}
\affil[3]{ \small The University of Tokyo, Japan}

\begin{document}
\title{Well-Quasi-Ordering Eulerian Digraphs: Bounded~Carving~Width}
\date{}
\maketitle
\begin{abstract}
    We prove that every class of Eulerian directed graphs of bounded carving width (equivalently of bounded degree and treewidth) is well-quasi-ordered by strong immersion. In fact, we prove a stronger result, namely that every class of Eulerian directed graphs of bounded carving width, where every vertex is additionally labeled from a well-quasi-order, fixes a linear order on its incident edges, and may impose further restrictions on how the immersion is allowed to route paths through it, is well-quasi-ordered by an adequate notion of strong immersion. To this extent, we develop a framework seemingly suited to prove well-quasi-ordering for classes of Eulerian directed graphs by (strong) immersion and present a first meta theorem in that direction.
    We complement our results by observing that the class of Eulerian directed graphs of unbounded degree is \emph{not} well-quasi-ordered by \emph{strong} immersion, even if we assume the treewidth of the class to be at most two. We conclude with a dichotomy result, proving for a very restricted class of Eulerian directed graphs of unbounded degree that it is not well-quasi-ordered by strong immersion, but it is well-quasi-ordered by weak immersion.
\end{abstract}
\section{Introduction}
Well-quasi-orderings of mathematical objects have been studied for many decades in various mathematical as well as computer science research areas \cite{kruskal72,wqo_lang,Higman1952,wqo_survey}. In general, given a set of objects~$V$ and a \emph{binary relation}~$\preceq$ on $V\times V$ that is reflexive and transitive, we call $(V,\preceq)$ a \emph{quasi-order}. We call an infinite sequence $(e_i)_{i \in \N}$ a \emph{chain} if for all $i \leq j$ it holds $e_i \preceq e_j$. We call it an \emph{antichain} if for all $i,j \in \N$ the elements $e_i,e_j$ are incomparable with respect to $\preceq$. Finally,~$V$ is called \emph{well-quasi-ordered by $\preceq$} if for every infinite sequence~$e_1,e_2,\ldots$ of objects~$e_i \in V$, there exist $j_1,j_2 \geq 1$ with~$j_1 < j_2 $ such that~$e_{j_1} \preceq e_{j_2}$. In this case we refer to $(V,\preceq)$ as a \emph{well-quasi-order}. 

    In 1960, Kruskal \cite{Kru60} proved one of the first impactful graph-theoretic results regarding well-quasi-ordering, namely that trees are well-quasi-ordered by topological containment\footnote{The exact statement is stronger and more complex, and so is its proof.}. Later, Nash-Williams \cite{nash63} provided a simplified proof that finite trees are well-quasi-ordered by topological containment. Since then a whole field of research emerged, trying to identify classes of graphs (or related data structures) that are well-quasi-ordered with respect to some well-known ``containment relation''. In the late $20$th century, Robertson and Seymour finally gave a proof of Wagner's Conjecture \cite{GMXX}, stating that undirected graphs are well-quasi-ordered by the \emph{minor} relation---$H$ is a \emph{minor} of $G$ if it can be obtained from a subgraph of $G$ by contracting edges---marking a major breakthrough in the area. The proof of this result culminated in the creation of a whole field of research: Graph minor theory, laying the groundwork for more general structural graph theory. A core idea of the proof is to decompose graphs into ``structurally well-behaved'' pieces forming a ``tree-like'' structure. Once given such a tree-like structure called \emph{tree decomposition}, Robertson and Seymour showed that one may lift the simplified proof due to Nash-Williams for well-quasi-ordering trees, which in a nutshell shows that it suffices to prove that the ``structurally well-behaved'' pieces are well-quasi-ordered by the minor relation. Specifically, in \cite{GMIV} Robertson and Seymour prove a generalisation of Nash-Williams' result for trees with bounded edge-weights satisfying a certain linkedness property, and whose vertices are labelled from a well-quasi-order, essentially representing the ``structurally well-behaved'' pieces. Intuitively speaking, they leverage Nash-Williams' proof to develop a framework that can be used to prove well-quasi-ordering for data structures encoded in trees in a fairly general way. In particular, graphs of bounded treewidth can be encoded using their framework, and they use it to prove that graphs of bounded treewidth are well-quasi-ordered by the minor relation. Similarly, as alluded to above, the graph minor structure theorem\footnote{Technically speaking, a different variant of encoding in trees is used, i.e., the ``tree of tangles''.} allows to decompose graphs via tree decompositions into pieces which are either of bounded treewidth or ``structurally well-behaved". This can again be thought of as a labelled tree with bounded edge-weights (the adhesion) and labelled vertices. A lot of effort is then put into proving that their framework may again be applied, culminating in a proof of Wagner's Conjecture. This highlights that the fact that their framework works for graphs of bounded treewidth  with vertices labelled from a well-quasi-order (and additional constraints) marks not only a base case, but a fundamental tool for the general proof of Wagner's Conjecture. 
    
\smallskip

Besides the graph minor relation, another prominent relation between graphs is the \emph{immersion} relation. 
We say that a graph~$H$ \emph{immerses} in a graph~$G$ if there exists a map~$\gamma : H \to \{P \mid P \subseteq G\}$ such that~$\restr{\gamma}{V(H)}\colon V(H) \to V(G)$ is injective and~$\restr{\gamma}{E(H)} \colon E(H) \to \{P \mid P \text{ is a trail in } G\}$ guarantees that~$P \coloneqq \gamma(\{u,v\})$ is a trail starting in~$\gamma(u)$ and ending in~$\gamma(v)$, and such that for two distinct edges $e_1,e_2 \in E(H)$, the trails $\gamma(e_1),\gamma(e_2)$ are edge-disjoint. We call $\gamma$ an \emph{immersion}\footnote{Note that immersion is often defined via mapping edges to paths; both notions are equivalent on (directed) graphs.}. 

The immersion $\gamma$ is \emph{strong} if no trail $\gamma(e)$ contains a vertex of $\gamma(V(H))$ as an internal vertex. We write $\gamma\colon H \hookrightarrow G$ to mean that $\gamma$ strongly immerses $H$ in $G$, and $\gamma: H\hookrightarrow^* G$ to mean that $\gamma$ immerses $H$ in $G$; we may call the latter \emph{weak} immersion, to emphasise that it may not be strong. Immersions (weak and strong) are defined analogously for directed graphs, mapping directed edges to directed trails.  

Later on, still part of their seminal work on graph minors, Robertson and Seymour derived Nash-William's Conjecture from their work. That is, they  proved that undirected graphs are well-quasi-ordered by \emph{weak} immersion \cite{GMXXIII}, marking another breakthrough in the area. Their proof does not use \emph{strong} immersions\footnote{Note that their results imply a proof for strong immersions on graph classes of bounded degree.} and, as Robertson and Seymour discussed in \cite{GMXXIII}, their framework cannot easily be adapted to prove well-quasi-ordering of undirected graphs by \emph{strong} immersion. In fact, they claim that a proof of the strong immersion result is way more complicated: ``\textit{It seemed to us at one time that we had a proof of the stronger [immersion conjecture], but even if it was correct it was very much more complicated, and it is unlikely that we will write it down}'' \cite[p2)]{GMXXIII}. As far as we are aware, the strong immersion conjecture is still open. We want to emphasise that their proof for well-quasi-ordering graphs of bounded treewidth, does imply well-quasi-ordering for bounded treewidth graphs by \emph{weak} immersion, but it does not naturally lift to a proof for \emph{strong} immersion. In fact, we are not aware of a result in the literature proving that graphs of bounded treewidth are well-quasi-ordered by strong immersion, highlighting once more the importance of results and new frameworks in the area.

\paragraph{Directed Graphs.} After proving Wagner's and Nash-Williams' Conjecture, Robertson and Seymour turned their interest to directed graphs (also referred to as \emph{digraphs}). For, it is natural to ask ``What are good relations suited for well-quasi-ordering directed graphs?''. Together with Johnson and Thomas, they introduced a notion of \emph{directed treewidth} \cite{dtw_def} seemingly suited to their needs---but unfortunately the natural extension of minors to digraphs, which is \emph{butterfly minors}, does \emph{not} yield a well-quasi-ordering for digraphs, not even on digraphs of low directed treewidth (we will elaborate on this below). 

It quickly turned out that directed graphs behave very differently from undirected graphs: For most well-studied containment relations on digraphs, counterexamples to well-quasi-ordering have been found. 
Liu and Muzi \cite{alternatingpaths} proved that digraphs are not well-quasi-ordered by immersion nor by strong immersion. Their counterexample is a growing sequence of slightly modified alternating paths (see the thick highlighted part of \cref{fig:antichain} for an illustration), which is a counterexample to well-quasi-ordering with respect to many more minor relations on digraphs including the above mentioned butterfly minor relation. In the same paper they prove that alternating paths are, in a sense, the relevant obstruction to well-quasi-ordering with respect to strong immersion, for they prove that the class of digraphs \emph{not} containing long alternating paths is indeed well-quasi-ordered by the strong immersion relation. It turns out that the same observation can be made for butterfly-minors \cite{muzi2017paths}, i.e., classes of digraphs that do not admit ``long'' alternating paths are well-quasi-ordered by butterfly-minors. Unfortunately, not containing long alternating paths is a very strict restriction for digraphs, in particular such graphs must have low directed and even low undirected treewidth \cite{alternatingpaths,KreutzerT2012}. 

Besides this, only a few positive results on well-quasi-ordering directed graphs seem to be known. One of the most prominent ones, due to Chudnovsky and Seymour \cite{tournaments}, states that \emph{tournaments}---graphs obtained from orienting edges of undirected cliques---are well-quasi-ordered by strong immersion; note that tournaments may contain arbitrarily long alternating paths. This result was later extended to semi-complete digraphs by Barbero, Paul, and Pilipczuk \cite{BarberoPP2019}.

\paragraph{Eulerian Digraphs}
A class of directed graphs that lies somewhat between general directed and undirected graphs is the class of \emph{Eulerian digraphs}. Given a digraph $G$ we call a vertex $v \in V(G)$ \emph{Eulerian}, if its in-degree equals its out-degree. Then, a digraph $G$ is called \emph{Eulerian} if every vertex in $V(G)$ is Eulerian\footnote{It is common in the literature to assume Eulerian digraphs to be weakly connected. We do not impose this here, but mention that throughout most of the paper we will assume our digraphs to be weakly connected, as the results easily transfer to connected components.}. Equivalently, $G$ is the union of a set of pairwise edge-disjoint directed cycles. Eulerian digraphs have many nice properties, making them a particularly interesting class of digraphs to study; we will discuss some of those in \cref{sec:preliminaries}. See \cite[Chapter 4]{Bang-JensenG2018} for an introduction to Eulerian digraphs, as well as \cite{Frank_2path,frank95,EDP_Euler,medina2019well,Johnson2002} for further structural results regarding Eulerian digraphs. 

A noteworthy structural result tying Eulerian digraphs to undirected digraphs is the following: It turns out that when restricting to a class $\CCC$ of Eulerian digraphs of \emph{bounded degree}, then the undirected and directed treewidth are qualitatively equivalent \cite{dtw_def}. This is in stark contrast to general digraphs; consider an acyclic orientation of a large clique. This is another indication that, at least structurally speaking, Eulerian digraphs seem to be closely related to undirected digraphs. 

It turns out that restricting to classes of bounded degree is crucial when working with strong immersion on Eulerian digraphs. On the one hand note that when working with the general class of Eulerian digraphs, one must loosen the assumptions in the edge-disjoint paths problem to allow for the paths to pass through terminal vertices to get an $FPT$-algorithm\footnote{This is a common assumption, even in the undirected setting.} as discussed in \cite{EDP_Euler}. This artefact disappears when working with classes of bounded degree. On the other hand, our first result is negative in the sense that it yields a counterexample to well-quasi-ordering Eulerian digraphs by strong immersion. The following construction relies on the aforementioned construction due to Liu and Muzi \cite{alternatingpaths}, leveraging that alternating paths are not well-quasi-ordered by strong immersion. 
\begin{figure}
        \centering
        \begin{tikzpicture}[>=Stealth]
            \tikzstyle{vertex}=[circle, fill=black, inner sep=2pt]
            \node[vertex, fill=gray, label=above:{\small $b^k$}] (b) at (0,2) {};
            \node[vertex,label=left:{\small $v^k_0$}] (l) at (-3,0) {};
            \node[vertex,label=right:{\small $v^k_{2k}$}] (r) at (3,0) {};
            \node[vertex] (v1) at (-2,0) {};
            \node[vertex] (v2) at (-0.5,0) {};
            \node[vertex] (v3) at (0.5,0) {};
            \node[vertex] (v4) at (2,0) {};
            \node[vertex,label=left:{\small $l^k_1$}] (l1) at (-3.5,0.5) {};
            \node[vertex,label=left:{\small $l^k_2$}] (l2) at (-3.5,-0.5) {};
            \node[vertex,label=right:{\small $r^k_1$}] (r1) at (3.5,0.5) {};
            \node[vertex,label=right:{\small $r^k_2$}] (r2) at (3.5,-0.5) {};
            \foreach \s/\t in {l2/l,l1/l,r1/r,r2/r,l/v1,v2/v1,r/v4,v3/v4}
            \draw[very thick,->] (\s) to (\t) ;
            \draw[very thick,dotted] (v2) to (v3);
         \foreach \s/\t in {b/v2,b/v3,v1/b,v4/b}
         {
            \draw[thick,->,gray,bend right=8] (\s) to (\t) ;
            \draw[thick,->,gray,bend left=8] (\s) to (\t); }
         \draw[thick,->,gray,bend right=8] (r) to (b) ;
         \draw[thick,->,gray,bend left=8] (l) to (b) ;
         \draw[thick,->,gray,bend left=8] (b) to (r1) ;
         \draw[thick,->,gray,bend right=8] (b) to (l1) ;
         \draw[thick,->,gray,bend left=85] (b) to (r2) ;
         \draw[thick,->,gray,bend right=85] (b) to (l2) ;
        \end{tikzpicture}
        \caption{An infinite antichain of Eulerian digraphs $G_k$ with respect to strong immersion. The thick highlighted subgraph forms an infinite antichain for digraphs with respect to strong immersions.}
        \label{fig:antichain}
    \end{figure}
\begin{observation}\label{thm:antichain}
    The class of Eulerian digraphs is not well-quasi-ordered by strong immersion.
\end{observation}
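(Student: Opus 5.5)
We take the sequence $(G_k)_{k\ge k_0}$ of \cref{fig:antichain} and show that it is an infinite antichain under strong immersion. Each $G_k$ consists of a modified alternating path $A_k$ (thick) together with an apex vertex $b^k$ joined to $A_k$ by gray edges. The alternating path $v^k_0,\dots,v^k_{2k}$ has its orientations alternating, and at each end two pendant edges $l^k_1,l^k_2\to v^k_0$ and $r^k_1,r^k_2\to v^k_{2k}$ are attached. The gray edges are chosen to balance degrees.

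\textbf{Eulerianity.} First we check that $G_k$ is Eulerian. At a source of $A_k$ (out-degree $2$ in $A_k$) we add two edges from $b^k$. At a sink we add two edges to $b^k$. At $v^k_0$, which has in-degree $2$ and out-degree $1$ in $A_k$, we add one edge to $b^k$, and we do the same at $v^k_{2k}$. At each leaf $l^k_i, r^k_i$ we add one edge from $b^k$. Counting shows that $b^k$ is balanced, because the alternating path has the right number of sources and sinks; we choose the parity of the path accordingly.

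\textbf{Degree gap.} The key quantitative fact is that every vertex other than $b^k$ has total degree at most $4$, while $b^k$ has degree $\Theta(k)$.

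\textbf{Incomparability.} Suppose $\gamma\colon G_i\hookrightarrow G_j$ is strong and $i\neq j$.
\begin{enumerate}[label=(\roman*)]
\item Since $\deg(b^i)>4$ for $i$ large, $\gamma(b^i)$ must be $b^j$. Degree can only be preserved or decrease along an immersion, because the branch vertex needs $\deg$ many edge-disjoint trail ends.
\item For $i>j$ this is already impossible: $b^i$ has more out-edges than $b^j$, and by injectivity and edge-disjointness each out-edge of $b^i$ needs its own out-edge at $\gamma(b^i)$. So assume $i<j$.
\item Now consider $G_j-b^j$, which is exactly $A_j$. Strongness forbids any trail of an edge of $A_i$ from passing through $b^j=\gamma(b^i)$ as an internal vertex. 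Its endpoints are images of vertices of $A_i$, hence not $b^j$. Therefore every trail $\gamma(e)$ with $e\in E(A_i)$ lies entirely in $A_j$.
\item Consequently $\restr{\gamma}{A_i}$ is a strong immersion of $A_i$ into $A_j$.
\end{enumerate}

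\textbf{Main obstacle.} The remaining step, and the one I expect to be the main obstacle, is invoking the Liu--Muzi fact that these modified alternating paths form an antichain under strong immersion. I would reprove it directly. In an alternating path every directed trail has length at most $1$, so edges map to edges, and $\gamma$ restricted to $A_i$ is essentially an injective homomorphism into $A_j$. The leaves force the ends: the vertices $v_0^i, v_{2i}^i$, of in-degree $2$, must map to vertices of $A_j$ with in-degree at least $2$ whose in-neighbours are leaves or path vertices. Connectivity then forces the path $A_i$ to map onto a subpath of $A_j$ whose endpoints both carry in-degree $2$. In $A_j$ the only interior vertices with in-degree $2$ are sinks, and a sink's in-neighbours are sources, which have out-degree $2$ in $A_j$. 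But the leaves $l^i_1,l^i_2$ have out-degree $1$ and in-degree $0$, which is still compatible with sources, so more care is needed. The cleaner route is to use the full in-degree $2$, out-degree $1$ pattern at $v_0^i$, which exists in $A_j$ only at $v^j_0$ and $v^j_{2j}$. This forces the subpath to span all of $A_j$, hence $i=j$, a contradiction.

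In $G_j$ we must also account for the gray edges when comparing degrees. This is harmless, since we have already restricted the immersion to $A_j$ and only degrees within $A_j$ matter there.
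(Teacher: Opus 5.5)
Your proof follows the paper's argument. It uses the same graphs (Liu--Muzi's modified alternating paths plus a balancing apex $b^k$), the same degree argument forcing $\gamma(b^i)=b^j$, and the same reduction to the alternating paths. Your step (iii) spells out what the paper only asserts (``so the remaining vertices must be mapped to each other''): strongness and injectivity make every trail $\gamma(e)$ with $e\in E(A_i)$ avoid $b^j$, so it lies in $A_j$. A small side remark: balancing $b^k$ needs no parity choice. Once all other vertices are balanced, $b^k$ is balanced automatically, since total in-degree equals total out-degree.

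There is one incorrect step in your last paragraph. It is false that every directed trail in $A_j$ has length at most $1$, because of the pendant edges. Both $l^j_a\to v^j_0\to v^j_1$ and $r^j_a\to v^j_{2j}\to v^j_{2j-1}$ are directed trails of length $2$, and they are the only trails of length $\geq 2$. So ``edges map to edges'' cannot be asserted before the ends are located.

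The repair uses only ingredients you already have, in a different order:
\begin{enumerate}
\item Each in- or out-edge at a vertex of $A_i$ needs its own in- or out-edge at the image. Hence $v^i_0$ and $v^i_{2i}$ (in-degree $2$, out-degree $1$) must go to the only such vertices $v^j_0,v^j_{2j}$ of $A_j$, one each by injectivity.
\item Strongness then forbids $v^j_0$ and $v^j_{2j}$ as internal vertices. This kills both kinds of length-$2$ trails, so every edge of $A_i$ is mapped to a single edge.
\item Consequently $\gamma$ restricted to $A_i$ is an injective homomorphism. The path $v^i_0,\dots,v^i_{2i}$ is mapped to a path in the tree underlying $A_j$ between $v^j_0$ and $v^j_{2j}$, which forces $2i=2j$.
\end{enumerate}
With this reordering, and dropping the exploratory remarks about leaves being compatible with sources, the proof is complete.
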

\begin{proof}
    For every $k\geq 1$ let $G_k$ be defined as follows; see the thick highlighted subgraph in \cref{fig:antichain} for an illustration. We set $V(G_k) \coloneqq \{ l^k_1, l^k_2, v^k_0, \dots, v^k_{2k}, r^k_1, r^k_2 \}$ and add edges from $l^k_1$ and $l^k_2$ to $v^k_0$, from $r^k_1$ and $r^k_2$ to $v^k_{2k}$, from $v^k_0$ to $v^k_1$, and from $v^k_{2k}$ to $v^k_{2k-1}$. Finally, for all $1 \leq i < k$ we add edges $(v^k_{2i}, v^k_{2i-1})$ and $(v^k_{2i}, v^k_{2i+1})$. 

    Thus, for every $k \geq 1$, the graph $G_k$ consists of two special vertices, $v^k_0$ and $v^k_{2k}$, which are the only vertices of degree $3$ and between them a long ``path'' with edges in alternating directions. 

    We claim that if $i \neq j$ then $G_i \not\hookrightarrow G_j$. This is obvious if $i > j$, thus we assume $i < j$. But then, the two vertices $v^i_0$ and $v^i_{2i}$ of degree $3$ in $G_i$ must be mapped to the two vertices $v^j_0$ and $v^j_{2j}$ of degree $3$ in $G_j$, as no other vertex in $G_j$ has degree $3$. This implies that the vertices $v^i_{1}, \dots, v^i_{2i-1}$ on the alternating path in $G_i$ must be mapped to vertices on the alternating path in $G_j$. As $j>i$ there must be two adjacent vertices $v^i_{t}$ and $v^i_{t+1}$ in $G_i$ that are mapped to two vertices $v^j_{t_1}$ and $v^j_{t_2}$ which are not adjacent in $G_j$. But then there is no directed path in $G_j$ between $v^j_{t_1}$ and $v^j_{t_2}$ and thus the edge between $v^i_{t}$ and $v^i_{t+1}$ cannot be immersed into $G_j$.

    So far the example only shows that the class of directed graphs is not well-quasi-ordered under the strong immersion relation. But the digraphs are not yet Eulerian. We can easily make them Eulerian using a standard construction from the theory of Eulerian digraphs: For $i>1$ let $G_i'$ be the digraph obtained from $G_i$ by adding a fresh vertex $b^i$ and for all  $v \in V(G_i)$ that have more incoming than outgoing edges we add sufficiently many edges from $v$ to $b^i$ so that the in-degree of $v$ equals its out-degree in $G_i'$ and likewise we add edges from $b^i$ to $v \in V(G)$ if $v$ has more outgoing than incoming edges. See \cref{fig:antichain} for an illustration of the construction. (To avoid parallel edges one may subdivide parallel edges if need be.)

    It is easily seen that the resulting graphs $G'_i$ are Eulerian. But it is still the case that for $i < j$ there is no strong immersion of $G_i'$ in $G'_j$. For, any such immersion must map the high degree vertex $b^i$ of $G'_i$ to the corresponding vertex $b^j$ in $G'_j$ and so the remaining vertices must be mapped to each other. Thus the same argument as above shows that this is impossible.
\end{proof}

Note that, complementing the above discussion, the class $\CCC := \{ G_k' \sth k > 1 \}$ has unbounded maximum degree and up to the vertices $b^k$ for $k \geq 1$, every vertex is of degree at most four. Observe further that the undirected treewidth of the graphs $G'_k$ in the proof of \cref{thm:antichain} is at most two for every $k\geq 1$ and the graphs admit a planar embedding as depicted in \cref{fig:antichain}. This yields the following.

\begin{corollary} \label{cor:antichain_tw}
    The class of planar Eulerian digraphs of treewidth at most $2$ is not well-quasi-ordered by strong immersion.
\end{corollary}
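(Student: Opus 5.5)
The plan is to reuse the sequence $(G'_k)_{k>1}$ from the proof of \cref{thm:antichain}. That proof already shows these graphs form an infinite antichain under strong immersion, and they are Eulerian by construction. So it remains only to check two things: every $G'_k$ is planar, and every $G'_k$ has treewidth at most $2$. Together these show the antichain lives inside the class of planar Eulerian digraphs of treewidth at most~$2$, which then cannot be well-quasi-ordered. Since subdividing parallel edges preserves both planarity and the bound $\tw{\cdot}\le 2$, I will argue on the multigraph and pass to the subdivided version at the end.

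\textbf{Treewidth.} Deleting $b^k$ from the underlying undirected graph of $G'_k$ leaves a tree $T_k$: the path $v^k_0,\dots,v^k_{2k}$ with two pendant leaves attached at each end. Trees have treewidth at most $1$. Take a tree decomposition of $T_k$ of width $1$ and add $b^k$ to every bag. This gives a tree decomposition of $G'_k$ of width $2$, because every edge of $G'_k$ either lies in $T_k$ or is incident to $b^k$. Parallel edges do not affect treewidth. Subdividing an edge $xy$ only requires attaching a new bag $\{x,y,s\}$ next to a bag containing both $x$ and $y$, which keeps width $2$.

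\textbf{Planarity.} I will use the drawing in \cref{fig:antichain} as the model. Place the tree $T_k$ on a horizontal line and put $b^k$ above it. Every vertex of $T_k$ is then on the outer face of this tree drawing. Now draw the edges from $b^k$ to the vertices of $T_k$ in left-to-right order of their endpoints. Parallel copies are drawn as nested bigons, and the edges to the leaves $l^k_2, r^k_2$ go around the outside as wide arcs. Because the endpoints are visited in the cyclic order of the outer face, no two of these edges cross. Equivalently, $T_k$ together with one extra vertex joined to vertices along its outer boundary is planar.

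The only step needing care is making this planar drawing precise, in particular ordering the edges at the leaves so that nothing crosses. This is routine once $T_k$ is drawn with all its vertices on the outer face. The antichain property itself is inherited directly from \cref{thm:antichain}.
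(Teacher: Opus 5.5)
Your proposal is correct and follows the same route as the paper, which simply observes that the antichain $(G'_k)_{k>1}$ from \cref{thm:antichain} has undirected treewidth at most $2$ and admits the planar drawing of \cref{fig:antichain}. Your apex-over-a-tree argument supplies the details the paper leaves implicit: deleting $b^k$ leaves a tree, so adding $b^k$ to every bag of a width-$1$ decomposition gives width $2$, and placing $b^k$ in the single face of the tree drawing gives planarity.
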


Finally, we emphasise that the class $\CCC$ above is only an antichain for the \emph{strong} immersion relation, but not for the weak immersion relation; we will get back to this observation below.

\paragraph{Bounded Degree.}\cref{thm:antichain} shows that to obtain a well-quasi-ordering of Eulerian digraphs by strong immersion we must restrict to classes of bounded maximum degree. In contrast to and complementing \cref{thm:antichain} we propose the following, which was already conjectured by Johnson \cite{Johnson2002}.
\begin{conjecture}\label{conj:wqo_dreg}
  For every $d \geq 1$, the class of Eulerian digraphs of maximum degree $2d$ is well-quasi-ordered by strong immersion.
\end{conjecture}

It should be noted that in his dissertation, Johnson \cite{Johnson2002} already discovered the strong structural similarities between Eulerian digraphs and undirected graphs. In fact, he proved a structure theorem for internally $6$-connected $4$-regular Eulerian digraphs, and mentions that he anticipated to use it towards proving a well-quasi-order result for Eulerian digraphs by immersion\footnote{His definition slightly differs from the standard definition.} in the future. Unfortunately, to the best of our knowledge, he never published such results. 

Note further that, in contrast to the discussion about the results in \cite{alternatingpaths,muzi2017paths}, if \cref{conj:wqo_dreg} were true, then the obstructions to well-quasi-ordering Eulerian digraphs by strong immersion are not long alternating paths, but rather vertices of arbitrarily large degree. 

Let us discuss a few immediate consequences and important applications that would follow a proof of \cref{conj:wqo_dreg}. 

\paragraph{Circle Graphs and Pivot-Minors.} The most prominent one may be that a proof for $d=2$ would imply that \emph{circle graphs} are well-quasi-ordered by the \emph{pivot-minor} relation. This is a longstanding open problem, marking a crucial step towards the more general conjecture that undirected graphs are well-quasi-ordered by the pivot-minor relation (to the best of our knowledge, this conjecture is still open even in the weaker case of vertex-minors). We leave out the details and refer the interested reader to \cite{circlegraph_survey,wqo_rankwidth,rose_vertex_minors}. It turns out that every circle graph $G$ naturally corresponds to a $4$-regular Eulerian digraph $D(G)$ with a choice of Eulerian cycle, and every $4$-regular Eulerian digraph naturally corresponds to a class of circle graphs (depending on a choice of Eulerian cycle) such that the following holds. Given two circle graphs $H$ and $G$, $H$ is a pivot-minor of $G$ if and only if $D(H) \hookrightarrow D(G)$, i.e., strong immersion on the respective Eulerian digraphs corresponds to the pivot-minor relation on the circle graphs. It may be worth exploring whether the cases $d > 2$ may be of further help towards proving the general pivot-minor well-quasi-order conjecture, for example, by examining to which classes of undirected graphs they correspond to when trying to translate strong immersions of Eulerian digraphs to pivot-minors of undirected graphs.

\paragraph{Membership Testing.} Decision problems (in and outside of graph theory) can naturally be rephrased as \emph{Membership Testing problems}. Let $\SSS$ be a class of structures and $\PPP \subseteq \SSS$, then we define the following:
\problemdef{$(\PPP,\SSS)$-Membership Testing}{A structure $S \in \SSS$.}{Is $S$ contained in $\PPP$?}

A common example is \textsc{Planarity Testing}, where $\SSS$ is the class of undirected graphs and $\PPP$ is the class of planar graphs, i.e., given a graph $G$ one has to decide whether $G$ is planar. By a well-known result due to Kuratowski and Wagner, this reduces to the question of whether $G$ admits $K_{3,3}$ or $K_5$ as a minor. More generally, one may ask: ``Given a list of graphs $\FFF$, does $G$ admit a graph in $\FFF$ as a minor?''. As we discuss next, this can be rephrased as a Membership Testing problem of particular importance.

Robertson and Seymour \cite{GMXIII} proved that for any fixed graph $H$, the question of whether $G$ admits $H$ as a minor can be decided in time $O(f(H)n^3)$ for some function $f$ (and $O(f(H)n^2)$ in \cite{KAWARABAYASHI2012424}), and this has recently been brought down to almost linear time by Korhonen, Pilipczuk, and Stamoulis \cite{korhonen2024minor}. This result, combined with the fact that undirected graphs are well-quasi-ordered by the minor relation, implies the existence of a plethora of polynomial time algorithms for testing relevant graph properties by expressing them as Membership Testing problems. To make this more precise, let $\preceq$ induce a quasi-order on a class of structures $\SSS$; for example $\SSS$ could be the class of undirected graphs and $\preceq$ the minor relation. A class~$\PPP \subseteq \SSS$ of structures is called \emph{$\preceq$-closed}---for example minor-closed---if  $S \in \PPP$ implies that $S' \in \PPP$ for every $S' \preceq S$. Let now $ \neg\PPP \subseteq \SSS$ be chosen so that $\neg \PPP \cap \PPP = \emptyset$ and such that $\neg \PPP$ is \emph{$\preceq$-minimal} in the sense that for every $ S \in \neg \PPP$ and $S' \prec S$ it holds that $S' \in \PPP$. If the class $\SSS$ is well-quasi-ordered by $\preceq$, then $\neg \PPP$ is guaranteed to be finite. Thus, since $\PPP$ is $\preceq$-closed, to decide whether $S \in \PPP$ for some $S \in \SSS$, it suffices to check whether for all $S' \in \neg \PPP$ we have $S' \not\preceq S$. In particular, if $S' \preceq S$ can be efficiently decided, say in polynomial time for fixed $S'$, then so can the \textsc{$(\PPP,\SSS)$-Membership Testing} problem by iterating over the finite list $\neg \PPP$. 

Robertson and Seymour \cite{GMXIII} use this argument to prove that every minor-closed property of undirected graphs can be tested in $O(n^3)$-time (now improved to almost linear time by the above \cite{korhonen2024minor}). A proof of \cref{conj:wqo_dreg} together with the main result of \cite{EDP_Euler}---The \textsc{Edge-Disjoint Paths} problem is fixed-parameter tractable for Eulerian digraphs---would imply that every property of Eulerian digraphs of bounded degree closed under strong immersion can be tested in polynomial time. Note here that in \cite{EDP_Euler}, the authors do not solve the more general \textsc{Folio} problem (see \cite{GMXIII}), hence their results imply that the \textsc{Immersion Testing} problem---given two Eulerian digraphs $H$ and $G$, does $H \hookrightarrow G$?---is polynomial-time solvable depending on $H$, that is, it has a running time of $O(\Abs{V(G)}^{\Abs{V(H)}})$. 

Finally, combining this with the above discussion on pivot-minors and circle graphs, a proof of \cref{conj:wqo_dreg} would further imply that \textsc{$(\PPP,\SSS)$-Membership Testing}, for any fixed pivot-minor closed class $\PPP$, can be decided in polynomial time on the class of circle graphs $\SSS$. \textsc{Pivot-Minor Testing} as well as \textsc{Vertex-Minor Testing} are known to be NP-complete \cite{computingpivotminors,dahlberg2022complexity}) in general, and there has been extensive research towards positive results ever since. 

\paragraph{Unbounded Degree.} Recall that in the above discussion we had to restrict to classes of Eulerian digraphs of \emph{bounded degree}. Unfortunately, \Cref{thm:antichain} shows that there is no hope to extend \cref{conj:wqo_dreg} to the class of \emph{all} Eulerian digraphs. However, we believe that this is an artefact of the \emph{strong} immersion relation. As the construction in \cref{thm:antichain} shows, immersion forces us to map the high degree vertices to each other, and thus they become unusable for routing other paths when dealing with strong immersion. For weak immersion this is not the case and the high degree vertex in the target digraph can still be used to route further paths. We believe that this is not just an artefact of our construction, but a fundamental difference; we conjecture the following. 

\begin{conjecture}\label{conj:wqo_gen}
    The class of Eulerian digraphs is well-quasi-ordered by weak immersion.
\end{conjecture}

This marks a crucial distinction to undirected graphs: If the strong Nash-Williams Conjecture is true (which is commonly believed in the area), then undirected graphs are well-quasi-ordered by strong immersion, whereas Eulerian digraphs are not, but a proof of \cref{conj:wqo_gen} would imply that both are well-quasi-ordered by \emph{weak} immersion. We want to emphasise that a proof of \cref{conj:wqo_gen} would immediately lift the above applications regarding \textsc{Membership Testing} to weak immersion and the class of all Eulerian digraphs. That is, any property closed under weak immersion could be tested in polynomial time on the class of Eulerian digraphs. 

\medskip

At the time of writing, we believe that we have a clear strategy to tackle both \cref{conj:wqo_dreg,conj:wqo_gen}. This paper is a crucial first step in that direction, and its goal is twofold. On the one hand, we establish the main framework---see \cref{sec:knitworks}---developed towards tackling \cref{conj:wqo_dreg,conj:wqo_gen}, and prove important results, developing some key tools culminating in a ``meta theorem'', opening the way to an inductive proof strategy for both conjectures, via decomposing our graphs into pieces of desirable structure; see \cref{sec:decomposition_theorem}. Such decomposition theorems lie at the core of well-quasi-ordering results exploiting structural graph theory as developed by Robertson and Seymour \cite{GMIV,GMXIII,GMXIX}. Note that the framework we develop is designed towards \emph{strong} immersions. In light of the introductory discussion, we believe that the presented results may be of independent interest, and that the techniques and tools may be extended to undirected graphs in the hopes of helping prove the strong Nash-Williams Conjecture and related results.

On the other hand, we prove the first important base case of \cref{conj:wqo_dreg}, handling the case where our graphs lack ``structural richness''. The correctness of the base case discussed in this paper is a crucial first step towards a full proof of \cref{conj:wqo_dreg,conj:wqo_gen}, and the form in which we prove it can additionally function as another strong ``meta theorem'', suited for inductive arguments when proving well-quasi-ordering results by strong immersion, as we briefly elaborate on below. 
The following is a much weaker form of said theorem.
\begin{theorem}\label{thm:intro}
    The class of Eulerian digraphs admitting bounded carving width is well-quasi-ordered by strong immersion.
\end{theorem}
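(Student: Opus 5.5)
The plan follows the Nash-Williams/Robertson--Seymour scheme of well-quasi-ordering labelled trees of bounded adhesion, adapted to strong immersion. Fix a bound $k$ on the carving width; then the maximum degree is at most $k$. A carving decomposition of $G$ is a subcubic tree $T$ whose leaves are in bijection with $V(G)$. Each edge $t$ of $T$ determines a cut $(A_t,B_t)$ of $V(G)$ with at most $k$ crossing edges, and since $G$ is Eulerian, equally many of these edges point in each direction. Root $T$ at an edge and view each rooted subtree as a \emph{partial} Eulerian digraph: $G[A_t]$ together with the dangling half-edges of the cut, labelled by a linear order of these at most $k$ boundary half-edges and their directions. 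The natural containment notion on such boundaried pieces is a strong immersion that maps boundary half-edges to boundary half-edges in an order-preserving way, and routes trails ending at the boundary. This is the strengthened, labelled statement from the abstract (vertices labelled from a wqo, with linear orders on incident edges), and proving this version is what makes the induction go through.

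The key steps, in order, are as follows.

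First, I would strengthen the statement to labelled boundaried graphs, with labels drawn from an arbitrary wqo $Q$. The theorem then becomes an instance of a theorem for rooted trees of bounded degree whose edges carry ``boundary states'' from a finite set, and whose leaves carry labels from $Q$.

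Second, I would apply Nash-Williams' minimal bad sequence argument. Suppose there is a bad sequence, and choose a minimal one with respect to the size of the decomposition tree. The multiset of proper rooted subtrees (at most two children per internal node) is then wqo, by the minimality argument. By Higman's lemma on pairs, together with the finiteness of boundary types, pass to a subsequence in which the root types agree and the children branches form chains.

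Third, I would glue. From immersions of child pieces $C_i^1\hookrightarrow C_j^1$ and $C_i^2\hookrightarrow C_j^2$ that respect the boundary orders, build an immersion of $G_i$ into $G_j$. Matched boundary edges are concatenated across the root cut.

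The main obstacle is the third step. There are two difficulties. Trails of the immersion of a child may exit and re-enter the boundary several times, and the boundary half-edges of $C_i$ must land on the correct half-edges of $C_j$, not merely on some half-edges. A plain subtree containment does not guarantee this. To handle it, I would first pass, via Menger and the Eulerian edge-linkedness, to a \emph{linked} carving decomposition (\`a la Thomas / GMIV). The minimum cut between two nested cuts in $T$ then equals the minimum width along the path between them. Linkedness lets us, in the wqo step on a path of the tree, find a pair of comparable cuts joined by $|\delta|$ edge-disjoint directed trails in each direction (using that Eulerian cuts are balanced). These trails route boundary half-edges of the smaller piece to the larger one. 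Strongness is preserved because these linkage trails pass only through vertices not in the image of branch vertices; the boundary is represented by half-edges, so no boundary vertex is used as an interior point.

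I expect the remaining technicality to be bookkeeping which permutation of boundary edges the linkage induces. I would absorb this into the finite boundary type, so that it is handled by the pigeonhole step.
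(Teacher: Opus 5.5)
Your architecture matches the paper's: a linked carving (\cref{cor:linked_ebw}), comparison of the pieces hanging below tree edges with the outside contracted (the paper's up-stitches), a finite ``root type'' at each cubic node (the paper's torso and root stitch of the focus, handled by \cref{lem:wqo_finitely_many_vertices}), and gluing of child immersions across the cuts (\cref{thm:knitting_knitwork_immersion}). The gap is in your second step.

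\textbf{The minimal bad sequence step fails as stated.} A Nash-Williams argument shows that the proper branches form a wqo only if every branch $r$ of $G_n$ satisfies $r\le G_n$. Without this, $G_i\le r_m$ does not give $G_i\le G_{n(m)}$. For boundaried pieces this fails in general:
\begin{itemize}
\item pieces with different boundary sizes are incomparable;
\item even for equal sizes, the natural embedding of the piece below $f$ into the piece below an ancestor $e$ (in particular into the whole graph) exists only if no narrower cut lies between them, i.e.\ only if $(e,f)$ is $\omega$-linked.
\end{itemize}
Linkedness of the carving does not change this. It only makes linked pairs comparable. So ``the proper rooted subtrees are wqo by minimality'' does not follow.

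What replaces this step is the Robertson--Seymour lemma on trees, in the cubic form of Geelen--Gerards--Whittle (\cref{cubic-tree-lemma}). Its hypotheses are: linked descendants are $\preceq$ their ancestors, leaf edges are wqo, root edges are not, and there is no infinite strictly descending sequence. You never address the last condition. The paper obtains it because everything below a fixed piece has bounded size, so an infinite descent would force an infinite descent of labels in $\Omega$.

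\textbf{The cut orderings must be chosen coherently within each tree.} The first hypothesis of the tree lemma requires this. The paper fixes the root order and transports it downward through the Menger linkages along $\omega$-linked chains, then uses transitivity of immersion. This is the order-pushing claim inside the proof of \cref{thm:wqo_bounded_carvingwidth_knitworks_red}. Your plan to absorb the induced permutation into a finite boundary type and remove it by pigeonhole does not achieve this. The comparability must hold for every linked pair inside one tree at the same time, and passing to subsequences of the sequence cannot repair incompatible orders within a single tree.

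\textbf{Smaller points.}
\begin{itemize}
\item The ``exit and re-enter'' difficulty you flag is not what linkedness solves. Contracting the outside of a piece to one root vertex and requiring strongness already stops trails from passing through the boundary.
\item Carving width does not bound the degree when loops are present, since loops cross no cut. The paper encodes loop counts as labels in $\Omega\times\N$ (\cref{lem:wqo_assume_no_loops}).
\end{itemize}
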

We defer its proof as well as the full statement of the stronger version to \cref{sec:carving_width}. In a nutshell, the stronger version is tailored to Eulerian digraphs ``rooted'' in cuts, where its vertices are labelled from a well-quasi-order, and come with additional ``conformity'' restrictions. We postpone a formal definition of carving width to \cref{sec:carving_width} but highlight that \cref{thm:intro} is equivalent to the following.
\begin{theorem}\label{thm:intro_2}
    The class of Eulerian digraphs admitting bounded treewidth and bounded degree is well-quasi-ordered by strong immersion.
\end{theorem}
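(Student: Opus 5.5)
The plan is to deduce \cref{thm:intro_2} directly from \cref{thm:intro}, by showing that for a class $\CCC$ of Eulerian digraphs the following are equivalent: $\CCC$ has bounded carving width, and $\CCC$ has bounded (undirected) treewidth and bounded maximum degree. Carving width, treewidth and degree all depend only on the underlying undirected multigraph, so the directions of the edges play no role. Since well-quasi-ordering passes to subclasses, it is enough to show that bounded treewidth together with bounded degree implies bounded carving width. I will also sketch the converse, so that the word ``equivalent'' in the statement is justified.

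\textbf{Degree and treewidth bound carving width.} Let $G$ have a tree decomposition $(T,\beta)$ of width $k$ and maximum degree $\Delta$. Root $T$ and, for each vertex $v$, let $t_v$ be the node closest to the root whose bag contains $v$.
\begin{itemize}
\item For every vertex $v$, attach a new leaf $\ell_v$ to $t_v$.
\item Delete the subtrees of $T$ that contain no leaf of the form $\ell_v$.
\item Make the tree subcubic by replacing nodes of high degree with binary trees and suppressing nodes of degree two.
\end{itemize}
This yields a carving decomposition whose leaves are in bijection with $V(G)$.

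Now consider a tree edge $f$; it splits the leaves into the vertices below $f$ and the vertices above $f$. Take an edge $uv$ of $G$ with $u$ below and $v$ above. Some bag contains both $u$ and $v$, and the nodes whose bags contain $u$ form a connected subtree; these two facts force $u$ or $v$ to lie in the bag at the endpoint of $f$ (more precisely, in the adhesion across the corresponding edge of $T$). Hence at most $(k+1)\Delta$ edges cross $f$, and so the carving width of $G$ is at most $(k+1)\Delta$.

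\textbf{Carving width bounds degree and treewidth.} Conversely, suppose $G$ has carving width $c$.
\begin{itemize}
\item Degree: the cut at the tree edge incident to the leaf of $v$ consists exactly of the edges at $v$ (loops do not matter here), so $\deg(v)\le c$.
\item Treewidth: give each internal node the bag consisting of all endpoints of edges crossing any of its at most three incident tree edges, and give each leaf the bag containing its own vertex together with its neighbours. This is a tree decomposition of width $O(c)$. It covers every edge, and the connectivity condition holds because a vertex $v$ lies exactly on the nodes along the paths from its leaf to the leaves of its neighbours.
\end{itemize}

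\textbf{Conclusion.} Any class of Eulerian digraphs of bounded treewidth and bounded degree therefore has bounded carving width, and \cref{thm:intro} applies.

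\textbf{Main obstacle.} The only delicate point is bookkeeping rather than ideas. The paper's formal definition of carving width (deferred to \cref{sec:carving_width}) must be matched exactly: leaves are in bijection with vertices, parallel edges and loops must be counted correctly, and one must decide whether the degree of a digraph means in-degree plus out-degree. I would check these conventions first and adjust the constants accordingly. A further small point is that a graph with a single vertex has a degenerate carving decomposition and needs to be handled separately.
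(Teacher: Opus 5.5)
Your proposal is correct and follows the paper's route: the paper deduces \cref{thm:intro_2} by combining Bienstock's bound $\cw{G}\le \Delta\cdot\tw{G}$ (\cref{thm:qualitative_equivalence_of_tw_and_cw}) with the bounded-carving-width result (\cref{cor:bd_cw_wqo}). The only differences are that you reprove the needed direction of that inequality yourself, getting the slightly weaker but equally sufficient bound $(k+1)\Delta$, and that your converse direction, while fine, is not needed for the statement.
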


We want to mention that very recently Lunel and Maria \cite{lunel2025well} independently proved a weaker variant of \cref{thm:intro}, i.e., they proved that $4$-regular Eulerian digraphs of bounded carving width are well-quasi-ordered by strong immersion.

\smallskip

Finally, we conclude the paper with a result that complements \cref{cor:antichain_tw} and \cref{conj:wqo_gen}, essentially proving that the class $\CCC(2,4)$ of planar Eulerian digraphs of treewidth at most $2$, admitting at most one vertex of degree larger than $4$ is \emph{not} well-quasi-ordered by strong immersion, but it \emph{is} well-quasi-ordered by weak immersion. This is a dichotomy result in the following sense: If we additionally bound the maximum degree of $\CCC(2,4)$, then the class is well-quasi-ordered with respect to both weak and strong immersion by \cref{thm:intro}. If we restrict to treewidth at most $1$, then it is not hard to see that this class is also well-quasi-ordered by both weak and strong immersion (it consists of bidirected trees, possibly with parallel edges).

\medskip

We start with an extensive preliminary chapter. Our notation is mostly standard, except for the following non-standard definitions: Directed graphs are viewed as incident structures $G=(V,E,\operatorname{inc})$ together with maps $\head,\tail: E \to V$, and are allowed to have loops and parallel edges. What is commonly referred to as trails (like in the above introduction) will be called \emph{paths} in our setting. In fact, in this exposition paths are sequences of non-repeating edges, where paths may visit vertices repeatedly, essentially mimicking ``standard'' paths in the linegraph\footnote{We omit an explicit definition here, see for example \cite{Diestel2017,Bang-JensenG2018}.}. Given a path $P=(e_1,\ldots,e_\ell)$ we write $V^\circ(P) \coloneqq \{\tail(e_j) \mid 2 \leq j \leq \ell\}$ to mean the \emph{internal} vertices of a path. And we call a path \emph{linear} if no internal vertex is visited twice (but the endpoints may still agree). A linkage $\LLL$ is a collection of edge-disjoint paths, and is called \emph{linear} if every $P \in \LLL$ is linear. This compromise allows us not to switch to a setting of directed graphs where some edges admit only ``half their incidences'', which, while being the most natural way to work with immersions, would only be needed in very few scenarios in this exposition, whence we decided to stick to the above definitions instead. We refer the reader to \cref{sec:preliminaries} for details and further definitions.

\section{Preliminaries}
\label{sec:preliminaries}

Let $E$ be a set of elements and let $E_1 \subseteq E$ be a non-empty subset of $k \geq 1$ elements. An \emph{ordering} $\pi=(e_1,\ldots,e_k)$ of $E_1$ is a tuple encoding a linear order on $E_1$. We write $\pi(e) < \pi(e')$ meaning that $e$ precedes $e'$ in the ordering. Let $E_2 \subseteq E$ be a non-empty subset of $\ell \geq 1$ elements disjoint from $E_1$ and let $\pi'=(f_1,\ldots,f_\ell)$ be an ordering of $E_2$. We define $\pi \circ \pi' = (e_1,\ldots,e_k,f_1,\ldots,f_\ell)$, which is an ordering of $E_1 \cup E_2$. Given the above sets and orderings and a map~$\gamma:E_1 \to E$ we define~$\gamma(\pi) = (\gamma(e_1),\ldots,\gamma(e_k))$ and thus~$\gamma$ naturally lifts to a map~$\gamma:E_1^k \to {E}^k$ by applying~$\gamma$ componentwise. Similarly for a set~$S$ we define~$\gamma(S) \coloneqq \{\gamma(s) \mid s \in S\}$. 

Let $I \subseteq \N$. Then we call $(e_i)_{i \in I}$ with $e_i \in E$ for every $i \in I$ a \emph{sequence in $E$}. It is \emph{finite} if $I$ is finite, in which case we may write it as a finite tuple $(e_1,\ldots,e_m)$ where $m \coloneqq \Abs{I}$ for simplicity. In particular a finite sequence $(e_1,\ldots,e_m)$ uniquely defines an ordering on $\{e_1,\ldots,e_m\}$. 

\subsection{Well-Quasi-Ordering }

 A \emph{quasi-order}~$\Omega = (V,\preceq)$ is a tuple consisting of a set $V$ of elements and a relation~$\preceq \subseteq V\times V$ satisfying the following. Let~$u,v,w \in V$ then~$\preceq$ is \textit{reflexive}, i.e., $v \preceq v$, and \emph{transitive}, i.e.,~$u \preceq v$ and~$v \preceq w$ implies~$u \preceq w$. If in addition the relation is \textit{antisymmetric}, i.e.,~$v \preceq w$ and~$w \preceq v$ implies~$ v = w$, we call~$\preceq$  a \emph{partial order}.

We say that~$\Omega=(V,\preceq)$ is a \emph{well-quasi-order} if for every infinite sequence~$(v_\ell)_{\ell \in \N}$ of elements~$v_\ell \in V$, there exist~$1 \leq i < j $ such that~$v_i \preceq v_j$. Given an infinite index set~$I \subseteq \N$ such that for all~$i,j \in I$ with~$i<j$ it holds~$v_i \not \preceq v_j$, we call~$(v_\ell)_{\ell \in I}$ a \emph{bad sequence (with respect to $\preceq$)}. If additionally for all~$i,j \in I$ it holds~$v_j \not \preceq v_i$, then we call it an \emph{antichain}, and if~$v_i \succ v_j$ for all~$i,j \in I$ with~$i < j$ we call it a \emph{strictly decreasing sequence}. Completing the above, if~$v_i \preceq v_j$ for every~$i,j \in I$ with~$i<j$ we call the sequence a \emph{chain with respect to~$\preceq$}. Clearly,~$\Omega$ is a well-quasi-order if and only if there exists no infinite bad sequence with respect to $\preceq$.  Given another well-quasi-order $\Omega^+=(V^+,\preceq^+)$ we say that $\Omega^+$ \emph{extends} $\Omega$---written $\Omega \subseteq \Omega^+$---if and only if $V \subseteq V^+$ and $\restr{\preceq^+}{V \times V} = \preceq$, and we call $\Omega^+$ an \emph{extension} of $\Omega$.

We gather some well-known results regarding well-quasi-orderings.

\begin{observation}\label{obs:wqo_yields_infinite_chain}
    Let~$\Omega$ be a well-quasi-order and~$(v_\ell)_{\ell \in \N}$ a sequence with~$v_\ell \in V(\Omega)$ for every~$\ell \in \N$. Then there exists an infinite index set~$I\subset \N$ such that~$(v_i)_{i \in I}$ is a chain with respect to~$\preceq$.
\end{observation}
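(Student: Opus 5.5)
The plan is the classical Ramsey argument. Given the sequence $(v_\ell)_{\ell \in \N}$ in the well-quasi-order $\Omega$, colour each pair $\{i,j\}$ with $i<j$ by one of three colours:
\begin{itemize}[label={}]
\item \emph{red} if $v_i \preceq v_j$;
\item \emph{blue} if $v_i \not\preceq v_j$ and $v_j \preceq v_i$;
\item \emph{green} if the two elements are incomparable.
\end{itemize}
By the infinite Ramsey theorem there is an infinite set $I\subseteq \N$ all of whose pairs receive the same colour. It then remains to rule out blue and green.

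Both exclusions come straight from the definition. If $I$ were green, then $(v_i)_{i\in I}$ would be an infinite antichain. If $I$ were blue, then $v_i \not\preceq v_j$ for all $i<j$ in $I$. In either case $(v_i)_{i \in I}$ is a bad sequence, and re-indexing it increasingly as a sequence over $\N$ contradicts that $\Omega$ is a well-quasi-order. Hence $I$ is red, so $v_i \preceq v_j$ for all $i<j$ in $I$, which is precisely a chain.

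The only step needing care is the appeal to Ramsey's theorem itself. If one prefers to avoid it, I would instead argue directly. Call an index $i$ \emph{terminal} if there is no $j>i$ with $v_i \preceq v_j$. The set of terminal indices must be finite, since otherwise the terminal indices would form a bad sequence. Beyond the largest terminal index, every index has a later index above it. Starting past that point, we then build $i_1<i_2<\cdots$ greedily with $v_{i_k}\preceq v_{i_{k+1}}$, and transitivity of $\preceq$ gives the chain. Either route is short, and I expect no genuine obstacle.
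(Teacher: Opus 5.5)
Your proof is correct, and both routes work. The paper gives no proof of this statement to compare against: it lists it among the ``well-known results'' on well-quasi-orders and moves on.

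Of your two arguments, the terminal-index one is the more self-contained. It uses only the definition of a well-quasi-order, applied to the set of terminal indices, which would form a bad sequence if infinite, together with transitivity of $\preceq$. The Ramsey route instead imports the infinite Ramsey theorem. In that route, two colours ($v_i \preceq v_j$ or not) already suffice, since an infinite monochromatic set of the second colour is directly a bad sequence; your split into ``blue'' and ``green'' is harmless but unnecessary.
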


\begin{observation}\label{obs:wqo_of_tuples}
    Let $k \in \N$ and let~$\Omega_1,\ldots,\Omega_k$ be well-quasi-orders with~$\Omega_i = (V_i,\preceq_i)$ for~every~$1 \leq i \leq k$. Then~$\Omega=(V(\Omega_1)\times \ldots \times V(\Omega_k), \preceq)$ where~$(v_1,\ldots,v_k)\preceq(u_1,\ldots,u_k)$ if and only if~$v_i \preceq_i u_i$ for all~$1 \leq i \leq k$ is a well-quasi-order.
\end{observation}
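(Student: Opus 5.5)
The plan is to prove \cref{obs:wqo_of_tuples} by induction on $k$, with \cref{obs:wqo_yields_infinite_chain} doing the main work. For $k=1$ there is nothing to show. Reflexivity and transitivity of $\preceq$ hold because each $\preceq_i$ has these properties and $\preceq$ is defined componentwise. So $\Omega$ is a quasi-order, and the only thing left to prove is that it has no infinite bad sequence.

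For the inductive step, take any infinite sequence $(x_\ell)_{\ell\in\N}$ with $x_\ell=(v^\ell_1,\ldots,v^\ell_k)$. Apply \cref{obs:wqo_yields_infinite_chain} to the first coordinates $(v^\ell_1)_{\ell\in\N}$ in $\Omega_1$. This gives an infinite index set $I_1\subseteq\N$ such that $v^i_1\preceq_1 v^j_1$ for all $i<j$ in $I_1$. Next, apply \cref{obs:wqo_yields_infinite_chain} in $\Omega_2$ to the subsequence $(v^\ell_2)_{\ell\in I_1}$, re-indexed along $I_1$. This gives an infinite $I_2\subseteq I_1$ on which the second coordinates also form a chain. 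Since $I_2\subseteq I_1$, the first coordinates still form a chain on $I_2$. Repeating this for $k$ coordinates produces infinite sets $I_k\subseteq\cdots\subseteq I_1$. On $I_k$ every coordinate forms a chain, so $(x_i)_{i\in I_k}$ is a chain for $\preceq$. Choosing any $i<j$ in $I_k$ then gives $x_i\preceq x_j$, so $\Omega$ is a well-quasi-order. Equivalently, the inductive hypothesis handles the first $k-1$ coordinates and one more extraction handles the last.

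There is no real obstacle. The one point to check is that \cref{obs:wqo_yields_infinite_chain} still applies to a sequence indexed by an infinite subset of $\N$ instead of $\N$ itself. This is immediate after re-indexing via the order-preserving bijection $\N\to I$. It is also where finiteness of $k$ is used: only finitely many nested extractions are needed, so the final set $I_k$ is guaranteed to be infinite.
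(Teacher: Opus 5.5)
Your proof is correct. The paper gives no proof of \cref{obs:wqo_of_tuples}; it lists it as a well-known fact alongside \cref{obs:wqo_yields_infinite_chain}, and your argument (extract nested chains coordinate by coordinate using \cref{obs:wqo_yields_infinite_chain}, relying on $k$ being finite) is the standard proof, so it is legitimate to use that earlier observation here.
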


And finally we will need the following well-known result due to Higman \cite{Higman1952}.

\begin{theorem}[Higman's Theorem]\label{thm:higman}
   Let~$\Omega$ be a well-quasi-order and~$(X_i)_{i \in \N}$ a sequence of finite sets~$X_i \subseteq V(\Omega)$. Then there exists an infinite index set~$I\subseteq \N$ such that for every~$1 \leq i < j$ with~$i,j \in I$ there is an injective map~$\alpha_i^j:X_i \to X_j$ such that~$x \preceq \alpha_i^j(x)$ for every~$x \in X_i$.
\end{theorem}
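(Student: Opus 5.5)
This is Higman's lemma in its set form. The plan is to deduce it from the classical statement that finite sequences over a well-quasi-order are well-quasi-ordered under the subsequence embedding, using a minimal bad sequence argument in the style of Nash-Williams. Define a quasi-order $\preceq^*$ on finite subsets of $V(\Omega)$ by setting $X \preceq^* Y$ if there is an injective $\alpha\colon X \to Y$ with $x \preceq \alpha(x)$ for all $x \in X$. This relation is reflexive via the identity, and transitive by composing injections. It therefore suffices to show that $\preceq^*$ is a well-quasi-order. Once that is done, \cref{obs:wqo_yields_infinite_chain} applied to $(X_i)_{i\in\N}$ yields an infinite index set $I$ on which the sequence is a chain, and the chain relations supply exactly the maps $\alpha_i^j$.

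To prove that $\preceq^*$ is a well-quasi-order, suppose for contradiction that bad sequences exist and construct a minimal one. Choose $X_1$ of minimum cardinality among all first elements of bad sequences. Then, given $X_1,\dots,X_{n}$, choose $X_{n+1}$ of minimum cardinality among all sets that continue some bad sequence starting with $X_1,\dots,X_n$. No $X_i$ is empty, since $\emptyset \preceq^* Y$ for every $Y$. So pick some $x_i \in X_i$ and set $Y_i = X_i \setminus \{x_i\}$.

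By \cref{obs:wqo_yields_infinite_chain} there is an infinite $J \subseteq \N$ such that $(x_j)_{j\in J}$ is a chain in $\Omega$. The next step is to show that $\{Y_j : j \in J\}$ is well-quasi-ordered under $\preceq^*$. Suppose instead that $(Y_{j_k})_k$ is a bad sequence with $j_1<j_2<\dots$ all in $J$. Then $X_1,\dots,X_{j_1-1},Y_{j_1},Y_{j_2},\dots$ is also bad. A relation $X_i \preceq^* Y_{j_k}$ is impossible, because $Y_{j_k}\subseteq X_{j_k}$ would then give $X_i \preceq^* X_{j_k}$. This new sequence contradicts the minimal choice of $X_{j_1}$, since $|Y_{j_1}|<|X_{j_1}|$.

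Now apply \cref{obs:wqo_yields_infinite_chain} again, this time to $(Y_j)_{j\in J}$, to obtain an infinite subset of $J$ on which the $Y_j$ form a chain. Because the $x_j$ already form a chain on $J$, we may take any $j<j'$ in this subset. An injection $Y_j\to Y_{j'}$ extended by $x_j\mapsto x_{j'}$ is injective, because $x_{j'}\notin Y_{j'}$. This gives $X_j\preceq^* X_{j'}$, contradicting badness.

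The main point needing care is the minimal bad sequence construction. The choice must be made among sets that admit some infinite bad continuation, which requires the axiom of dependent choice. The replacement step must also check both the old-versus-new comparisons, which is where $Y\subseteq X$ is used, and the new-versus-new comparisons, which follow from the assumed badness of the $Y$-subsequence. Everything else is routine.
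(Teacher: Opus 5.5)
Your argument is correct. It differs from the paper in that the paper gives no proof of this statement at all: it quotes it as a well-known result of Higman \cite{Higman1952} and uses it as a black box.

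The usual route from the literature goes through finite sequences (words) under the subsequence embedding. List each $X_i$ in an arbitrary order; an order-preserving dominating embedding of one word into another is in particular an injection $X_i \to X_j$ with $x \preceq \alpha(x)$.

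You instead run Nash-Williams' minimal bad sequence argument directly on finite sets under injective domination. This is simpler here because the injection need not preserve any order, so you may delete an arbitrary element $x_i$ rather than the first letter of a word. Your write-up checks out:
\begin{itemize}
\item The construction keeps every finite prefix extendable to a bad sequence.
\item The spliced sequence $X_1,\dots,X_{j_1-1},Y_{j_1},Y_{j_2},\dots$ is bad, precisely because $Y_{j_k}\subseteq X_{j_k}$.
\item The final extension $x_j\mapsto x_{j'}$ is injective because $x_{j'}\notin Y_{j'}$.
\end{itemize}

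Two small presentational points. First, your opening sentence announces a deduction from the sequence version, but the argument never uses it. It is a direct, self-contained proof, and you could simply say so. Second, before applying \cref{obs:wqo_yields_infinite_chain} to $(Y_j)_{j\in J}$, you only rule out bad subsequences with increasing indices. That suffices, for either of two reasons:
\begin{itemize}
\item Ramsey's theorem applied to $(Y_j)_{j\in J}$ yields either a chain or a bad subsequence with increasing indices.
\item Any bad sequence drawn from $\{Y_j \mid j\in J\}$ must use pairwise distinct indices, by reflexivity of $\preceq^*$, and hence has an increasing-index bad subsequence.
\end{itemize}
One sentence making this explicit would close the step completely.
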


\subsection{Graph Theory}

All the graphs considered in this paper have finite vertex and edge sets, unless stated otherwise. We denote directed graphs (or digraphs) by~$G=(V,E,\operatorname{inc})$ where~$\operatorname{inc}\subseteq (V\times E)\cup (E\times V)$ is a binary relation, and for every~$e \in E$ there are exactly two incidences in~$\operatorname{inc}$, i.e., there are two (possibly equal) unique vertices~$v_i,v_o \in V$ such that~$(v_i,e),(e,v_o) \in \operatorname{inc}$. We define functions~$\tail_G,\head_G: E \to V$ via~$\tail_G(e)=v_i$ for~$(v_i,e) \in \operatorname{inc}$ and~$\head_G(e) = v_o$ for~$(e,v_o) \in \operatorname{inc}$. If $G$ is clear from context we may omit the subscripts and simply write $\tail$ and $\head$. Since~$\tail,\head$ are uniquely defined we may equally well write~$G=(V,E,\tail,\head)$ for directed graphs meaning the obvious. 

As mentioned in the introduction, we have a strategy towards proving \cref{conj:wqo_dreg,conj:wqo_gen} and thus we want to keep notation throughout results in that direction consistent. In particular we will crucially use edges as ``stand-alone'' objects. However, for simplicity and ease of readability we may write~$e=(u,v)$ for~$e\in E$ to mean~$\tail(e) = u$ and~$\head(e)=v$; i.e., we may view~$E \subseteq V \times V$ as a multiset (we allow for parallel edges). Similarly, we may write~$G=(V,E)$ for some~$E\subseteq V \times V$ for convenience, meaning the obvious.
Note further that the above definition allows for \emph{loops}, where an edge~$(u,v) \in E(G)$ is called a \emph{loop} if and only if~$u = v$; we write~$\loops(v)$ for the set of loops incident to~$v$ and~$\loops(G)$ for the set of all loops in~$G$\footnote{We will ``get rid'' of loops in \cref{sec:knitworks}}. 

Since we will explicitly work with graphs $G,G'$ where $e \in E(G) \cap E(G')$ is allowed, $\tail_G(e)$ and $\tail_{G'}(e)$---as well as $\head_G(e)$ and $\head_{G'}(e)$---may be distinct vertices.

Given~$X \subseteq V(G)$ we write~$\bar{X} \coloneqq V(G) \setminus X$. Given~$U\subseteq V(G)$ we write~$N_G^+(U) \coloneqq \{v \mid (u,v) \in E(G), \text{ for some }u\in U\}$, and~$N_G^-(U) \coloneqq \{v \mid (v,u) \in E(G), \text{ for some }u\in U\}$, and~$N_G(U) \coloneqq N_G^+(U) \cup N_G^-(U)$. If~$G$ is clear from context, we may omit the subscript and if~$U = \{u\}$ we write~$N^+(u)$ instead of~$N^+(\{u\})$ and analogously for the other notions. Further we write $G[X]$ to mean the subgraph of $G$ \emph{induced by $X$}, i.e., it is given by $V(G[X]) = X$ and $E(G[X])  = X\times X \cap E(G)$ (or rather by the respective incidence structure).

Given two digraphs~$H=(V_H,E_H)$ and~$G=(V_G,E_G)$ with $V_H \subseteq V_G$ we write~$G-H \coloneqq (V_G,E_G \setminus E_H)$. Given~$v \in V(G)$ we call~$v$ \emph{Eulerian} if the number of edges~$e \in E(G)$ with $\tail(e) = v$ equals the number of edges~$e' \in E(G)$ with~$\head(e') = v$, i.e., the in- and out-degree at~$v$ are equal. We call~$G$ \emph{Eulerian} if every vertex in~$V(G)$ is Eulerian; notably we do not impose connectivity restrictions.

\paragraph{Paths, Cycles and Linkages.}As we mostly work with edge-disjointness, we use the following non-standard definitions of paths and cycles; intuitively a \emph{path} in our setting is a ``path'' in the ``linegraph'' according to standard literature, and similarly for cycles.

\begin{definition}[Paths, Cycles and Circles]\label{def:paths}
   Let~$G=(V,E,\operatorname{inc})$ be an Eulerian digraph. Let~$P =(e_1,\ldots,e_m)_G$ be a finite sequence of distinct edges~$e_1,\ldots,e_m \in E$ for some~$ m\in \N$ such that for every~$1 \leq i < m$ there is~$v_i \in V(G)$ with~$\head_G(e_i) = v_i = \tail_G(e_{i+1})$. Then we call~$P$ a \emph{path} in~$G$ and let~$m$ denote its \emph{length}. We define~$V^\circ(P) \coloneqq \{v_1,\ldots,v_{m-1}\}$\Symbol{V(P)@$V^\circ(P)$} to be the internal vertices of $P$. We further define $V(P) \coloneqq V^\circ(P) \cup \{v_0,v_m\}$ and~$E(P) \coloneqq \{e_1,\ldots,e_m\}$. The vertices~$v_0 = \tail_G(e_1)$ and~$v_{m}=\head_G(e_m)$---possibly equal---are called the \emph{endpoints} of~$P$ and the vertices in~$V^\circ(P)$ are called \emph{internal vertices} of~$P$. If the graph is clear from context, we may omit the subscript and write~$P = (e_1,\ldots,e_m)$. We call~$e_1,e_m$ the \emph{ends} of~$P$ and call~$e_1$ the \emph{first} edge of~$P$ and say that~$P$ \emph{starts in} $e_1$, and we call~$e_m$ the \emph{last} edge of~$P$ and say that~$P$ \emph{ends in}~$e_m$. 

    If $V^\circ(P) \cap \{v_0,v_m\} = \emptyset$, we call $P$ \emph{clean}.\index{trail!clean}\index{clean path}
    If in addition to being clean, for every~$1 \leq i ,j < m$ with~$i \neq j$ we have~$v_i \neq v_j$, then we call~$P$ \emph{linear}. Let~$1 \leq j < \ell \leq m$, then we call~$(e_j,\ldots,e_\ell) $ a  \emph{subpath} of~$P$, and we write $(e_j,\ldots,e_\ell) \subseteq P$. In particular, when writing~$(f_1,\ldots,f_t) \subseteq P$ we mean a subpath of~$P$ of length~$t$, i.e.,~$f_1 = e_j$ and~$f_t= e_{j+t}$ for some~$1 \leq j \leq j+t \leq m$.

   If in addition~$P'=(e_2,\ldots,e_m,e_1)_G$ is a path, then we call~$C \coloneqq (e_1,\ldots,e_m,e_1)_G$ \emph{a cycle} in~$G$ (omitting the subscript if clear from context). If $P$ and $P'$ are linear we call~$C$ a \emph{circle}, that is~$v_m=v_0$ and for every~$1 \leq i,j \leq m$ with~$i\neq j$ we have~$v_i \neq v_j$. 

   Given two edges~$e,e' \in E(G)$ and a path~$P $ in $G$ with ends~$e,e'$, we call~$P$ an \emph{$\{e,e'\}$-path}, and if~$P$ starts in~$e$ and ends in~$e'$ we call it an~$(e,e')$-path and define~$\tau(P) = (e,e')$\Symbol{TAUP@$\tau(P)$} to be its \emph{type}. Similarly, we refer to~$P$ as a \emph{$\{v_0,v_m\}$-path} if its endpoints are $v_0,v_m$ and a \emph{$(v_0,v_m)$-path} if~$v_0 = \tail(e)$ and~$v_m = \head(e')$ and call $v$-$\tau(P) \coloneqq (v_0,v_m)$ its \emph{vertex-type}. More broadly, if $P$ has one endpoint in $A \subseteq V(G)$ and the other in $B \subseteq V(G)$ we call it an \emph{$\{A,B\}$-path} and an $(A,B)$-path if its tail is in $A$.
\end{definition}
\begin{remark}
     Note that by definition, a (linear) path~$(e_1,\ldots,e_m)$ may have a single endpoint~$\tail(e_1) = v_0 = v_m = \head(e_m)$.
    Note further that, if we have two graphs~$H,G$ and a set of edges~$\{e_1,\ldots,e_m\} \subset E(G) \cap E(H)$ then we may write~$(e_1,\ldots,e_m)_H,(e_1,\ldots,e_m)_G$ to mean the respective paths in the respective graphs (if they are indeed paths). Furthermore, by definition, a linear path~$P=(e_1,\ldots,e_m)_G$ induces a linear order on~$V^\circ(P)$ given by~$\head_G(e_i) \prec \tail_G(e_{i+1})$ for all~$1\leq i < m$. Also, note that every strict subsequence of a circle is a path.
\end{remark}
Thus, a ``path'' in the standard literature---e.g. \cite{Diestel2017,Bang-JensenG2018}---is in our terminology a linear path with non-agreeing endpoints.

Given a graph $G$ and a path $P=(e_1,\ldots,e_m)$ in $G$, the graph $G_P \coloneqq (V(P), E(P)) $ is a subgraph of $G$. In particular, we may write $P \subseteq G$ for simplicity if the exact ordering of the edges is not needed, noting that this is ambiguous in the sense that, according to our definition, two distinct paths $P_1,P_2$ may satisfy $G_{P_1} = G_{P_2}$. Similarly, we may identify $P$ with $G_P$ for simplicity and write $P_1 \cup P_2$ to mean the graph consisting of $G_{P_1}\cup G_{P_2}$. We use analogous notation for cycles and circles.

The following is folklore.
\begin{observation}\label{obs:covering_eulerian_digraphs}
    Let $G$ be an Eulerian digraph. Then there exists a set $\CCC$ of edge-disjoint circles in $G$ such that $G = \bigcup_{C\in \CCC} C$. Further, there exists a cycle $C \subseteq G$ such that $G_{C} = G$. 
\end{observation}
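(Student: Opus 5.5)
We prove the two claims of \cref{obs:covering_eulerian_digraphs} separately, both by induction on $\Abs{E(G)}$. The base case $E(G)=\emptyset$ is trivial for the first claim, taking $\CCC=\emptyset$. For the second claim we read ``$G_C = G$'' up to isolated vertices, and we assume $G$ is weakly connected as stipulated in the introduction. Without these two conventions the second claim is false: an Eulerian digraph consisting of two vertex-disjoint circles is not covered by a single cycle.

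For the first claim, let $G$ be Eulerian with at least one edge. If $G$ has a loop $\ell$, then $(\ell,\ell)$ is a circle. Removing it leaves every vertex Eulerian, since a loop contributes one to both the in- and the out-degree of its vertex, and we apply induction. Otherwise, start at any vertex $v_0$ with an outgoing edge and walk greedily. Whenever we enter a vertex via an unused edge, that vertex still has an unused outgoing edge, because in-degree equals out-degree and every earlier visit used one in-edge and one out-edge (for $v_0$ itself the count works the same way). Since $V(G)$ is finite, the walk eventually revisits a vertex. Taking the segment between the first repetition gives a sequence of distinct edges whose internal vertices are pairwise distinct and whose endpoints coincide, which is a circle $C$. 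Every vertex of $C$ has exactly one in-edge and one out-edge in $C$, so $G - C$ is again Eulerian with fewer edges. Applying induction to $G-C$ and adding $C$ yields the required family $\CCC$ of edge-disjoint circles.

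For the second claim we use Hierholzer's splicing argument. Take a cycle $C$ in $G$ with $\Abs{E(C)}$ maximal; one exists because any circle from the first part is a cycle. Suppose $E(C)\neq E(G)$. By weak connectivity there is a vertex $v \in V(C)$ incident to an edge not in $C$. The graph $G - C$ is Eulerian, because a cycle also uses equally many in- and out-edges at each vertex. Hence, by the first part, some circle $D$ of $G-C$ passes through $v$. We then splice: rotate $C$ so that it begins and ends at $v$, rotate $D$ in the same way, and concatenate the two edge sequences. The result is a sequence of distinct edges in which consecutive edges meet head to tail, and the last edge ends at the tail of the first. So it is a cycle in the sense of \cref{def:paths}, which only requires a closed sequence of distinct edges and does not require linearity. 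This cycle is strictly longer than $C$, contradicting maximality. Therefore $E(C)=E(G)$, and so $G_C = G$ apart from isolated vertices.

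The only delicate point is the bookkeeping in \cref{def:paths}. Rotating a cycle and concatenating two closed edge sequences at a shared vertex must again give a valid cycle, which means both $(e_1,\dots,e_m)$ and its rotation $(e_2,\dots,e_m,e_1)$ must be paths. This holds because both conditions only require distinct edges with consecutive head-to-tail incidences, and the splice preserves exactly these.
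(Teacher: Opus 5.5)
The paper states this as folklore and gives no proof; your argument (greedily extract a circle and induct, then splice a longest cycle with a circle of $G-C$ through a shared vertex, as in Hierholzer's algorithm) is the standard one and is correct. Your caveat that the second part needs weak connectivity is also justified, since the paper explicitly allows disconnected Eulerian digraphs.

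Two minor points. First, the isolated-vertex reading is needed in the first part as well: your base case may have isolated vertices, and so may $G-C$, which under the paper's convention $G-H=(V_G,E_G\setminus E_H)$ keeps all vertices. That part is therefore best stated as $E(G)=\bigcup_{C\in\CCC}E(C)$. Second, your parenthetical about $v_0$ is false, since the walk can get stuck on re-entering $v_0$. This does not affect the proof, because that re-entry is already the first repetition.
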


In light of \cref{obs:covering_eulerian_digraphs} we define the following.
\begin{definition}
    Let $G$ be an Eulerian digraph, and $\CCC$ a set of edge-disjoint cycles (or circles) in $G$ such that $G = \bigcup_{C\in \CCC} C$. Then we call $\CCC$ a \emph{cycle-cover} of $G$ (or circle-cover respectively).

    If $C\subseteq G$ is a cycle satisfying $G_C = G$, then we call $C$ an \emph{Eulerian cycle}.
\end{definition}

We refer to the graph obtained from~$G$ by omitting the directions of its edges as the \emph{underlying undirected graph}. 
A directed graph $G$ is \emph{weakly connected} if the underlying undirected graph is connected, and it is \emph{strongly connected} if for every pair of distinct vertices $u,v \in V(G)$, there is a $(u,v)$-path and a $(v,u)$-path in $G$. Note that by \cref{obs:covering_eulerian_digraphs} every strongly connected Eulerian digraph can be traced by a single Eulerian cycle and thus a cycle as given by \cref{def:paths} is just a strongly connected Eulerian digraph with a fixed linear order on its edges.
\smallskip

For notational convenience, we may write paths and cycles as ordered tuples of vertices when needed. That is~$P=(e_1,e_2,\ldots,e_m)$ may be rewritten as~$P=(v_0,v_2,v_3,\ldots,v_{m})$ where~$e_i=(v_{i-1},v_{i})$ for~$1 \leq i \leq m$; note that this in only unambiguous if the respective edges are clear from context, as we allow for parallel edges. Similarly, we may write~$(v_0,e_1,v_1,e_2,v_2,\ldots,e_m,v_{m})$ for paths to highlight the adjacency of vertices and edges. 

 Further, to avoid corner cases, we allow paths of length~$0$ which consist of isolated vertices, i.e.,~$(v)$ is considered a path with both endpoints being~$v \in V(G)$.

\begin{definition}[Disjoint Paths and Linkages]
    Let~$G$ be a digraph and let~$k,\ell \in \N$. Two paths $P_1=(e_1,\ldots,e_k),P_2=(f_1,\ldots,f_\ell)$ in~$G$ are \emph{edge-disjoint} or \emph{vertex-disjoint} if~$E(P_1) \cap E(P_2) = \emptyset$ or~$V(P_1)\cap V(P_2) = \emptyset$ respectively.

    The paths~$P_1,P_2$ are \emph{internally edge-disjoint} if for~$e \in E(P_1) \cap E(P_2)$ it holds~$e\notin \{e_2,\ldots,e_{k-1}\}\cup\{f_2,\ldots,f_{\ell-1}\}$ and \emph{internally vertex-disjoint} if~$V^\circ(P_1) \cap V^\circ(P_2) = \emptyset$.

    Given a set of edge-disjoint paths~$\LLL$ we call~$\LLL$ a \emph{linkage (in $G$)}\index{linkage}. We call the linkage \emph{clean}\index{linkage!clean}\index{clean linkage} if every path in $\LLL$ is clean, and \emph{linear}\index{linkage!linear}\index{linear linkage} if each path in~$\LLL$ is linear. We call $\LLL$  \emph{strong} if it is clean and, given the set $T\subseteq V(G)$ of endpoints of paths in $\LLL$, no vertex in $T$ is an internal vertex of a path in $\LLL$. 
    
    We call~$k\coloneqq\Abs{\LLL}$ the \emph{order}\index{linkage!order} of the linkage, and call $\LLL$ a $k$-linkage, respectively. Finally, we define~$\tau(\LLL) \coloneqq \{\tau(P) \mid P \in \LLL\}$\Symbol{TAULLL@$\tau(\LLL)$} as the \emph{type} of the linkage and define $v$-$\tau(\LLL)$\Symbol{VTAULLL@$v{-}\tau(\LLL)$} analogously.
    
    We fix $E(\LLL) \coloneqq \bigcup_{L \in \LLL} E(L)$ and we say that a path $P \subseteq G$ is \emph{edge-disjoint from $\LLL$} if and only if $E(P) \cap E(\LLL) = \emptyset$.
\end{definition}
\begin{remark}
    Note that a strong linkage is a clean linkage but it is not necessarily linear.
\end{remark}

The definitions of edge-disjointness and vertex-disjointness lift to cycles and circles in a straightforward way. The following is obvious by re-routing.
\begin{observation}\label{obs:linkage_gives_linear_linkage}
    Let~$G$ be a digraph and let~$\LLL=\{L_1,\ldots,L_t\}$ be a linkage in~$G$ of order~$t \geq 1$. Let $\CCC=\{C_1,\ldots,C_t\}$ a set of edge-disjoint cycles in $G$. Then the following hold true.
    \begin{enumerate}
        \item There exists a linear linkage~$\LLL'=\{L_1',\ldots,L_t'\}$ in $G$ of order~$t$ with~$E(L_i') \subseteq E(L_i)$ and $\tau(L_i') = \tau(L_i)$ for every $1 \leq i \leq t$.\label{obs:linkage_gives_linear_linkage:1}
        \item  There exists a clean linkage $\LLL''$ of order $t$  with $\tau(\LLL'') =\tau(\LLL)$ such that every path in $\LLL'$ is a subpath of a path in $\LLL$.\label{obs:linkage_gives_linear_linkage:2}
        \item  There exists a set $\CCC'$ of edge-disjoint circles in $G$ together with a partition $\{\CCC_1,\ldots,\CCC_t\}$ of $\CCC'$ such that $\bigcup_{C\in \CCC_i} C = C_i$ for every $1 \leq i \leq t$.\label{obs:linkage_gives_linear_linkage:3}
    \end{enumerate}
\end{observation}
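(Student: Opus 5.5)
We prove the three parts separately, each by a short shortcutting argument on the edge sequence of a path or cycle, by induction on length.

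For \cref{obs:linkage_gives_linear_linkage:1}, fix $L_i=(e_1,\ldots,e_m)$ with endpoints $v_0,v_m$. We repeatedly remove closed detours. Suppose some internal vertex is visited twice, say $\head(e_j)=\head(e_k)=v$ with $j<k<m$. Then $(e_1,\ldots,e_j,e_{k+1},\ldots,e_m)$ is again a path. It has the same first edge $e_1$ and last edge $e_m$, hence the same type, and it uses a strict subset of the edges.

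The same shortcut removes a visit of an internal vertex to an endpoint. If $\head(e_j)=v_0$ for some $1\le j<m$, then $e_{j+1}$ has tail $v_0$. However, deleting $e_1,\ldots,e_j$ would change the first edge and hence the type. So for the type-preserving statement in Part~1 we only shortcut between two occurrences that are both internal. We must therefore read ``linear'' as in the Remark after \cref{def:paths}: no internal vertex is repeated. If cleanness is also required, a path that returns to $v_0$ internally cannot keep its first edge in general, so I expect the intended meaning is the internal-repetition one.

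Since the length strictly decreases, the process terminates. We obtain $L_i'$ with $E(L_i')\subseteq E(L_i)$ and $\tau(L_i')=\tau(L_i)$. Edge-disjointness of the new paths is inherited because the edge sets only shrink.

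For \cref{obs:linkage_gives_linear_linkage:2}, we take a shortest subpath of $L_i$ between an occurrence of the tail endpoint and an occurrence of the head endpoint, that is, the last occurrence of $v_0$ followed by the first later occurrence of $v_m$. Such a subpath has no internal vertex equal to either endpoint, so it is clean. This choice preserves the vertex-type, so ``$\tau$'' there is best read as $v$-$\tau$. Again edge-disjointness is inherited from $\LLL$.

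For \cref{obs:linkage_gives_linear_linkage:3}, we decompose each cycle $C_i$ separately by induction on its length. If $C_i$ is a circle, we put $\CCC_i=\{C_i\}$. Otherwise some vertex repeats, and we choose two occurrences $\head(e_j)=\head(e_k)$ with $j<k$ and $k-j$ minimal. Then $(e_{j+1},\ldots,e_k,e_{j+1})$ is a circle, because minimality forbids further repetitions. Removing it leaves the cyclic sequence $(e_1,\ldots,e_j,e_{k+1},\ldots,e_m,e_1)$, which is again a cycle, or empty. We recurse on it and let $\CCC_i$ collect the circles produced. These circles are pairwise edge-disjoint, lie in distinct $C_i$ for distinct $i$, and their union is $C_i$. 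This gives the partition $\{\CCC_1,\ldots,\CCC_t\}$ of $\CCC'=\bigcup_i\CCC_i$.

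The only real obstacle is bookkeeping the endpoint conventions: deciding which notion of type and of linearity each part preserves, as discussed above. The combinatorics is routine.
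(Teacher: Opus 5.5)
Your shortcutting argument is the re-routing the paper has in mind (it gives no written proof), and your caveats are correct: with linear including clean and $\tau$ the edge-type, Part~1 already fails for the single path $\bigl((a,b),(b,a),(a,c)\bigr)$; Part~2 must be read with $v$-$\tau$; and whenever the input path is clean, as for the strong linkages to which the paper applies Part~1, your output is genuinely linear because $V^\circ$ only shrinks. One small fix: in Part~2 take the last occurrence of $v_0$ strictly before the final vertex, so that closed paths with $v_0=v_m$, which the paper allows, are also handled.
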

\begin{remark}
    Note that while $\LLL'$ is linear (and thus clean) it does not necessarily satisfy the properties of $\LLL''$ which in turn may not be linear.
\end{remark}

Refining on the notion of type of linkage, we define the following.
\begin{definition}
    Let~$G$ be a digraph and let~$A,B \subseteq V(G)$. Let~$\LLL$ be a linkage in~$G$ such that every~$P \in \LLL$ has one endpoint in~$A$ and one endpoint in~$B$, then we call~$\LLL$ an \emph{ $\{A,B\}$-linkage}.

    Similarly, let~$E,F \subset E(G)$ be non-empty and disjoint. Let~$\LLL$ be a linkage in~$G$ such that every~$P \in \LLL$ has one end in~$E$ and one end in~$F$ and is otherwise edge-disjoint from~$E\cup F$, then we call~$\LLL$ a \emph{$\{E,F\}$-linkage}. If every path in $L$ starts in $E$ and ends in $F$ we call $\LLL$ an \emph{$(E,F)$-linkage}.
\end{definition}

\begin{definition}[Concatenation of Paths]\label{def:concatenation_of_paths}
      Let $P_1=(e_1,\ldots,e_k),P_2=(f_1,\ldots,f_\ell)$ be internally edge-disjoint paths in some graph $G$ such that~$e_k = f_1$ and $e_1 \neq f_\ell$. Then we define~$P = P_1 \circ P_2 \coloneqq (e_1,\ldots,e_k,f_2,\ldots,f_\ell)$ and call~$P_1,P_2$ the \emph{summands} of~$P$.
      Similarly if~$P_1,P_2$ are edge-disjoint and there exists~$v \in V(G)$ with~$ (e_k,v) \in \operatorname{inc}$ and~$(v,f_1) \in \operatorname{inc}$, we define~$P \coloneqq P_1 \circ P_2 = (e_1,\ldots,e_k,f_1,\ldots,f_\ell)$.
\end{definition}
It is straightforward to verify that~$P$ is a path using our definition of paths.

\begin{observation}\label{obs:concat_paths}
    Let $P_1=(e_1,\ldots,e_k),P_2=(f_1,\ldots,f_\ell)$ be internally edge-disjoint paths in a graph~$G$ such that~$e_k = f_1$ and $f_\ell \neq e_1$. Then~$P_1 \circ P_2$ is a path in~$G$. If~$P_1$ and~$P_2$ are edge-disjoint and there exists~$v \in V(G)$ with~$(e_k,v) \in \operatorname{inc}$ and~$ (v,f_1) \in \operatorname{inc}$, then~$P_1 \circ P_2$ is a path.
    
    If in addition to their edge-disjointness,~$P_1$ and $P_2$ are internally vertex-disjoint linear paths, then $P_1 \circ P_2$ is again a linear path.
\end{observation}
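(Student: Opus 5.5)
The plan is to check the defining conditions of \cref{def:paths} directly for $P_1 \circ P_2$, treating the two variants of \cref{def:concatenation_of_paths} separately. In both cases two things must be checked. First, the listed edges must be pairwise distinct. Second, the head of each listed edge must equal the tail of the next one.

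\emph{Shared-edge case} ($e_k = f_1$, internally edge-disjoint, $e_1 \neq f_\ell$). Here $P_1 \circ P_2 = (e_1,\ldots,e_k,f_2,\ldots,f_\ell)$.
\begin{enumerate}
  \item \emph{Adjacency.} Within $P_1$ and within $P_2$ it holds because both are paths. At the junction it holds because $\head(e_k) = \head(f_1) = \tail(f_2)$.
  \item \emph{Distinctness.} Suppose an edge $e$ appears in both $E(P_1)$ and $\{f_2,\ldots,f_\ell\}$. By internal edge-disjointness, $e$ is an end of both paths, so $e \in \{e_1,e_k\}$ and $e = f_\ell$ with $\ell \geq 2$.
  \begin{itemize}
    \item $e = e_1 = f_\ell$ is excluded by hypothesis.
    \item $e = e_k = f_\ell$ gives $f_1 = f_\ell$. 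This forces $\ell = 1$, because the edges of $P_2$ are distinct, contradicting $\ell \geq 2$.
  \end{itemize}
  Hence all edges are distinct.
\end{enumerate}

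\emph{Shared-vertex case} ($P_1,P_2$ edge-disjoint, $(e_k,v),(v,f_1) \in \operatorname{inc}$). Distinctness is immediate from edge-disjointness. The only new adjacency is $\head(e_k) = v = \tail(f_1)$.

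For the linearity claim, I would compute the internal vertex set of the concatenation.
\begin{itemize}
  \item In the shared-edge case, $V^\circ(P_1\circ P_2) = V^\circ(P_1) \cup V^\circ(P_2)$. The junction vertex $\head(e_k) = \head(f_1)$ already lies in $V^\circ(P_2)$ when $\ell \geq 2$.
  \item In the shared-vertex case, $V^\circ(P_1 \circ P_2) = V^\circ(P_1) \cup \{v\} \cup V^\circ(P_2)$, where $v$ is an endpoint of both summands.
\end{itemize}
Internal vertices within each summand are pairwise distinct because $P_1,P_2$ are linear. Internal vertices from different summands are distinct because $V^\circ(P_1) \cap V^\circ(P_2) = \emptyset$. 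In the shared-vertex case, $v$ is distinct from every other internal vertex because the summands are clean.

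It remains to show that the concatenation is clean, i.e. that $\tail(e_1)$ and $\head(f_\ell)$ are not internal. This is the step I expect to be delicate. Cleanliness of $P_1$ keeps $\tail(e_1)$ out of $V^\circ(P_1)$, and cleanliness of $P_2$ keeps $\head(f_\ell)$ out of $V^\circ(P_2)$. However, the stated hypotheses do not obviously prevent $\tail(e_1) \in V^\circ(P_2) \cup \{v\}$ or $\head(f_\ell) \in V^\circ(P_1)$. For example, if $P_1$ is closed with $\tail(e_1) = v$, the concatenation revisits an endpoint as an internal vertex.

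I would therefore make explicit the implicit assumption under which the observation is used: the outer endpoints $\tail(e_1),\head(f_\ell)$ avoid the interior of the other summand and the junction. Under that assumption cleanliness follows at once, and together with distinctness of internal vertices this gives linearity.
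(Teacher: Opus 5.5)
Your verification of the two path claims is exactly the routine check against \cref{def:paths} that the paper leaves implicit; it gives no argument beyond calling it straightforward. Your objection to the linearity clause is also correct. For instance, $P_1=((v,a),(a,v))$ and $P_2=((v,b))$ are edge-disjoint, internally vertex-disjoint linear paths meeting at $v$, yet $v$ is both the first endpoint and an internal vertex of $P_1\circ P_2$. Likewise $P_1=((a,v))$, $P_2=((v,a),(a,b))$ fails because $\tail(e_1)=a\in V^\circ(P_2)$. So the clause holds only under your extra hypothesis $\tail(e_1)\notin V^\circ(P_2)\cup\{v\}$ and $\head(f_\ell)\notin V^\circ(P_1)\cup\{v\}$.

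This does no harm to the rest of the paper, whose proofs only use the path part. In \cref{lem:from_linear_to_respecting_linkage_interlinked}, for example, the concatenation is explicitly called ``a priori not necessarily linear'' and linearity is obtained by a separate minimality argument.
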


\paragraph{Undirected graphs.} We abuse notation and denote undirected graphs by~$G=(V,E)$ where~$E \subseteq \{ \{v,w\} \mid v,w \in V(G)\}$ may be a multiset, noting that it will be clear from the context whenever we talk about undirected graphs. A \emph{tree}~$T=(V,E)$ is a connected acyclic undirected graph. We call vertices of degree~$1$ in~$V(T)$ \emph{leaves}. If $T$ is a tree, we denote the set of its leaves by $\leaves{T} \subseteq V$. We call a tree \emph{cubic} if apart from its leaves all its vertices are of degree three.

\subsection{Immersions}
The following is the minor relation of interest in this paper.

\begin{definition}[Immersion]
\label{def:immersion}
    Let~$G,H$ be directed graphs. Then \emph{$H$ immerses in $G$}, or \emph{$G$ immerses $H$}, if there exists a map $\gamma$ defined on~$V(H) \cup E(H)$ satisfying the following:
    \begin{enumerate}[label=(\arabic*)]
        \item $\restr{\gamma}{V(H)}: V(H) \to V(G)$ is injective,\label{def:immersion:1}
        \item for every $e \in E(H)$, $\gamma(e)$ is a path in $G$ and for distinct~$e_1,e_2 \in E(H)$ the paths $\gamma(e_1),\gamma(e_2)$ are edge-disjoint, and\label{def:immersion:2}
        \item let~$e=(u,v)\in E(H)$ for some~$u,v \in V(H)$, then~$\gamma(e)$ starts in~$\gamma(u)$ and ends in~$\gamma(v)$.\label{def:immersion:3}
    \end{enumerate}
We call an immersion \emph{strong} if for every~$e\in E(H)$,~$V^\circ(\gamma(e)) \cap \gamma(V(H)) = \emptyset$. We write $H \hookrightarrow G$ if $G$ strongly immerses $H$ and $\gamma\colon H \hookrightarrow G$ if $\gamma$ is a map witnessing it. Similarly we write $H\hookrightarrow^* G$ if $G$ immerses $H$. We may say that $G$ \emph{weakly} immerses $H$ to emphasize that the immersion is not strong.

More generally, we write $\gamma\colon V(H) \cup E(H) \to G$ for a map $\gamma$ with $\restr{\gamma}{V(H)}\colon V(H) \to V(G)$ and $\restr{\gamma}{E(H)}\colon E(G) \to \sigma(E(G))$ where $\sigma(E(G))$ is the set of finite sequences in $E(G)$.
\end{definition}
\begin{remark}
    As we allow for loops~$e=(v,v) \in E(G)$, the above definition is only correct since we allow paths to start and end at the same vertex, i.e.~$\gamma(e)$ starts and ends in~$\gamma(v)$.

\end{remark}

Given an immersion $\gamma: H\hookrightarrow G$ and a path $P=(e_1,\ldots,e_k)$ in $H$ we define $\gamma(P)\coloneqq \gamma(e_1) \circ \ldots \circ \gamma(e_k)$, which is easily seen to be a well-defined path in $G$, summarized as follows.

\begin{observation}\label{obs:immersion_maps_path_to_path}
    Let~$\gamma:E(H) \cup V(H) \to G$ be an immersion of directed graphs~$H$ and $G$. Let~$P$ be a path in~$H$. Let $\LLL$ be a $k$-linkage for some $k \geq 1$. Then~$\gamma(P)$ is a path and $\gamma(\LLL)$ is a $k$-linkage in $G$. If additionally $\gamma$ is strong then $\LLL$ is strong.
\end{observation}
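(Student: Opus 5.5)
The plan is to unfold the definitions and apply \cref{obs:concat_paths} inductively along $P$. Write $P=(e_1,\ldots,e_k)$ with vertex sequence $v_0,v_1,\ldots,v_k$, so that $\head_H(e_i)=v_i=\tail_H(e_{i+1})$ for $1\le i<k$. By \cref{def:immersion}~\ref{def:immersion:3}, $\gamma(e_i)$ ends in $\gamma(v_i)$ and $\gamma(e_{i+1})$ starts in $\gamma(v_i)$. Hence the last edge of $\gamma(e_i)$ and the first edge of $\gamma(e_{i+1})$ are incident (as head and as tail, respectively) to the common vertex $\gamma(v_i)$.

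The edges $e_1,\ldots,e_k$ are distinct, so by \cref{def:immersion}~\ref{def:immersion:2} the paths $\gamma(e_1),\ldots,\gamma(e_k)$ are pairwise edge-disjoint. I will show by induction on $j$ that $Q_j\coloneqq\gamma(e_1)\circ\cdots\circ\gamma(e_j)$ is a path from $\gamma(v_0)$ to $\gamma(v_j)$, using the second (edge-disjoint) case of \cref{obs:concat_paths}. Indeed, $Q_j$ uses only edges of $\gamma(e_1),\ldots,\gamma(e_j)$, so it is edge-disjoint from $\gamma(e_{j+1})$, and the two are glued at $\gamma(v_j)$. The degenerate cases are handled separately: for a length-$0$ path $(v)$ I set $\gamma(P)=(\gamma(v))$, and loops need no special treatment since paths may start and end at the same vertex. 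This shows that $\gamma(P)$ is a path, with $E(\gamma(P))=\bigcup_i E(\gamma(e_i))$.

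For a $k$-linkage $\LLL$ in $H$, distinct $P,P'\in\LLL$ have disjoint edge sets. Their images have edge sets $\bigcup_{e\in E(P)}E(\gamma(e))$ and $\bigcup_{e\in E(P')}E(\gamma(e))$, and these are disjoint by \ref{def:immersion:2}. Moreover $\gamma$ is injective on $\LLL$: even paths of length $0$ have distinct images, because $\gamma|_{V(H)}$ is injective. So $\gamma(\LLL)$ is a $k$-linkage in $G$.

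For the strong part, I read the intended claim as: if $\gamma$ is strong and $\LLL$ is strong, then $\gamma(\LLL)$ is strong. The main point to check is that $V^\circ(\gamma(P))=\bigcup_i V^\circ(\gamma(e_i))\cup\{\gamma(v_1),\ldots,\gamma(v_{k-1})\}$. Strongness of $\gamma$ gives that every $V^\circ(\gamma(e_i))$ is disjoint from $\gamma(V(H))$. It therefore remains to consider the vertices $\gamma(v_i)$ with $v_i\in V^\circ(P)$. Let $T$ be the set of endpoints of paths in $\LLL$; the endpoints of paths in $\gamma(\LLL)$ are exactly $\gamma(T)$. Since $\LLL$ is strong, no internal vertex $v_i$ lies in $T$, and injectivity of $\gamma$ on $V(H)$ then gives $\gamma(v_i)\notin\gamma(T)$. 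Hence no vertex of $\gamma(T)$ is internal to any path of $\gamma(\LLL)$, which yields both cleanness and strongness.

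I expect no real obstacle here. The only care needed is the bookkeeping of internal vertices under concatenation, together with the degenerate cases: length $0$ paths, loops, and paths whose two endpoints coincide.
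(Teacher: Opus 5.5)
Your proof is correct and is essentially the argument the paper intends. The paper gives no separate proof: it defines $\gamma(P)$ as the concatenation $\gamma(e_1)\circ\cdots\circ\gamma(e_k)$ and calls it easily seen to be a path. That is exactly your induction via the edge-disjoint case of \cref{obs:concat_paths}, together with the bookkeeping $V^\circ(\gamma(P))=\bigcup_i V^\circ(\gamma(e_i))\cup\gamma(V^\circ(P))$.

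Your repair of the garbled last sentence (assume $\LLL$ is strong and conclude that $\gamma(\LLL)$ is strong) is the right one, and it matches the remark following the observation. Without assuming $\LLL$ strong the conclusion would fail, because $\gamma$ sends internal vertices of $P$ to internal vertices of $\gamma(P)$.
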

\begin{remark}
    Note that linear paths may be mapped to non-linear paths, i.e., they may self-intersect at vertices, and similarly if $\gamma$ is not strong, then strong linkages and linear linkages may be mapped to simple linkages that are neither strong nor linear.
\end{remark}

The following is clear by the above and the \cref{def:immersion} of immersions.
\begin{observation}\label{obs:strong_immersion:yields_strong_linkage}
Let~$G,H$ be directed graphs. Let $\gamma: H\hookrightarrow G$, then $\gamma(E(H))$ is a strong linkage in~$G$. 
Let $\eta: H\hookrightarrow^* G$, then $\eta(E(H))$ is a linkage in $G$, but it is not necessarily strong.
\end{observation}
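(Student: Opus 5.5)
The statement to prove is \cref{obs:strong_immersion:yields_strong_linkage}: for a strong immersion $\gamma\colon H\hookrightarrow G$, the family $\gamma(E(H))$ is a strong linkage in $G$, and for a weak immersion $\eta\colon H\hookrightarrow^* G$, the family $\eta(E(H))$ is a linkage. The plan is to unpack the definitions directly.

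First I check the linkage property, which both parts share. By \crefdef{def:immersion}{def:immersion:2}, each $\gamma(e)$ for $e\in E(H)$ is a path in $G$, and distinct edges of $H$ are sent to edge-disjoint paths. So $\{\gamma(e)\mid e\in E(H)\}$ is a set of pairwise edge-disjoint paths, which is exactly a linkage. One subtlety is that a loop $e=(v,v)$ could have $\gamma(e)$ of length $0$. I would treat this under the paper's convention that length-$0$ paths are allowed; edge-disjointness is trivially preserved in that case. The same argument applies verbatim to $\eta$, and this finishes the weak case.

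Next I verify strongness. Let $T\subseteq V(G)$ be the set of endpoints of paths in $\gamma(E(H))$. By \crefdef{def:immersion}{def:immersion:3}, every path $\gamma(e)$ with $e=(u,v)$ has endpoints $\gamma(u)$ and $\gamma(v)$, so $T\subseteq\gamma(V(H))$. The strongness condition of the immersion says that $V^\circ(\gamma(e))\cap\gamma(V(H))=\emptyset$ for every $e\in E(H)$, and hence also $V^\circ(\gamma(e))\cap T=\emptyset$. In particular no path contains its own endpoints as internal vertices, so each $\gamma(e)$ is clean. Both requirements of a strong linkage are therefore met.

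Finally, for the weak case I would exhibit a small example showing that strongness can fail. Take $H$ to be the digraph on $u,v,w$ with edges $(u,v)$ and $(v,w)$, and take $G=H$ with $\eta$ the identity except that $(u,v)$ is routed through $w$ via an extra cycle. This makes $w\in T$ an internal vertex. There is no real obstacle here; the only care needed is the length-$0$/loop corner case and remembering that $T$ may be a strict subset of $\gamma(V(H))$, which only makes the condition easier to satisfy.
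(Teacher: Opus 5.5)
Your argument for the main part is correct, and it is the definitional unpacking the paper intends; the paper gives no proof and calls the observation clear from \cref{def:immersion}. Condition (3) of \cref{def:immersion} puts every endpoint of $\gamma(e)$ in $\gamma(V(H))$, so $T\subseteq\gamma(V(H))$. The strongness condition keeps all internal vertices out of $\gamma(V(H))$, hence out of $T$. This gives both cleanness and strongness, and condition (2) gives the linkage property in both the strong and the weak case.

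The part that fails is your example for ``not necessarily strong''. With $G=H$ there is no extra cycle to route through. The only $(u,v)$-path in $G$ is the single edge $(u,v)$, and the only $(v,w)$-path is $(v,w)$. So the unique immersion of $H$ in $G$ is the identity, which is strong.

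You need $G\neq H$. For instance, let $G$ have vertices $u,v,w$ and edges $(u,w),(w,v),(v,w)$. Take $\eta$ to be the identity on vertices, with $\eta((u,v))=((u,w),(w,v))$ and $\eta((v,w))=((v,w))$. This is a weak immersion. The vertex $w$ is an endpoint of $\eta((v,w))$ and an internal vertex of $\eta((u,v))$. Hence $\eta(E(H))$ is a clean linkage that is not strong.
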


We want to emphasize here, that our definition of immersion is slightly different, than in the general literature, where $\gamma(e)$ is in general assumed to be a \emph{linear} path. Note that \cref{obs:linkage_gives_linear_linkage} implies that the definitions are equivalent. However, we have a good reason and will clarify why we chose this definition in \cref{sec:knitworks}. The following is well-known (even using the above definition).
\begin{observation}
(Strong) Immersion defines a quasi-order on digraphs.
\end{observation}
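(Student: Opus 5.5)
We need to show that strong immersion $\hookrightarrow$ and weak immersion $\hookrightarrow^*$ are reflexive and transitive on digraphs. Reflexivity is immediate. For any digraph $G$, take $\gamma$ to be the identity on $V(G)$ and set $\gamma(e) \coloneqq (e)_G$ for every $e \in E(G)$. This map is injective on vertices. The paths $(e)$ are pairwise edge-disjoint. Each $(e)$ starts in $\tail(e)$ and ends in $\head(e)$, and this includes loops, because paths with a single endpoint are allowed. Moreover $V^\circ((e)) = \emptyset$, so the immersion is strong.

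For transitivity, let $\gamma_1 \colon H \hookrightarrow G$ and $\gamma_2 \colon G \hookrightarrow K$, or the weak versions. We define $\gamma \coloneqq \gamma_2 \circ \gamma_1$ as follows.
\begin{enumerate}
\item On vertices, $\gamma(v) \coloneqq \gamma_2(\gamma_1(v))$. This is injective as a composition of injective maps.
\item On an edge $e \in E(H)$, write $\gamma_1(e) = (f_1,\ldots,f_m)$ and set $\gamma(e) \coloneqq \gamma_2(\gamma_1(e)) = \gamma_2(f_1) \circ \cdots \circ \gamma_2(f_m)$, as defined before \cref{obs:immersion_maps_path_to_path}.
\end{enumerate}
By \cref{obs:immersion_maps_path_to_path}, $\gamma(e)$ is a path in $K$. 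Consecutive summands fit together because $\head(f_i) = \tail(f_{i+1})$ and $\gamma_2$ respects endpoints. The summands are pairwise edge-disjoint because the $f_i$ are distinct, so the second clause of \cref{obs:concat_paths} applies. Since $\gamma_2$ respects endpoints, $\gamma(e)$ starts in $\gamma_2(\gamma_1(\tail(e)))$ and ends in $\gamma_2(\gamma_1(\head(e)))$.

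For distinct $e, e' \in E(H)$, the paths $\gamma_1(e)$ and $\gamma_1(e')$ use disjoint edge sets of $G$. Their images are unions of $\gamma_2$-images of distinct edges of $G$, which are pairwise edge-disjoint in $K$. Hence $\gamma(E(H))$ is a linkage, and $\gamma$ is a weak immersion.

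The step needing most care is strongness. Suppose $\gamma_1$ and $\gamma_2$ are strong, and let $x \in V^\circ(\gamma(e))$. There are two cases.
\begin{itemize}
\item If $x$ is internal to some summand $\gamma_2(f_i)$, then strongness of $\gamma_2$ gives $x \notin \gamma_2(V(G)) \supseteq \gamma(V(H))$.
\item Otherwise $x$ is a junction vertex $\gamma_2(v_i)$, where $v_i = \head(f_i) \in V^\circ(\gamma_1(e))$. Strongness of $\gamma_1$ gives $v_i \notin \gamma_1(V(H))$. Since $\gamma_2$ is injective on vertices, $x \notin \gamma_2(\gamma_1(V(H)))$.
\end{itemize}
In both cases $x \notin \gamma(V(H))$, so $\gamma$ is strong. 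This proves transitivity for both relations.

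One subtlety remains: under the paper's conventions a path may have coinciding endpoints. If $\gamma_1(e)$ is a closed path (for instance the image of a loop), the concatenated image still has distinct edges and consecutive incidences, so it is a valid path.
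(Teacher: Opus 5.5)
Your proof is correct and is the standard argument the paper relies on when it calls this observation well known. Reflexivity uses the identity map, and transitivity uses the composition $\gamma_2\circ\gamma_1$ with edge images concatenated via \cref{obs:immersion_maps_path_to_path} and \cref{obs:concat_paths}; this is also how the paper treats the underlying digraphs in the proof of \cref{lem:knitwork_imm_is_quasi_order}, and your explicit strongness check at the junction vertices $\gamma_2(\head(f_i))$, using strongness of $\gamma_1$ and injectivity of $\gamma_2$, fills in the step the paper leaves implicit.
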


We say that two (di)graphs $G,H$ are \emph{isomorphic} and write $H \cong G$ if there is a map $\eta:H \to G$ that bijectively maps vertices to vertices and edges to edges respecting their incidences. Note that if~$G \hookrightarrow H$ and~$H \hookrightarrow G$ then~$H \cong G$.

In light of \cref{def:immersion} we define \emph{immersion models}, or simply \emph{models}, as follows.

\begin{definition}[Immersion model]
     Let~$H$ and $G$ be Eulerian digraphs. Let $V_H \subseteq V(G)$ and let $\LLL$ be a (strong) linkage in $G$ of order $\Abs{E(H)}$. If there are bijections $\gamma_V:V(H) \to V_H$ and $\gamma_E:E(H) \to \LLL$ such that for every $e=(u,v) \in E(H)$ it holds $v$-$\tau(\gamma_E(e)) = (\gamma_V(u),\gamma_V(v))$, we call $(V_H,\LLL)$ a \emph{(strong) immersion model of $H$ (in $G$)}.
\end{definition}

The following is imminent.
\begin{observation}\label{obs:model=immersion}
    Let~$H$ and $G$ be Eulerian digraphs. Then $G$ (strongly) immerses $H$ if and only if there is a (strong) immersion model of $H$ in $G$.
\end{observation}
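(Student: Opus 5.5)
The statement is \cref{obs:model=immersion}: $G$ (strongly) immerses $H$ exactly when $G$ contains a (strong) immersion model of $H$. I will prove the two directions separately, each by translating between the map $\gamma$ of \cref{def:immersion} and the pair $(V_H,\LLL)$.

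For the forward direction, let $\gamma\colon H \hookrightarrow^* G$ (respectively $\gamma\colon H\hookrightarrow G$). Set $V_H \coloneqq \gamma(V(H))$ and $\LLL \coloneqq \gamma(E(H))$.
\begin{itemize}
\item By \cref{def:immersion:1}, $\gamma_V \coloneqq \restr{\gamma}{V(H)}$ is injective, so it is a bijection $V(H)\to V_H$.
\item By \cref{def:immersion:2}, the paths $\gamma(e)$ for distinct $e$ are pairwise edge-disjoint. So $\LLL$ is a linkage and $\gamma_E \coloneqq \restr{\gamma}{E(H)}$ is injective onto $\LLL$, giving $\Abs{\LLL}=\Abs{E(H)}$.
\item By \cref{def:immersion:3}, for $e=(u,v)$ the path $\gamma(e)$ starts in $\gamma(u)$ and ends in $\gamma(v)$. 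Hence $v$-$\tau(\gamma_E(e))=(\gamma_V(u),\gamma_V(v))$.
\item In the strong case, strongness of $\LLL$ is \cref{obs:strong_immersion:yields_strong_linkage}.
\end{itemize}

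For the backward direction, take bijections $\gamma_V\colon V(H)\to V_H$ and $\gamma_E\colon E(H)\to\LLL$ as in the definition of a model, and define $\gamma\coloneqq\gamma_V\cup\gamma_E$.
\begin{itemize}
\item Injectivity of $\gamma_V$ gives \cref{def:immersion:1}.
\item Since $\gamma_E$ is a bijection onto the linkage $\LLL$, distinct edges of $H$ go to distinct, hence edge-disjoint, paths. This gives \cref{def:immersion:2}.
\item The vertex-type condition gives \cref{def:immersion:3} directly.
\end{itemize}

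The only step needing care is strongness in the backward direction. A strong linkage forbids endpoints of its paths from being internal vertices, whereas a strong immersion forbids every vertex of $\gamma(V(H))=V_H$ from being internal. These could differ if $V_H$ contained vertices that are not endpoints of any path, for instance images of isolated vertices of $H$.

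This case does not arise for the relevant graphs. In an Eulerian $H$ that is weakly connected with at least one edge, as the paper generally assumes, every vertex is incident to an edge. Its image is then an endpoint of some path in $\LLL$, so the two conditions coincide.

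If isolated vertices are allowed, I would adopt the natural reading that a strong immersion model requires $V^\circ(L)\cap V_H=\emptyset$ for all $L\in\LLL$. With that reading the equivalence is verbatim. Failing that, one could argue componentwise: handle isolated vertices by noting that the footnote's convention allows results to transfer to components. I expect this convention check to be the only real obstacle; everything else is unpacking definitions.
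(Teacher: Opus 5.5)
Your argument is correct and is essentially the paper's: both directions just unpack \cref{def:immersion} and the definition of an immersion model. The one procedural difference is in the backward direction. There the paper first replaces $\LLL$ by a linear linkage via \cref{obs:linkage_gives_linear_linkage} before defining $\gamma$. The paper's immersions may map edges to non-linear paths, so this step is unnecessary, and your direct definition $\gamma=\gamma_V\cup\gamma_E$ works. The paper's detour only additionally yields an immersion by linear paths.

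Your caveat about isolated vertices is a real subtlety that the paper's proof passes over. However, the componentwise fallback you mention does not work: with isolated vertices, the backward direction of the strong equivalence is false as the definitions are written. Take the following example.
\begin{itemize}
\item Let $H$ be a directed $2$-cycle on $a,b$ together with an isolated vertex $c$, and let $G$ be the directed triangle $x\to z\to y\to x$.
\item The linkage $\{((x,z),(z,y)),((y,x))\}$, with $a\mapsto x$, $b\mapsto y$ and $c\mapsto z$, is a strong immersion model. Its only internal vertex $z$ is not an endpoint of any path.
\item Nevertheless $H\not\hookrightarrow G$. Injectivity forces $\gamma$ to map $\{a,b,c\}$ bijectively onto $V(G)$. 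In the triangle, one of the two paths between $\gamma(a)$ and $\gamma(b)$ must pass through $\gamma(c)$.
\end{itemize}
Each component of $H$ strongly immerses in $G$ on its own. The failure comes from the interaction between components, so it cannot be repaired componentwise. The valid remedies are the two you name first: assume every vertex of $H$ is incident to an edge whenever $E(H)\neq\emptyset$ (e.g.\ $H$ weakly connected), or require $V^\circ(L)\cap V_H=\emptyset$ for all $L\in\LLL$ in the definition of a strong model.
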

\begin{proof}
       We give a proof for the strong variant for the other is analogous. If $\gamma:H\hookrightarrow G$ is a strong immersion, then $(\gamma(V(H)),\gamma(E(H))$ is a strong model of $H$ in $G$ by \cref{def:immersion}. Conversely, let $(V_H,\LLL)$ be a strong model of $H$ in $G$. By \cref{obs:linkage_gives_linear_linkage} there exists a strong model $(V_H,\LLL')$ of $H$ where $\LLL'$ is a linear linkage. Let $\eta:V(H) \to V_H$ and $\zeta:E(H) \to \LLL'$ be the bijection as in the definition of strong model. Define $\gamma:V(H)\cup E(H) \to G$ via $\restr{\gamma}{V(H)} = \eta$ and $\restr{\gamma}{E(H)}=\zeta$, then this does the trick.
\end{proof}

\paragraph{Splitting Off.} There are different ways to define minors for undirected graphs, one of which is via graph-modifying operations, namely taking subgraphs and contracting edges. Regarding immersions we have the following operation.

\begin{definition}[Splitting off] \label{def:splitting_edgepairs}
    Let~$G=(V,E,\operatorname{inc})$ be a directed graph, let~$v \in V(G)$, and let~$e,e' \in E(G)$ such that~$(e,v),(v,e') \in \operatorname{inc}$. Let~$u\coloneqq\tail(e)$ and~$w \coloneqq \head(e')$. \emph{Splitting off~$(e,e')$ (at~$v$) in~$G$} results in a new graph~$G'$ by deleting~$e,e' \in E(G)$ and~$(u,e),(e,v),(v,e'),(e',w) \in \operatorname{inc}$ and subsequently adding a new edge~$f \notin V(G) \cup E(G)$ together with the incidences~$(u,f),(f,w)$. We remove~$v$ from the vertex set if and only if it is an isolated vertex (it has no incidences left) after splitting off; we define~$(G',f) \coloneqq \spl(G;(e,e'))$ or simply write~$G' \coloneqq \spl(G;(e,e'))$ if we do not specifically need~$f$ for simplicity.
\end{definition}

Note that $e$ and $e'$ may be the same edge, i.e., a loop. The following is folklore \cite{frank95,Bang-JensenG2018,EDP_Euler}.

\begin{observation}\label{obs:immersion_robust_under_splitting_off}
    Let~$H,G$ be Eulerian digraphs such that $G$ immerses $H$. Then there exists an Eulerian subgraph~$G'$ of~$G$ such that $H$ can be constructed from $G$ by sequentially splitting off edge-pairs.
    
    Conversely, given a graph $G$ and edges $e,e' \in E(G)$ with $\head(e) = \tail(e')$, it holds~${\spl(G;(e,e')) \hookrightarrow G}$.
\end{observation}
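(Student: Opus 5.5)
The plan is to treat the two parts separately. Both come down to bookkeeping of incidences under \cref{def:splitting_edgepairs}, using \cref{def:immersion} and \cref{obs:linkage_gives_linear_linkage}. Throughout, the resulting graph is identified with $H$ up to isomorphism.

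\emph{First part.} Let $\gamma\colon H\hookrightarrow^* G$. By \cref{obs:linkage_gives_linear_linkage} and \cref{obs:model=immersion} we may assume every $\gamma(e)$ is linear, though this is not even needed. Define $G'$ to have vertex set $\gamma(V(H))\cup\bigcup_{e\in E(H)}V(\gamma(e))$ and edge set $\bigcup_{e\in E(H)}E(\gamma(e))$. The first step is to check that $G'$ is Eulerian by counting at each vertex $w$.
\begin{itemize}
\item Every traversal of $w$ as an internal vertex of some $\gamma(e)$ contributes one in-edge and one out-edge.
\item If $w=\gamma(x)$, the paths $\gamma(e)$ with $e$ leaving $x$ contribute $\deg^+_H(x)$ out-edges at their start. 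Those with $e$ entering $x$ contribute $\deg^-_H(x)$ in-edges at their end. A loop at $x$ contributes one of each.
\item Since the paths are pairwise edge-disjoint, each edge is counted once, so in-degree equals out-degree at $w$ because $H$ is Eulerian.
\end{itemize}

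The second step performs the splittings path by path. For $\gamma(e)=(e_1,\ldots,e_m)$, split off $(e_1,e_2)$ at $\head(e_1)$ to obtain a new edge $f_2$, then split off $(f_2,e_3)$, and so on. After $m-1$ steps this leaves a single edge from $\gamma(u)$ to $\gamma(v)$. Each splitting only touches edges of the current path, so the remaining paths stay intact and edge-disjoint, and we can iterate over all $e\in E(H)$. At the end the edge set is in bijection with $E(H)$ with the correct heads and tails. A vertex of $G'$ outside $\gamma(V(H))$ has lost all its incidences and has been deleted. A vertex $\gamma(x)$ is deleted only if it has become isolated. Such an $x$ then has degree $0$ in $H$, and we re-add it as an isolated vertex; alternatively, we may assume $H$ has no isolated vertices, which are trivially handled. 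The result is isomorphic to $H$.

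\emph{Second part.} Let $(G_s,f)=\spl(G;(e,e'))$. Define $\gamma$ as the identity on $V(G_s)$ and on $E(G_s)\setminus\{f\}$, and set $\gamma(f)\coloneqq(e,e')$, which is a path from $\tail(e)$ to $\head(e')$ through $v$. The images are pairwise edge-disjoint, since $e,e'$ are absent from $G_s$, so $\gamma$ is an immersion. The only possible internal vertex of a path is $v$ on $\gamma(f)$. Hence strongness holds exactly when $v\notin\gamma(V(G_s))$, that is, when $v$ became isolated and was removed. This occurs, for example, when $v$ has degree two, or when the splitting is part of a complete sequence as in the first part.

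This is the step I expect to be the main obstacle, or at least the point needing care. If $v$ retains other incidences, then $v$ is an image vertex and is internal to $\gamma(f)$, so the immersion is only weak in general. I would therefore state the converse as $\spl(G;(e,e'))\hookrightarrow^* G$, with $\hookrightarrow$ exactly when $v$ is deleted, and check which form is used later in \cref{sec:knitworks}.
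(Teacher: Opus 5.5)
The paper gives no proof of this observation; it is quoted as folklore, so there is no competing argument to compare with. Your proof of the forward direction is the standard one and is correct. You take $G'$ to be the union of the image trails, check that it is Eulerian by counting internal traversals, starts and ends at each vertex, and then split off consecutively along each trail. Note that the statement's ``constructed from $G$'' has to be read as ``from $G'$''. That is what you actually prove, and it is what \cref{lem:cw_closed_under_immersion} uses. Your caveat about isolated vertices of $H$ is a genuine but harmless corner case.

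Your suspicion about the converse is well founded. With $\hookrightarrow$ denoting strong immersion, as the paper fixes it, the converse is false. A small counterexample:
\begin{enumerate}
\item Let $G$ have vertices $\{u,v,w\}$ and edges $(u,v),(v,u),(v,w),(w,v)$.
\item Split off $((u,v),(v,w))$ at $v$. Then $v$ survives, and $\spl(G;((u,v),(v,w)))$ is the directed triangle $u\to w\to v\to u$.
\item A strong immersion of this triangle into $G$ must be a bijection on the three vertices, so every image trail would have to be a single edge.
\item But $G$ contains no directed triangle.
\end{enumerate}
So the correct form is $\spl(G;(e,e'))\hookrightarrow^* G$, and the immersion is strong \emph{whenever} $v$ is deleted.

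The condition is ``whenever'', not ``exactly when''. If $e$ is a loop at $v$ and $e'\neq e$, then $v$ survives, but $\spl(G;(e,e'))$ is just $G-e$ with $e'$ renamed, which $G$ strongly immerses.

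If $e=e'$ is a loop, then $(e,e')$ is not a path in the sense of \cref{def:paths}. In that case map the new edge to $(e)$ instead.

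None of this affects the paper. Only the forward direction is ever invoked, in \cref{lem:cw_closed_under_immersion}.
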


\subsection{Induced Cuts}
We start by defining  cuts and induced cuts. The way we define them is purely undirected and is thus not equivalent to the standard directed definitions of these notions. 
Observe that an Eulerian digraph is strongly connected if and only if it is weakly connected. In particular, we will only be interested in weak connectivity: We therefore fix that a digraph $G$ is \emph{connected} if it is weakly connected and \emph{disconnected} if it is not. If we need strong connectivity we will explicitly say so.

\begin{definition}[Cuts]
    Let~$G$ be a connected digraph. A \emph{cut}, is a set~$F \subseteq E(G)$ such that the digraph~$G-F$ is disconnected. We refer to~$\Abs{F}$ as the \emph{order} of the cut.

    Let~$X_1,X_2 \subset V(G)$ be disjoint strict subsets. We call~$F$ an \emph{$\{X_1,X_2\}$-cut} if there is no connected component in $G-F$ that contains a vertex of $X_1$ and a vertex of $X_2$. We define~$\delta(X_1,X_2)$ to be the minimal order of any~$\{X_1,X_2\}$-cut in $G$.

\end{definition}
\begin{remark}
    By definition if~$X_1 \cap X_2 \neq \emptyset$ then there exists no~$\{X_1,X_2\}$-cut. 
\end{remark}
The definitions can be extended to disconnected directed graphs by extending it component-wise in the obvious way.

\begin{definition}[Induced Cuts]
    Let~$G$ be a digraph and let~$X \subset V(G)$ be a strict subset. 
    
    We define~$\rho(X,\bar{X}) \coloneqq \{(u,v) \in E(G) \mid u \in X \text{ and } v\in \bar{X}\}$ and $\rho^{+}(X) \coloneqq \rho(X, \bar X)$ and $\rho^{-}(X) \coloneqq \rho(\bar{X},X)$. 
  Finally, we let~$\rho(X) \coloneqq \rho^-(X) \cup \rho^+(X)$ and refer to it as \emph{the cut induced by~$X$}. We refer to~$\delta(X) \coloneqq \Abs{\rho(X)}$ as the \emph{order of the cut}. If $\Abs{\rho^-(X)} = \Abs{\rho^+(X)}$ we call the cut \emph{Eulerian}. If $X$ induces a cut of order $k \in \N$ we say that $X$ \emph{induces a $k$-cut}.
\end{definition}
\begin{remark}
    Clearly~$\delta(X) = \delta(\bar{X})$ by definition, and similarly~$\rho^-(X) = \rho^+(\bar{X})$. Note that we have abused notation, using~$\rho$ and~$\delta$ on sets and pairs of sets of vertices for the sake of readability. It may never cause confusion which one we are talking about, since they are defined on disjoint domains.
\end{remark}

If~$X=\{v\}$ we may write~$\rho(v)$ instead of~$\rho(\{v\})$ for simplicity. Further, we let~$E(v) \coloneqq \{e \in E(G) \mid v \in \tail(e)\cup \head(e)\}$ denote the set of edges incident to~$v$. The following is straightforward from the definition.
\begin{observation}\label{obs:cut_at_vertex}
    Let~$G$ be a digraph and~$v \in V(G)$, then~$\rho(v) = E(v) \setminus \loops(v)$.
\end{observation}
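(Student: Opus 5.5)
The statement is \cref{obs:cut_at_vertex}: $\rho(v) = E(v)\setminus\loops(v)$. I will prove it by double inclusion, unfolding the definitions of $\rho$, $E(v)$ and $\loops(v)$ with $X=\{v\}$, so that $\bar X = V(G)\setminus\{v\}$.

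For $\rho(v)\subseteq E(v)\setminus\loops(v)$, let $e\in\rho(v)=\rho^+(\{v\})\cup\rho^-(\{v\})$. If $e\in\rho^+(\{v\})$, then $\tail(e)=v$ and $\head(e)\in\bar X$, so $\head(e)\neq v$. Hence $e$ is incident to $v$, so $e \in E(v)$, and $e$ is not a loop. The case $e\in\rho^-(\{v\})$ is symmetric, with $\head(e)=v$ and $\tail(e)\neq v$.

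For the converse, let $e\in E(v)$ be an edge that is not a loop. Then $v\in\{\tail(e),\head(e)\}$ and $\tail(e)\neq\head(e)$. If $\tail(e)=v$, then $\head(e)\neq v$, so $\head(e)\in\bar X$ and $e\in\rho^+(\{v\})$. Otherwise $\head(e)=v$ and $\tail(e)\in\bar X$, so $e\in\rho^-(\{v\})$.

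The only point needing care is notational. The definition of $E(v)$ writes ``$v\in\tail(e)\cup\head(e)$'', which I read as $v\in\{\tail(e),\head(e)\}$. The definition of $\rho$ also presupposes that $\{v\}$ is a strict subset of $V(G)$. If $V(G)=\{v\}$, then every edge is a loop at $v$, so both sides are empty and the identity still holds. There is no real obstacle here.
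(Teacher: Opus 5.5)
Your proof is correct and is exactly the definition-unfolding double inclusion the paper has in mind when it calls this observation ``straightforward from the definition'' (the paper gives no further argument). Your remark on the degenerate case $V(G)=\{v\}$, where $\{v\}$ is not a strict subset, is a reasonable extra touch.
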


It is well-known that on induced cuts the function $\delta$ is symmetric and sub-modular. That is~$\delta(X) = \delta(\bar{X})$ and~$\delta(X) + \delta(Y) \geq \delta(X \cap Y) + \delta(X \cup Y)$ for any pair~$X,Y \subseteq V(G)$. And, of course, the notions of cut and induced cuts are closely related: The following is well-known and straightforward to verify (for it is essentially defined on the undirected underlying graph).

\begin{observation}\label{lem:relating_cut_to_induced_cuts}
    Let~$G$ be a directed graph and let~$X_1,X_2 \subset V(G)$ be strict subsets such that $X_1 \cap X_2 = \emptyset$. Then~$\delta(X_1,X_2)$ is given by the minimum of~$\delta(X)$ over all~$ X \subseteq V(G)$ with~$X_1 \subseteq X$ and~$X \cap X_2 = \emptyset$.
\end{observation}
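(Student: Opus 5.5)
The statement reduces to the undirected underlying multigraph. The plan is to prove the two inequalities between $\delta(X_1,X_2)$ and $m \coloneqq \min\{\delta(X) \mid X_1 \subseteq X,\ X\cap X_2=\emptyset\}$ separately, working component-wise if $G$ is disconnected. Throughout, an edge is viewed only through its two endpoints, since both $\rho(X)$ and the notion of an $\{X_1,X_2\}$-cut ignore orientation.

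For $\delta(X_1,X_2)\le m$, take any $X$ with $X_1\subseteq X$ and $X\cap X_2=\emptyset$, and set $F\coloneqq\rho(X)=\rho^+(X)\cup\rho^-(X)$. Every edge of $G-F$ has both endpoints in $X$ or both in $\bar X$. Hence every connected component of $G-F$ lies entirely inside $X$ or entirely inside $\bar X$. Since $X_1\subseteq X$ and $X_2\subseteq\bar X$, no component meets both $X_1$ and $X_2$, so $F$ is an $\{X_1,X_2\}$-cut of order $\delta(X)$. (Here the non-emptiness assumptions on $X_1,X_2$ in the definition of cut are harmless.)

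For $m\le\delta(X_1,X_2)$, let $F$ be an $\{X_1,X_2\}$-cut of minimum order. Let $X$ be the union of the vertex sets of all components of $G-F$ that meet $X_1$. By construction $X_1\subseteq X$, and by the defining property of $F$ we have $X\cap X_2=\emptyset$. Any edge $e$ with exactly one endpoint in $X$ joins two different components of $G-F$, so $e\in F$; thus $\rho(X)\subseteq F$. Loops and edges inside a component contribute nothing to $\rho(X)$. Therefore $\delta(X)\le\Abs{F}=\delta(X_1,X_2)$. One should also check that $X$ is a strict subset of $V(G)$, so that $\rho(X)$ is defined; this holds because $X_2\neq\emptyset$ and $X_2\subseteq\bar X$.

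Combining the two inequalities gives equality, and the minimum is attained by the set $X$ constructed above. The only delicate point is bookkeeping: the definitions of cut and induced cut must be read on the underlying undirected graph, with parallel edges and loops handled correctly. Loops never belong to any $\rho(X)$, and a minimum cut never needs them, which matches \cref{obs:cut_at_vertex}. Apart from this there is no real obstacle.
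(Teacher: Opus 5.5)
Your proof is correct and is the standard verification the paper has in mind; the paper gives no argument beyond noting that the statement is well known and ``essentially defined on the undirected underlying graph.'' Your two inequalities make this precise: every $\rho(X)$ with $X_1\subseteq X$ and $X\cap X_2=\emptyset$ is an $\{X_1,X_2\}$-cut, and for a minimum $\{X_1,X_2\}$-cut $F$, taking $X$ to be the union of the components of $G-F$ that meet $X_1$ gives $\rho(X)\subseteq F$.
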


The way~$\delta$ is defined it does not come with a nice \emph{directed} ``Menger-property'' in the context of general digraphs. That is, it is not true that~$\delta(X_1,X_2)$ is equal to the maximum order of a \emph{directed}~$\{X_1,X_2\}$-linkage: Take for example a simple alternating path of length at least two and let~$X_1,X_2$ be its extremities. These pathologies disappear when restricted to Eulerian digraphs; the following is folklore.

\begin{observation}
    Let~$G$ be an Eulerian digraph and let $X \subseteq V(G)$, then $X$ induces an Eulerian cut. In particular~$\Abs{\rho(\cdot,\cdot)}$ is symmetric and~$\delta(X) \in 2\N$.
\end{observation}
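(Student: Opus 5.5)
The statement reduces to a counting argument: for an Eulerian digraph, the in- and out-degree sums match vertex by vertex, so summing over $X$ forces the edges leaving $X$ and the edges entering $X$ to be equally many. Concretely, I will double count the edges with at least one endpoint in $X$. For each $v \in X$ we have $\Abs{\{e \sth \tail(e) = v\}} = \Abs{\{e \sth \head(e) = v\}}$ since $v$ is Eulerian. Summing over $v\in X$, the left-hand side counts every edge with tail in $X$ exactly once, and the right-hand side counts every edge with head in $X$ exactly once. Loops contribute once to both sides, and so do edges with both endpoints in $X$.

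Next I split each side according to where the other endpoint lies. Edges with tail in $X$ are those with both ends in $X$ (including loops) together with $\rho^+(X)$. Edges with head in $X$ are those with both ends in $X$ together with $\rho^-(X)$. Cancelling the common term $\Abs{E(G[X])}$ gives $\Abs{\rho^+(X)} = \Abs{\rho^-(X)}$, so the cut induced by $X$ is Eulerian.

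The remaining claims follow directly. We have $\delta(X) = \Abs{\rho^+(X)} + \Abs{\rho^-(X)} = 2\Abs{\rho^+(X)} \in 2\N$, because $\rho^+(X)$ and $\rho^-(X)$ are disjoint: an edge cannot have its tail in $X$ and its head in $\bar X$ while also having its tail in $\bar X$ and its head in $X$. Symmetry of $\Abs{\rho(\cdot,\cdot)}$ means $\Abs{\rho(X,\bar X)} = \Abs{\rho(\bar X, X)}$, which is exactly the equality just shown. It also holds for $\bar X$ by the same argument, or by using $\rho^-(X) = \rho^+(\bar X)$. The degenerate case $X=\emptyset$ or $X = V(G)$ is trivial, since both sides are $0$.

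There is no real obstacle; the only point needing care is bookkeeping with parallel edges and loops in the incidence-structure formalism. Loops are excluded from $\rho(X)$ by definition, and they cancel in the double count. I would state the count as a sum over edges of indicator functions to make this explicit.
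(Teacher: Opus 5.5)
Your double-counting argument is correct and complete. Summing the Eulerian condition over $v \in X$ and cancelling the edges of $G[X]$ (loops included) gives $\Abs{\rho^+(X)} = \Abs{\rho^-(X)}$, and since these two sets are disjoint, $\delta(X) = 2\Abs{\rho^+(X)} \in 2\N$. The paper records this as folklore without giving a proof, and your argument is the standard one it implicitly relies on.
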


Together with \cref{lem:relating_cut_to_induced_cuts} this implies a directed Menger-property for Eulerian digraphs. This is an easy consequence of the directed version of Menger's theorem for directed cuts together with the fact that, if~$G$ is Eulerian and there is a directed linear path from~$u$ to~$v$, then there exists another linear path---edge-disjoint from the first one---that connects~$v$ to~$u$. See also \cite{frank95,EDP_Euler} for further connectivity properties of Eulerian digraphs.

\begin{lemma}\label{lem:Menger_for_Euler}
  Let~$G$ be an Eulerian digraph and~$X_1, X_2 \subset V(G)$ be non-empty and disjoint.
  \begin{enumerate}
  \item For every~$k \in \N$, either there is a linear $\{X_1,X_2\}$-linkage~$\LLL$ of order $2k$ in~$G$ containing~$k$ paths from~$X_1$ to~$X_2$ and~$k$ paths from~$X_2$ to~$X_1$, such that for every $L \in \LLL$ it holds $V^\circ(P) \cap (X_1\cup X_2) = 
  \emptyset$, or $\delta(X_1,X_2) < 2k$.\label{lem:Menger_for_Euler:1}
  \item The maximum number of edge-disjoint paths from~$X_1$ to~$X_2$ is the same as the maximum number of edge-disjoint paths from~$X_2$ to~$X_1$. \label{lem:Menger_for_Euler:2}
  \end{enumerate}
\end{lemma}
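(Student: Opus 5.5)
We prove \cref{lem:Menger_for_Euler} by reducing to the classical directed edge-version of Menger's theorem, using the Eulerian cut property to control the directed cuts.

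First, contract $X_1$ to a single vertex $x_1$ and $X_2$ to a single vertex $x_2$, deleting loops created inside $X_1$ or $X_2$. The resulting digraph $G'$ is still Eulerian: contracting a vertex set into one vertex preserves the balance of in- and out-degrees, since every $Y\subseteq V(G)$ induces an Eulerian cut. An edge-disjoint $(x_1,x_2)$-path in $G'$ lifts to a path in $G$ from $X_1$ to $X_2$. Shortening it at its last visit to $X_1$ and its first visit to $X_2$ gives an $(X_1,X_2)$-path with no internal vertex in $X_1\cup X_2$.

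For item \ref{lem:Menger_for_Euler:2}, apply directed arc-Menger in $G'$. The maximum number of edge-disjoint $(x_1,x_2)$-paths equals $\min\{\Abs{\rho^+(Y)} : x_1\in Y,\ x_2\notin Y\}$. Because every induced cut in an Eulerian digraph is Eulerian, $\Abs{\rho^+(Y)}=\Abs{\rho^-(Y)}=\delta(Y)/2$. The same minimum over the same family of sets $Y$ therefore also gives the maximum number of $(x_2,x_1)$-paths, since such paths cross $\rho^-(Y)$. This proves item \ref{lem:Menger_for_Euler:2}. The common value is $\min_Y \delta(Y)/2$, and by \cref{lem:relating_cut_to_induced_cuts} this equals $\delta(X_1,X_2)/2$.

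For item \ref{lem:Menger_for_Euler:1}, suppose $\delta(X_1,X_2)\ge 2k$. Then the computation above yields $k$ edge-disjoint $(x_1,x_2)$-paths $P_1,\dots,P_k$. The main obstacle is that we need $k$ return paths that are edge-disjoint from these. To get them, let $H$ be the graph formed by the edges of $P_1\cup\dots\cup P_k$. Every vertex other than $x_1,x_2$ is balanced in $H$; $x_1$ has out-surplus $k$ and $x_2$ has in-surplus $k$. Consequently $G'-E(H)$ is a digraph in which every vertex is balanced except $x_2$, which has out-surplus $k$, and $x_1$, which has in-surplus $k$.

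A standard flow-decomposition argument applies to $G'-E(H)$. Repeatedly start at $x_2$ and follow unused edges; the walk can only get stuck at a vertex with in-surplus, namely $x_1$. This produces $k$ edge-disjoint $(x_2,x_1)$-paths $Q_1,\dots,Q_k$ in $G'-E(H)$. Alternatively, one can argue that every cut $Y$ separating $x_2$ from $x_1$ still has at least $k$ outgoing edges in $G'-E(H)$, by counting: $\Abs{\rho^-_{G'}(Y)}\ge k$ and each $P_i$ uses net zero extra edges.

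Finally, lift all $2k$ paths back to $G$ and trim them at $X_1\cup X_2$ as described above. Then apply \crefthm{obs:linkage_gives_linear_linkage}{obs:linkage_gives_linear_linkage:1} to make each path linear without changing its type. Linearising only removes edges, so internal vertices are never added and the condition $V^\circ(L)\cap(X_1\cup X_2)=\emptyset$ is preserved. This gives the required linear linkage of order $2k$.
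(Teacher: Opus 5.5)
Your proof is correct and follows the route the paper indicates. The paper only sketches this lemma as directed Menger's theorem plus the Eulerian fact that deleting a $(u,v)$-path leaves an edge-disjoint $(v,u)$-path; your contraction to $x_1,x_2$ and the residual-surplus (flow-decomposition) argument in $G'-E(H)$ are the $k$-path version of that fact, and the only step you leave implicit is the routine converse trimming (edge-disjoint $(X_1,X_2)$-paths in $G$, cut at their last visit to $X_1$ and first subsequent visit to $X_2$, become $(x_1,x_2)$-paths in $G'$), which item~2 needs so that the maxima agree in $G$ and $G'$.
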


\cref{lem:Menger_for_Euler} implies that the Eulerianness of a digraph guarantees that the smallest undirected~$\{X_1,X_2\}$-cut suffices to witness the maximum number of directed edge-disjoint paths between the resulting components. In particular Eulerianness guarantees this cut to be of even order, and when looking at the directions of the cut-edges, half of them go from one side to the other of the partition and vice-versa.

\section{Knitworks}
\label{sec:knitworks}

We start with defining the basic framework and the tools needed throughout the paper and future work. The main novel concept is that of \emph{$\Omega$-knitworks}, a data structure developed to handle inductive arguments when dealing with (strong) immersion of (directed) graphs. It allows us to store ``immersion types'' when carving graphs via suited labelling functions. As so often when dealing with well-quasi-ordering graphs by some containment relation \cite{GMIV,Gee02}, it will be very helpful to switch to a \emph{rooted} setting simplifying inductive reasoning for surface-embeddings by exploiting structural properties of the graph class (and embeddings) at hand. 

Throughout this section we will mostly work with loopless graphs, as it eases definitions and loops can be directly encoded in $\Omega$-knitworks as we discuss in \cref{subsec:manipulate_knitworks}. Nonetheless, we will highlight whenever graphs are assumed to be loopless for completeness (if not necessarily clear from context). Furthermore, throughout the rest of the paper, we will implicitly assume that all digraphs we consider are \emph{weakly connected} unless explicitly stated otherwise.

\subsection{The General Framework} 
\label{subsec:general framework}

\paragraph{Rooted Digraphs.} We start with a rather general definition of \emph{rooted digraphs}: We root (Eulerian) digraphs in induced cuts, i.e., our roots will be incidences of vertices with edges. While for most of the exposition it would suffice to root our graphs in a \emph{single} vertex inducing a cut, the more general definitions and results are of independent interest and will be needed for future work. 

\begin{definition}[Rooted Digraphs]\label{def:rooted_graph}
Let $\ell \geq 1$ and let~$G$ be a digraph. Let $X_1,\ldots, X_\ell \subset V(G)$ be pairwise disjoint non-empty strict subsets of $V(G)$. Let $\pi_G$ be a map assigning to $X_i$ an ordering of $\rho(X_i)$ for every $1 \leq i \leq \ell$. Then we call $\bar{G} \coloneqq (G,\pi_G, X_1,\ldots,X_\ell)$ a \index{digraph!rooted}\index{rooted digraph}\emph{rooted digraph} and $\bigcup_{i=1}^\ell X_i$ the \emph{root set}. Let $k = \sum_{1 \leq i \leq \ell} \delta(X_i)$, then $k$ is \index{rooted digraph!order}\emph{the order} of $\bar{G}$ and $\ell$ is its \index{rooted digraph!index}\emph{index}. 
If for every $1 \leq i \leq \ell$ it holds $\Abs{X_i} \leq 1$ then we call $\bar{G}$ \index{rooted digraph!controlled}\index{controlled!rooted digraph}\emph{controlled}. If $\ell=1$ then we call $\bar G$ \index{clamped!rooted digraph}\index{rooted digraph!clamped}\emph{clamped}. If $\bar G$ is clamped and controlled we call it \index{rooted digraph!planted}\index{planted!rooted digraph}\emph{planted}.
\end{definition}
\begin{remark}
Note that by definition there may be an edge $e \in E(G)$ with one incidence in $X_i$ and one incidence in $X_j$ for distinct $i,j$. In particular $(X_1,\ldots,X_\ell)$ may partition $V(G)$. Also, if the graphs we are talking about are clear from context we write~$\pi$ instead of~$\pi_G$, and vice-versa we may add the subscript to remove ambiguity.
\end{remark}

Whenever we write $\bar G$ for a rooted digraph, we implicitly denote by $G$ the underlying directed graph.

We lift the definition of isomorphism to rooted digraphs in a natural way by taking an isomorphism between the underlying digraphs and additionally requiring it to map roots to roots respecting their order.

\begin{definition}
    Let~$\bar{G}=(G,\pi_G, X_1,\ldots,X_\ell)$ and $\bar{H}=(H,\pi_H, A_1,\ldots,A_t)$ be rooted digraphs for $\ell,t \geq 1$. We say that~$\bar{G}$ and~$\bar{H}$ are \emph{isomorphic}, denoted as $\bar{G} \cong \bar{H}$, if and only if they share the same index, i.e., $\ell = t$, and $\delta(X_i) = \delta(A_i)$ for every $1 \leq i \leq t$, and further there exists a bijection~$\xi: G \to H$ such that~$\restr{\xi}{V(G)}: V(G) \to V(H)$ and~$\restr{\xi}{E(G)}: E(G) \to E(H)$ are bijections satisfying the following
    \begin{enumerate}[label=(\arabic*)]
    \item $\xi(e) = (\xi(u), \xi(v))$ for all  $e = (u,v) \in E(G)$, and
     \item for~$\pi(X_i)=(e_1,\ldots,e_{k_i})$ and~$\pi(A_i)=(f_1,\ldots,f_{k_i})$ it holds~$\xi(e_j) = f_j$, i.e.,~$\xi(\pi(X_i)) = \pi(A_i)$ for every $1 \leq j \leq k_i$ and $1 \leq i \leq t$ where $k_i = \delta(X_i)$.
    \end{enumerate}
    We write~$\xi:\bar G \cong \bar H$ and simply $\bar G \cong \bar H$ if such a map exists.
\end{definition}

We continue by lifting the definition of immersions to rooted digraphs as follows. 

\begin{definition}[Immersion of Rooted Digraphs]\label{def:rooted_immersion}
    Let $\ell \geq 1$. Let~$\bar{G} = (G,\pi_G,X_1,\ldots,X_\ell)$ and~$\bar{G'}=(G',\pi_{G'}, X_1',\ldots,X_\ell')$ be rooted digraphs of common index $\ell$. We say that~$\bar{G}'$ \emph{(strongly) immerses}~$\bar G$ if $\delta(X_i) = \delta(X_i')$ for every $1 \leq i \leq \ell$ and there is a map $\gamma: V(G)\cup E(G) \to G'$ satisfying the following:
    \begin{enumerate}[label=(\arabic*)]
        \item $\gamma: G \hookrightarrow^* G'$ is a (strong) immersion, \label{def:rooted_immersion:1}
        \item $\gamma$ maps $X_i$ to $X_i'$ and $\bar X_i$ to $\bar X_i'$, for every $1 \leq i \leq \ell$, and \label{def:rooted_immersion:2}
        \item  for every $1 \leq i \leq \ell$, given the natural bijection~$\eta_i: \rho_G(X_i) \to \rho_{G'}(X_i')$ satisfying~$\eta_i(\pi_G(X_i)) = \pi_{G'}(X_i')$, the path~$\gamma(e)$ for~$e \in \rho_G(X_i)$ contains~$e' \in \rho_{G'}(X_i')$ if and only if~$\eta_i(e) = e'$. We say that \emph{$\gamma$ respects the order of the roots}.\label{def:rooted_immersion:3}
    \end{enumerate}
    We call~$\gamma$ a \emph{(strong) rooted immersion} or simply a (strong) immersion of~$\bar{G}$ in~$\bar{G'}$ and write~$\gamma: \bar{G} \hookrightarrow \bar{G}'$ for strong immersion and $\gamma: \bar{G} \hookrightarrow^* \bar{G}'$ for immersion. If a rooted immersion is not strong, we may additionally call it \emph{weak} to highlight that fact.
\end{definition}
\begin{remark}
    Note that \crefdef{def:rooted_immersion}{2} is not necessarily implied by \crefdef{def:rooted_immersion}{3}, but it will be implied implicitly for most of the following results. We impose it in the definition nonetheless for rigorousity.   
\end{remark}

Next we define what we mean by \emph{rooted cuts}.

\begin{definition}[Rooted Cut]\label{def:rooted_cut}
    Let $\ell \geq 1$ and~$\bar{G}= (G,\pi, X_1,\ldots,X_\ell)$ be a rooted digraph. Let~$Y \subset V(G)$ be a strict subset. Then $Y$ \emph{induces a rooted cut (in $\bar G$)}\index{rooted cut}\index{cut!rooted} if for every $1 \leq i \leq \ell$ either~$X_i \subseteq Y$ or $X_i \cap Y= \emptyset$. 

    Given a rooted cut $Y \subset V(G)$ we define $\insc_G(Y) \coloneqq \{X_i \mid X_i \subseteq Y,\ 1 \leq i \leq \ell\}$ and $\outc_G(Y) \coloneqq \{X_i \mid X_i \cap Y = \emptyset,\ 1 \leq i \leq \ell\}$ omitting the subscript if clear from context. We call a rooted cut \emph{proper}\index{rooted cut!proper}\index{cut!rooted proper} if $\insc(Y) \neq \emptyset$. 
\end{definition}
\begin{remark}
   Note that $\insc(Y)$ and $\outc(Y)$ partition $\{X_1,\ldots,X_\ell\}$ by definition. Also note that we will mainly work with proper rooted cuts.
\end{remark}

We refine the notion of rooted digraphs for Eulerian digraphs as follows.
\begin{definition}[Quasi-Eulerian Digraphs]
    Let $G$ be a digraph such that every vertex in $V(G)$ is Eulerian or of degree one. Then we call $G$ \emph{quasi-Eulerian}\index{quasi-Eulerian}. Let $V \subseteq V(G)$ be the set of degree-one vertices. We say that $G$ is \emph{Eulerian up to $V$}. 

    We define \Symbol{V-G@$V^-(G)$}$V^-(G) \subseteq V$ and \Symbol{V-G@$V^+(G)$}$V^+(G) \subseteq V$ to be the vertices of in-degree one and out-degree one respectively. 
    If $\Abs{V^+(G)}= k\in \N$ say, then we say that $G$ admits \index{defect}\emph{defect $2k$}. 
    Let $X \subseteq V(G)$. We define the \index{balance}\emph{balance of $X$ (in $G$)} via $\operatorname{bal}_G(X) \coloneqq \Abs{\Abs{X \cap V^+(G)} - \Abs{X \cap V^-(G)}}$, omitting the subscript if clear from context. We call $X$ \index{balanced}\emph{balanced} if $\operatorname{bal}(X) = 0$. 
\end{definition}
\begin{remark}
    Note that by a simple counting argument we derive that $\Abs{V^+(G)} = \Abs{V^-(G)}$. 
\end{remark}

It turns out that for quasi-Eulerian digraphs balance is symmetric in the following sense.
\begin{observation}\label{obs:balance_is_symmetrical}
    Let $G$ be quasi-Eulerian of defect $\tau \in 2\N$. Let $X \subset V(G)$ induce a cut in $G$. Then~$\operatorname{bal}(X) = \operatorname{bal}(\bar X)$.
\end{observation}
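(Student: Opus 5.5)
The plan is a degree-counting argument on the subgraph $G[X]$. Write $V=V^+(G)\cup V^-(G)$ for the set of degree-one vertices. We want to show that $\Abs{X\cap V^+(G)} - \Abs{X\cap V^-(G)}$ is, up to sign, determined by the cut $\rho(X)$, and then apply the same reasoning to $\bar X$.

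First, I will double count the edges with at least one endpoint in $X$ (loops included). Summing out-degrees over $X$ gives $\Abs{E(G[X])} + \Abs{\rho^+(X)}$. Summing in-degrees over $X$ gives $\Abs{E(G[X])} + \Abs{\rho^-(X)}$. Subtracting the two counts yields
\[
\sum_{v\in X}\bigl(\deg^+(v)-\deg^-(v)\bigr) = \Abs{\rho^+(X)} - \Abs{\rho^-(X)}.
\]
Next I evaluate the left-hand side vertex by vertex. An Eulerian vertex contributes $0$. A vertex of $V^+(G)$ has degree one and out-degree one, so it contributes $+1$. Likewise a vertex of $V^-(G)$ contributes $-1$. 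Hence
\[
\Abs{X\cap V^+(G)} - \Abs{X\cap V^-(G)} = \Abs{\rho^+(X)} - \Abs{\rho^-(X)}.
\]

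Now I apply the same identity to $\bar X$. By the remark following the definition of induced cuts, $\rho^+(\bar X)=\rho^-(X)$ and $\rho^-(\bar X)=\rho^+(X)$. So the signed difference for $\bar X$ is exactly the negative of the one for $X$. Taking absolute values gives $\operatorname{bal}(X)=\operatorname{bal}(\bar X)$.

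As a cross-check, one could instead use $\Abs{V^+(G)}=\Abs{V^-(G)}$ from the remark after the definition of quasi-Eulerian digraphs. The sets $X$ and $\bar X$ partition $V$, so $\Abs{X\cap V^+}+\Abs{\bar X\cap V^+} = \Abs{X\cap V^-}+\Abs{\bar X\cap V^-}$. Rearranging, the signed differences for $X$ and $\bar X$ are negatives of each other, which gives the claim directly. This argument does not even need $X$ to induce a cut.

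There is no real obstacle here. The only care needed is to check that loops and edges inside $X$ cancel in the degree sums, and that a degree-one vertex cannot carry a loop, since a loop would add to both its in- and out-degree.
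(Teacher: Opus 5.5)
Your proof is correct, but it takes a different route from the paper.

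The paper's route is to complete $G$ to an Eulerian digraph $G^*=G+M$ by adding a perfect matching $M\in\operatorname{Match}(V^-(G),V^+(G))$. It then asserts that $\operatorname{bal}(X)$ equals the number $\Abs{M^*}$ of matching edges crossing the cut. Since $M^*=M\cap\rho_{G^*}(X)=M\cap\rho_{G^*}(\bar X)$, the symmetry follows.

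Your route stays inside $G$. You use the handshake identity $\Abs{X\cap V^+(G)}-\Abs{X\cap V^-(G)}=\Abs{\rho^+(X)}-\Abs{\rho^-(X)}$, whose right-hand side changes sign when $X$ is replaced by $\bar X$. Your cross-check goes further and reduces the statement to bookkeeping on the partition $\{X,\bar X\}$ together with $\Abs{V^+(G)}=\Abs{V^-(G)}$.

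\textbf{What your route buys.}
\begin{itemize}
\item You get an exact signed identity with no auxiliary construction.
\item You show that the hypothesis that $X$ induces a cut is unnecessary.
\item You avoid an imprecision in the paper's one-line claim. The equality $\operatorname{bal}_G(X)=\Abs{M^*}$ holds only for a suitably chosen $M$, or if $M^*$ is counted with signs. For example, let $M$ pair an in-degree-one vertex of $X$ with an out-degree-one vertex of $\bar X$, and vice versa. Then $\Abs{M^*}=2$ while the balance is $0$.
\end{itemize}

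\textbf{What the paper's route buys.} It uses the same device of completing a quasi-Eulerian digraph to an Eulerian one by a matching that the paper relies on later, notably in \cref{lem:matching_degree_ones_low_cw}.
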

\begin{proof}
    Since $G$ is quasi-Eulerian we have $\Abs{V^-(G)} = \Abs{V^+(G)}$; let $M \in \operatorname{Match}(V^-(G),V^+(G))$ be a perfect matching. The graph $G^* \coloneqq G+M$ is Eulerian by construction, hence $\delta_{G^*}(X) \in 2\N$. Let $M^* \coloneqq M \cap \rho_{G^*}(X)$. One easily verifies that $\operatorname{bal}_G(X) = \Abs{M^*}$; the claim follows.
\end{proof}
Finally we define rooted Eulerian digraphs as follows.

\begin{definition}[Rooted Eulerian Digraphs]\label{def:rooted_eulerian_digraph}
    Let $(G,\pi, X_1,\ldots,X_\ell)$ be a rooted digraph for some $\ell \in \N$. If $G$ is Eulerian up to $X \subset V(G)$, $X_i$ is balanced for every $1 \leq i \leq \ell$ and $X \subseteq \bigcup_{i=1}^\ell X_i$, then we call $(G,\pi, X_1,\ldots,X_\ell)$ a \emph{rooted Eulerian digraph} and say that $G$ is rooted in $(\pi,X_1,\ldots,X_\ell)$.
\end{definition}
\begin{remark}
    Note that, by definition, rooted ``Eulerian'' digraphs is a misnomer for they are only required to be quasi-Eulerian.
    Since we will almost exclusively work with quasi-Eulerian digraphs, we omitted the word ``quasi'' in the above definition for simplicity. At places where we require rooted Eulerian digraphs to be strictly Eulerian we emphasise this explicitly. Note that if $\bar G$ is controlled, then $G$ is actually Eulerian.
\end{remark}
Intuitively, a rooted Eulerian digraph is Eulerian up to a few degree one vertices that are ``guarded'' by a few balanced sets. The fact that the sets need to be balanced has several reasons: The most intuitive one being, that there is a natural way to identify with $X_i$ a quasi-Eulerian digraph that is ``quasi'' a subgraph of $G$; we elaborate on this later when defining ``torsos'' and ``pieces''.

We derive  the following.

\begin{lemma}\label{lem:rooted_Eulerian_cuts_are_Eulerian}
     Let $(G,\pi, X_1,\ldots,X_\ell)$ be a rooted Eulerian digraph. Let $Y \subset V(G)$ induce a rooted cut. Then $Y$ induces a Eulerian cut.
\end{lemma}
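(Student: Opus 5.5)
We argue by counting degrees, exactly as in the proof of \cref{obs:balance_is_symmetrical}. Let $G$ be Eulerian up to $X \subseteq \bigcup_{i=1}^\ell X_i$ and let $Y$ induce a rooted cut. Since $Y$ is a union of the roots in $\insc(Y)$ together with non-root vertices, and every degree-one vertex of $G$ lies in some $X_i$, we have $Y \cap X = \bigcup_{X_i \in \insc(Y)} (X_i \cap X)$, a disjoint union.

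The first step is the balance computation. By definition $X_i$ is balanced, i.e.\ $\Abs{X_i \cap V^+(G)} = \Abs{X_i \cap V^-(G)}$. Summing over the roots in $\insc(Y)$ gives $\Abs{Y \cap V^+(G)} = \Abs{Y\cap V^-(G)}$, so $Y$ is balanced.

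The second step is the double count. Sum over $v \in Y$ the quantity $\deg^+(v) - \deg^-(v)$. Each edge with both ends in $Y$ contributes $0$ (loops included). Each edge of $\rho^+(Y)$ contributes $+1$ and each edge of $\rho^-(Y)$ contributes $-1$. So the sum equals $\Abs{\rho^+(Y)} - \Abs{\rho^-(Y)}$. On the other hand, Eulerian vertices contribute $0$, a vertex of $V^+(G)$ contributes $+1$, and one of $V^-(G)$ contributes $-1$. Hence the sum is $\Abs{Y\cap V^+(G)} - \Abs{Y \cap V^-(G)} = 0$ by the first step. Therefore $\Abs{\rho^+(Y)} = \Abs{\rho^-(Y)}$, i.e.\ the cut is Eulerian.

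Alternatively, one can add a perfect matching $M$ between $V^-(G)$ and $V^+(G)$ and note that the crossing count of $M$ at $Y$ relates to the balance. However, the direct count above avoids having to choose the matching compatibly with the roots. The main (mild) obstacle is only the bookkeeping: one must ensure that no degree-one vertex lies in $Y$ outside the roots in $\insc(Y)$, which is exactly where $X \subseteq \bigcup_i X_i$ and the rooted-cut condition ($X_i\subseteq Y$ or $X_i \cap Y = \emptyset$) are used.
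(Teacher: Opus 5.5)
Your proof is correct, but it takes a different route from the paper. The paper contracts each root set $X_i$ to a single vertex $x_i^*$ and asserts that the resulting digraph $G^*$ is Eulerian because each $X_i$ is balanced. It then notes that the rooted-cut condition gives $\rho_{G^*}(Y^*)=\rho_G(Y)$, and invokes the folklore fact that every induced cut of an Eulerian digraph is Eulerian.

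You instead do a single handshake-type count of $\sum_{v\in Y}(\deg^+(v)-\deg^-(v))$ directly in $G$. You use $X\subseteq\bigcup_i X_i$ and the pairwise disjointness of the $X_i$ to reduce the vertex side to the balance of the roots in $\insc(Y)$.

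Your version is more self-contained. The paper's claim that each contracted vertex $x_i^*$ is Eulerian is exactly your count applied to $Y=X_i$, and the folklore fact is your count applied in $G^*$. So you effectively supply the steps the paper leaves implicit. What the paper's phrasing buys is a preview of the contraction of root sets (the torso-style construction) that it reuses later, e.g.\ in \cref{lem:subknitwork_difference_eulerian} and \cref{obs:torso}.

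One small inaccuracy: your opening sentence says you argue ``exactly as in the proof of \cref{obs:balance_is_symmetrical}''. That proof is actually the perfect-matching argument you later mention as the alternative, not a degree count. This does not affect the validity of your argument.
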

\begin{proof}
    For every $1 \leq i \leq k$ identify the vertices in $X_i$ to a single vertex $x_i^*$. The resulting graph $G^*$ is Eulerian since each partition $X_i$ is balanced whence it induces a balanced cut. Let $Y^* \subset V(G^*)$ be the set of vertices obtained from $Y$ after the above identifications. By \cref{def:rooted_cut} of rooted cut, either $X_i \subseteq Y$ or $X_i \cap Y = \emptyset$, which implies that $\rho(Y^*) = \rho(Y)$. But $G^*$ is Eulerian and thus $Y^*$ induces a Eulerian cut. This concludes the proof.
\end{proof}
\begin{remark}
    In particular, the cuts $\rho(X_i)$ are rooted for every $1 \leq i \leq k$. 
    The above construction of $G^*$ will essentially be the above mentioned ``torso'' of $G$, but we omitted a general definition here, as it will make more sense after introducing more tools.
\end{remark}

Intuitively \cref{lem:rooted_Eulerian_cuts_are_Eulerian} guarantees that rooted Eulerian digraphs ``behave'' like Eulerian digraphs. From here on we will tacitly use that rooted cuts are of even order implicitly applying \cref{lem:rooted_Eulerian_cuts_are_Eulerian} for simplicity.

\subsection{$\Omega$-Knitworks}\label{subsec:knitworks} 
\paragraph{Matchings and Links.} 
We introduce \emph{links} to keep track of ``how'' certain parts of a graph have been ``knitted'' to the graph.

\begin{definition}[Matchings]
  Let~$E_1,E_2$ be two non-empty sets of distinct elements with~$\Abs{E_1} = \Abs{E_2}$. A \emph{matching of $(E_1,E_2)$} is a set~$M\subseteq E_1 \times E_2$ such that for every element~$e_1 \in E_1$ there is at most one element~$m=(e,e') \in M$ with~$e_1 = e$---if it exists we say that $e_1$ is \emph{matched (by $M$)}---and similarly for~$e_2 \in E_2$ there is at most one element~$m=(e,e') \in M$ with~$e_2 = e'$ and if it exists we say that $e_2$ is \emph{matched (by $M$)}.  We call a matching~$M$ \emph{perfect} if every element~$e \in E_1\cup E_2$ is matched by $M$. We denote by~$\operatorname{Match}(E_1,E_2)$ the set of all matchings of~$(E_1,E_2)$.
\end{definition}

\begin{definition}[Links]\label{def:well-linked_links}
  Let $E_1,E_2$ be a set of distinct elements with $\Abs{E_1} = \Abs{E_2}$.  A \emph{link}\index{link} is a set~$\mathfrak{M} \subseteq \operatorname{Match}(E_1,E_2)$ of matchings. We call $\mathfrak{M}$ \emph{reliable}\index{link!reliable}\index{reliable link} if $M' \in \mathfrak{M}$ for every $M \in \mathfrak{M}$ and every $M' \subseteq M$. 

  We call~$\mathfrak{M}$ \emph{linkable}\index{linkable} if for every~$(e_1,e_2) \in E_1 \times E_2$ there exists~$M \in \mathfrak{M}$ with~$(e_1,e_2) \in M$. 

  We call~$\mathfrak{M}$ \emph{well-linked} if for all for all partitions~$E_j= X_j \cup Y_j$ with~$\Abs{X_1} = \Abs{X_2}$ and~$\Abs{Y_1} = \Abs{Y_2}$ for~$j=1,2$ there exists~$M \in \mathfrak{M}$ such that~$M = M_X \cup M_Y$ where~$M_X \in \operatorname{Match}(X_1,X_2)$ and~$M_Y \in \operatorname{Match}(Y_1,Y_2)$ are perfect matchings.

    We call~$\mathfrak{M}$ \emph{inter-linked} if for all pairs~$E_1',E_2'$ with~$E_j' \subseteq E_j$ and~$\Abs{E_1'} = \Abs{E_2'}$  and for all partitions~$E_j'= X_j \cup Y_j$ with~$\Abs{X_1} = \Abs{X_2}$ and~$\Abs{Y_1} = \Abs{Y_2}$ for~$j=1,2$ there exists~$M \in \mathfrak{M}$ such that~$M = M_X \cup M_Y$ where~$M_X \in \operatorname{Match}(X_1,X_2)$ and~$M_Y \in \operatorname{Match}(Y_1,Y_2)$ are perfect matchings.
\end{definition}
\begin{remark}
    Note that by definition an inter-linked link is well-linked, and a well-linked link is linkable.
\end{remark}

The intuition of links is to keep track of ``how'' a (strong) immersion is allowed to interact with parts of the graph that at some point have been ``cut out'' and replaced by a single vertex but will later be ``knitted'' back in again.

The following is the main data structure for the remainder of this paper and will play a crucial role in future work. 
\begin{definition}[$\Omega$-Knitworks]\label{def:knitwork}
    Let~$\Omega = (V(\Omega),\preceq)$ be a well-quasi-order. An \emph{$\Omega$-knitwork} is a tuple~$\GGG=(\bar{G},\mu_G,\m_G,\Phi_G)$ such that
    \begin{enumerate}[label=(\arabic*)]
        \item $\bar{G}$ is a loopless rooted Eulerian digraph; let $X \subseteq V(G)$ be the root set of $\bar G$,\label{def:knitwork:1}
        \item $\mu_G$ is a map with~$\dom(\mu_G) \subseteq V(G)\setminus X$ and~$\mu_G(v)$ is an ordering of~$\rho(v)$,\label{def:knitwork:2}
        \item $\m_G$ is a map with~$\dom(\m_G) \subseteq \dom(\mu_G)$ and~$\m_G(v) \subseteq \operatorname{Match}(\rho^-(v),\rho^+(v))$ is a reliable link, and\label{def:knitwork:3}
        \item $\Phi_G$ is a function with~$\dom(\Phi_G) \subseteq V(G)\setminus X$ and~$\Phi_G(v) \in V(\Omega)$.\label{def:knitwork:4}
    \end{enumerate}
    If for all~$v \in \dom(\m_G)$ the links~$\m_G(v)$ are  well-linked we call~$\GGG$ \emph{well-linked}, and if they are inter-linked we call $\GGG$ \emph{inter-linked}. 
    
   Given the rooted Eulerian digraph~$\bar{G}$ we call~$\GGG$ an~\emph{$\Omega$-knitwork for $\bar G$} and call $\ell$ its \emph{index}. Similarly given $\GGG$ we call $\bar{G}$ its \emph{underlying rooted Eulerian digraph}. If $\bar{G}$ is controlled or planted we call $\GGG$ \emph{controlled} or \emph{planted} respectively.

    Given a rooted Eulerian digraph $\bar G$ we call $(\mu_G,\m_G,\Phi_G)$ an \emph{$\Omega$-sleeve} for $\bar G$ if $\GGG \coloneq (\bar G, \mu_G, \m_G ,\Phi_G)$ is an $\Omega$-knitwork. Finally we may omit (or add) the subscript $G$ writing $\GGG=(\bar G, \mu, \m ,\Phi)$ if clear from context.
\end{definition}
\begin{remark}
    Note that two distinct rooted Eulerian digraphs may share a common $\Omega$-sleeve---whence adding subscripts may help to distinguish them---and a single rooted Eulerian digraph may admit several distinct $\Omega$-sleeves.
\end{remark}
Whenever we write $\GGG$ for an $\Omega$-knitwork, we implicitly mean that the underlying rooted digraph is denoted by $\bar G$. The following is imminent from the definition, and an important observation for future work.

\begin{observation}\label{obs:deg4_well=inter}
    Let $\Omega$ be a well-quasi-order. Let $\GGG$ be a well-linked $\Omega$-knitwork such that $G$ is of maximum degree four. Then $\GGG$ is inter-linked.
\end{observation}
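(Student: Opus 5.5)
The plan is a direct case analysis at each vertex $v \in \dom(\m_G)$, using that the maximum degree bound leaves only very small cut sides. Since $\dom(\m_G) \subseteq \dom(\mu_G) \subseteq V(G)\setminus X$, every such $v$ is not a root vertex. Hence $v$ is Eulerian, because $G$ is Eulerian up to a subset of the root set. Also $G$ is loopless, so $\rho(v)=E(v)$ by \cref{obs:cut_at_vertex}. Writing $E_1 \coloneqq \rho^-(v)$ and $E_2\coloneqq \rho^+(v)$, we get $\Abs{E_1}=\Abs{E_2}\le 2$. Since we assume $G$ connected and $v\in V(G)$, $v$ is not isolated, so $\Abs{E_1}=\Abs{E_2}\in\{1,2\}$. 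Fix $\MM\coloneqq\m_G(v)$, which is reliable and well-linked. We must show that $\MM$ is inter-linked.

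First I derive two auxiliary facts.
\begin{itemize}
\item $\MM$ contains the empty matching. Apply well-linkedness to the trivial partition $X_j=E_j$, $Y_j=\emptyset$; this yields some $M\in\MM$, so $\MM\neq\emptyset$. Reliability then gives $\emptyset\in\MM$.
\item For every $(a,b)\in E_1\times E_2$ we have $\{(a,b)\}\in\MM$. If $\Abs{E_j}=1$, the trivial partition yields a perfect matching of $(E_1,E_2)$, which must be $\{(a,b)\}$. If $\Abs{E_j}=2$, apply well-linkedness to $X_1=\{a\}$, $X_2=\{b\}$, $Y_j=E_j\setminus X_j$. The resulting $M\in\MM$ contains the perfect matching $\{(a,b)\}$ of $(X_1,X_2)$. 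By reliability, $\{(a,b)\}\in\MM$.
\end{itemize}

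Now take arbitrary $E_j'\subseteq E_j$ with $\Abs{E_1'}=\Abs{E_2'}\eqqcolon s$, and partitions $E_j'=X_j\cup Y_j$ with $\Abs{X_1}=\Abs{X_2}$ and $\Abs{Y_1}=\Abs{Y_2}$.
\begin{itemize}
\item If $s=\Abs{E_1}$, then $E_j'=E_j$, and the required $M$ exists directly by well-linkedness.
\item If $s=0$, all four sets are empty, and $M=\emptyset\in\MM$ works.
\item Otherwise $s=1<\Abs{E_1}=2$, say $E_1'=\{a\}$ and $E_2'=\{b\}$. Up to swapping the roles of $X$ and $Y$, we have $X_1=\{a\}$, $X_2=\{b\}$ and $Y_1=Y_2=\emptyset$. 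Then $M=\{(a,b)\}\in\MM$ is the union of a perfect matching of $(X_1,X_2)$ and the empty perfect matching of $(Y_1,Y_2)$.
\end{itemize}
Since degree at most four forces $s\le 2$, these cases are exhaustive. Hence $\m_G(v)$ is inter-linked for every $v\in\dom(\m_G)$, and so $\GGG$ is inter-linked.

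The only point needing care is the use of reliability to pass from a matching containing $(a,b)$ to the singleton $\{(a,b)\}$, and to $\emptyset$. The argument is otherwise routine.
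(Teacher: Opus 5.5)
Your proposal is correct. The paper gives no written proof and only calls the statement immediate from the definitions. Your case analysis is exactly that unpacking: you use the well-linkedness partitions together with reliability to obtain $\emptyset$ and each singleton $\{(a,b)\}$ in $\m_G(v)$, and then check $s\in\{0,1,2\}$.
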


The notions established for rooted Eulerian digraphs naturally lift to~$\Omega$-knitworks by omitting the $\Omega$-sleeve, as exemplified by the following.
\begin{definition}
    Let~$\Omega$ be a well-quasi-order and let $\GGG$ be an $\Omega$-knitwork. Then we say that~$Y \subset V(G)$ induces a \emph{rooted cut} in~$\GGG$ if and only if it induces a rooted cut in~$\bar{G}$.
\end{definition}

Similarly we may lift isomorphisms to knitworks as follows.
\begin{definition}[Isomorphic Knitworks]\label{def:isomorphism_knitwork}
     Let~$\Omega$ be a well-quasi-order and let~$\GGG=(\bar G,\mu,\m,\Phi)$ and~$\HHH=(\bar H,\nu,\n,\Psi)$ be~$\Omega$-knitworks. Then we say that~$\GGG \cong \HHH$ if and only if there is a map~$\xi:\bar G \cong \bar H$ such that for every $x \in V(H)$ it holds~$\Psi(\xi(x)) = \Phi(x)$,~$\nu(\xi(x)) = \xi(\mu(x))$ and~$\n(\xi(x)) = \xi(\m(x))$. We write~$\xi:\GGG \cong \HHH$ or simply $\GGG \cong \HHH$ if such a $\xi$ exists.
\end{definition}

And finally we define what we mean by a \emph{sub-knitwork}.
\begin{definition}[Sub-Knitwork]\label{def:subknitwork}
    Let $\Omega$ be a well-quasi-order. Let $\GGG = (\bar G,\mu
    _\GGG,\m_\GGG,\Phi_\GGG)$ be an $\Omega$-knitwork rooted in $(\pi,X_1,\ldots,X_\ell)$ for some $\ell \geq 1$.
    Let $H \subseteq G$ such that $\bigcup_{i=1}^\ell X_i \subseteq V(H)$ and $\rho_{G}(X_i) = \rho_H(X_i)$ for all $1 \leq i \leq \ell$. We call $\bar H \coloneqq (H,\pi,X_1,\ldots,X_\ell)$ a \emph{rooted sub-digraph of $\bar G$}. Fix $\dom(\mu_\HHH) \coloneqq \dom(\mu_\GGG) \cap V(H)$ as well as $\dom(\m_\HHH) \coloneqq \dom(\m_\GGG) \cap V(H)$ and define the $\Omega$-sleeve as follows.
    \begin{enumerate}[label=(\roman*)]
        \item Let $v \in \dom(\mu_\HHH)$ and $\mu_\GGG(v) = (e_1,\ldots,e_k)$ for some respective $k \in 2\N$. Let $\rho_H(v) = \{e_{i_1},\ldots,e_{i_t}\}$ for some $t\in 2\N$ and $1 \leq i_1 < \ldots < i_t \leq k$. Then we set $\mu_\HHH(v)\coloneqq (e_{i_1},\ldots,e_{i_t})$.
        \item let $v \in \dom(\m_\HHH)$. For every $\MMM \in \m_\GGG(v)$ define $\MMM' \coloneqq \MMM \cap \operatorname{Match}(\rho_H^-(v),\rho_H^+(v))$. Then we define $\m_\HHH(v) \coloneqq \{ \MMM' \mid \MMM \in \m_\GGG(v)\}$, and
        \item define $\Phi_\HHH \coloneqq \restr{\Phi_\GGG}{V(H)}$.
    \end{enumerate} 
    We call $(\mu_\HHH,\m_\HHH,\Phi_\HHH)$ the \emph{$\Omega$-sleeve induced by $H$ (from $(\mu_\GGG,\m_\GGG,\Phi_\GGG)$)} and we call $\HHH$ a \emph{sub-knitwork of $\GGG$} and write $\HHH \subseteq \GGG$.
\end{definition}
\begin{remark}
    One easily verifies that $\HHH$ is indeed an $\Omega$-knitwork, where each of the maps $\mu_\HHH,\m_\HHH$ and $\Phi_\HHH$ are the ``naturally'' induced maps by $H$. Note that $\HHH$ and $\GGG$ are rooted in the same set by definition. 
\end{remark}

The following is a straightforward consequence to \cref{def:subknitwork}.
\begin{lemma}\label{lem:subknitwork_difference_eulerian}
     Let $\Omega$ be a well-quasi-order. Let $\GGG$ be an $\Omega$-knitwork and let $\HHH \subseteq \GGG$. Then $G - H$ is an Eulerian digraph.
\end{lemma}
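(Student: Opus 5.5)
The plan is a purely local degree computation. Write $D \coloneqq G - H = (V(G), E(G)\setminus E(H))$. For every vertex $v \in V(G)$ and $F \in \{G,H,D\}$ let
$$\beta_F(v) \coloneqq \Abs{\{e \in E(F) \mid \head(e)=v\}} - \Abs{\{e \in E(F) \mid \tail(e)=v\}},$$
with $\beta_H(v)\coloneqq 0$ if $v \notin V(H)$. Since $E(G)$ is the disjoint union of $E(H)$ and $E(D)$, and both graphs are loopless and carry the incidences inherited from $G$, we get $\beta_D(v) = \beta_G(v) - \beta_H(v)$ for every $v$. So it suffices to show $\beta_G(v) = \beta_H(v)$ for all $v \in V(G)$.

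First I use what both graphs are. By \cref{def:knitwork}, $\bar G$ is a rooted Eulerian digraph. By the remark after \cref{def:subknitwork}, $\bar H$ is one as well. Hence, by \cref{def:rooted_eulerian_digraph}, every vertex of $G$ or of $H$ with nonzero imbalance has degree one in that graph and lies in the root set $X \coloneqq \bigcup_i X_i$. So for $v \notin X$ we have $\beta_G(v)=0=\beta_H(v)$ (trivially if $v \notin V(H)$), and $v$ is Eulerian in $D$. For vertices of $X$ that are Eulerian in both $G$ and $H$ there is likewise nothing to do. This reduces the lemma to the degree-one vertices.

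The remaining and main step concerns a vertex $v \in X_i$ of degree one in $G$ or in $H$. If $v$ has degree one in $G$ with unique edge $e$, and $e \in E(H)$, then $v$ also has degree one in $H$ with the same orientation, so $\beta_G(v)=\beta_H(v)$. Conversely, $v$ cannot have degree one in $H$ while being Eulerian in $G$ with larger degree: I would rule this out by a balance argument, comparing $\operatorname{bal}_G(X_i)=0=\operatorname{bal}_H(X_i)$ together with $\rho_G(X_i)=\rho_H(X_i)$, and using the construction of $G^*$ from the proof of \cref{lem:rooted_Eulerian_cuts_are_Eulerian}.

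This leaves the one case I expect to be the real obstacle: $v$ has degree one in $G$ but its edge $e$ is not in $H$. This is only possible if $e$ is not in $\rho_G(X_i)=\rho_H(X_i)$, i.e.\ if $e$ has both endpoints inside $X_i$. I would first try to show that in a rooted Eulerian digraph the edges at degree-one vertices are exactly the root edges in $\rho(X_i)$, which is how the degree-one vertices arise ("guarded" cut edges). Failing that, I would check that the intended reading of $H \subseteq G$ keeps all such edges, i.e.\ that $H$ has the same set of degree-one vertices as $G$. In the controlled case, which is all the paper mainly needs, this case does not arise at all: $G$ is then genuinely Eulerian by the remark after \cref{def:rooted_eulerian_digraph}, and so is $H$ as a controlled rooted Eulerian digraph. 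Therefore $\beta_G\equiv\beta_H\equiv 0$, and $D$ is Eulerian immediately.
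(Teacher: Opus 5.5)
Your degree bookkeeping is fine: $\beta_{G-H}=\beta_G-\beta_H$, every vertex outside $X=\bigcup_i X_i$ is Eulerian in both $G$ and $H$, and your controlled case is complete. The genuine gap is the step where you want to exclude, using $\operatorname{bal}_G(X_i)=0=\operatorname{bal}_H(X_i)$ and $\rho_G(X_i)=\rho_H(X_i)$, a vertex of $X_i$ that is Eulerian in $G$ but has degree one in $H$. That step is false, because balance only counts over all of $X_i$ and a second vertex of opposite sign compensates.

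Here is a counterexample. Take vertices $a,b,c,d$, edges $(a,c),(c,a),(c,b),(b,c),(c,d),(d,c)$, and the single root set $X_1=\{a,b,c\}$; this $G$ is loopless, connected and Eulerian. Let $H$ be $G$ minus $(c,a)$ and $(b,c)$. Then:
\begin{itemize}
\item $H$ is connected and Eulerian up to $\{a,b\}\subseteq X_1$, with $a\in V^+(H)$ and $b\in V^-(H)$, so $X_1$ is balanced;
\item $\rho_H(X_1)=\{(c,d),(d,c)\}=\rho_G(X_1)$, hence $\HHH\subseteq\GGG$ (with empty sleeves).
\end{itemize}
Yet $G-H$ consists of the edges $(b,c),(c,a)$, and $a$, $b$ are not Eulerian there. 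So the statement, read literally, fails once a root set has more than one vertex. No case analysis on degree-one vertices can finish your argument. The same example refutes your hope that edges at degree-one vertices are root edges, since the edge $(a,c)$ at $a$ lies inside $X_1$. Your last case, by contrast, is excluded by the standing weak-connectivity convention: a degree-one vertex of $G$ whose edge is missing from $H$ would be isolated in $H$.

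The paper's proof never looks inside the root sets. It contracts each $X_i$ to a vertex $x_i^*$ as in \cref{lem:rooted_Eulerian_cuts_are_Eulerian}, so that $G^*$ and $H^*$ are Eulerian with $\rho_{G^*}(x_i^*)=\rho_{H^*}(x_i^*)$. It then concludes that $G^*-H^*$ is Eulerian; edges of $G-H$ inside some $X_i$ become loops at $x_i^*$ and are invisible. What is really established, and all that is used (e.g.\ in \cref{cor:stitch-and-knit_subknitwork}), is the following: $G-H$ contains no edge of any $\rho_G(X_i)$ and is Eulerian at every vertex of $V(G)\setminus X$. Your computation of $\beta$ outside $X$, together with $\rho_G(X_i)=\rho_H(X_i)$, proves exactly this collapsed statement, so that is the version you should aim for.
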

\begin{proof}
    Let $\GGG$ and $\HHH$ be rooted in $(\pi,X_1,\ldots,X_\ell)$ for some $\ell \geq 1$. Let $G^*$ be obtained from $G$ by contracting $X_i$ to $x_i^*$ for every $1 \leq i \leq \ell$, and let $H^*$be defined from $H$ analogously. Then $G^*$ and $H^*$ are Eulerian and $\rho_{G^*}(x_i^*) = \rho_{H^*}(x_i^*)$ for every $1 \leq i \leq \ell$. Let $C \coloneqq G^* - H^*$ deleting any resulting isolated vertices. Then $C$ is Eulerian with $C \subseteq G - H$ by construction concluding the proof.
\end{proof}

Abstractly speaking, one may think of an~$\Omega$-knitwork~$(\bar{G},\mu,\m,\Phi)$ as a rooted Eulerian digraph together with a set of vertices~$\dom(\mu)$ which are placeholders marking cuts---namely~$\rho(v)$---where we will later ``knit'' another rooted Eulerian digraph, $(H,\pi, Y)$ say, into that cut by replacing~$v$, identifying~$\rho_G(v)$ with~$Y$ such that~$\mu_G(v)$ and~$\pi_H(Y)$ agree, and so that the ``immersion-type'' of~$(H,\pi, Y)$ agrees with the label~$\Phi(v)$. The map~$\m$ is used to keep track of the ``feasible linkages'' in~$H$ with both ends in $Y$, making sure that the immersion adheres to the restrictions posed by respective parts we carved out before (this will mainly be of importance in future work, once we carve out structurally rich pieces, i.e., of large carving width).

Matching the intuition behind $\m$ given above we define the following; recall that for a loopless digraph $G$, \cref{obs:cut_at_vertex} implies $E(v) = \rho(v)$ for every $v \in V(G)$.

\begin{definition}[Matchings of~$\LLL$ and~$\gamma$]\label{def:matching_of_gamma}
    Let~$G$ be a loopless digraph and~$\LLL$ a linkage in~$G$. Let~$v \in V(G)$. We define~\Symbol{MLL@$M_\LLL(v)$}$M_\LLL(v) \coloneqq \{(e,e') \mid e,e' \in \rho(v), \ \head(e) = v = \tail(e'), \text{ and } (e,e') \subseteq L \text{ for some } L \in \LLL \}$.
    
    Let~$G'$ be another loopless digraph and~$\gamma: E(G) \cup V(G) \to G'$ be an immersion. Let~$v' \in V(G')$. 
    We define~\Symbol{MGAMMAV@$M_\gamma(v')$}$M_\gamma(v') := \{ (e, e') \in M_\gamma(v') \sth $ there is~$e^* \in E(G)$ with $(e,e') \subseteq \gamma(e^*) \}    
    \in \operatorname{Match}(\rho^-(v'),\rho^+(v'))$.

\end{definition}
\begin{remark}
Note that if some path in $\LLL$ starts in an edge $e \in \rho(v)$, say, then $(e,e') \not\in M_\LLL(v)$ for every $e' \in E(G)$. Furthermore,~$\gamma$ is strong if and only if $M_\gamma(v') = \emptyset$ for every~$v' \in \gamma(V(G))$.
\end{remark}

Finally, the following definition of \emph{immersion of~$\Omega$-knitworks} may strengthen the intuition discussed above.

\begin{definition}[$\Omega$-knitwork immersion]\label{def:knitwork_immersion}
    Let~$\Omega=(V(\Omega),\preceq)$ be a well-quasi-order.
    
    Let $(\bar{H},\nu,\n,\Psi)$ and $(\bar{G},\mu,\m,\Phi)$ be~$\Omega$-knitworks.
    We say that~$(\bar{G},\mu,\m,\Phi)$ \emph{(strongly) immerses}\index{knitwork!immersion}\index{immersion!knitwork} $(\bar{H},\nu,\n,\Psi)$ if and only if there exists a sub-knitwork $(\bar{G}',\mu',\m',\Phi') \subseteq (\bar G,\mu,\m,\Phi)$ and a  map~$\gamma: V(H) \cup E(H) \to G'$ satisfying the following
    \begin{enumerate}[label=(\arabic*)]
        \item  \label{def:knitwork_immersion:1} $\gamma$ (strongly) immerses~$\bar{H}$ into~$\bar{G}'$; in particular~$\gamma$ respects the order of the roots whence $\bar H$ and $ \bar G'$, and thus $\bar H$ and $\bar G$, have some common index $\ell \in \N$,
        \item\label{def:knitwork_immersion:2}  for each~$v \in V(H)$,~$\gamma(v) \in \dom(\mu')$ if and only if~$v \in \dom(\nu)$, and~$\gamma(v) \in \dom(\m')$ if and only if~$v \in \dom(\n)$,
        \item\label{def:knitwork_immersion:3} for each~$v \in \dom(\nu)$ the index $k$ of~$\nu(v)=(e_1,\ldots,e_k)$ and $k'$ of ~$\mu'(\gamma(v))=(e_1',\ldots,e_k')$ agree and are even and for~$1 \leq i \leq k$ the edge~$e_i'$ is contained in~$\gamma(e_i)$; we say that $\gamma$ \emph{respects~$\nu$ and~$\mu'$ (or $\mu$)},
        \item\label{def:knitwork_immersion:4} for each~$v' \in \dom(\m') \setminus \gamma(\dom(\n))$,~$M_\gamma(v') \in \m'(v')$, 
        \item\label{def:knitwork_immersion:5} for each~$v \in \dom(\n)$ there is an injection $\alpha: \n(v) \to \m'(\gamma(v))$ such that 
        if $(e_1,\ldots,e_k)=\nu(v)$ and~$(e_1',\ldots,e_k')=\mu'(\gamma(v))$ then, for every $M\in \n(v)$, $(e_i,e_j) \in M$ if and only if $(e_i',e_j') \in \alpha(M)$, and
        \item\label{def:knitwork_immersion:6} for every~$v \in \dom(\Psi)$ we have~$\Psi(v) \preceq \Phi'(\gamma(v))$, in particular~$\gamma(\dom(\Psi)) \subseteq \dom(\Phi')$.
    \end{enumerate}
    We call the map~$\gamma$ a \emph{(strong) $\Omega$-knitwork immersion} or simply \emph{(strong) immersion} if clear from the context, and write~$\gamma:(\bar{H},\nu,\n,\Psi) \hookrightarrow (\bar{G},\mu,\m,\Phi)$\Symbol{HOOKRIGHT@$\hookrightarrow$} for strong immersion and $\gamma:(\bar{H},\nu,\n,\Psi) \hookrightarrow^* (\bar{G},\mu,\m,\Phi)$\Symbol{HOOKRIGHTSTAR@$\hookrightarrow^*$} for immersion, and we call $\GGG'$ a \emph{witness (for $\gamma$)}. We call $\gamma$ \emph{stable} if $\GGG' = \GGG$ is a witness.
    
    If the $\Omega$-knitwork immersion is not strong we may call it \emph{weak} to highlight that fact. 
\end{definition}
\begin{remark}
    It is a natural idea to, instead of requiring the existence of a sub-knitwork, simply loosen the restriction of Condition (3) and only require the index of~$\mu(\gamma(v))$ to be at least as large as the index of~$\nu(v)$ (and adapt Condition (4) accordingly). Although one easily verifies that for 
    \emph{strong} $\Omega$-knitwork immersion this results in an equivalent definition, for \emph{weak} $\Omega$-knitwork immersion, however, this results in a strictly weaker notion. This is is why settle for the above definition. Note that in the setting of bounded degrees both notions turn out to be equally powerful. However, when working with weak $\Omega$-knitwork immersion one may have to loosen $(3)$ to accommodate for ``unbounded" degrees, where immersion allows to route through vertices of the immersion model. We will discuss one work-around in this exposition in \cref{sec:bded_treewidth}, and note that to prove \cref{conj:wqo_gen} without labelled vertices, it seems up until now that the stronger variant of weak immersion as defined above is enough.
    
    Condition $(4)$ may intuitively be thought of as a check that, whenever $\gamma$ routes through a vertex $v' \in \dom(\m')$, then $\gamma$ is indeed \emph{realisable} in $v'$, where $v'$ is to be thought of as a placeholder vertex encoding the feasible linkages through the cut $\rho(v')$. Note that, when dealing with weak immersion, one may be tempted to extend $(4)$ to $\gamma(\dom(\n))$ which is not needed due to our choice of definition, i.e.,  $\dom(\n) \subseteq \dom(\nu)$ and, by $(3)$, every vertex in $\gamma(\dom(\n))$ is already ``fully used up'', i.e., no edge adjacent to a vertex in $\gamma(\dom(\n))$ can be used to route further paths. Compare this to the previous discussion on loosening $(4)$.
    
    Finally, condition $(5)$ may intuitively be thought of as a guarantee that the placeholder vertices and their respective cuts are mapped properly.
\end{remark}

    Note that \crefdef{def:knitwork_immersion}{4} would need a more cumbersome phrasing if we were to allow loops in $G$, since then a path could use a loop edge in $G'$ where one single edge would technically need to be viewed as $2$ different incidences with the same vertex (using the incidence-model of digraphs this would not be very hard to track). While this is quite natural when dealing with immersions, since there are easier ways to deal with loops, we decided to omit it in the definition for simplicity.

The following is immediate by \cref{def:subknitwork,def:knitwork_immersion}.
\begin{observation}\label{obs:subknitworks_immerse}
    Let $\Omega$ be a well-quasi-order. Let $\GGG$ be an $\Omega$-knitwork and let $\HHH \subseteq \GGG$, then $\HHH \hookrightarrow \GGG$.
\end{observation}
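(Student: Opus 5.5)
The plan is to take $\gamma$ to be the identity embedding, i.e.\ $\gamma(v) \coloneqq v$ for every $v \in V(H)$ and $\gamma(e) \coloneqq (e)_G$ for every $e \in E(H)$. The witness is $\HHH$ itself, which is a sub-knitwork of $\GGG$ by assumption. With these choices, each of the six conditions of \cref{def:knitwork_immersion} becomes a bookkeeping check against \cref{def:subknitwork}.

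\emph{Condition (1).} Every edge is mapped to a path of length one. These paths are pairwise edge-disjoint, have no internal vertices, and have the right endpoints, so $\gamma$ is a strong immersion of $H$ into $H$. The rooted requirements also hold. $\HHH$ carries the same roots $(\pi,X_1,\ldots,X_\ell)$, so the index is common. $\gamma$ maps $X_i$ to $X_i$ and $\bar X_i$ to $\bar X_i$. Finally, $\rho_H(X_i)=\rho_G(X_i)$ is ordered by the same $\pi$, so the natural bijection $\eta_i$ is the identity and $\gamma(e)=(e)$ contains exactly $\eta_i(e)$.

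\emph{Conditions (2) and (3).} The domains of $\mu_\HHH$ and $\m_\HHH$ are those of the witness, and $\gamma$ is the identity, so condition (2) is trivial. For condition (3), $\nu=\mu'=\mu_\HHH$, so the indices agree and the $i$-th edge of $\mu'(v)$ lies in $\gamma$ of the $i$-th edge of $\nu(v)$. These indices are even because $\rho_H(v)$ is an Eulerian cut. This holds since every vertex of $\dom(\mu_\HHH)$ lies outside the root set, where $H$ is Eulerian, and $H$ is loopless.

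\emph{Condition (4).} This is vacuous, because $\dom(\m')\setminus\gamma(\dom(\n))=\emptyset$ when $\gamma$ is the identity and $\n=\m'$. Moreover, all paths have length one, so $M_\gamma(v')=\emptyset$ everywhere.

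\emph{Conditions (5) and (6).} For condition (5), take $\alpha$ to be the identity on $\m_\HHH(v)$. It is injective and trivially preserves the index-matching property, since $\nu(v)=\mu'(\gamma(v))$. For condition (6), $\Psi=\Phi'$ and $\preceq$ is reflexive.

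The only point needing any care is that a witness must literally be a sub-knitwork of $\GGG$ and $\gamma$ must map into its underlying graph; both hold here by construction. So there is no real obstacle: the statement is an unwinding of the definitions.
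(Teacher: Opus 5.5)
Your proof is correct and matches the paper's intent: the paper gives no written proof, only stating that the observation is immediate from \cref{def:subknitwork,def:knitwork_immersion}, and the intended argument is the identity map with $\HHH$ itself as the witness. Using $\HHH$ rather than $\GGG$ as the witness is the one essential choice, since condition (3) of \cref{def:knitwork_immersion} checked against $\mu_\GGG$ would fail whenever $H$ drops edges at a vertex of $\dom(\mu_\GGG)$; this is why the paper remarks immediately afterward that such immersions need not be stable.
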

For future work it will be helpful that $\gamma:\HHH \hookrightarrow \GGG$ is not required to be \emph{stable}, in particular the immersion may be witnessed by a strict sub-knitwork. However, all of the results presented in this paper hold in the restricted case of stable $\Omega$-knitwork immersion as we discuss later.

In light of \crefdef{def:knitwork_immersion}{4}, we define the following.
\begin{definition}
    Let $\Omega$ be a well-quasi-order and $\GGG=(\bar G, \mu, \m, \Phi)$ be an $\Omega$-knitwork. Let $\LLL$ be a linkage in $G$. We say that $\LLL$ is \emph{$\m$-respecting}\index{mrespec@$\m$-respecting} if for every $v \in \dom(\m)$ it holds $M_{\LLL}(v) \in \m(v)$.
\end{definition}
\begin{remark}
    Thus, given an $\Omega$-knitwork immersion $\gamma: \GGG \hookrightarrow^* \GGG'$, the linkage $\LLL = \gamma(E(G))$ is $\m_{G'}$-respecting by definition.
\end{remark}

In the same spirit we lift the definition of immersion model as follows.
\begin{definition}[Immersion model]
    Let $\Omega$ be a well-quasi-order and $\HHH=((H,\pi_H,A_1,\ldots,A_\ell),\nu,\n,\Psi)$ and $\GGG=(\bar G,\pi_G,X_1,\ldots,X_\ell),\mu,\m,\Phi)$ be $\Omega$-knitworks of some common index $\ell \in 2\N$. Let $V_H \subseteq V(G)$ with $\bigcup_{i=1}^\ell X_i \subseteq V_H$ and let $\LLL$ be a (strong) $\m_G$-respecting linkage in $G$ of order $\Abs{E(H)}$ with all its endpoints in $V_H$. If there is a map $\gamma$ with $\dom(\gamma) = V(H) \cup E(H)$ such that 
    \begin{enumerate}[label=(\roman*)]
        \item $\gamma:V(H) \to V_H$ and $\gamma:E(H) \to \LLL$ are bijections, such that $\gamma:\bar H \hookrightarrow^* \bar G$ (is strong), in particular $(V_H,\LLL)$ is a (strong) immersion model of $H$ in $G$,
        \item for each $v \in \dom(\nu)$ with $\nu(v) = (f_1,\ldots,f_t)$ and $\mu(v) = (e_1,\ldots,e_k)$ it holds $k \geq t$ and there are $1 \leq i_1 < \ldots < i_t \leq k$ such that $e_{i_j} \in \gamma(f_j)$ for all $1 \leq j \leq t$,
        \item $\LLL$ is $\m_\GGG$-respecting,
        \item $\gamma$ satisfies \crefdef{def:knitwork_immersion}{5}, and
        \item for every $v \in \dom(\Psi)$ we have $\Psi(v) \preceq \Phi(\gamma(v))$.
    \end{enumerate}
   Then we call $(V_H,\LLL)$ a \emph{(strong) immersion model of $\HHH$ (in $\GGG$)}. \index{immersion!model}\index{model!immersion}
\end{definition}
The following is immediate.
\begin{observation}
    Let $\Omega$ be a well-quasi-order and~$\HHH$ and $\GGG$ be $\Omega$-knitworks. Then $\GGG$ (strongly) immerses $\HHH$ if and only if there is a (strong) immersion model of $\HHH$ in $\GGG$.
\end{observation}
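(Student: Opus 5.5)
Both directions of the statement reduce to unpacking the two definitions, with a linearisation step for the model-to-immersion direction, exactly as in \cref{obs:model=immersion}.

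\emph{Immersion gives a model.} Let $\gamma\colon \HHH \hookrightarrow^* \GGG$ (or $\hookrightarrow$) be witnessed by a sub-knitwork $\GGG' \subseteq \GGG$. Put $V_H \coloneqq \gamma(V(H))$ and $\LLL \coloneqq \gamma(E(H))$. By \cref{obs:strong_immersion:yields_strong_linkage}, $\LLL$ is a linkage, and it is strong if $\gamma$ is. Condition (i) follows from \crefdef{def:knitwork_immersion}{1}. Since the rooted immersion maps $X_i$-roots into $X_i$, we get $\bigcup X_i \subseteq V_H$; this uses that root sets of controlled or clamped pieces are covered, and otherwise I would simply take $V_H$ as required.

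For condition (ii), take $v\in\dom(\nu)$. By \crefdef{def:knitwork_immersion}{3}, $\nu(v)$ and $\mu'(\gamma(v))$ have the same length and correspond entrywise. Since $\mu'(\gamma(v))$ is the subsequence of $\mu(\gamma(v))$ on $\rho_{G'}(\gamma(v))$, taking the indices $i_1<\dots<i_t$ of that subsequence gives (ii).

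For condition (iii), at vertices of $\dom(\m)\setminus\gamma(\dom(\n))$ we use \crefdef{def:knitwork_immersion}{4}. The matching $M_\gamma(v')$ lives in $\operatorname{Match}(\rho^-_{G'},\rho^+_{G'})$, and $\m'(v')$ consists exactly of the restrictions of matchings in $\m(v')$. Here reliability is essential: because $\m(v')$ is closed under subsets, a restriction $\MMM\cap\operatorname{Match}(\rho^-_{G'},\rho^+_{G'})$ is itself in $\m(v')$. Hence $M_\LLL(v')=M_\gamma(v')\in\m(v')$.

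At vertices $\gamma(v)$ with $v\in\dom(\n)$, all incident edges of $G'$ are ends of paths by (3). So $M_\LLL=\emptyset$ there, and $\emptyset\in\m$ by reliability, provided $\m(\gamma(v))$ is non-empty. Non-emptiness follows from the injection in (5) whenever $\n(v)\ne\emptyset$; I will treat the empty case as a degenerate convention.

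Condition (iv) is inherited from (5) by the same restriction argument, which lands back in $\m(\gamma(v))$, and (v) is \crefdef{def:knitwork_immersion}{6}.

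\emph{A model gives an immersion.} Given a model $(V_H,\LLL)$ with map $\gamma$, I build the witness $\GGG'$ as the sub-knitwork on $H'\coloneqq (V(G),E(\LLL)\cup\bigcup_i\rho_G(X_i))$. Keeping the root cuts ensures that $\rho_{H'}(X_i)=\rho_G(X_i)$. This is the delicate step. I need each $\rho(X_i)$-edge to lie on some path of $\LLL$, which follows from the rooted immersion condition \crefdef{def:rooted_immersion}{3} together with equality of orders. Then $\mu'(\gamma(v))$ is exactly the subsequence $(e_{i_1},\dots,e_{i_t})$, because each $\gamma(f_j)$ uses exactly one edge at $\gamma(v)$ under strongness or cleanness, giving (3). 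If a path passes through $\gamma(v)$ several times, I first apply \cref{obs:linkage_gives_linear_linkage} to linearise $\LLL$ while preserving types. Linearisation only shrinks the matchings $M_\LLL$, so $\m$-respect is kept by reliability.

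Condition (4) follows because $M_\gamma$ computed in $G'$ equals $M_\LLL\in\m$, and that matching lies in the restricted link $\m'$. Condition (5) uses the restriction map $\MMM\mapsto\MMM\cap\operatorname{Match}(\cdot)$ composed with the given injection. Injectivity may fail after restriction, so here I would take the edges of $\rho_G(\gamma(v))$ not in $E(\LLL)$ as well; for the strong and bounded-degree case (3) forces them all used anyway. Conditions (2) and (6) are immediate from the definition of $\dom$ in sub-knitworks.

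The main obstacle is precisely this bookkeeping between $\m$ and $\m'$, namely whether restriction preserves membership and injectivity. Reliability of links is the tool I rely on throughout.
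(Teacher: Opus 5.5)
Unpacking the definitions is all the paper intends (it calls the statement immediate). However, several of your steps fail, and some cannot be repaired, because the model definition does not literally match \cref{def:knitwork_immersion}.

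In the direction model $\Rightarrow$ immersion, your linearisation step is false. Shortcutting a closed subwalk of a path at a vertex $x$ replaces the pairs $(e,c_1),(c_m,e')$ in $M_\LLL(x)$ by the new pair $(e,e')$. So $M_\LLL(x)$ does not shrink, and reliability does not apply. Part~1 of \cref{lem:knitwork_immersion_weak_vs_strong} is exactly an $\m$-respecting strong linkage that has no $\m$-respecting linear linkage of the same type.

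For strong models you need no linearisation. No path has any $\gamma(w)$ as an internal vertex, so in your witness (which is just $(V(G),E(\LLL))$, since all of $\rho_G(X_i)$ is used) the cut $\rho(\gamma(v))$ consists exactly of the $t$ end-edges, which gives (3).

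For weak models, paths $\gamma(e)$ with $e$ not incident to $v$ may pass through $\gamma(v)$. Linearisation does not remove them, and then (3) fails for every witness of this $\gamma$. The remark after \cref{def:knitwork_immersion} says that the relaxation ``$k\ge t$'', which is model condition (ii), is strictly weaker for weak immersion.

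Moreover, (2) is not immediate. A model never excludes $v\notin\dom(\nu)$ with $\gamma(v)\in\dom(\mu)$, or $v\notin\dom(\n)$ with $\gamma(v)\in\dom(\m)$. Every witness contains $\gamma(v)$ and hence keeps it in the domain, so this converse must be built into the model.

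Your fix for (5) also breaks (3), since it keeps unused edges at $\gamma(v)$. It is unnecessary anyway: by the ``if and only if'' in (5), the restriction of $\alpha(M)$ to $\{e_{i_1},\dots,e_{i_t}\}$ determines $M$, so injectivity survives restriction.

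In the direction immersion $\Rightarrow$ model, the inclusion $\bigcup_i X_i\subseteq V_H$ does not follow. Since $\gamma$ maps $A_i$ into $X_i$ and $\bar A_i$ into $\bar X_i$, you get $V_H\cap X_i=\gamma(A_i)$, which forces $\Abs{A_i}=\Abs{X_i}$. You also cannot ``take $V_H$ as required'', because $V_H$ is pinned down as the bijective image of $V(H)$.

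For example, the directed triangle $x_1\to y\to x_2\to x_1$ rooted at $\{x_1,x_2\}$ strongly immerses the digon on $\{a,b\}$ rooted at $\{a\}$: send $(b,a)$ to the path through $x_2$. Yet no model of the digon exists in it. So this direction needs $\Abs{A_i}=\Abs{X_i}$ (e.g.\ controlled knitworks), or that requirement must be dropped from the model definition.

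Your remaining checks in this direction are fine up to the degenerate empty-link cases: (ii); (iii) via $\m'(v')\subseteq\m(v')$ by reliability together with $M_\LLL(\gamma(v))=\emptyset$ on $\gamma(\dom(\n))$; (iv); and (v).
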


We briefly take the time to discuss the \cref{def:knitwork_immersion} of $\Omega$-knitwork immersion. While for standard immersion of digraphs it is natural to ask for the immersion $\gamma$ to map edges to linear paths with two distinct endpoints---that is``paths'' in the standard literature---since both concepts are equivalent (see \cref{obs:linkage_gives_linear_linkage}), this is \emph{not} true for $\Omega$-knitworks, as here the matchings imposed by the maps $\m_G$ have to be preserved. In fact, requiring $\gamma$ to map edges to linear paths is a \emph{strictly} stronger assumption, even when restricting to well-linked $\Omega$-knitworks as we elaborate on now. 

First note that for \emph{inter-linked} $\Omega$-knitworks $\GGG$, strong linear linkages in $G$ can be transformed into $\m_G$-respecting strong linear linkages in $\GGG$ on the same set of ends as follows\footnote{We will make use of a weaker form of this result in future work, but the result as stated may be of independent interest.}.
\begin{lemma}\label{lem:from_linear_to_respecting_linkage_interlinked}
    Let $\Omega$ be a well-quasi-order and $\GGG$ an inter-linked $\Omega$-knitwork rooted in $(\pi,X_1,\ldots,X_\ell)$ for some $\ell \geq 1$. Let $E_1,E_2 \subseteq E(G)$ and let $\LLL$ be a strong linear $\{E_1,E_2\}$-linkage in $G$ such that $V^\circ(L)\cap V(X_i)=\emptyset$ for all $1\leq i \leq \ell$. Then there is an $\m_G$-respecting strong linear $\{E_1,E_2\}$-linkage $\LLL'$ in $G$ that satisfies $\Abs{\LLL'} = \Abs{\LLL}$ and $E(\LLL') \subseteq E(\LLL)$ and the set of ends of $\LLL$ and $\LLL'$ agree.
\end{lemma}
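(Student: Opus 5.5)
The plan is to repair the linkage one placeholder vertex at a time. We induct on the number of vertices $v \in \dom(\m_G)$ where the current linkage violates $M_\LLL(v) \in \m_G(v)$. Throughout, we maintain four invariants: the linkage is strong, it is linear, it has the same set of ends, and it uses only edges of $\LLL$. A repair at $v$ will only change which in-edge at $v$ is followed by which out-edge. Nothing away from $v$ is touched, so the conditions at other vertices of $\dom(\m_G)$ are unaffected, and the number of bad vertices strictly decreases.

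Fix a bad vertex $v$. We may assume $v \notin X_i$ for all $i$ and $v$ is not an endpoint of a path: otherwise $M_\LLL(v)=\emptyset$, which lies in $\m_G(v)$ by reliability. Since $\LLL$ is linear, each path passes through $v$ at most once. Hence $M_\LLL(v)$ is a matching between the set $E_1' \subseteq \rho^-(v)$ of in-edges used at $v$ and the set $E_2' \subseteq \rho^+(v)$ of out-edges used at $v$, with $\Abs{E_1'} = \Abs{E_2'}$. Cutting every path through $v$ at $v$ yields \emph{prefixes} ending in $E_1'$ and \emph{suffixes} starting in $E_2'$. Because $\LLL$ is strong and the paths are linear, these pieces are pairwise internally vertex-disjoint away from $v$, except where they are two halves of the same original path.

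Any new perfect matching between $E_1'$ and $E_2'$ glues the prefixes and suffixes back together. The resulting object, however, may contain a closed cycle rather than only paths. This happens exactly when a prefix and a suffix of the same path get reglued while other pieces also pass through $v$. A cycle through $v$ would force some path to visit $v$ twice, destroying linearity.

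Inter-linkedness is used to prevent this. We choose partitions $E_1' = X_1\cup Y_1$ and $E_2'=X_2\cup Y_2$ so that every admissible matching between $X_1,X_2$ and between $Y_1,Y_2$ avoids creating cycles. The natural choice: let $X_1$ hold the in-edges of paths whose prefix starts at an end in $E_1$, and let $X_2$ hold the out-edges of paths whose suffix ends in $E_2$, with $Y_1, Y_2$ the complements. The hope is that gluing only within the matched classes keeps every glued object a path from an end to an end, so that the multiset of ends is preserved. The linkage being an $\{E_1,E_2\}$-linkage in either direction complicates this, and some case split by orientation is likely needed. We then take $M=M_X\cup M_Y \in \m_G(v)$, which inter-linkedness provides, and reglue along $M$.

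It remains to verify the resulting linkage. Edges are unchanged, so $E(\LLL') \subseteq E(\LLL)$. Ends are preserved, and strongness holds because internal vertices are the same set as before and terminals are unchanged. Linearity holds provided no cycle arises.

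The main obstacle is exactly this cycle-avoidance step: guaranteeing that regluing produces paths and no closed cycles. The bare partition condition controls only which class is matched to which, not the permutation within a class. If the class-based choice fails, the first fallback is to refine the partition into singletons $\{e\}\cup$ rest in an inductive way, using the subset clause of inter-linkedness. If any cycles still arise, the second fallback is to delete them: deleting a cycle through $v$ removes a whole in/out pair at $v$, and by reliability the restricted matching still lies in $\m_G(v)$. This deletion must be checked to keep the ends intact. That check works because a cycle contains no ends, since ends lie in $E_1\cup E_2$ at path extremities, but the bookkeeping for it is the delicate part.
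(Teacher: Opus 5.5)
The partition you propose is the right one, and it is the one the paper uses. Match the in-edges at $v$ whose prefix starts in $E_1$ with the out-edges whose suffix ends in $E_2$, and the remaining in-edges with the remaining out-edges. The linkage itself shows these classes have equal sizes, so inter-linkedness supplies $M=M_X\cup M_Y\in\m_G(v)$.

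However, you have put the difficulty in the wrong place. Because $\LLL$ is linear, each path meets $v$ at most once, so cutting at $v$ produces only prefixes and suffixes, with no $v$-to-$v$ middle segments. Every reglued object is therefore one prefix followed by one suffix, $P_p^1\circ P_q^2$, which is automatically a path between an end in $E_1$ and an end in $E_2$. Closed cycles never arise, so your cycle-deletion fallback addresses a non-issue.

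The real gap is your claim that the pieces are internally vertex-disjoint away from $v$. A linkage is only edge-disjoint, and strongness only keeps terminals out of path interiors. So two different paths $P_p,P_q$ can share an internal vertex $w\neq v$, and then $P_p^1\circ P_q^2$ visits $w$ twice. The repaired linkage is then not linear, so your invariant ``linear'' is not preserved.

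Shortcutting at $w$ restores linearity while keeping the ends and strongness. But it replaces the two transitions $(e_a,e_b)$ and $(e_c,e_d)$ at $w$ by the single transition $(e_a,e_d)$. The new $M(w)$ is not a submatching of the old one, so reliability gives you nothing, and $w$ can become a new bad vertex. Induction on the number of bad vertices alone therefore need not terminate.

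The missing idea is to choose a counterexample $\LLL$ that minimises $\Abs{E(\LLL)}$ first and the number of bad vertices second. If the reglued linkage is linear, it contradicts the second minimality. If it is not, shortcut it to a strong linear linkage with the same ends and strictly fewer edges. By the first minimality, that smaller linkage already admits an $\m_G$-respecting strong linear linkage with the same ends whose edges lie in $E(\LLL)$, contradicting the choice of $\LLL$. This is exactly why the statement promises only $E(\LLL')\subseteq E(\LLL)$ rather than equality.
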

\begin{proof}
    If for every $L \in \LLL$ we have $V^\circ(L) = \emptyset$, then it is $\m_G$-respecting by definition, and there is noting to show. Thus we may assume that $E_1\neq E_2$.

    Towards a contradiction we assume the lemma to be wrong and let~$\LLL$ be a strong linear $\{E_1,E_2\}$-linkage refuting the lemma, chosen such that 
    \begin{enumerate}[label=(\roman*)]
        \item $\Abs{E(\LLL)}$ is minimum, and
        \item with respect to $(i)$, the set of vertices~$F \coloneqq \{v \in \dom(\m_G) \mid M_{\LLL}(v) \not \in \m_G(v)\}$ has minimal cardinality.
    \end{enumerate}
    By assumption we have $\Abs{F} \geq 1$. Let~$v \in F$. Note that $v \notin X_i$ for $1\leq i \leq \ell$ by assumption of the lemma, and further $v$ is not an endpoint of a path in $\LLL$ since $\LLL$ is strong. Then, since~$\LLL$ is a linkage, every~$(f,f') \in M_{\LLL}(v)$ is a subpath of exactly one path in~$\LLL$, and since the linkage is linear, every $P \in \LLL$ contains at most one tuple $(f,f') \in M_{\LLL}(v)$ as a subpath. Let $\LLL=\{P_1,\ldots,P_t\}$ for $t = \Abs{\LLL}$ and without loss of generality let~$\{(f_j,f_j') \mid 1 \leq j \leq \tau\} = M_{\LLL}(v)$ for some~$\tau \leq t$ and assume that~$(f_j,f_j') \subseteq P_j$ for every~$1 \leq j \leq \tau$. Let~$E^-\coloneqq \{f_j \mid 1 \leq j \leq \tau\}$ and~$E^+ \coloneqq \{f_j' \mid 1\leq j \leq \tau\}$. Then
     \begin{equation}\label{eq:well-linked_uno}
         E^-\subseteq \rho^-(v),\ E^+ \subseteq \rho^+(v), \text{ and } \Abs{E^-} = \Abs{E^+}.
     \end{equation} 
     For every $1 \leq j \leq \tau$ let~$P_j^1,P_j^2 \subset P_j$ be the unique edge-disjoint subpaths satisfying~$E(P_j) = E(P_j^1) \cup E(P_j^2)$ such that~$P_j^1$ ends in~$f_j$ and~$P_j^2$ starts in~$f_j'$; in particular~$V(P_j^1) \cap V(P_j^2) = \{v\}$ and $P_j = P_j^1 \circ P_j^2$. Let~$\LLL^1 \coloneqq  \{P_j^q \mid P_j^q \text{ has an end in } E_1, \text{ for } 1 \leq j \leq \tau,\text{ and } q\in\{1,2\}\}$ and define~$\LLL^2$ analogously using $E_2$. Clearly~$\LLL^1 \cup \LLL^2 = \{P_j^q \mid 1 \leq j \leq \tau \text{ and } q \in \{1,2\}\}$, as well as  $\LLL^1 \cap \LLL^2 = \emptyset$ by construction. 

     Define~$E_1^- \subset E^-$ via~$e \in E_1^-$ if and only if~$e \in E^-$ and~$e$ is an end of a path in~$\LLL^1$. Analogously define~$E_2^-\subset E^-$ via~$e \in E_2^-$ if and only if~$e \in E^-$ and~$e$ is an end of a path in~$\LLL_2$. Define~$E_1^+$ and~$E_2^+$ analogously using~$E^+$. Note that the linear linkage~$\LLL$ witnesses
     \begin{equation}\label{eq:well-linked_dos}
         \Abs{E_1^- } = \Abs{ E_2^+} \text{ and } \Abs{E_1^+ } = \Abs{ E_2^-}.
     \end{equation}

     Finally, by \cref{eq:well-linked_uno} and \cref{eq:well-linked_dos} together with the \cref{def:well-linked_links} of inter-linked links and the fact that $\GGG$ is an inter-linked $\Omega$-knitwork, there exists~$M\in \m_i(v)$ such that~$M= M_1 \cup M_2$ where~$M_1 \in \operatorname{Match}(E_1^-,E_2^+)$ and $M_2 \in\operatorname{Match}(E_2^-,E_1^+)$ are perfect matchings. Let~$(f_p,f_q') \in M_1$ for respective~$1 \leq p,q \leq \tau$, then~$P_p^1 \circ P_q^2$ is a path---a priori not necessarily linear---with first edge in~$E_1$ and last edge in~$E_2$. Similarly for~$(f_p,f_q') \in M_2$ for respective~$1 \leq p,q \leq \tau$ we derive that~$P_p^1 \circ P_q^2$ is a path with first edge in~$E_2$ and last edge in~$E_1$. It is easily verified that the collection of paths~$\PPP \coloneqq \{P_p^1 \circ P_q^2 \mid (f_p,f_q') \in M_1\cup M_2 \text{ for } 1\leq p,q \leq \tau\}$ is a strong~$\{E_1,E_2\}$-linkage by construction, and finally~$\LLL'=\PPP \cup\{P_p \mid \tau+1 \leq p \leq t\}$ is a strong~$\{E_1,E_2\}$-linkage of order $t$ with $E(\LLL') = E(\LLL)$, such that the set of ends of $\LLL'$ and $\LLL$ agree, and further~$M_{\LLL'}(v) \in \m_G(v)$ by construction.

     \begin{claim}
         $\LLL'$ is linear.
     \end{claim}
     \begin{claimproof}
         To see this note that we established $E(\LLL')=E(\LLL)$, both are strong $\{E_1,E_2\}$-linkages, both have the same order $t$ and their sets of ends agree. If $\LLL'$ were not linear, then this implies the existence of some path $L \in \LLL$ such that $L$ admits a subpath $P \subset L$ starting and ending in the same vertex $x \in V(G)$. In particular, $L = P_p^1 \circ P_q^2 \in \PPP$ for some respective $1\leq p,q \leq \tau$ as we did not alter any other paths. Let now $\LLL^*$ be the linear $\{E_1,E_2\}$-linkage of order $t$ obtained from $\LLL$ via \cref{obs:linkage_gives_linear_linkage}; essentially rerouting paths at cycles, e.g., rerouting $L$ at $x$ omitting $P$. Note that $\LLL^*$ is still strong and $\tau(\LLL^*) = \tau(\LLL')$, in particular their sets of ends agree. By construction, $\LLL^*$ has strictly less edges than $\LLL'$ and hence less edges than $\LLL$. Thus, by our choice of minimality for $\LLL$ with respect to $(i)$, we derive that there is an $m_\GGG$-respecting strong linear $\{E_1,E_2\}$-linkage $\LLL^{\star\star}$ in $G$ that satisfies $\Abs{\LLL^{\star\star}}=\Abs{\LLL^\star}$, and hence $\Abs{\LLL^{\star\star}} = \Abs{\LLL}$, as well as $E(\LLL^{\star\star}) = E(\LLL^\star)$ hence $E(\LLL^{\star\star}) \subseteq E(\LLL)$, and it is on the same set of ends as $\LLL^\star$ and hence $\LLL$. In summary, $\LLL^{\star\star}$ contradicts our choice of $\LLL$ refuting the lemma; by contraposition, $\LLL'$ is linear
      \end{claimproof}
     
     Since~$M_{\LLL'}(w) = M_{\LLL}(w)$ for all~$w \neq v$ with~$w \in \dom(\m_G)$ by construction,~$\LLL'$ refutes the choice of~$\LLL$ respecting $(ii)$; a contradiction.
\end{proof}
\begin{remark}
    Note here that it is not necessarily guaranteed that $\tau(\LLL) = \tau(\LLL')$.
\end{remark}

For well-linked $\Omega$-knitworks one can only guarantee a much weaker version of \cref{lem:from_linear_to_respecting_linkage_interlinked}, in particular we cannot guarantee resulting linkages to be linear, which in turn is the main reason why $\Omega$-knitwork immersion is not defined in terms of linear paths. Note that this highlights once more that \cref{lem:from_linear_to_respecting_linkage_interlinked} may be of independent interest, as it guarantees stronger structure, without allowing for ``full power'' regarding the routing handler $\m_\GGG$, i.e., letting $\m_\GGG(v) = \operatorname{Match}(\rho^-(v),\rho^+(v))$\footnote{Such $\Omega$-knitworks will play a crucial role and will be referred to as \emph{free} in future work.}. The proof follows the same proof strategy as for \cref{lem:from_linear_to_respecting_linkage_interlinked}.

\begin{figure}
    \centering
    \includegraphics[width=0.8\linewidth]{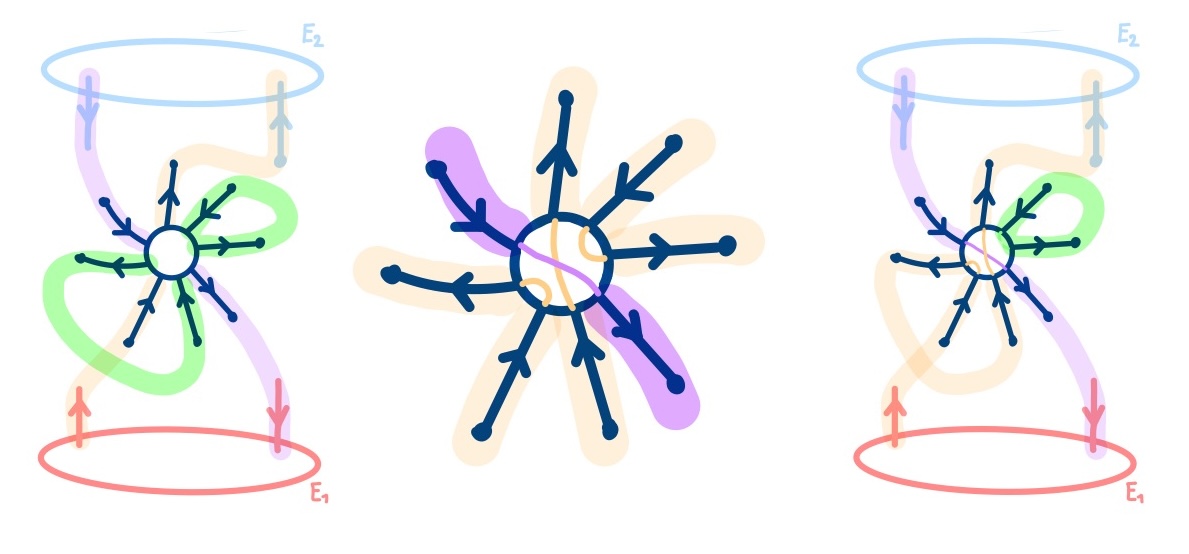}
    \caption{A schematic representation of how to reroute Euler-covers at well-linked vertices. The left figure highlights a Euler-cover, the middle figure highlights a possible link and the right figure highlights a rerouting of the Euler-cover that respects the link.}
    \label{fig:rerouting_well_linked}
\end{figure}

\begin{lemma}\label{lem:from_linear_to_respecting_linkage_well-linked}
    Let $\Omega$ be a well-quasi-order and $\GGG$ a well-linked $\Omega$-knitwork rooted in $(\pi,X_1,\ldots,X_\ell)$ for some $\ell \geq 1$. Let $X,Y \subset V(G)$ be non-empty with $X \cap Y = \emptyset$ such that they both induce rooted cuts and such that $\bar X \cap \bar Y$ is disjoint from $X_i$ for every $1\leq i \leq \ell$. Let $E_1 = \rho_G(X)$ and $E_2 = \rho_G(Y)$ such that~$k \coloneqq \Abs{E_1} = \Abs{E_2}$. If $\delta(X,Y) \geq k$, then there is an $\m_G$-respecting strong $\{E_1,E_2\}$-linkage $\LLL$ in $G$ of order~$k$.
\end{lemma}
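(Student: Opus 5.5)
First I obtain some linkage from connectivity alone. Since $\delta(X,Y) \geq k$ and $G$ is (quasi-)Eulerian with $X$ and $Y$ inducing rooted, hence Eulerian, cuts (\cref{lem:rooted_Eulerian_cuts_are_Eulerian}), I apply \cref{lem:Menger_for_Euler} with $X_1 = X$ and $X_2 = Y$. This gives a linear $\{X,Y\}$-linkage of order $k$, with $k/2$ paths in each direction and no internal vertex in $X \cup Y$.

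Because $|\rho(X)| = |\rho(Y)| = k$, every edge of $E_1$ and of $E_2$ is used by exactly one path. Truncating each path to its last edge leaving $X$ and its first edge entering $Y$ (or vice versa), I get a linear $\{E_1,E_2\}$-linkage $\LLL_0$ of order $k$ whose internal vertices lie in $\bar X \cap \bar Y$. By hypothesis these internal vertices avoid every root set $X_i$. Endpoints of paths lie in $X \cup Y$, so $\LLL_0$ is strong, and it is clean in the sense required.

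Next I repair the links, following the minimal-counterexample scheme of \cref{lem:from_linear_to_respecting_linkage_interlinked}, but now over strong, not necessarily linear, $\{E_1,E_2\}$-linkages of order $k$ that use every edge of $E_1 \cup E_2$. I choose such a linkage $\LLL$ with internal vertices in $\bar X\cap\bar Y$ that minimises $|F|$, where $F = \{v \in \dom(\m_G) \mid M_\LLL(v) \notin \m_G(v)\}$.

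Take $v \in F$. Since $v$ is not a root or endpoint, the visits of paths through $v$ pair up edges of $\rho^-(v)$ with edges of $\rho^+(v)$. Cut every path at each of its visits to $v$, not only one visit, since paths may be non-linear. This produces segments of four kinds: segments starting at an end in $E_1$ and ending at $v$, those starting at $E_2$ and ending at $v$, and symmetrically segments leaving $v$ towards $E_1$ or towards $E_2$. There are also "internal" segments from $v$ back to $v$.

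The key point is to match the in-edges at $v$ to out-edges at $v$ so that reassembly yields exactly $k$ paths, each from $E_1$ to $E_2$ or from $E_2$ to $E_1$, with no closed cycles left over. Here I use that every edge of $\rho(v)$ must be used: when $G$ is loopless and Eulerian at $v$, the set $M_\LLL(v)$ might not be perfect. I would first enlarge $\LLL$ by adding an Euler-cover of $G - E(\LLL)$ locally at $v$, or simply allow unused edges to be paired too. This is exactly the situation of \cref{fig:rerouting_well_linked}.

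I then partition $\rho^-(v)$ into $A^-$, the in-edges whose backward segment traces to $E_1$ or to the "$E_1$-side", and $B^-$, the remaining in-edges. Similarly I partition $\rho^+(v)$ into $A^+$, the out-edges leading to $E_2$, and $B^+$. Counting shows $|A^-| = |A^+|$ and $|B^-| = |B^+|$. Well-linkedness, which works with partitions of the full sets and is therefore applicable here, gives $M \in \m_G(v)$ that matches $A^-$ perfectly to $A^+$ and $B^-$ to $B^+$. Reassembling along $M$ keeps every path running between $E_1$ and $E_2$, and by reliability the relevant subset of $M$ lies in $\m_G(v)$. Any closed cycles created are discarded, which only shrinks matchings and preserves membership by reliability. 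Other vertices keep their matchings, so $|F|$ drops, a contradiction.

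The main obstacle is this bookkeeping: defining the sides $A$ and $B$ consistently when segments pass through $v$ several times, and showing that reassembly yields exactly $k$ strong $\{E_1,E_2\}$-paths rather than cycles that would need discarding. I would handle it by contracting each maximal segment not passing through $v$ to a single edge, thereby reducing to a vertex $v$ in an auxiliary graph where the sides are well defined. This is also why linearity cannot be guaranteed, in contrast to \cref{lem:from_linear_to_respecting_linkage_interlinked}.
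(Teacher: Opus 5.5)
\paragraph{Gap: the potential you minimise is not preserved away from $v$.} Your ingredients match the paper's. You get an initial linkage from Menger, argue by minimal counterexample at a bad vertex $v$, and first cover the unused edges at $v$ by cycles so that well-linkedness, which only speaks about partitions of \emph{all} of $\rho^-(v)$ and $\rho^+(v)$, becomes applicable. But you define $F$ via $M_\LLL$ for the linkage alone and discard leftover closed cycles, and with that invariant the step ``other vertices keep their matchings'' fails.

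Well-linkedness gives no control over which in-edge is matched to which out-edge inside a class. So when you reassemble along $M$, a path segment arriving at $v$ may be matched to the out-edge of one of the auxiliary cycles covering $G-E(\LLL)$, and the new path then absorbs that cycle. This cannot be avoided: with only two classes, cycle segments must share a class with path ends. At every other $w\in\dom(\m_G)$ on the absorbed cycle, $M_{\LLL'}(w)$ gains a pair not in $M_\LLL(w)$. Reliability only closes $\m_G(w)$ under subsets, not supersets, so $M_{\LLL'}(w)$ may leave $\m_G(w)$. For instance, let $\Abs{\rho^-(w)}=\Abs{\rho^+(w)}=4$ and let $\m_G(w)$ be the reliable closure of all perfect matchings except one fixed perfect matching $P$. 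This link is well-linked, any two pairs of $P$ lie in it, but no three do. Hence $\Abs{F}$ need not drop, and the induction breaks.

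\paragraph{How the paper avoids this.} The missing idea is to make the cycle cover part of the object being optimised. Fix a cycle cover $\CCC$ of the unused edges (after contracting $X$ and $Y$). Define $M_\KKK(w)$ for the whole Euler-cover $\KKK=\LLL\cup\CCC$, minimise $\Abs{F}$ with $F=\{w\in\dom(\m_G)\mid M_\KKK(w)\notin\m_G(w)\}$, and keep leftover closed cycles in the new cover instead of discarding them. Rerouting at $v$ then changes no transition of the cover at any $w\neq v$, so $M_{\KKK'}(w)=M_\KKK(w)$ there and $\Abs{F}$ genuinely decreases. At the end, $M_\LLL(w)\subseteq M_\KKK(w)\in\m_G(w)$, and reliability shows that $\LLL$ is $\m_G$-respecting.

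Your bookkeeping worry is then minor and needs no auxiliary contraction. Put the in-edges at $v$ of all $E_1\to E_2$ paths and of all cycles into one in-class, and their out-edges into the matching out-class. The edges of $E_2\to E_1$ paths form the other pair of classes.
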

\begin{proof}
Note that if $\bar X \cap \bar Y = \emptyset$, then there is nothing to show for then $\{X,Y\}$ partitions $V(G)$ and $\rho(X) = \rho(Y)$ is said linkage. Thus assume it is non-empty.

    By \cref{lem:Menger_for_Euler} there is a linear $\{E_1,E_2\}$-linkage $\LLL$ in $G$ (possibly not $\m_G$ -respecting) such that for every $L \in \LLL$ it holds $V^\circ(L) \cap (X \cup Y) = \emptyset$. 
    Let $Z \in \{X,Y\}$. Since $Z$ induces a rooted cut, we derive that for every $1 \leq i \leq \ell$ either $X_i \subset Z$ or $X_i \subset \bar Z$. Since further $U\coloneqq \bar X \cap \bar Y$ is disjoint from $X_i$ for every $1 \leq i \leq \ell$ we derive that $X_i \cap U = \emptyset$.  Let $G^*$ be obtained from $G$ by contracting $X$ into $x^*$ and $Y$ into $y^*$, then $G^*$ is Eulerian since $X$ and $Y$ induce Eulerian cuts by \cref{lem:rooted_Eulerian_cuts_are_Eulerian} and furthermore every vertex in~$U$ is Eulerian. Then $\LLL$ is a linear linkage in $G^*$ on the same set of edges. Thus, it suffices to prove the lemma in the setting of $G^*$.

    For the proof we extend the definition of $M_\LLL(v)$ as follows. Let $\KKK = \LLL \cup \CCC$ where $\LLL$ is a strong linkage and $\CCC$ is a set of edge-disjoint cycles such that for every $L \in \LLL$ and $C \in \CCC$ it holds $E(L) \cap E(C) = \emptyset$; we say that $\KKK$ is \emph{edge-disjoint} and call it a \emph{Euler-cover} and write $\KKK= \LLL \cup \CCC$ to mean the partition into linkage and cycles. Then we define $M_\KKK(v)$ for $v \in V(G^*)$ via $M_\LLL(v) \cup M_\CCC(v)$ where $$M_\CCC(v) \coloneqq \{(e,e') \mid e,e' \in \rho(v)\ \head(e) = v = \tail(e'), \text{ and } (e,e') \subseteq C \text{ for some } C \in \LLL \}.$$

    Note that $M_\KKK(x^*) = \emptyset$ as well as $M_\KKK(y^*) = \emptyset$ since $\LLL$ is strong and $x^*,y^*$ are not part of any cycle in $\CCC$.
\begin{claim}\label{lem:from_linear_to_respecting_linkage_well-linked_claim1}
        There is a set of edge-disjoint cycles $\CCC$ such that $\KKK=\LLL \cup \CCC$ is a Euler-cover of $G^*$.
    \end{claim} 
    \begin{claimproof}
        Let $H \coloneqq G^* - \bigcup_{L \in \LLL}E(L)$, then $H$ is again Eulerian (possibly disconnected). Let $\CCC$ be a cycle cover of $H$ of minimal order which exists by \cref{obs:covering_eulerian_digraphs}, then $\KKK = \LLL \cup \CCC$ does the trick
    \end{claimproof}

    Towards a contradiction assume the lemma to be wrong and choose a Euler-cover~$\KKK = \LLL^* \cup \CCC$ such that $\LLL^*$ is a strong $\{E_1,E_2\}$-linkage refuting the lemma and such that the set of vertices~$F \coloneqq \{v \in \dom(\m_G) \mid M_{\KKK}(v) \not \in \m_G(v)\}$ has minimal cardinality. By assumption we derive that~$\Abs{F} \geq 1$. 
    
   The proof now follows almost verbatim to that of \cref{lem:from_linear_to_respecting_linkage_interlinked}; see \cref{fig:rerouting_well_linked} for a schematic representation of the rerouting step. Let $v \in F$. Let $\LLL_\CCC(v)$ be the set consisting for each $C \in \CCC$ of the maximum length subpath $L_C(u)$ of $C$ that starts and ends in the same vertex $u \in V(C)$ different than $v$ (this exists since the graphs are loopless); note that $E(L_C(u)) = E(C)$, recalling that paths may start and end in the same vertex according to our \cref{def:paths}. Further, by our \cref{def:paths} of paths and the fact that $\CCC$ is edge-disjoint, $\LLL_\CCC(v)$ is a linkage. Define $\LLL\coloneqq\LLL^*\cup \LLL_\CCC(v)$ and assume that $\LLL = \{P_1,\ldots,P_k\}$ for some respective $k\geq 1$. For each $1 \leq i \leq k$ let $M_{P_i}(v)=\{(e_1^i,f_1^i),\ldots,(e_{\ell_i}^i,f_{\ell_i}^i)\}$ for some $\ell_i^i \geq 0$ (possibly empty). Without loss of generality assume that every path visits $v$ (the other case is analogous; see \cref{lem:from_linear_to_respecting_linkage_interlinked} for details).
   
    For every $1 \leq j \leq k$ let~$P_j^{\text{in}},P_j^{\text{out}} \subset P_j$ be the unique edge-disjoint subpaths satisfying that $P_j^{\text{in}}$ starts in the first edge of $P_j$ and  ends in~$e_1^j$ and~$P_j^{\text{out}}$ starts in~$f_{\ell_j}^j$ and ends in the last edge of $P_j$; in particular~$V^\circ(P_j^\text{in}) \cap V^\circ(P_j^{\text{out}}) = \{v\}$. For every $1 \leq j \leq k$ such that $\ell_j > 1$, and for every $1 \leq i \leq \ell_{j}-1$, let $P_j^i$ be the unique subpath of $P_j$ starting in $f_i^j$ and ending in $e_{i+1}^j$. Then, by construction, $P_j \coloneqq P_j^\text{in} \circ P_j^1 \circ \ldots \circ P_j^{\ell_j-1}\circ P_j^{\text{out}}$.

     Define~$E_1^- \subset \rho^-(v)$ via~$e \in E_1^-$ if and only if~$e \in E(P_j)$ for some $1 \leq j \leq k$ that starts with an edge in $E_1$, or $e \in E(C)$ for some $C \in \CCC$. Define $E_2^+ \subset \rho^+(v)$ via $e \in E_2^+$ if and only if $e \in \rho^+(v)$ and $e \in E(P_j)$ for some $1 \leq j \leq k$ that ends in an edge in $E_2$, or $e \in E(C)$ for some $C \in \CCC$. Define~$E_2^-\subset \rho^-(v)$ via~$e \in E_2^-$ if and only if~$e \in E(P_j)$ for some $1 \leq j \leq k$  and~$E_1^+$ via $e \in E_i^+$ if and only if $e \in \rho^+(v)$ and $e \in E(P_j)$ for some $1 \leq j \leq k$ that ends in an edge in $E_i$. Note that the linkage~$\LLL$ witnesses
     \begin{equation}\label{lem:from_linear_to_respecting_linkage_well-linked_claim2}
         \Abs{E_1^- } = \Abs{ E_2^+} \text{ and } \Abs{E_1^+ } = \Abs{ E_2^-}.
     \end{equation} 
which is clear for $\LLL^*$ as it is a strong $\{E_1,E_2\}$-linkage, and it follows for $\LLL_\CCC(v)$ from the fact that $L_C(v)$ has an equal number of in and out-edges at $v$ for every $C \in \CCC$.
    
    Further more since $E(\LLL) = E(G^*)$ we derive

     \begin{equation}\label{lem:from_linear_to_respecting_linkage_well-linked_claim3}
        E_1^- \cup E_2^- = \rho^-(v) \text{ and }  E_1^+ \cup E_2^+ = \rho^+(v).
     \end{equation}

     Finally, combining \cref{lem:from_linear_to_respecting_linkage_well-linked_claim1}, \cref{lem:from_linear_to_respecting_linkage_well-linked_claim2} and \cref{lem:from_linear_to_respecting_linkage_well-linked_claim3} together with the \cref{def:well-linked_links} of well-linked links and the fact that $\GGG$ is a well-linked $\Omega$-knitwork, there exists~$M\in \m_i(v)$ such that~$M= M_1 \cup M_2$ where~$M_1 \in \operatorname{Match}(E_1^-,E_2^+)$ and $M_2 \in\operatorname{Match}(E_2^-,E_1^+)$ are perfect matchings. 
     By concatenating the subpaths $\{P_j^{\text{in}},P_j^i,P_j^{\text{out}} \mid 1 \leq j \leq k\text{ and } 1 \leq i <\ell_j-1\}$ with respect to the matching $M$ analogously as in \cref{lem:from_linear_to_respecting_linkage_interlinked} we get an Euler-Cover $\KKK' \coloneqq \LLL' \cup \CCC'$ satisfying $E(\KKK') = E(G^*)$ where $\tau(\LLL') = \tau(\LLL^*)$ (and $\CCC'$ possibly empty), as well as~$M_{\KKK'}(w) = M_{\KKK}(w)$ for all~$w \neq v$ with~$w \in \dom(\m_G)$ by construction and $M_{\KKK'}(v)\in \m_\GGG(v)$ by construction. One easily verifies that $\LLL'$ is a strong linkage, as we never interfered with incidences of its endpoints $x^*$ and $y^*$. Thus~$\KKK'$ refutes the choice of~$\LLL^*$ minimising $\Abs{F}$; a contradiction.
\end{proof}
\begin{remark}
    This highlights, that we cannot assure linkages to be linear when working with well-linked $\Omega$-knitworks. In particular, we may not be able to reroute linear linkages to $\m_G$-respecting linear linkages of ``similar type'', as we may need the cycles $\CCC$ in the Euler-cover to reroute in order to stay $\m_G$-respecting. In particular there is no analogue to \cref{obs:linkage_gives_linear_linkage}, and loosening the \cref{def:immersion} to non-linear paths is crucial. 
\end{remark}

Complementing the above remark, we have the following.

\begin{lemma}\label{lem:knitwork_immersion_weak_vs_strong}
\begin{enumerate}
    \item Let $\Omega$ be a well-quasi-order. There exists a well-linked $\Omega$-knitwork $\GGG$ with a strong $\m_G$-respecting linkage $\LLL$ of order $2$ such that there exists no $\m_G$-respecting linear linkage $\LLL'$ of equal order satisfying $\tau(\LLL) = \tau(\LLL')$.\label{lem:knitwork_immersion_weak_vs_strong:1}
    \item There exists a well-quasi-order $\Omega$ on two elements such that the following holds. There are $\Omega$-knitworks $\HHH$ and $\GGG$ such that $\HHH \hookrightarrow^* \GGG$, but given $\LLL = \gamma(E(H))$, there is no linear linkage $\LLL'$ of equal order in $\GGG$ satisfying $\tau(\LLL') = \tau(\LLL)$.\label{lem:knitwork_immersion_weak_vs_strong:2}
\end{enumerate}
\end{lemma}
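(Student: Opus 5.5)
Both parts call for explicit small counterexamples. The plan is to build one gadget, a vertex $v$ of degree six carrying a well-linked link that is as small as possible, and then check the required properties by a finite case analysis.

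\textbf{The gadget.} Let $\rho^-(v)=\{a_1,a_2,a_3\}$ and $\rho^+(v)=\{b_1,b_2,b_3\}$. For $s\in\{0,1,2\}$ let $\sigma_s$ be the cyclic shift $a_i\mapsto b_{i+s \bmod 3}$. Define $\m(v)$ as the set of all subsets of the three perfect matchings $\{(a_i,\sigma_s(a_i))\}$. This set is closed under subsets, so the link is reliable.

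\textbf{Well-linkedness of the gadget.} A partition as in \cref{def:well-linked_links} has $\Abs{X_1}\in\{0,1,2,3\}$, and the cases $\Abs{X_1}=2$ and $\Abs{X_1}=1$ are symmetric after swapping the roles of $X$ and $Y$. In the case $X_1=\{a_i\}$, $X_2=\{b_j\}$, the unique shift with $\sigma_s(a_i)=b_j$ already matches $\{a_{i'}\}_{i'\neq i}$ perfectly onto $\{b_{j'}\}_{j'\neq j}$. The cases $\Abs{X_1}\in\{0,3\}$ are trivial. So the link is well-linked, because the three shifts form a Latin square.

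\textbf{What the link forbids.} A partial matching such as $\{(a_1,b_1),(a_2,b_3)\}$ is not contained in any shift, since it extends only to a transposition. Hence a linear linkage in which two distinct paths traverse $v$ with this pattern is not $\m$-respecting.

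\textbf{Part \ref{lem:knitwork_immersion_weak_vs_strong:1}.} I would build a planted, or otherwise controlled, $\GGG$ around this gadget. Put $v\in\dom(\m)$, attach root vertices so that there are exactly two prescribed ends, and add the remaining structure so that $G$ is Eulerian. The graph around $v$ should be designed so that any linear $2$-linkage of the prescribed type must send both paths through $v$, entering via $a_1,a_2$ and leaving via $b_1,b_3$ respectively. This pattern is forced because these are the only edges joining the relevant sides, for instance by making $a_3,b_2$ lie on a pendant cycle through an otherwise isolated vertex $w$.

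The strong linkage $\LLL$ of the same type then uses the extra cycle through $a_3,b_2$. One path enters $v$ via $a_1$, leaves via $b_2$, goes around the cycle to return via $a_3$, and leaves via $b_1$; this traversal follows the single shift $a_1\mapsto b_2$, $a_3\mapsto b_1$ (together with $a_2\mapsto b_3$). So $\LLL$ is strong and $\m$-respecting but not linear.

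The remaining verification is a finite enumeration of the linear linkages of type $\tau(\LLL)$ in this graph, showing that each induces the forbidden transposition pattern at $v$. I expect the main obstacle to be fixing the gadget so that this enumeration really leaves no linear alternative. The first attempt is to make $G$ as small as possible: the two root vertices, $v$, and $w$, with parallel edges subdivided.

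\textbf{Part \ref{lem:knitwork_immersion_weak_vs_strong:2}.} Take $\Omega$ to be the antichain on two labels $\{p,q\}$. Let $\HHH$ be the digraph whose two edges realise $\tau(\LLL)$, for instance two vertices $x,y$ with two parallel $(x,y)$-edges and two return edges, rooted and labelled so that $\Psi$ forces $\gamma(x),\gamma(y)$ onto specific labelled vertices of $\GGG$. Let $\GGG$ be the knitwork of Part \ref{lem:knitwork_immersion_weak_vs_strong:1} with those two vertices labelled $p$ and $q$, and with all other vertices unlabelled so that they cannot be images.

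The linkage $\LLL$ together with the forced vertex map is then an immersion model, and hence $\HHH\hookrightarrow^*\GGG$ by the observation relating immersions to immersion models. Any immersion must use the same endpoints, so a linear linkage of type $\tau(\LLL)$ would contradict Part \ref{lem:knitwork_immersion_weak_vs_strong:1}.
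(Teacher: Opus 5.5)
Your host graph is essentially the paper's: a central vertex with a pendant $2$-cycle through $w$, one path that detours around that cycle, and a unique linear alternative that skips it. Your link is different, though, and the difference works in your favour.

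The paper puts at $x$ all order-preserving (non-crossing) partial matchings. The only perfect matching among them is the identity. So for the partition $X_1=\{e_1\}$, $X_2=\{f_2\}$ no matching as required in \cref{def:well-linked_links} exists, and the paper's $\GGG$ is in fact not well-linked, despite the claim. Your three cyclic shifts form a Latin square, so every admissible partition is realised by the unique shift sending the singleton side onto the singleton side. The transposition pattern $\{(a_1,b_1),(a_2,b_3)\}$ is still excluded, and the detour realises $\sigma_1$. So your route proves Part 1 as stated, whereas the paper's gadget only gives it for a reliable but non-well-linked link.

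Three remarks on the details.

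\emph{The enumeration.} The step you flag as the main obstacle is vacuous. $\tau$ records first and last \emph{edges}, and a linear path whose first edge enters $v$ and whose last edge leaves $v$ consists of exactly those two edges; otherwise $v$ would occur twice as an internal vertex. Hence any linear $\LLL'$ with $\tau(\LLL')=\tau(\LLL)$ is $\{(a_1,b_1),(a_2,b_3)\}$, whatever the surrounding graph.

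\emph{Shared endpoints.} This edge-type forcing, not the shape of $G$, is what makes your shared-endpoint version work ($a_1,a_2$ leaving $s$, and $b_1,b_3$ entering $t$). With vertex types, $\{(a_1,b_3),(a_2,b_1)\}\subseteq\sigma_2$ is an $\m$-respecting linear linkage between the same two vertices. So in Part 2, your remark that any immersion must use the same endpoints does not show that $\HHH$ has no linear immersion at all. For that stronger, intended content, separate the endpoints as the paper does:
\begin{itemize}
\item let $a_1$ leave $s_1$, $b_1$ enter $t_1$, $a_2$ leave $s_2$ and $b_3$ enter $t_2$;
\item add the return edges $(t_i,s_i)$;
\item take $\HHH$ to be two $2$-cycles.
\end{itemize}
Then vertex types already force the forbidden pattern, and every respecting immersion detours through $w$, realising $\sigma_1$ or $\sigma_0$.

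\emph{Labels and roots.} Since $\dom(\Phi)$ avoids the root set, you cannot label both endpoints. Pin one endpoint by the root and the other by a label.
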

\begin{figure}
    \centering
    \begin{tikzpicture}[>=Stealth, every node/.style={font={\footnotesize}}]
        \tikzstyle{v}=[circle, draw=black, inner sep=1pt]
        \tikzstyle{ne}=[fill=white,inner sep=0pt]
        \tikzstyle{l}=[blue!50!white, line width=2pt]
        \foreach \n/\l/\x/\y in {x/x/0/0,t1/t_1/2/0.5,s1/s_1/2/-0.5,t2/t_2/1/1.5,s2/s_2/-1/1.5,y/y/-2/0}
        \node[v] (\n) at (\x,\y) {$\l$};

        \draw[l] (s1) to (x) (x) to [bend left=20] (y) (y) to [bend left=20] (x) (x) to (t1) ;
        
        \draw[l, red!50!white] (s2) to (x) to (t2) ;
        \foreach \s/\t/\n in {s2/x/e_2,x/t2/f_2,x/t1/f_3,s1/x/e_1}
        \draw[->,black] (\s) to node[ne] {$\n$}  (\t);
        \draw[->] (t2) to (s2) ; 
        \draw[->](t1) to (s1) ; 
        \draw[->](x) to [bend left=20] node[ne] {$f_1$ } (y) ;
        \draw[->](y) to [bend left=20] node[ne] {$e_3$} (x);
    \end{tikzpicture}
    \caption{Construction in proof of \cref{lem:knitwork_immersion_weak_vs_strong}. $L_1$ is marked in blue and $L_2$ is marked in red.}
    \label{fig:knitwork_immersion_weak_vs_strong}
\end{figure}
\begin{proof}
We will use the same construction for both 1. and 2. Let $\Omega$ be arbitrary for now. Let $G =(V,E)$ be the Eulerian digraph obtained by letting $V=\{s_1,s_2,t_1,t_2,x,y\}$ and $E=\{(t_i,s_i),e_1=(s_1,x),f_1=(x,y), e_3=(y,x),f_3=(x,t_1),e_2=(s_2,x),f_2=(x,t_2) \mid i=1,2\}$. Root $G$ arbitrarily in $s_1$ say. Let $(\mu,\m,\Phi)$ be an $\Omega$-sleeve for $\bar G$ defined via $\dom(\mu) = \{x\} = \dom(\m)$ and $\Phi$ is nowhere defined for now. Let 
    $\MMM\subset \operatorname{Match}(\{e_1,e_2,e_3\},\{f_1,f_2,f_3\})$ be maximal such that for every $M \in \MMM$ with two distinct elements $(e_{i_1},f_{j_1}),(e_{i_2},f_{j_2}) \in M$ it holds $i_1<i_2 \iff j_1 < j_2$. One easily verifies that $\MMM$ is a well-linked link. 
    Let $\LLL=\{L_1,L_2\}$ with $L_1=(e_1,f_1,e_3,f_3)$ and $L_2 = (e_2,f_2)$, then $\LLL$ is a non-linear (but strong) $2$-linkage with $\tau(\LLL) = \{(s_1,t_1),(s_2,t_2)\}$ that is $\m_G$-respecting. Let $\LLL'=\{L_1',L_2'\}$ be a linear linkage in $G$ with the same type, then there is only one choice for $\LLL'$, namely $L_1' = (e_1,f_3)$ and $L_2' = (e_2,f_2)$. But $\LLL'$ is not $\m_G$-respecting. This proves 1.

    Change $\Omega$ to be a well-quasi-order on two incomparable elements $\star_1,\star_2$. Let $H$ be the (disconnected) Eulerian digraph consisting of two circles $C_i=((s_i',t_i'),(t_i',s_i'))$ for $i=1,2$. Let $\HHH$ be obtained from $H$ by rooting $H$ in $s_1'$ and setting $\Phi_H(s_i') = \star_i =\Phi_H(t_i')$ and keeping $\mu_H,\m_H$ nowhere defined. Let $\GGG$ be obtained as above by fixing $\Phi_G(s_i') = \star_i =\Phi_G(t_i')$ for $i=1,2$. Then $\HHH \hookrightarrow^*\GGG$ by mapping $(s_i',t_i')$ to $L_i$ as defined above and $(t_i',s_i')$ to $(t_i,s_i)$. By 1. the claim follows.
\end{proof}
\begin{remark}
    This essentially boils down to ``cutting out'' an Eulerian digraph that does not allow for paths to ``cross'', and encoding the feasible linkages in the routing handler $\m_G$ of a vertex $x$ replacing the cut. 

    We emphasise that we believe that a similar construction should work to prove the same result for \emph{inter-linked} $\Omega$-knitworks, but we did not look into this further.
\end{remark}

On a positive note, and concluding our discussion regarding \cref{def:knitwork_immersion}, we have the following.
\begin{lemma}\label{lem:from_respecting_to_clean_respecting}
    Let $\Omega$ be a well-quasi-order and let $\GGG$ be an $\Omega$-knitwork. Let $\LLL$ be an $\m_G$-respecting linkage in $G$. Then there exists a clean  $\m_G$-respecting linkage $\LLL'$ in $G$ with $\Abs{\LLL'} = \Abs{\LLL}$, $v\text{-}\tau(\LLL') = v\text{-}\tau(\LLL)$ and every path in $\LLL'$ is a subpath of a path in $\LLL$.
\end{lemma}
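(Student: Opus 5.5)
Given a path that is not clean, I would trim it to the subpath between the last visit of its start vertex and the first subsequent visit of its end vertex. This removes no transitions except possibly at the endpoints, and the obstacle is to check that the induced matching at each vertex of $\dom(\m_G)$ stays in the link.

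Concretely, let $P=(e_1,\ldots,e_m)\in\LLL$ have vertex-type $(v_0,v_m)$ and internal vertices $v_1,\ldots,v_{m-1}$, where $v_i=\head(e_i)$. If $v_0=v_m$ and the path is not clean, let $j$ be the largest index with $\tail(e_j)=v_0$. If some vertex $v_i$ with $i\ge j$ equals $v_0$, take the first such index $i'$. The subpath $(e_j,\ldots,e_{i'})$ then starts and ends in $v_0$ and has no internal vertex equal to $v_0$. If $v_0\neq v_m$, let $j$ be the largest index with $\tail(e_j)=v_0$, and then let $i$ be the smallest index $\ge j$ with $\head(e_i)=v_m$. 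The subpath $P'=(e_j,\ldots,e_i)$ is a $(v_0,v_m)$-path with the same vertex-type. By the choice of $j$, $v_0$ never occurs at a later position, so it is not internal. By the choice of $i$, $v_m$ is not internal either, since an earlier occurrence $\head(e_{i''})=v_m$ with $j\le i''<i$ would contradict minimality. Hence $P'$ is clean. Doing this for every path gives a linkage $\LLL'$ with $\Abs{\LLL'}=\Abs{\LLL}$ (the paths stay edge-disjoint because they are subpaths of edge-disjoint paths), $v\text{-}\tau(\LLL')=v\text{-}\tau(\LLL)$, and every path a subpath of one in $\LLL$.

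Next I show that $\LLL'$ is $\m_G$-respecting. Fix $v\in\dom(\m_G)$. Every pair $(e,e')\in M_{\LLL'}(v)$ is two consecutive edges of some $P'\subseteq P$, so it is also consecutive in $P$ and lies in $M_\LLL(v)$. Thus $M_{\LLL'}(v)\subseteq M_\LLL(v)$. Since $\LLL$ is $\m_G$-respecting, $M_\LLL(v)\in\m_G(v)$. Because $\m_G(v)$ is a reliable link (part of \cref{def:knitwork}), it is closed under taking subsets of matchings, so $M_{\LLL'}(v)\in\m_G(v)$.

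The only delicate point is this monotonicity, and it rests entirely on reliability. Trimming only deletes transitions and never creates a new pair $(e,e')$, which would be the problem with rerouting as in \cref{obs:linkage_gives_linear_linkage} (that would also demand linearity, which fails by \cref{lem:knitwork_immersion_weak_vs_strong}). The rest is index bookkeeping. One small case to check is paths of length $0$, which are trivially clean.
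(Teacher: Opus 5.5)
Your proof is correct and takes essentially the paper's approach. The paper also replaces each non-clean path by a subpath with the same vertex-type: it repeatedly splits $L = P_1 \circ P_2$ and keeps $P_2$, whereas you choose the extremal indices in one step and handle both endpoints at once. It likewise relies on $M_{\LLL'}(v) \subseteq M_{\LLL}(v)$ together with reliability of $\m_G(v)$, which the paper leaves implicit and you state explicitly.
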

\begin{proof}
    Let $L \in \LLL$ be a path that is not clean, then $L = P_1 \circ P_2$ where $P_1,P_2$ are paths of length at least $1$, both starting in the same vertex $v \in V(G)$ whence $v\text{-}\tau(P_2) = v\text{-}\tau(L)$. Since $\LLL$ is $\m_G$-respecting, We define $\LLL_1 \coloneqq \LLL\cup\{P_2\} \setminus \{L\}$. Clearly $\LLL_1$ is a linkage and it is still $\m_G$-respecting. Repeating this construction for every path until there is no non-clean path left in the linkage does the trick.
\end{proof}
This last result can essentially be used to force weak $\Omega$-knitwork immersion to result in clean paths in the spirit of \cref{obs:linkage_gives_linear_linkage} as we will make precise.

\subsection{Ordering by $\Omega$-Knitwork immersion} 
We start by proving that the newly introduced immersion relations induce a quasi-order on~$\Omega$-knitworks, i.e., they are transitive and reflexive. Indeed, the proof for weak immersion \emph{needs} the weaker assumption in \cref{def:knitwork_immersion} on edges to be mapped to paths that allow repeated vertices. In particular \cref{lem:knitwork_immersion_weak_vs_strong} can be used to prove that the following lemma would otherwise fail. Note that we define \emph{strong} $\Omega$-knitwork immersion also in terms of non-linear paths. Although we \emph{could} define them via linear paths (the following lemma would still hold true), we will \emph{need} the weaker notion of non-linear paths in light of \cref{lem:from_linear_to_respecting_linkage_well-linked}. We will mainly work with well-linked $\Omega$-knitworks (or variations thereof) that are not guaranteed to be inter-linked, and thus we may not be able to reroute paths to linear paths.

The following is straightforward from the definitions; we give a proof for completion.
\begin{observation}\label{obs:knitwork_imm_lifts_to_subknitwork}
     Let~$\Omega=(V(\Omega),\preceq)$ be a well-quasi-order and $\ell \geq 1$. Let $\GGG_1,\GGG_2$ be $\Omega$-knitworks. Let $\HHH_2\subseteq \GGG_2$ be a sub-knitwork. If $\GGG_1 \hookrightarrow \HHH_2$ then for every sub-knitwork $\HHH_1 \subseteq \GGG_1$ we derive $\HHH_1 \hookrightarrow \GGG_2$, and the same holds true for weak $\Omega$-knitwork immersion.
\end{observation}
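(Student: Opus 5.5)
The statement reads as a transitivity-type claim: if $\GGG_1\hookrightarrow\HHH_2$ with $\HHH_2\subseteq\GGG_2$, then every $\HHH_1\subseteq\GGG_1$ satisfies $\HHH_1\hookrightarrow\GGG_2$. I would not go through a general transitivity lemma. Instead I would restrict a witnessing immersion directly. Fix $\gamma\colon\GGG_1\hookrightarrow\HHH_2$ and let $\GGG_2'\subseteq\HHH_2$ be its witness. The first step is to note that a sub-knitwork of a sub-knitwork is a sub-knitwork. Concretely, $G_2'\subseteq H_2\subseteq G_2$, the roots and the root cuts $\rho(X_i)$ are unchanged, and each part of the induced sleeve (restrict $\mu$ to the surviving edges keeping their relative order, intersect the matchings, restrict $\Phi$) is compatible with composition. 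Hence the sleeve $G_2'$ inherits from $\HHH_2$ equals the one it would inherit directly from $\GGG_2$, and $\GGG_2'\subseteq\GGG_2$.

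Next, let $\HHH_1\subseteq\GGG_1$ be given by $H_1\subseteq G_1$, and set $\gamma_1\coloneqq\restr{\gamma}{V(H_1)\cup E(H_1)}$. Define the candidate witness $\GGG_2''$ as follows. Its graph $G_2''$ has vertex set $\gamma(V(H_1))$ together with the internal vertices of the paths $\gamma(e)$, $e\in E(H_1)$, and its edge set is $\bigcup_{e\in E(H_1)}E(\gamma(e))$. Since $H_1$ contains all roots and all root-cut edges, and $\gamma$ respects the order of the roots, $G_2''$ contains the root sets $X_i'$ and $\rho_{G_2''}(X_i')=\rho_{G_2'}(X_i')$. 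So $G_2''$ is a rooted subdigraph, and with its induced sleeve it is a sub-knitwork of $\GGG_2'$, hence of $\GGG_2$ by the first step. It is quasi-Eulerian in the right sense because it is the image of the rooted Eulerian $H_1$ under edge-disjoint paths; this mirrors \cref{lem:subknitwork_difference_eulerian}.

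The third step checks the conditions of \cref{def:knitwork_immersion} for $\gamma_1$ with witness $\GGG_2''$. Conditions (1), (2), (6) and strongness are inherited directly, since domains of the sleeve maps are intersected with $V(H_1)$ and $V(G_2'')$ consistently. For (3), $\nu_{H_1}(v)$ is the subsequence of $\nu_{G_1}(v)$ on $\rho_{H_1}(v)$. The edges $e'_i\in\gamma(e_i)$ that survive in $G_2''$ are exactly those with $e_i\in E(H_1)$, because an edge of $\rho(\gamma(v))$ can only lie on the image of an edge incident to $v$, since the paths are clean at image vertices. Therefore the induced $\mu''(\gamma(v))$ is the corresponding subsequence and the lengths agree. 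For (4), if $v'$ is a vertex not in $\gamma(\dom\n)$, then $M_{\gamma_1}(v')\subseteq M_\gamma(v')\in\m'(v')$ consists of pairs in $\rho_{G_2''}$. So $M_{\gamma_1}(v')=M_\gamma(v')\cap\operatorname{Match}(\rho^-_{G_2''}(v'),\rho^+_{G_2''}(v'))$, which lies in $\m''(v')$ by the definition of the induced link. For (5), given $\alpha$ for $\gamma$, send each restricted matching $M\cap\operatorname{Match}(\rho_{H_1}^\pm(v))$ to $\alpha(M)$ restricted to $\rho_{G_2''}^\pm(\gamma(v))$. The index correspondence from (3) gives the if-and-only-if property.

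I expect the main obstacle to be the injectivity in (5). Distinct restricted matchings in $\n_{H_1}(v)$ may come from several originals, so I would define the map on the restricted sets directly. The correspondence $(e_i,e_j)\leftrightarrow(e'_i,e'_j)$ is a bijection on the index pairs surviving, so the restricted image is determined by the restricted source matching, and the map is therefore well defined and injective. The weak case follows verbatim, dropping strongness. The claim ``$\rho(\gamma(v))$ edges only lie on images of edges at $v$'' needs $\gamma(v)\in\dom\mu'$ together with condition (3), which holds since $v\in\dom\nu$.
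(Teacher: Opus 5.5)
\textbf{There is a genuine gap: your witness $G_2''=\gamma(H_1)$ is not always a sub-knitwork, and in the weak case the step you call ``verbatim'' fails.} Otherwise your route is the paper's: restrict $\gamma$ to $H_1$, use the image $\gamma(H_1)$ as the new witness, and note that a sub-knitwork of a sub-knitwork is a sub-knitwork. Your checks of conditions (3)--(5), including the relabelling argument for injectivity in (5), are more careful than the paper's one-line proof. The paper uses the same witness and glosses over the same two points.

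(i) \emph{Root sets.} Your claim that $G_2''$ contains the root sets $X_i'$ does not follow. A rooted immersion only gives $\gamma(X_i)\subseteq X_i'$. For example, let $G_1$ be the $2$-cycle $a\leftrightarrows x$ rooted in $\{a\}$, and let $G_2'$ be the same plus a $2$-cycle $a'\leftrightarrows w'$, rooted in $\{a',w'\}$. The obvious strong immersion has an image that misses $w'$. In the strong case this is cosmetic: just add the missing root vertices.

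(ii) \emph{Quasi-Eulerianness in the weak case.} Your argument that the image of $H_1$ under edge-disjoint paths is quasi-Eulerian works only when image vertices are never internal, i.e.\ in the strong case. Suppose $v\in X_i$ has degree one in $H_1$. A weak immersion may route another path through $\gamma(v)$: it is a root vertex, so it is not in $\dom(\m')$, and it can be Eulerian in $G_2'$ via an edge that $\gamma$ does not use.

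Concretely, let $G_1$ consist of $v\to x\to y\to u$, $a\leftrightarrows b$ and $a\leftrightarrows x$, rooted in $\{v,u,a,b\}$. Let $G_2'$ be the same graph with $(a,b)$ replaced by $a'\to v'\to b'$ and an extra edge $u'\to v'$. The map sending $(a,b)$ to $a'\to v'\to b'$ and every other edge to its counterpart is a stable weak knitwork immersion (take empty sleeves). Taking $H_1=G_1$, however, $v'$ has in-degree $1$ and out-degree $2$ in $\gamma(H_1)$. So $\bar G_2''$ is not a rooted Eulerian digraph, and $\GGG_2''$ is not a knitwork.

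\emph{Fix.} Define the witness by deletion instead:
$G_2''\coloneqq G_2'-\bigcup_{e\in E(G_1)\setminus E(H_1)}E(\gamma(e))$, then drop isolated vertices that are neither root vertices nor in $\gamma(V(H_1))$.
\begin{itemize}
\item By \cref{lem:subknitwork_difference_eulerian}, $G_1-H_1$ is Eulerian. So at every vertex of $G_2'$ the deleted edges contain as many in-edges as out-edges: internal passes are balanced, and the endpoint contributions at $\gamma(u)$ balance because $u$ is Eulerian in $G_1-H_1$.
\item Hence quasi-Eulerianness, the root sets and their balance are inherited from $G_2'$.
\item No edge of $\rho(X_i')$ is deleted, since such edges lie only on images of $\rho(X_i)=\rho_{H_1}(X_i)\subseteq E(H_1)$.
\end{itemize}
With this witness your verifications go through unchanged:
\begin{itemize}
\item A consecutive pair on an image path is either kept or deleted as a whole. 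So $M_{\gamma_1}(v')$ is exactly the restriction of $M_\gamma(v')$ to $\rho_{G_2''}(v')$, which gives (4).
\item For $v\in\dom(\nu)$, condition (3) for $\gamma$ makes every edge at $\gamma(v)$ an end of the image of an edge at $v$. Hence $\rho_{G_2''}(\gamma(v))$ corresponds exactly to $\rho_{H_1}(v)$.
\item For $v\in\dom(\n_{G_1})\setminus V(H_1)$, the vertex $\gamma(v)$ becomes isolated and is deleted, so (4) never has to be checked there.
\end{itemize}
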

\begin{proof}
    Let $\gamma: \GGG_1 \hookrightarrow \HHH_2$ with witness $\HHH_2'$ and let $\gamma(G_1) \subseteq H_2'$ be the respective subgraph. Let $\HHH_1 \subseteq \GGG_1$ and let $H_2'' \coloneqq \gamma(H_1)$ and let $\HHH_2''$ be the respective sub-knitwork of $\HHH_2'$ with underlying digraph $H_2''$, note that $\gamma$ is an immersion of the underlying directed graphs. One easily verifies that the map $\gamma':V(H_1)\cup E(H_1) \to H_2''$ obtained from restricting $\gamma$ to $H_1$ witnesses $\HHH_1 \hookrightarrow \HHH_2''$ and thus $\HHH_1 \hookrightarrow \GGG_2$ as desired.
\end{proof}

\begin{theorem}\label{lem:knitwork_imm_is_quasi_order}
    Let~$\Omega=(V(\Omega),\preceq)$ be a well-quasi-order and $\ell \geq 1$. Then $\hookrightarrow$ and $\hookrightarrow^*$ induce quasi-orders on the class of~$\Omega$-knitworks of index $\ell$.
\end{theorem}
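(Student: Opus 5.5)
We have to establish reflexivity and transitivity for both $\hookrightarrow$ and $\hookrightarrow^*$ on $\Omega$-knitworks of index $\ell$.

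\textbf{Reflexivity.} This is immediate from \cref{obs:subknitworks_immerse}, applied with $\HHH=\GGG$. Concretely, the identity map $\gamma$ together with the witness $\GGG$ itself works. Each edge is mapped to the length-one path consisting of itself, so the immersion is strong, respects the root orderings, and respects $\nu=\mu$. Moreover $M_\gamma(v)=\emptyset$ for every $v$, and $\emptyset$ lies in $\m(v)$ because $\m(v)$ is reliable; this gives condition (4). Conditions (5) and (6) hold with $\alpha$ the identity and by reflexivity of $\preceq$.

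\textbf{Transitivity.} Let $\gamma_1:\GGG_1\hookrightarrow^*\GGG_2$ with witness $\GGG_2'\subseteq\GGG_2$, and $\gamma_2:\GGG_2\hookrightarrow^*\GGG_3$ with witness $\GGG_3'$. First apply \cref{obs:knitwork_imm_lifts_to_subknitwork} to $\GGG_2'\subseteq \GGG_2$. This yields an immersion $\gamma_2'$ of $\GGG_2'$ into $\GGG_3$, namely the restriction of $\gamma_2$, with witness $\GGG_3''\subseteq\GGG_3'$ whose underlying graph is $\gamma_2(G_2')$.

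Next define $\gamma\coloneqq\gamma_2'\circ\gamma_1$. On vertices this is composition. On an edge $e$, if $\gamma_1(e)=(f_1,\dots,f_m)$, we set $\gamma(e)\coloneqq\gamma_2(f_1)\circ\dots\circ\gamma_2(f_m)$, as in \cref{obs:immersion_maps_path_to_path}. The immersion properties of $\gamma$ follow routinely.
\begin{itemize}
\item Injectivity on vertices and edge-disjointness of the image paths are inherited from $\gamma_1$ and $\gamma_2$.
\item The roots map to roots: $X_i\mapsto X_i'\mapsto X_i''$, and the complements likewise.
\item The root orderings are respected, since the bijections $\eta_i$ compose. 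Condition (3) of \cref{def:knitwork_immersion} composes in the same way.
\item If both $\gamma_1$ and $\gamma_2$ are strong, so is $\gamma$. An internal vertex of $\gamma(e)$ is either internal to some $\gamma_2(f_j)$, hence outside $\gamma_2(V(G_2))$, or equal to $\gamma_2(w)$ for $w$ internal to $\gamma_1(e)$, hence $w\notin\gamma_1(V(G_1))$ and, by injectivity, $\gamma_2(w)\notin\gamma(V(G_1))$.
\item Condition (2) composes via the "if and only if" clauses. Condition (5) composes by composing the injections $\alpha$. Condition (6) follows from transitivity of $\preceq$.
\end{itemize}

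\textbf{Main obstacle: condition (4).} The witness for $\gamma$ will be the sub-knitwork $\GGG_3^\ast$ of $\GGG_3''$ induced by $\gamma(G_1)$. This is again an $\Omega$-knitwork by \cref{def:subknitwork} (the roots are preserved). Take $v'\in\dom(\m_3^\ast)\setminus\gamma(\dom(\n_1))$. There are two cases.
\begin{itemize}
\item If $v'=\gamma_2(w)$ for some $w\in V(G_2')$, condition (2) for $\gamma_2'$ gives $w\in\dom(\m_2')$. The pairs in $M_\gamma(v')$ arise only from pairs $(f,f')\in M_{\gamma_1}(w)$, translated through the injection $\alpha$ of condition (5) for $\gamma_2'$.
 \begin{itemize}
 \item If $w\notin\gamma_1(\dom(\n_1))$, condition (4) for $\gamma_1$ gives $M_{\gamma_1}(w)\in\m_2'(w)$. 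Then $\alpha(M_{\gamma_1}(w))\in\m_3''(v')$, and this matching equals $M_\gamma(v')$ up to intersecting with $\rho_{\gamma(G_1)}(v')$. It therefore lies in the induced link $\m_3^\ast(v')$.
 \item If $w=\gamma_1(u)$ with $u\in\dom(\n_1)$, then $v'\in\gamma(\dom(\n_1))$ and nothing needs to be checked.
 \end{itemize}
\item If $v'\notin\gamma_2(V(G_2'))$, then every pair of $M_\gamma(v')$ comes from a single path $\gamma_2(f)$. Hence $M_\gamma(v')=M_{\gamma_2}(v')\cap\operatorname{Match}(\rho^-_{\gamma(G_1)}(v'),\rho^+_{\gamma(G_1)}(v'))$. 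This is exactly an element of the induced link, by condition (4) for $\gamma_2'$ together with the definition of $\m$ for sub-knitworks.
\end{itemize}
The crux is that $\mu_2'(w)$ and the witness degree at $w$ agree exactly, because the sub-knitwork carries the ordering that condition (3) requires. This exact agreement is what lets the pairs at $w$ be transported faithfully. It is precisely why the definition uses sub-knitwork witnesses and non-linear paths, and the same argument covers both the weak and the strong case.
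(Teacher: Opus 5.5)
Your proof is correct and follows the paper's argument. Applying \cref{obs:knitwork_imm_lifts_to_subknitwork} to obtain $\gamma_2'$ with witness $\GGG_3''$ is the paper's reduction to stable immersions done inline, and your treatment of condition (4) uses the same case split: for $v'=\gamma_2(w)$ with $w\in\dom(\m_2')$ you combine condition (4) for $\gamma_1$ with condition (5) for $\gamma_2'$, and for $v'\notin\gamma_2(V(G_2'))$ you use condition (4) for $\gamma_2'$. Two small remarks: the further restriction to $\GGG_3^\ast$ is unnecessary, since $\gamma$ is already stable into $\GGG_3''$; and in the reflexivity part condition (4) is vacuous for the identity, so you need not argue $\emptyset\in\m(v)$, which in fact fails if $\m(v)$ is the empty link (also reliable).
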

\begin{proof}
    We provide a proof for~$\hookrightarrow$ since the proof for $\hookrightarrow^*$ is analogous (We will highlight the crucial points where assumptions for weak immersion are needed).
    
    The relation is clearly reflexive via the identity map, i.e., $\HHH \hookrightarrow \HHH$. We continue with transitivity. As a first step, note that it suffices to prove the claim for \emph{stable} $\Omega$-knitwork immersion.
    
    \begin{claim}\label{lem:knitwork_imm_is_quasi_order_claim_stable}
        Let $\hookrightarrow^+$ denote stable strong $\Omega$-knitwork immersion. If $\hookrightarrow^+$ induces a quasi-order on the class of $\Omega$-knitworks, then so does $\hookrightarrow$. The same holds true for weak $\Omega$-knitwork immersion.
    \end{claim}
    \begin{claimproof}
        For $i \in \{ 1,2,3\}$ let~$(\bar{G}_i,\mu_i,\m_i,\Phi_i)$ be~$\Omega$-knitworks with~$\bar{G}_i = (G_i,\pi_i, X_1^i,\ldots, X_\ell^i)$ and let $\gamma_1: (\bar{G}_1,\mu_1,\m_1,\Phi_1) \hookrightarrow (\bar{G}_2,\mu_2,\m_2,\Phi_2)$ and~$\gamma_2: (\bar{G}_2,\mu_2,\m_2,\Phi_2) \hookrightarrow (\bar{G}_3,\mu_3,\m_3,\Phi_3)$ be strong~$\Omega$-knitwork immersions. Thus, there exist $\GGG_2' \subseteq \GGG_2$ and $\GGG_3' \subseteq \GGG_3$ such that $\gamma_1:\GGG_1 \hookrightarrow^+ \GGG_2'$ and $\gamma_2: \GGG_2 \hookrightarrow^+ \GGG_3'$ satisfy $(1)-(6)$ of \cref{def:isomorphism_knitwork}; in particular both are stable strong $\Omega$-knitwork immersions. By \cref{obs:knitwork_imm_lifts_to_subknitwork} we derive the existence of a strong $\Omega$-knitwork immersion $\gamma_2':\GGG_2'\hookrightarrow \GGG_3'$, and thus by \cref{def:knitwork_immersion} we derive the existence of a witness $\GGG_3''\subseteq \GGG_3'$ and a stable strong $\Omega$-knitwork immersion $\gamma_2':\GGG_2' \hookrightarrow^+ \GGG_3''$. By assumption of the claim, $\hookrightarrow^+$ is transitive, hence there is a stable strong $\Omega$-knitwork immersion $\eta : \GGG_1 \hookrightarrow^+ \GGG_3''$ whence the claim follows by \cref{def:knitwork_immersion}.

        The proof for weak $\Omega$-knitwork immersion follows verbatim.
    \end{claimproof}
    
    By \cref{lem:knitwork_imm_is_quasi_order_claim_stable} we may assume without loss of generality that throughout the remainder of the proof, we work with stable strong $\Omega$-knitwork immersion; we denote it by $\hookrightarrow$ for simplicity. Hence, for $i \in \{ 1,2,3\}$ let~$(\bar{G}_i,\mu_i,\m_i,\Phi_i)$ be~$\Omega$-knitworks with~$\bar{G}_i = (G_i,\pi_i, X_1^i,\ldots, X_\ell^i)$ and let
    $\gamma_1: (\bar{G}_1,\mu_1,\m_1,\Phi_1) \hookrightarrow (\bar{G}_2,\mu_2,\m_2,\Phi_2)$ with witness $\GGG_2$ and~$\gamma_2: (\bar{G}_2,\mu_2,\m_2,\Phi_2) \hookrightarrow (\bar{G}_3,\mu_3,\m_3,\Phi_3)$ with witness $\GGG_3$ be stable strong~$\Omega$-knitwork immersions in particular they each satisfy $(1)$-$(6)$ of \cref{def:knitwork_immersion}; we omit the word stable from here on and assume it tacitly.

    Note that the three $\Omega$-knitworks are of common index $\ell \geq 1$ by \cref{def:knitwork_immersion}. Further,~$\gamma_1$ and $\gamma_2$ are strong rooted immersions by \crefdef{def:knitwork_immersion}{1} and thus~$\delta(X_j^i) = k_j \in 2\N$ for all~$i \in \{1,2,3\}$ and $1 \leq j \leq \ell$.
    
    Let~$\gamma$ be defined on $V(G_1) \cup E(G_1)$ via~$\gamma \coloneqq \gamma_2 \circ \gamma_1$. Recall that for $e \in E(G_1)$, if $\gamma_1(e) :=  (e_1', \dots, e_t')$ then $\gamma_2 \circ \gamma_1 (e) = \gamma_2((e_1',\ldots,e_t')) \coloneqq \gamma_2(e_1')\circ\ldots\circ \gamma_2(e_t')$ is the concatenation of the paths $\gamma_2(e'_j)$ for $1 \leq j \leq t$.
    \begin{claim}
        $\gamma$ strongly immerses~$\bar{G}_1$ into~$\bar{G}_3$.
    \end{claim}
    \begin{claimproof}
        Since~$\gamma_1,\gamma_2$ are strong immersions on the underlying unrooted digraphs, using the transitivity for strong immersion we have that~$\gamma: G_1 \hookrightarrow G_3$ is a strong immersion.
        
        Let $e \in E(G_1)$ and $\gamma(e) = \gamma_2 \circ \gamma_1 (e) = \gamma_2((e_1',\ldots,e_t')) \coloneqq \gamma_2(e_1')\circ\ldots\circ \gamma_2(e_t')$ for $t \geq 1$ and $e_1',\ldots,e_t' \in E(G_2)$ such that $\gamma_1(e) = (e'_1, \dots, e'_t)$. By \cref{obs:immersion_maps_path_to_path} we derive that $(e_1',\ldots,e_t')$ is a path in $G_2$ and combining \cref{obs:immersion_maps_path_to_path} together with \cref{obs:concat_paths} implies that $\gamma(e)$ is a path in $G_3$. (Here the proof would still work for strong immersion by insisting on linear paths, but the same would fail for weak immersion.)
        
        It remains to show that~$\gamma$ respects the order of the roots; note that $\gamma(X_j^1) \subseteq \gamma(X_j^3)$ for every $1 \leq j \leq \ell$ again by transitivity and definition. To this extent let~$\pi_i(X^i_j)=(e_{1,j}^i,\ldots,e_{k_j,j}^i)$ for every $1 \leq j \leq \ell$ and $i \in \{1,2,3\}$; by \crefdef{def:knitwork_immersion}{1} and \cref{def:rooted_immersion} we derive that~$e_{t,j}^2 \in \gamma_1(e_{t,j}^1)$ and~$e_{t,j}^3 \in \gamma_2(e_{t,j}^2)$ for~$1 \leq t \leq k_j$. By definition of~$\gamma$ together with \cref{obs:immersion_maps_path_to_path} this implies that~$e_{t,j}^3 \in \gamma(e_{t,j}^1)$. Clearly no~$e_{t',j}^3 \in \gamma(e_{t,j}^1)$ for~$t' \neq t$ since~$E(\gamma(e_{t',j}^3)) \cap E(\gamma(e_{t,j}^3)) = \emptyset$ using the fact that~$\gamma$ is an immersion on the underlying digraphs. 
    \end{claimproof}

    \begin{claim}
        For each~$v_1 \in V(G_1)$,~$\gamma(v_1) \in \dom(\mu_3)$ if and only if~$v_1 \in \dom(\mu_1)$, as well as $\gamma(v_1) \in \dom(\m_3)$ if and only if $v_1 \in \dom(\m_1)$.
    \end{claim}
    \begin{claimproof}
        For each~$v_1 \in V(G_1)$ we know by \crefdef{def:knitwork_immersion}{2} that~$v_2 \coloneqq \gamma_1(v_1) \in \dom(\mu_2)$ if and only if~$v_1 \in \dom(\mu_1)$. Then again by the same reasoning~$\gamma_2(v_2) \in \dom(\mu_3)$ if and only if~$v_2 \in \dom(\mu_2)$ concluding the proof of the first part of the claim; the second part is analogous.
    \end{claimproof}

    \begin{claim}
        For each~$v_1 \in \dom(\mu_1)$ the index $t$ of~$\mu_1(v_1)=(f_1^1,\ldots,f_t^1)$ and~$t'$ of $\mu_3(\gamma(v_1))=(f_1^3,\ldots,f_{t'}^3)$ are equal and even, 
        and for every~$1 \leq i \leq t$ the edge~$f_i^3$ is contained in~$\gamma(f_i^1)$.
    \end{claim}
    \begin{claimproof}
        Recall that following \cref{lem:knitwork_imm_is_quasi_order_claim_stable} we assume the immersions to be stable. Using \cref{def:knitwork_immersion}\,\cref{def:knitwork_immersion:3} for~$\gamma_1$ and~$\gamma_2$ we derive that~$t = t'$ by transitivity of equality, and since $\dom(\mu_i)$ is disjoint from the root-sets of $\bar G_i$ for every $i\in\{1,2,3\}$, the vertices $v_1$ and $\gamma(v_1)$ are Eulerian by \cref{def:rooted_eulerian_digraph} implying $t \in 2\N$. Further, \crefdef{def:knitwork_immersion}{3} implies that for~$1 \leq i \leq t$ the edge~$f_i^2$ is contained in~$\gamma_1(f_i^1)$ where~$\mu_2(\gamma_1(v_1)) = (f_1^2,\ldots,f_t^2)$. Now~$P_i^2\coloneqq \gamma_1(f_i^1)$ is a path in~$G_2$ containing the edge~$f_i^2$. By \cref{obs:immersion_maps_path_to_path}~$\gamma_2(P_i^2)$ is a path in~$G_3$ and since~$f_i^2 \in E(P_i^2)$ it follows from \crefdef{def:knitwork_immersion}{3} that~$f_i^2 \in \gamma_2(P_i^2) = \gamma_2(\gamma_1(e_i^1)) = \gamma(e_i^1)$. This concludes the proof.
    \end{claimproof}

      \begin{claim}
$M_\gamma(v_3) \in \m_3(v_3)$ for each~$v_3 \in \dom(\m_3)\setminus \gamma(\dom(\m_1))$.
    \end{claim}
    \begin{claimproof}
        Since~$v_3 \notin \gamma(\dom(\m_1))$ there are two cases to consider. Again let~$(f_1^3,\ldots,f_t^3) = \mu_3(v_3)$.

        Assume first that there is~$v_2 \in \dom(\m_2)$ with~$\gamma_2(v_2) = v_3$, which by the assumption of the claim implies~$v_2 \notin \gamma_1(\dom(\m_1))$ and thus~$v_2 \notin \gamma_1(V(G_1))$ by \crefdef{def:knitwork_immersion}{2} applied to $\gamma_1$. By \crefdef{def:knitwork_immersion}{4} we derive~$M_{\gamma_1}(v_2) \in \m_2(v_2)$. The claim follows by \crefdef{def:knitwork_immersion}{5} applied to~$\gamma_2$.

        Next assume that there is no such~$v_2 \in \dom(\m_2)$, whence~$M_{\gamma_2}(v_3) \in \m_3(v_3)$ by \crefdef{def:knitwork_immersion}{4} applied to~$\gamma_2$. Let~$E_2 \subseteq E(G_2)$ be the minimal set of edges for which~$\LLL^2_3=\{\gamma_2(e) \mid e \in E_2\}$ witnesses~$M_{\LLL^2_3}(v_3) = M_{\gamma_2}(v_3)$.
        Let~$E_1 \subseteq E(G_1)$ be the minimal set of edges maximising~$E_2 \cap E(\gamma_1(E_1))$. Since~$\gamma:G_1 \hookrightarrow G_3$
       is an immersion it follows that~$\LLL^1_3 = \{\gamma(e) \mid e \in E_1\}$ is a linkage in~$G_3$ such that if~$(f_i^3,f_j^3) \subset L$ is a subpath of some~$L \in \LLL^1_3$ then~$(f_i^3,f_j^3)$ is a subpath of some~$L \in \LLL^2_3$ for any~$1 \leq i,j \leq t$. The claim now follows by definition of~$M_\gamma(v_3)$ and the fact that all the links in $\m_3$ are reliable by \cref{def:knitwork}.
    \end{claimproof}
    
    \begin{claim}
        For each~$v_1 \in \dom(\m_1)$ with $\mu_1(v_1)=(f_1^1,\ldots,f_t^1)$ and~$\mu_3(\gamma(v_1))=(f_1^3,\ldots,f_{t}^3)$, there is an injection $\alpha_1^3:\m_1(v_1) \to \m_3(\gamma(v_1))$ such that for every $M \in \m_1(v_1)$ it holds ~$(f_i^1,f_j^1) \in M$ if and only if~$(f_i^3,f_j^3) \in \alpha_1^3(M)$ for all~$1\leq i,j \leq t$.
    \end{claim}
    \begin{claimproof}
        This follows at once from 
        \crefdef{def:knitwork_immersion}{5} once for~$\gamma_1$ and once for~$\gamma_2$, yielding two injections $\alpha_1^2:\m(v_1) \to \m_2(\gamma_1(v_1))$ and $\alpha_2^3:\m_2(\gamma_1(v_1)) \to \m_3(\gamma_2(\gamma_1(v_1)))$ which can be concatenated to form $\alpha_1^3$ satisfying the claim. 
    \end{claimproof}

    \begin{claim}
        For every~$v_1 \in V(G_1)$ we have~$\Phi(v_1) \preceq \Phi_3(\gamma(v_1))$.
    \end{claim}
    \begin{claimproof}
        This follows at once from \crefdef{def:knitwork_immersion}{6} for~$\gamma_1,\gamma_2$ and the fact that~$\preceq$ is transitive, i.e.,~$\Phi(v_1) \preceq \Phi(\gamma_1(v_1)) \preceq \Phi(\gamma_2(\gamma_1(v_1))) = \Phi(\gamma(v))$.
    \end{claimproof}
    All in all we verified all of the conditions for \cref{def:knitwork_immersion} concluding the proof for the transitivity, and thus the proof that (stable)~$\Omega$-knitwork immersion induces a quasi-order on~$\Omega$-knitworks; recall that we implicitly used \cref{lem:knitwork_imm_is_quasi_order_claim_stable}.
\end{proof}

We take a brief moment to note the following.

\begin{lemma}
    Let $\Omega$ be a well-quasi-order and let $\HHH,\GGG$ be $\Omega$-knitworks of index $\ell \geq 1$. If $\HHH \hookrightarrow^* \GGG$, then there exists $\gamma^*:\HHH \hookrightarrow^* \GGG$ such that $\gamma^*(E(H))$ is clean.
\end{lemma}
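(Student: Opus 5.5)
The plan is to start from an arbitrary weak $\Omega$-knitwork immersion and clean up its paths using \cref{lem:from_respecting_to_clean_respecting}. Then check that, after shrinking the witness, the result is still a weak $\Omega$-knitwork immersion.

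Concretely, let $\gamma:\HHH\hookrightarrow^*\GGG$ with witness $\GGG'=(\bar G',\mu',\m',\Phi')\subseteq\GGG$. Set $\LLL\coloneqq\gamma(E(H))$; by the remark after the definition of $\m$-respecting linkages, $\LLL$ is $\m_{G'}$-respecting. Cleaning a non-clean path $L=P_1\circ P_2$ as in the proof of \cref{lem:from_respecting_to_clean_respecting} only discards a subpath $P_1$ (or, symmetrically, a terminal subpath) that starts and ends at an endpoint of $L$. So it suffices to analyse a single such step and iterate, exactly as in that proof. Define $\gamma^*$ to agree with $\gamma$ on $V(H)$ and to send $e$ to the cleaned path, so that $v$-$\tau$ is preserved. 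As the new witness take $\GGG''\subseteq\GGG'$, the sub-knitwork whose underlying digraph is $G'$ minus the edges of all discarded closed subpaths. Each discarded piece is a closed walk, so removing its edges preserves the Eulerian and balance properties needed for a rooted sub-digraph, and the induced sleeve is given by \cref{def:subknitwork}.

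I then verify the conditions of \cref{def:knitwork_immersion} for $\gamma^*$ with witness $\GGG''$.

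\emph{Root cuts are untouched.} A discarded closed walk crosses each cut $\rho_{G'}(X_i')$ an even number of times. However, by \crefdef{def:rooted_immersion}{3}, a path $\gamma(e)$ contains at most one edge of $\rho_{G'}(X_i')$, and exactly one if $e\in\rho(X_i)$. Hence a discarded closed subpath contains no root edge, so $\rho_{G''}(X_i')=\rho_{G'}(X_i')$ and condition (1) survives.

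\emph{Vertices $\gamma(v)$ with $v\in\dom(\nu)$ are never revisited.} Condition (3) forces the $\Abs{\rho(v)}$ edges at $\gamma(v)$ in $G'$ to be exactly the edges $e_i'\in\gamma(e_i)$, one for each edge $e_i$ at $v$. Suppose some path $\gamma(e)$ visited $\gamma(v)$ internally or a second time. It would then use two additional edges at $\gamma(v)$, and these belong to other paths; this contradicts edge-disjointness. Consequently every discarded piece avoids such vertices, so conditions (2), (3) and (5) for $\dom(\nu)$ and $\dom(\n)$ transfer verbatim from $\gamma$.

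\emph{Condition (4) and labels.} For $v'\in\dom(\m'')\setminus\gamma^*(\dom(\n))$, the matching $M_{\gamma^*}(v')$ is obtained from $M_\gamma(v')\in\m'(v')$ by intersecting with the matchings on the surviving edges. This is precisely how $\m''(v')$ is defined in \cref{def:subknitwork}. Here I use that the links are reliable, which is also how \cref{lem:from_respecting_to_clean_respecting} asserts respectfulness. Condition (6) is unchanged because vertices and labels are unchanged.

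I expect the main obstacle to be exactly this interaction between cleaning and the rigid conditions (1) and (3), which demand that specific edges stay inside specific paths. Discarding a closed subpath could a priori delete a prescribed root edge or a prescribed edge at a placeholder vertex. The parity argument for root cuts and the degree-counting argument at $\gamma(\dom(\nu))$ are what I would rely on to rule this out. I would also double-check that when a path's two endpoints coincide (the path is a closed walk at $\gamma(u)$), cleaning still leaves a path of positive length carrying the required edges.
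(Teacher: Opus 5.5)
Your proposal is correct and follows the paper's route. Like the paper, you clean each path $\gamma(e)$ as in \cref{lem:from_respecting_to_clean_respecting}, let $\gamma^*$ send $e$ to the surviving subpath while keeping the vertices and vertex-types, and check that no discarded closed piece contains a root edge or an edge at a vertex of $\gamma(\dom(\nu))$. Your parity argument and degree-counting argument make precise what the paper only asserts. Shrinking the witness to $\GGG''$ is harmless but unnecessary: keeping $\GGG'$ works as well, since $M_{\gamma^*}(v')\subseteq M_\gamma(v')\in\m'(v')$ and reliability of $\m'(v')$ already give condition (4).
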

\begin{proof}
    Let $\gamma:\HHH \hookrightarrow^* \GGG$ and fix $\LLL \coloneqq \gamma(H)$. By \cref{lem:from_respecting_to_clean_respecting} there exists a clean $\m_\GGG$-respecting linkage $\LLL^*$ with $v\text{-}\tau(\LLL^*) = v\text{-}(\LLL)$. Let $\gamma^*$ be defined on $V(H) \cup E(H)$ via $\restr{\gamma^*}{V(H)} = \gamma$ and for $e \in E(H)$, $\gamma^*(e) = L \in \LLL^*$ if and only if $L \subseteq \gamma(e)$. One easily verifies that $(\gamma^*(V(H)),\LLL^*)$ is an immersion model of $\HHH$ in $\GGG$. Note that, if $\HHH$ is rooted in $(\pi_\HHH,A_1,\ldots,A_\ell)$ say, and $\GGG$ is rooted in $(\pi_G,X_1,\ldots,X_\ell)$, say, then we never reroute at vertices in $\bigcup_{i=1}^\ell X_i$, since $\delta(A_i) = \delta(X_i)$. Similarly, we never reroute at vertices in $\dom(\mu_\GGG)$ by \crefdef{def:knitwork_immersion}{3}.
\end{proof}

Next, we deal with two crucial operations that we call \emph{stitching} and \emph{knitting}.

\subsection{Stitching}\label{subsec:stitching}  Intuitively,
given a proper rooted cut~$Y$ we define the \emph{up-stitch} and \emph{down-stitch} of $G$ at $Y$ to be the graphs obtained by contracting~$\bar{Y}$ and~$Y$ into a single vertex~$y^*$ and~$y_*$ respectively. See \cref{fig:down-up-stitching} for an illustration. The following is the formal definition. Recall that given~$Y \subset V(G)$ we defined~$\bar{Y} = V(G) \setminus Y$.

\begin{definition}[Stitching]\label{def:stitching_std}
      Let~$\bar{G}=(G,\pi, X_1,\ldots,X_\ell)$ be a rooted Eulerian digraph of index $\ell \geq 1$ where~$G=(V,E,\operatorname{inc})$. Let~$\pi(X_i) \coloneqq (f_1^i,\ldots,f^i_t)$ for~$t \in 2\N$. Let $k\in 2\N$ and~$Y \subset V(G)$ induce a proper rooted~$k$-cut in~$\bar G$. Let $\tau \geq 1$ such that $\{X^{\insc}_1,\ldots,X^{\insc}_\tau\} = \insc(Y)$ and $\{X^{\outc}_1,\ldots,X_{\ell-\tau}^{\outc}\} = \outc(Y)$ (possibly empty). Let~$ \pi(Y) = (e_1,\ldots,e_k) = \pi(\bar{Y})$ be an ordering of~$\rho(Y)$ and~$\rho(\bar Y)$ respectively. Let~$y_\ast,y^\ast$ be two new elements that are not part of~$V \cup E$. We define~$G_Y \coloneqq (V_Y,E_Y,\operatorname{inc}_Y)$ via~
    \begin{itemize}
        \item $V_Y \coloneqq Y \cup \{y_\ast\}$,
        \item $E_Y \coloneqq E(G[Y]) \cup \{e_1,\ldots,e_k\}$,
        \item $\operatorname{inc}_Y = \operatorname{inc}_Y^- \cup \operatorname{inc}_Y^+$, with \begin{align*}
         \operatorname{inc}_Y^- &\coloneqq \{ (e,v) \sth (e,v) \in \operatorname{inc}, e \in E_Y,\text{ and }v\in Y\} \cup\{ (e,v) \sth e \in \rho^+(Y)\text{ and }v=y_\ast \},\text{ and}\\
        \operatorname{inc}_Y^+ &\coloneqq \{ (v,e) \sth   
         (v,e) \in \operatorname{inc}, e \in E_Y,\text{ and }v\in Y\} \cup \{(v,e) \sth e \in \rho^-(Y)\text{ and }v=y_\ast \}.
        \end{align*}
    \end{itemize} 
    
 We define~$\stitch(\bar{G};\pi,Y) \coloneqq (G_Y,\pi,X^{\insc}_1,\ldots,X^{\insc}_\tau)$ and say that~\emph{$G_Y$ is obtained from~$G$ by down-stitching~$Y$} and call~$y_\ast$ the \emph{down-stitch vertex resulting from~$Y$}. 

Similarly we define~$G^Y \coloneqq (V^Y,E^Y,\operatorname{inc}^Y)$ via~
\begin{itemize}
        \item $V^Y \coloneqq \bar{Y} \cup \{y^\ast\}$,
        \item $E^Y \coloneqq E(G[\bar{Y}]) \cup \{e_1,\ldots,e_k\}$,
        \item $\operatorname{inc}^Y = \operatorname{inc}^Y_- \cup \operatorname{inc}^Y_+$, with \begin{align*}
         \operatorname{inc}^Y_- &\coloneqq \{ (e,v) \in \operatorname{inc}, e \in E^Y, \text{ and } v\in \bar{Y}\} \cup\{ (e,v) \sth e \in \rho^+(\bar{Y}) \text{ and } v=y^\ast \},\text{ and}\\
        \operatorname{inc}^Y_+ &\coloneqq \{ (v,e) \sth   
         (v,e) \in \operatorname{inc}, e \in E^Y,\text{ and }v\in \bar Y\} \cup \{(v,e) \sth  e \in \rho^-(\bar{Y}) \text{ and } v=y^\ast\}.
        \end{align*}
    \end{itemize} 

    Note that $\rho(y^\ast) = \rho(\bar{Y})$;  fix $\pi(y^\ast) \coloneqq \pi(\bar Y)$. We define~$\stitch(\bar{G};\pi,\bar{Y}) \coloneqq (G^Y,\pi,y^\ast,{X_1}^{\outc}, \ldots,{X^{\outc}_{\ell - \tau}})$ and say that~\emph{$G^Y$ is obtained from~$G$ by up-stitching~$Y$} and call~$y^\ast$ the \emph{up-stitch vertex resulting from~$Y$}. 
\end{definition}
See \cref{fig:down-up-stitching} for a schematic representation of \cref{def:stitching_std}.
\begin{figure}
    \centering
    \begin{tikzpicture}[>=Stealth]
        \filldraw[fill=orange!20!white, draw=black] (0,0) ellipse [x radius=2, y radius=1] ;
       \filldraw[fill=black!20!white, draw=black] (0,0.5) ellipse [x radius=0.7, y radius=0.3] ;
        \filldraw[fill=blue!20!white, draw=black] (0,-3) ellipse [x radius=2, y radius=1] ;
        \node at (-1,-0.5) {$ Y$ } ;
        \node at (0,0.5) {$ X$ } ;
        
        \node at (-1,-3.5) {$ \bar Y$ } ;
        \draw [thick, black,->] (-0.5, -0.75) to (-0.5,-2.3) ;       
        \draw [thick, black,<-] (-0.15, -0.75) to (-0.15,-2.3) ;       
        \draw [thick, black,->] (0.25, -0.75) to (0.25,-2.3) ;       
        \draw [very thick, red , bend left=20, <-] (0.5, 0.5) to (0.5,-2.7) ;       
        \draw [gray, thin, ->] (-0.4, 0.4) to (-0.6,-0.2) ; 
        \draw [gray, thin, <-] (-0.2, 0.4) to (-0.4,-0.2) ; 
        \draw [gray, thin, ->] (-0.05, 0.35) to (-0.1,-0.3) ; 
        \draw [gray, thin, <-] (0.1, 0.35) to (0.1,-0.3) ; 
        \draw [gray, thin, ->] (0.2, 0.4) to (0.4,-0.3) ; 
         \draw[violet, thick, densely dotted, bend right=20,->] (-0.8,0.1) to (1,0.1) ;
        \node at (1.1,-0.1) {$\pi$};
        \begin{scope}[xshift=5cm]
        \filldraw[fill=orange!20!white, draw=black] (0,0) ellipse [x radius=2, y radius=1] ;
         \filldraw[fill=black!20!white, draw=black] (0,0.5) ellipse [x radius=0.7, y radius=0.3] ;
        \node[circle, draw=black, fill=black, inner sep=1.5pt] (yd) at (0,-2.3) { };
        \node[anchor=north] (nyd) at (yd) { $y_\star$};
        \node at (-1,-0.5) {$ Y$ } ;
        \node at (0,0.5) {$ X$ } ;
        \draw [thick,black,->] (-0.5, -0.75) to (yd) ;       
        \draw [thick,black,<-] (-0.15, -0.75) to (yd) ;       
        \draw [thick,black,->] (0.25, -0.75) to (yd) ;       
        \draw [very thick, red,bend left=20,<-] (0.5, 0.5) to (yd) ;   \draw [gray, thin, ->] (-0.4, 0.4) to (-0.6,-0.2) ; 
        \draw [gray, thin, <-] (-0.2, 0.4) to (-0.4,-0.2) ; 
        \draw [gray, thin, ->] (-0.05, 0.35) to (-0.1,-0.3) ; 
        \draw [gray, thin, <-] (0.1, 0.35) to (0.1,-0.3) ; 
        \draw [gray, thin, ->] (0.2, 0.4) to (0.4,-0.3) ; 
        \draw[violet, thick, densely dotted, bend right=20,->] (-0.8,0.1) to (1,0.1) ;
        \node at (1.1,-0.1) {$\pi$};
        \end{scope}
        \begin{scope}[xshift=10cm]
        \filldraw[fill=blue!20!white, draw=black] (0,-3) ellipse [x radius=2, y radius=1] ;
        \node[inner sep=1.5pt,circle, draw=black, fill=black] (yu) at (0,0) { };
        \node[anchor=south] (nyu) at (yu) { $y^\star$};
        \node at (-1,-3.5) {$\bar Y$ } ;
        \draw [thick, black,<-] (-0.5, -2.3) to (yu) ;       
        \draw [thick, black,->] (-0.15, -2.3) to (yu) ;       
        \draw [thick, black,<-] (0.25, -2.3) to (yu) ;       
        \draw [very thick, red , bend left=20, <-] (yu) to (0.5,-2.7) ;            
        \draw[blue, very thick, densely dotted, bend right=15,->] (-0.6,-1) to (0.9,-1) ;
        \node at (1.1,-1.3) {$\pi_{G^Y}$};
        \end{scope}
        \node at (0,-4.5) { $a)$ } ;
        \node at (5,-4.5) { $b)$ } ;
        \node at (10,-4.5) { $c)$ } ;

    \end{tikzpicture}
    \caption{Down- and Up-Stitching. Figure $a)$ depicts a clamped graph $\bar G=(G,\pi,X)$ with a rooted cut induced by $Y$, a choice of $\pi(X)$ highlighted by a dotted oriented arrow, and a highlighted edge that is part of $\rho(X)\cap\rho(Y)$. Figure $b)$ depicts the down-stitch of $\bar G$ and $Y$ with down-stitch vertex $y_\star$. Figure $c)$ depicts the up-stitch of $\bar G$ and $Y$ with up-stitch vertex $y^\star$ fixing new roots $\pi_{G^Y}(y^*)$ on $\rho(y^*)$, highlighted by a dotted oriented arrow.}
    \label{fig:down-up-stitching}
\end{figure}

The following relations are readily extracted from the definition; recall \cref{lem:rooted_Eulerian_cuts_are_Eulerian}.
\begin{observation}\label{obs:stitching_fundamentals}
     Let~$\ell \geq 1$, let ~$\bar{G}=(G,\pi,X_1,\ldots,X_\ell)$ be a rooted Eulerian digraph and let~$Y$ induce a proper rooted cut with fixed orderings~$\pi(Y) =\pi(\bar{Y})$. Let~$\bar G_Y = \stitch(\bar{G};\pi,Y)$ and~$\bar G^Y = \stitch(\bar{G};\pi,\bar{Y})$ be the down- and up-stitches of~$G$ at~$Y$ with down- and up-stitch vertices~$y_\ast,y^\ast$ respectively. Then
     \begin{enumerate}
         \item $y_*$ and $y^*$ are Eulerian vertices and $G_Y$ and $G^Y$ are well-defined Eulerian digraphs up to $\bigcup_{X \in \insc(Y)} X$ and $\bigcup_{X \in \outc(Y)} X$ respectively. If $G$ is loopless, then $G_Y$ and $G^Y$ are loopless, and\label{obs:stitching_fundamentals:1}
         \item $Y \subset V(G_Y)$ and $\insc_G(Y) = \insc_{G_Y}(Y)$, as well as~$\bar{Y} \subset V(G^Y)$ and $\outc_G(Y) = \insc_{G^Y}(\bar Y)$. Additionally~$G_Y[Y] = G[Y]$ as well as~$G^Y[\bar{Y}] = G[\bar{Y}]$, where $E(G_Y) \cap E(G^Y) = \rho(Y)$, and\label{obs:stitching_fundamentals:2}
         \item $\rho_{G_Y}(y_\star) = \rho_{G_Y}(Y) = \rho_G(Y)$, and~$\rho_{G_Y}(X) = \rho_G(X)$ for every $X \in \insc(Y)$ whence $\bar G_Y$ is a well-defined rooted Eulerian digraph of index $\tau = \Abs{\insc(Y)} \geq 1$, and\label{obs:stitching_fundamentals:3}
         \item $\rho_{G^Y}(\bar{Y}) = \rho_{G^Y}(y^\ast) = \rho_G(\bar{Y})$, and~$\rho_{G^Y}(X) = \rho_G(X)$ for every $X \in \outc(Y)$ whence $\bar G^Y$ is a well-defined rooted Eulerian digraph of index $\tau - \ell +1 \geq 1$.\label{obs:stitching_fundamentals:4}
        
     \end{enumerate} 
\end{observation}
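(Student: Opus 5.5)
The proof is a direct unpacking of \cref{def:stitching_std}. We treat the down-stitch $G_Y$ in detail; the up-stitch $G^Y$ is symmetric, with $\bar Y$ in place of $Y$ and $\outc(Y)$ in place of $\insc(Y)$.

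For part 2 we read off the definition directly. We have $V_Y = Y \cup \{y_*\}$. The edges of $G_Y$ with both incidences in $Y$ are exactly $E(G[Y])$, since $\rho(Y)$ receives one incidence at $y_*$. Hence $G_Y[Y] = G[Y]$. The edges of $G^Y$ are $E(G[\bar Y]) \cup \rho(Y)$, and $E(G[Y]) \cap E(G[\bar Y]) = \emptyset$, so $E(G_Y) \cap E(G^Y) = \rho(Y)$. The equality $\insc_G(Y) = \insc_{G_Y}(Y)$ is immediate because the root sets kept in $\bar G_Y$ are exactly those in $\insc(Y)$, and each lies in $Y$.

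For part 3, every edge of $\rho_G(Y)$ has exactly one incidence in $Y$, and its other incidence is redirected to $y_*$. Edges of $G[Y]$ keep both incidences in $Y$. Hence $\rho_{G_Y}(y_*) = \rho_G(Y) = \rho_{G_Y}(Y)$. For $X \in \insc(Y)$ we have $X \subseteq Y$. An edge of $\rho_G(X)$ either lies in $G[Y]$ or lies in $\rho_G(Y)$ with its $Y$-end in $X$. In both cases it still has exactly one end in $X$ in $G_Y$, and no new edges appear, so $\rho_{G_Y}(X) = \rho_G(X)$. Consequently $\pi(X)$ is still an ordering of the correct cut. Balance of $X$ is preserved because the degree-one vertices of $G_Y$ are exactly those of $G$ lying in $Y$. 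Part 4 follows identically; for the index count we get $1 + (\ell - \tau)$, which is the intended reading.

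For part 1, we first check vertex degrees. For $v \in Y$, the in- and out-edges are unchanged, so $v$ is Eulerian or of degree one exactly as in $G$. The degree-one vertices of $G$ inside $Y$ lie in $\bigcup_{X \in \insc(Y)} X$, because every root set is either in $Y$ or disjoint from it. The vertex $y_*$ receives $\rho^+(Y)$ as in-edges and $\rho^-(Y)$ as out-edges. By \cref{lem:rooted_Eulerian_cuts_are_Eulerian}, $Y$ induces an Eulerian cut, so $\Abs{\rho^+(Y)} = \Abs{\rho^-(Y)}$ and $y_*$ is Eulerian. For looplessness, any loop in $G_Y$ would be either a loop of $G[Y]$ or an edge of $\rho(Y)$ with both ends at $y_*$. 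Neither can occur, because each edge of $\rho(Y)$ keeps its endpoint in $Y$.

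The only nontrivial step is the Eulerianness of $y_*$, which rests on \cref{lem:rooted_Eulerian_cuts_are_Eulerian}. Everything else is bookkeeping of incidences.
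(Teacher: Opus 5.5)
Your proposal is correct and follows the same route as the paper, which reads these relations directly off \cref{def:stitching_std} and invokes \cref{lem:rooted_Eulerian_cuts_are_Eulerian} for the Eulerianness of $y_*$ and $y^*$. You are also right that the index of $\bar G^Y$ is $\ell-\tau+1$: the $\tau-\ell+1$ in the statement is a typo.
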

\begin{remark}
    Although technically the graphs obtained by stitching are new graphs, and the incidences of edges vary due to introducing $y_\ast$ and $y^\ast$, the edges themselves as objects remain the same. This way we may indeed write~$\rho_G(\bar{Y}) = \rho_{G^Y}(y^*)$ instead of carrying a bijection between these sets.
\end{remark}

We extend the definition of stitching to~$\Omega$-knitworks as follows.
\begin{definition}[Types of feasible linkages]\label{def:types_of_linkages_on_a_cut}
    Let~$\Omega$ be a well-quasi-order and~$\GGG$ an $\Omega$-knitwork. Let~$X \subset V(G)$ induce a Eulerian~$k$-cut in $G$ for some $k\in 2\N$. Let~$\LLL_\GGG(\rho,X)$ be the set of all $\m_G$-respecting strong~$(\rho^-(X),\rho^+(X))$-linkages~$\LLL$ such that for every $P \in \LLL$ it holds $V^\circ(P) \subseteq X$. We define~$\mathfrak{M}_\GGG(\rho,X) \coloneqq \{ \tau(\LLL) \mid \LLL \in \LLL_\GGG(\rho,X)\}$ to be the \emph{set of feasible~$(\rho,X)$-types}. 
\end{definition}
\begin{remark}
We may omit the subscript $\GGG$ and write $\LLL(\rho,X)$ as well as $\MM_\GGG(\rho,X)$ if clear from context.

    Recall that $V^\circ(P) \subseteq X$ implies that all the paths in $\LLL$ use only vertices in $X$ except for their endpoints which may be part of $\bar X$. Note further that while~$\rho(X) = \rho(\bar{X})$,~$\mathfrak{M}(\rho,X)$ and~$\mathfrak{M}(\rho,\bar X)$ may differ. We choose this notation to emphasise the importance of the set of vertices inducing the cut. 

\end{remark}
Note that $M \subseteq \operatorname{Match}(\rho^-(X),\rho^+(X))$ for all $M \in \mathfrak{M}(\rho,X)$  and, by definition,  $M' \in \mathfrak{M}(\rho,X)$ for every $M' \subseteq M$. In particular $\mathfrak{M}(\rho,X)$ is a reliable link by definition. Thus, the following is well-defined.

\begin{definition}\label{def:stitching_knitwork}
   Let~$\Omega$ be a well-quasi-order, $k \in 2\N$, and let~$\GGG = (\bar{G},\mu,\m,\Phi)$ be an~$\Omega$-knitwork. Let~$Y \subset V(G)$ induce a proper rooted~$k$-cut in~$\bar{G}$. Let~$\pi(Y) = (e_1,\ldots,e_k) = \pi(\bar{Y})$ be an ordering of~$\rho(Y)=\rho(\bar{Y})$. Then we define $\stitch(\GGG;\pi,Y) \coloneqq (\bar G_Y,\mu_Y,\m_Y,\Phi_Y)$ as follows:
   \begin{itemize}
       \item $\bar G_Y \coloneqq \stitch(\bar{G};\pi,Y) \text{ with down-stitch vertex } y_\ast$,
       \item$\mu_Y(v) \coloneqq \begin{cases}
           \mu(v), & v\in Y\\
           (e_1,\ldots,e_k), &v = y_\ast\end{cases},$
        \item$\m_Y(v)\coloneqq \begin{cases}
           \m(v), &\ v\in Y\\
           \mathfrak{M}(\rho,\bar{Y}), &v = y_\ast\end{cases}, \text{ and}$
       \item$ \Phi_Y(v) \coloneqq \Phi(v) \text{ for } v\in Y,$
   \end{itemize}
   and we do not specify~$\Phi_Y$ on~$y_\ast$.

    We further define $\stitch(\GGG;\pi,\bar{Y}) \coloneqq (\bar G^Y,\mu^Y, \m^Y,\Phi^Y)$ as follows:
   \begin{itemize}
       \item$\bar G^Y \coloneqq \stitch(\bar{G};\pi,\bar{Y}) \text{ with up-stitch vertex } y^\ast$,
      \item $\mu^Y(v) \coloneqq \mu(v)$ for $v\in \bar Y$
      \item $\m^Y(v)\coloneqq \m(v), \text{ for }\ v\in \bar Y, \text{ and}$
    \item$ \Phi^Y(v) \coloneqq \Phi(v) \text{ for } v\in \bar{Y}$,
   \end{itemize}
    and we do not specify~$\mu^Y,\m^Y$ and $\Phi^Y$ on~$y^\ast$. 
\end{definition}

Using the above definition together with the fact that $\mathfrak{M}(\rho,\bar Y)$ is a reliable link and \cref{obs:stitching_fundamentals}, the following is straightforward.

\begin{observation}\label{obs:stitching_fundamentals_knitworks}
     Let~$\Omega$ be a well-quasi-order and let~$\GGG = (\bar{G},\mu,\m,\Phi)$ be an~$\Omega$-knitwork. Let~$Y \subset V(G)$ induce a proper rooted~$k$-cut in~$\bar{G}$. Let~$\pi(Y) = (e_1,\ldots,e_k) = \pi(\bar{Y})$ be an ordering of~$\rho(Y)=\rho(\bar{Y})$. Let $(\bar G_Y,\mu_Y,\m_Y,\Phi_Y) = \stitch(\GGG;\pi,Y)$ as well as~ $(\bar G^Y,\mu^Y,\m^Y,\Phi^Y) = \stitch(\GGG;\pi,\bar{Y})$. Then
     \begin{enumerate}
     \item  $\stitch(\GGG;\pi,Y)$ and~$\stitch(\GGG;\pi,\bar{Y})$  are well-defined~$\Omega$-knitworks,\label{obs:stitching_fundamentals_knitworks:1}
         \item $\mu(x) = \mu_Y(x)$ for all~$x \in Y$ and~$\mu(x) = \mu^Y(x)$ for all~$x \in \bar{Y}$, and moreover~$\mu_Y(y_\ast) =\pi_G(Y) = \pi_{G^Y}(y^\ast) =\pi_G(\bar Y )=  \mu^Y(y^\ast)$,\label{obs:stitching_fundamentals_knitworks:2}
        \item $\m(x) = \m_Y(x)$ for all~$x \in Y$ and~$\m(x) = \m^Y(x)$ for all~$x \in \bar{Y}$, and\label{obs:stitching_fundamentals_knitworks:3}
         \item $\Phi(x) = \Phi_Y(x)$ for all~$x \in Y$ and~$\Phi(x) = \Phi^Y(x)$ for all~$x \in \bar{Y}$.\label{obs:stitching_fundamentals_knitworks:4}
     \end{enumerate}
\end{observation}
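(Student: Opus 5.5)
The plan is to verify the four items directly from \cref{def:stitching_knitwork}, using \cref{obs:stitching_fundamentals} for the underlying rooted digraphs. Items 2, 3 and 4 are immediate from the case distinctions in \cref{def:stitching_knitwork}. On $Y$ (respectively $\bar Y$) the maps $\mu_Y,\m_Y,\Phi_Y$ (respectively $\mu^Y,\m^Y,\Phi^Y$) are defined to coincide with $\mu,\m,\Phi$. The stitch vertices receive $\mu_Y(y_\ast) = (e_1,\ldots,e_k) = \pi(Y)$. By \cref{def:stitching_std} we also have $\pi_{G^Y}(y^\ast) = \pi(\bar Y) = \pi(Y)$, and since $\rho_{G_Y}(y_\ast) = \rho_G(Y) = \rho_{G^Y}(y^\ast)$ by \crefthm{obs:stitching_fundamentals}{obs:stitching_fundamentals:3} and \crefthm{obs:stitching_fundamentals}{obs:stitching_fundamentals:4}, these orderings agree. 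For item 2 I would therefore read the statement's ``$\mu^Y(y^\ast)$'' as the root ordering $\pi_{G^Y}(y^\ast)$, because $\mu^Y$ is left undefined at the root $y^\ast$.

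The substance lies in item 1, where I check conditions \ref{def:knitwork:1}--\ref{def:knitwork:4} of \cref{def:knitwork} for the down-stitch.

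\ref{def:knitwork:1} follows from \crefthm{obs:stitching_fundamentals}{obs:stitching_fundamentals:1} and \crefthm{obs:stitching_fundamentals}{obs:stitching_fundamentals:3}: $\bar G_Y$ is a loopless rooted Eulerian digraph with roots $\insc(Y)$. The roots remain balanced because the $X_i$ and their cuts are unchanged.

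For \ref{def:knitwork:2}, I need $\dom(\mu_Y)$ to avoid the root set. It equals $(\dom(\mu)\cap Y)\cup\{y_\ast\}$. Since $\dom(\mu)$ misses the old roots, and $y_\ast$ is a new vertex not contained in any $X_i$, this holds. Next, $\mu(v)$ must still order $\rho_{G_Y}(v)$ for $v \in Y$. This requires checking that $\rho_{G_Y}(v) = \rho_G(v)$ for $v \in Y$. That is true because every edge of $G$ at $v$ lies either in $G[Y]$ or in $\rho(Y)$, and both kinds are kept with the incidence at $v$ unchanged.

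For \ref{def:knitwork:3}, the only new link is at $y_\ast$, namely $\MM(\rho,\bar Y)$. The remark after \cref{def:types_of_linkages_on_a_cut} already notes that it is a reliable link of matchings in $\operatorname{Match}(\rho^-(\bar Y),\rho^+(\bar Y))$. I must verify that this is the correct orientation, i.e.\ that it equals $\operatorname{Match}(\rho^-_{G_Y}(y_\ast),\rho^+_{G_Y}(y_\ast))$. Edges of $\rho^+(Y)$ have their head redirected to $y_\ast$, so $\rho^-_{G_Y}(y_\ast)=\rho^+_G(Y)=\rho^-_G(\bar Y)$, and symmetrically for out-edges. This orientation bookkeeping, together with $y_\ast \in \dom(\mu_Y)$ as required for $\dom(\m_Y)\subseteq\dom(\mu_Y)$, is the step where I expect the only real care to be needed.

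\ref{def:knitwork:4} is trivial, because $\Phi_Y$ is a restriction of $\Phi$ to $Y$ minus the root set.

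For the up-stitch the argument is the same and simpler. $\bar G^Y$ is a loopless rooted Eulerian digraph by \crefthm{obs:stitching_fundamentals}{obs:stitching_fundamentals:4}. The new vertex $y^\ast$ is a root, and none of $\mu^Y,\m^Y,\Phi^Y$ is defined there, consistent with the domains having to avoid the root set. The old non-root vertices in $\bar Y$ keep their incident cuts by the same argument as above, so $\mu,\m,\Phi$ restricted to $\bar Y$ remain valid.
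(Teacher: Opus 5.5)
Your proposal is correct and follows the same route as the paper, which simply calls the observation straightforward from \cref{def:stitching_knitwork}, the reliability of $\MM(\rho,\bar Y)$, and \cref{obs:stitching_fundamentals}. Your explicit checks fill in what the paper leaves implicit: that $\rho_{G_Y}(v)=\rho_G(v)$ for $v\in Y$, and that $\rho^-_{G_Y}(y_\ast)=\rho^+_G(Y)=\rho^-_G(\bar Y)$, so the link at $y_\ast$ is oriented correctly. Your reading of $\mu^Y(y^\ast)$ as $\pi_{G^Y}(y^\ast)$ is also right, since $\mu^Y$ is undefined at $y^\ast$ and the paper later uses $\pi_{G^Y}(y^\ast)$ in exactly that role.
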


We further get the following useful corollary from \cref{def:stitching_knitwork} for up-stitches.

\begin{observation}\label{obs:up-stitch_of_well-linked_is_well-linked}
Let~$\Omega$ be a well-quasi-order and~$\GGG$ a well-linked~$\Omega$-knitwork for a rooted Eulerian digraph~$\bar G$. Let~$Y \subseteq V(G)$ be a proper rooted cut in~$\bar G$. Then~$\stitch(\GGG;\pi,\bar{Y})$ with up-stitch vertex~$y^*$ is a well-linked~$\Omega$-knitwork.
\end{observation}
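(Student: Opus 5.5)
The plan is to unwind \cref{def:stitching_knitwork} for the up-stitch and check the well-linkedness condition vertex by vertex. Write $\stitch(\GGG;\pi,\bar Y) = (\bar G^Y,\mu^Y,\m^Y,\Phi^Y)$. By \cref{obs:stitching_fundamentals_knitworks}\,\cref{obs:stitching_fundamentals_knitworks:1} this is already a well-defined $\Omega$-knitwork. So only one thing remains: for every $v \in \dom(\m^Y)$, the link $\m^Y(v)$ must be well-linked in the sense of \cref{def:well-linked_links}.

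First I pin down $\dom(\m^Y)$. By definition, $\m^Y$ is left unspecified at the up-stitch vertex $y^*$. Every other vertex of $G^Y$ lies in $\bar Y$, and there $\m^Y(v) = \m(v)$. Hence $\dom(\m^Y) = \dom(\m) \cap \bar Y$.

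Next I check that the sets on which the links live are unchanged. For $v \in \bar Y$ with $v \neq y^*$, the edges incident to $v$ in $G^Y$ are exactly those of $G$. Edges of $G[\bar Y]$ keep their incidences. Edges of $\rho(\bar Y)$ keep their endpoint in $\bar Y$ and only have their other endpoint redirected to $y^*$. This follows from the construction of $\operatorname{inc}^Y$ in \cref{def:stitching_std} together with \cref{obs:stitching_fundamentals}\,\cref{obs:stitching_fundamentals:2}. In particular $\rho^-_{G^Y}(v) = \rho^-_G(v)$ and $\rho^+_{G^Y}(v) = \rho^+_G(v)$.

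There is one subtlety: an edge of $\rho(\bar Y)$ might become a loop at $y^*$, but never at $v \in \bar Y$, so \cref{obs:cut_at_vertex} applies without change. Since the underlying sets are identical, $\m^Y(v) = \m(v)$ is a subset of $\operatorname{Match}(\rho^-_{G^Y}(v),\rho^+_{G^Y}(v))$, and it satisfies exactly the same partition condition. Because $\GGG$ is well-linked, so is $\m^Y(v)$, and the observation follows.

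The only step needing care is this identification of the cuts $\rho_{G^Y}(v)$ with $\rho_G(v)$. The difficulty is merely notational: edges are preserved as objects and only incidences at the new vertex change. Importantly, no property of $\mathfrak{M}(\rho,\bar Y)$ is required here, since that link is placed only on the down-stitch vertex; this is why the statement holds for up-stitches without further assumptions.
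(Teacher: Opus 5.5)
Your proof is correct and takes the paper's approach: $\m^Y$ is undefined at $y^*$ and equals $\m$ on $\bar Y$, where the cuts $\rho^\pm(v)$ are unchanged, so every link of the up-stitch is one of the original well-linked links. One harmless inaccuracy: no edge can become a loop at $y^*$, since each edge of $\rho(\bar Y)$ keeps its endpoint in $\bar Y$ and only its other endpoint is redirected to $y^*$ (see Part 1 of \cref{obs:stitching_fundamentals}).
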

\begin{proof}
    This follows immediately from the fact that~$y^* \notin \dom(\m^Y)$ and~$\m^Y(y) = \m(y)$ for all~$y \in Y$.
\end{proof}
\begin{remark}
    Note that \cref{obs:up-stitch_of_well-linked_is_well-linked} is \emph{not} generally true for down-stitches since $\MM(\rho,\bar Y)$ may not be a well-linked link.
\end{remark}

Finally, we define the \emph{torso} and \emph{pieces} of an $\Omega$-knitwork.

\begin{definition}[Torso and Pieces]\label{def:torso_and_pieces}
    Let $\Omega$ be a well-quasi-order and $\GGG = (\bar G, \mu, \m, \Phi)$ an $\Omega$-knitwork with rooted Eulerian digraph $\bar{G}=(G,\pi,X_1,\ldots,X_{\ell})$ for some $\ell \geq 1$. 
    We define the \emph{torso of $\GGG$}---written $\torso(\GGG)$---to be the $\Omega$-knitwork obtained by iteratively up-stitching the proper rooted cuts $X_1,\ldots,X_\ell$, introducing up-stitch vertices $x_1^*,\ldots,x_\ell^*$; we refer to these as the \emph{up-stitch vertices of $\torso(\GGG)$}. We denote the underlying rooted Eulerian digraph of $\torso(\GGG)$ by $\torso(\bar G)$.
    Let $1\leq i\leq \ell$. We refer to $\PPP_i \coloneqq \stitch(\GGG;\pi,X_i)$ as the \emph{$i$-th piece of $\GGG$ with down-stitch vertex $x^i_*$}, and to $\{\PPP_i\mid 1\leq i \leq \ell\}$ as the \emph{pieces of $\GGG$}. Let $\bar P_i$ be the underlying rooted digraph of $\PPP_i$, then we refer to it as the \emph{$i$-th piece of $\bar G$}

\end{definition}
\begin{remark}
    Note that iterative stitching is well-defined by \crefthm{obs:stitching_fundamentals}{2}; in particular $\torso(\bar G)$ is a controlled rooted Eulerian digraph and $\m_{\torso{(\GGG)}}$ is not defined on $x_1^*,\ldots,x_\ell^*$.

   Further, note that $x_1^*$, say, may have all its edges with vertices in $\{x_2^*,\ldots,x_\ell^*\}$ if $\rho(X_1) \subseteq \bigcup_{i=2}^\ell \rho(X_i)$, and the vertex set of the torso may consist solely of $\{x_1^*,x_2^*,\ldots,x_\ell^*\}$.

   Finally note that the way $\Omega$-knitworks are defined, the pieces have no inherent $\Omega$-sleeve (it is only defined on the down-stitch vertex) and may be arbitrary quasi-Eulerian digraphs rooted in a single cut; this will be mitigated later and in future work with what we call \emph{locations}.
\end{remark}
See \cref{fig:torso-stitch} for an illustration.
\begin{figure}
    \centering
    \begin{tikzpicture}[>=Stealth]
        \filldraw[fill=orange!20!white, draw=black, thin] (0,0) ellipse [x radius=2, y radius=1];
        \foreach \x in {1,2,3,4}
        {
        \node[circle,fill=blue!10!white,draw=black,thin] (X-\x) at (-4+1.5*\x,2) { $X_\x$ } ;
        };
        \foreach \x in {0,2}
        {
        \draw[<-] (-2.55+\x*0.1,1.7) to (-1.6+0.1*\x,0.3) ;
        \draw[->] (-2.65+\x*0.1,1.7) to (-1.7+0.1*\x,0.3) ;

        \draw[->] (-1.15+\x*0.1,1.7) to (-0.95+0.1*\x,0.45) ;
        \draw[<-] (-1.05+\x*0.1,1.7) to (-0.85+0.1*\x,0.45) ;
        \draw[->] (0.35+\x*0.1,1.7) to (0.25+0.1*\x,0.45) ;
        \draw[<-] (0.455+\x*0.1,1.7) to (0.35+0.1*\x,0.45) ;
        \draw[->] (1.85+\x*0.1,1.7) to (1.15+0.1*\x,0.3) ;
        \draw[<-] (1.95+\x*0.1,1.7) to (1.25+0.1*\x,0.3) ;

        };
        \draw[->,red] (-2.2,2) to (-1.2,2) ;
        \draw[<-, bend left=50,red] (-2.2,2) to (0.2,2) ;
        \draw[->,red] (-0.7,2) to (0.2,2) ;

        \begin{scope}[xshift=5cm]
        \filldraw[fill=orange!20!white, draw=black, thin] (0,0) ellipse [x radius=2, y radius=1];
        \foreach \x in {1,2,3,4}
        {
        \node[circle,fill=black, inner sep=1pt] (d-\x) at (-2.5+\x,1.5) {  } ;
        \node[inner sep=2pt, anchor=south] (nd-\x) at (-2.5+\x,1.5) { $x_\x^\star$ } ;
        };

        \foreach \x in {0,2}
        {
        \draw[->] (d-1) to (-1.5+0.1*\x,0.3) ;
        \draw[<-] (d-1) to (-1.4+0.1*\x,0.3) ;
        \draw[->] (d-2) to (-0.6+0.1*\x,0.45) ;
        \draw[<-] (d-2) to (-0.5+0.1*\x,0.45) ;
        \draw[->] (d-3) to (0.25+0.1*\x,0.45) ;
        \draw[<-] (d-3) to (0.35+0.1*\x,0.45) ;
        \draw[->] (d-4) to (1.15+0.1*\x,0.3) ;
        \draw[<-] (d-4) to (1.25+0.1*\x,0.3) ;
        };
        \draw[->,red] (d-1) to (d-2) ;
        \draw[->,red] (d-2) to (d-3) ;
        \draw[->,bend right=50,red] (d-3) to (d-1) ;

        \end{scope}
        \begin{scope}[xshift=11cm]
        \foreach \x in {1,2,3,4}
        {
        \node[circle,fill=blue!10!white,draw=black,thin] (x-\x) at (-3.8+1.2*\x,1.5) { $X_\x$ } ;
        \node[circle,fill=black, inner sep=1pt] (u-\x) at (-3.8+1.2*\x,0) { } ;
        \node[inner sep=2pt, anchor=north] (nu-\x) at (-3.8+1.2*+\x,0) { $x^\x_\star$} ;
        }
        \foreach \x in {0,2}
        {
        \draw[->] (-2.75+\x*0.1,1.2) to (u-1) ;
        \draw[<-] (-2.65+\x*0.1,1.2) to (u-1) ;
        \draw[->] (-1.55+\x*0.1,1.2) to (u-2) ;
        \draw[<-] (-1.45+\x*0.1,1.2) to (u-2) ;
        \draw[->] (-0.35+\x*0.1,1.2) to (u-3) ;
        \draw[<-] (-0.25+\x*0.1,1.2) to (u-3) ;
        \draw[->] (0.95+\x*0.1,1.2) to (u-4) ;
        \draw[<-] (1.05+\x*0.1,1.2) to (u-4) ;
        };
         \draw[->, bend left = 10,red] (-2.3,1.3) to (u-1) ;
        \draw[<-, bend left=25,red] (-2.3,1.5) to (u-1) ;
         \draw[<-, bend right=20,red] (u-2) to (-1.1,1.5) ;
         \draw[->, bend left=20,red] (u-2) to (-1.7,1.5) ;
         \draw[->, bend left=10,red] (u-3) to (-0.4,1.3) ;
         \draw[<-, bend left=20,red] (u-3) to (-0.5,1.5) ;

        \end{scope}
        \node at (0, -1.5) { $a)$ } ;
        \node at (5, -1.5) { $b)$ } ;
        \node at (11,-1.5) { $c)$ } ;
    \end{tikzpicture}
    \caption{An example of a Torso with respective Pieces. Figure $a)$ depicts a rooted Eulerian digraph $\bar G = (G, \pi, X_1, \dots, X_4)$. Figure $b)$ depicts its torso. Figure $c)$ depicts the $4$ down-stitches at the root-sets obtained from $\bar G$.}
    \label{fig:torso-stitch}
\end{figure}

The following two observations are imminent by repeated application of \cref{obs:stitching_fundamentals} and \cref{obs:stitching_fundamentals_knitworks}.

\begin{observation}\label{obs:torso}
     Let $\Omega$ be a well-quasi-order and $\GGG = (\bar G, \mu, \m, \Phi)$ an $\Omega$-knitwork with rooted Eulerian digraph $\bar{G}=(G,\pi,X_1,\ldots,X_{\ell})$ for some $\ell \geq 1$. Let $X \coloneqq \bigcup_{i=1}^\ell X_i$. Let $\torso(\GGG)$ be the torso of $\GGG$ with up-stitch vertices $x_1^*,\ldots,x_\ell^*$, underlying rooted Eulerian digraph $\torso(\bar G)=(G^*,\pi,x_1^*,\ldots,x_t^*)$ and $\Omega$-sleeve $(\mu_{\torso},\m_{\torso},\Phi_{\torso})$. Then the following hold.
     \begin{enumerate}
        \item $G^*$ is a Eulerian digraph where $V(G^*)$ is partitioned by $\bar X,\{x_1^*\},,\ldots,\{x_\ell^*\}$, in particular $V(G^*) \cap V(G) = \bar X$, and $G^*[\bar X] = G[X]$. Furthermore, $E(G^*) = E(G[X]) \cup \bigcup_{i=1}^\ell{\rho_G(X_i)}$.\label{obs:torso:1}
         \item It holds $\pi_{G^*}(x_i^*) = \pi_G(X_i)$ for every $1 \leq i \leq \ell$,\label{obs:torso:2}
         \item $\restr{\mu_{\torso}}{\bar X} = \mu$ and  and it is not specified on the up-stitch vertices.\label{obs:torso:3}
         \item $\restr{\m_{\torso}}{\bar X} = \m$ and it is not specified on the up-stitch vertices.\label{obs:torso:4}
         \item$\restr{\Phi_{\torso}}{\bar X} = \Phi$ and it is not specified on the up-stitch vertices.\label{obs:torso:5}
         \item If $\GGG$ is well-linked/inter-linked, then $\torso(\GGG)$ is well-linked/inter-linked.\label{obs:torso:6}
     \end{enumerate}
\end{observation}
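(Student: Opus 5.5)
The plan is to obtain $\torso(\GGG)$ as an iterated up-stitch: set $\GGG_0 \coloneqq \GGG$ and $\GGG_i \coloneqq \stitch(\GGG_{i-1};\pi,\bar{X_i})$ for $1 \leq i \leq \ell$. At each step we apply \cref{obs:stitching_fundamentals} and \cref{obs:stitching_fundamentals_knitworks}, and then collect the effects on the underlying digraph and on the $\Omega$-sleeve. Before this works we must check that every step is legal. After up-stitching $X_1,\ldots,X_{i-1}$, the set $X_i$ must still be a vertex subset of $G_{i-1}$ that induces a proper rooted cut in $\bar G_{i-1}$. This holds because the $X_j$ are pairwise disjoint. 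Moreover, by \crefthm{obs:stitching_fundamentals}{2} and \crefthm{obs:stitching_fundamentals}{4}, up-stitching $X_j$ keeps $\bar{X_j}$ unchanged, keeps $G[\bar{X_j}]$, and keeps the roots $X_m$ for $m\neq j$ as roots with $\rho_{G_{j}}(X_m)=\rho_G(X_m)$. The new up-stitch vertex $x_j^*$ becomes a singleton root, so it lies neither inside nor across $X_i$. We also have to read the root-ordering convention correctly: up-stitching at $X_i$ means the up-stitch vertex replaces $X_i$ itself, with the cut $\rho(X_i)$ and the order $\pi(X_i)$.

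Part 1 then follows by induction. Every vertex of $X = \bigcup X_i$ is removed exactly once, and each removal adds exactly one fresh vertex $x_i^*$. The surviving vertex set is therefore $\bar X \cup \{x_1^*,\ldots,x_\ell^*\}$, and the induced subgraph on the surviving original vertices is unchanged, so $G^*[\bar X]=G[\bar X]$. The statement literally says $G[X]$, which I read as a typo for $G[\bar X]$, and similarly for $E(G[\bar X])$. The edges kept are those of $G[\bar X]$ together with the cut edges $\rho_G(X_i)$, because \crefthm{obs:stitching_fundamentals}{2} gives that $E$ of an up-stitch equals $E(G[\bar Y])\cup\rho(Y)$. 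The resulting digraph is Eulerian because every vertex of $\bar X$ was already Eulerian: degree-one vertices lie in root sets by \cref{def:rooted_eulerian_digraph}, and the $x_i^*$ are Eulerian by \crefthm{obs:stitching_fundamentals}{1}.

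Part 2 is the definition $\pi(x_i^*) \coloneqq \pi(X_i)$ from \cref{def:stitching_std}. One point needs care here: later stitches must not change the order on $x_i^*$. This holds because the cut at $x_i^*$ stays equal to $\rho_G(X_i)$ as an edge set; edges are kept as objects, and only their incidences change.

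Parts 3–5 follow from \crefthm{obs:stitching_fundamentals_knitworks}{2}–\crefthm{obs:stitching_fundamentals_knitworks}{4}. Each up-stitch preserves $\mu,\m,\Phi$ on the surviving vertices and leaves them undefined on the new up-stitch vertex. A later stitch could in principle redefine the sleeve on an earlier $x_j^*$, but it does not, since $x_j^*$ lies in the kept side $\bar{X_i}$ and the sleeve there is copied, so it remains undefined. Part 6 is the iterate of \cref{obs:up-stitch_of_well-linked_is_well-linked}: each up-stitch introduces no new link and copies the old ones, so the inter-linked version holds verbatim.

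The main obstacle is purely bookkeeping: verifying that the iterated stitching is well defined, namely that each $X_i$ remains a proper rooted cut whose cut edges are unchanged. Everything else is a direct reading of the earlier observations.
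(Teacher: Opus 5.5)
Your proposal is correct and takes the paper's route: the paper states the observation follows immediately from repeated application of \cref{obs:stitching_fundamentals} and \cref{obs:stitching_fundamentals_knitworks} along the iterated up-stitches at $X_1,\ldots,X_\ell$, which is exactly your argument with the well-definedness bookkeeping written out. You are also right to read $G[X]$ as a typo for $G[\bar X]$, in both places it appears in Part~1.
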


\begin{observation}\label{obs:torso_and_pieces_fundamentals}
     Let  $\bar{G}=(G,\pi,X_1,\ldots,X_{\ell})$ be a rooted Eulerian digraph of index $\ell \geq 1$. Let $X \coloneqq \bigcup_{i=1}^\ell X_i$. Let $\bar G^* \coloneqq \torso(\bar G)$ be the torso of with up-stitch vertices $x_1^*,\ldots,x_\ell^*$. Let $\bar P_i$ be the $i$-th piece of $\bar G$ with down-stitch vertex $x^i_*$ for every $1 \leq i \leq \ell$. Then the following hold.
     \begin{enumerate}
       \item $V(G^*) \cap V^\circ(P_i) = \emptyset$ and $V^\circ(P_i) \cap V^\circ(P_j) = \emptyset$  for every pair of distinct $1 \leq i,j \leq \ell$. Further, $V(G) = \left(V(G^*) \cup \bigcup_{i =1}^\ell V^\circ(P_i)\right) \setminus \{x_1^*,x^*_1,\ldots,x_\ell^*,x^\ell_*\}$.\label{obs:torso_and_pieces_fundamentals:1}
       \item $E(G^*) \cap E(P_i) = \rho_G(x_i) = \rho_{G^*}(x_i^*)\cap \rho_{P_i}(x^i_*)$ and $E(P_i) \cap E(P_j) = \rho_G(X_i) \cap \rho_G(X_j)$ for every pair of distinct $1 \leq i,j \leq \ell$.\label{obs:torso_and_pieces_fundamentals:2}
     \end{enumerate}
\end{observation}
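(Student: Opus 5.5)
The plan is to derive both parts by iterating the single-cut statements \cref{obs:stitching_fundamentals} and \cref{obs:stitching_fundamentals_knitworks}. The roots are pairwise disjoint, so every $X_j$ induces a proper rooted cut. First I will check that these rooted cuts stay rooted cuts after the earlier up-stitches. Concretely, after up-stitching $X_1,\ldots,X_{i-1}$, the set $X_i$ is still a subset of the current vertex set and is itself a root set of the resulting rooted digraph. Also $\rho(X_i)$ is unchanged as a set of edges: by \crefthm{obs:stitching_fundamentals}{4}, up-stitching $X_j$ keeps every edge of $\rho_G(X_j)$ as an object and only redirects its incidence inside $X_j$ to $x_j^*$. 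Hence the torso $G^*$ has vertex set $\bar X\cup\{x_1^*,\ldots,x_\ell^*\}$ and edge set $E(G[\bar X])\cup\bigcup_i\rho_G(X_i)$, which is \crefthm{obs:torso}{1}.

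Now let $P_i$ be obtained by down-stitching $X_i$. Its vertex set is $X_i\cup\{x^i_*\}$ and its edge set is $E(G[X_i])\cup\rho_G(X_i)$. Reading $V^\circ(P_i)$ as the non-stitch vertices $X_i$, part 1 follows directly. The sets $X_i$ are pairwise disjoint and disjoint from $\bar X$, so $V(G^*)\cap X_i=\emptyset$ and $X_i\cap X_j=\emptyset$. The identity for $V(G)$ follows from the partition $V(G)=\bar X\cup X_1\cup\cdots\cup X_\ell$ once the stitch vertices are removed.

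For part 2, I intersect the edge sets. Edges of $G[\bar X]$ and of $G[X_i]$ share no endpoints, so they are disjoint. An edge of $\rho_G(X_j)$ with $j\neq i$ lies in $E(P_i)$ only if it is also in $\rho_G(X_i)$. This gives $E(G^*)\cap E(P_i)=\rho_G(X_i)$. We also have $\rho_{G^*}(x_i^*)=\rho_G(X_i)=\rho_{P_i}(x^i_*)$ by \crefthm{obs:stitching_fundamentals}{3} and \crefthm{obs:stitching_fundamentals}{4}, transported along the iteration. Likewise $E(P_i)\cap E(P_j)=\rho_G(X_i)\cap\rho_G(X_j)$, because $E(G[X_i])$, $E(G[X_j])$ and the cut edges of the other root can only meet in edges running between $X_i$ and $X_j$.

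The main obstacle is bookkeeping rather than any real difficulty. The notation $\rho_G(x_i)$ in the statement should be read as $\rho_G(X_i)$. I also need to justify carefully that edges keep their identity through the iterated stitches, including an edge joining $X_i$ to $X_j$, whose two ends get redirected to $x_i^*$ and $x_j^*$ in two successive steps. I will handle this by induction on the number of up-stitches performed, using the remark after \cref{obs:stitching_fundamentals} that edges remain the same objects.
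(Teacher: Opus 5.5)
Your proposal is correct and takes the same approach as the paper, which only asserts that this observation follows from repeated application of \cref{obs:stitching_fundamentals} and \cref{obs:stitching_fundamentals_knitworks}. You carry out that iteration explicitly, computing $V(G^*)=\bar X\cup\{x_1^*,\ldots,x_\ell^*\}$, $E(G^*)=E(G[\bar X])\cup\bigcup_i\rho_G(X_i)$, $V(P_i)=X_i\cup\{x^i_*\}$ and $E(P_i)=E(G[X_i])\cup\rho_G(X_i)$, and then intersecting; your readings of $V^\circ(P_i)$ as $X_i$ and of $\rho_G(x_i)$ as $\rho_G(X_i)$ are the intended ones.
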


\subsection{Knitting}\label{subsec:knitting}
Finally we introduce \emph{knitting}---sort of the inverse operation to stitching---an operation that given two rooted Eulerian digraphs with distinguished ordered cuts produces a new rooted Eulerian digraph by ``knitting'' the graphs together at these cuts respecting their order. 

 The highlevel intuition behind knitting two rooted digraphs is as follows. Take two digraphs $G_1,G_2$ and two induced cuts $X_1,X_2$ of the same order in the respective digraphs. The idea is now to construct a new graph $G$ with vertex set $X_1\cup X_2$ that agrees with $G_1$ on $X_1$ and with $G_2$ on $X_2$ by identifying the edges $\rho_{G_1}(X_1)$ and $\rho_{G_2}(X_2)$\footnote{Intuitively speaking, knitting may be seen as the natural ``graph sum'' operation when dealing with digraphs, carvings and immersion, similar to ``clique-sums'' for undirected graphs, treewidth and minors.}. Clearly this does not create loops.

\begin{definition}[Knitting]\label{def:knitting_rooted_graphs}
   For $i \in\{ 1,2 \}$ let~$\bar{G}_i=(G_i,\pi_i,X_1^i,\ldots,X_{\ell_i}^i)$ be rooted Eulerian digraphs for some $\ell_i \geq 1$ and~$X_j^i \subset V(G_i)$ for every $1 \leq j \leq \ell_i$. Let $k \in 2\N$ and~$Y \subset V(G_1)$ induce a proper rooted~$k$-cut in~$G_1$. Let $X_2 \coloneqq X_j^2$  with $\delta(X_2) = k$ and $\pi_2(X_2)=(e_1^2,\ldots,e_k^2)$ the ordering of the respective cut for some $1 \leq j \leq \ell_2$. Let~$\pi_{G_1}(Y) \coloneqq (e_1^1,\ldots,e^1_k)$ be an ordering of~$\rho(Y)$ such that~$e_i^1 \in \rho^+(Y) \iff e_i^2 \in \rho^-(\bar X_2)$ for every~$1 \leq i \leq k$.

   We define~$\knit(G_2,G_1;\pi_2,X_2,\pi_1,Y) \coloneqq (V',E',\operatorname{inc}')$ via
   \begin{itemize}
       \item $ V' \coloneqq Y \cup \bar X_2,$
        \item $ E' \coloneqq E(G_1[Y]) \cup E(G_2[\bar X_2]) \cup \{e_1,\ldots,e_k\},$
        \item $(e,v) \in \operatorname{inc}' :\iff 
            \begin{cases} 
                &(e,v) \in \operatorname{inc}_{G_1} \text{ and } e \in E(G_1[Y]), v\in Y,\\
                &(e,v) \in \operatorname{inc}_{G_2} \text{ and } e \in E(G_2[X_2]), v\in \bar X_2,\\
                &e= e_i,\ v\in Y \text{ and } (e_i^1,v) \in \operatorname{inc}_{G_1},\\
                &e= e_i,\ v\in \bar X_2 \text{ and } (e_i^2,v) \in \operatorname{inc}_{G_2}
                \end{cases}$
         \item $(v,e) \in \operatorname{inc}' :\iff 
            \begin{cases} 
                &(v,e) \in \operatorname{inc}_{G_1} \text{ and } e \in E(G_1[Y]), v\in Y,\\
                &(v,e) \in \operatorname{inc}_{G_2} \text{ and } e \in E(G_2[{\bar X_2}]), v\in {\bar X_2},\\
                &e= e_i,\ v\in Y \text{ and } (v,e_i^1) \in \operatorname{inc}_{G_1},\\
                &e= e_i,\ v\in \bar X_2 \text{ and } (v, e_i^2) \in \operatorname{inc}_{G_2}.
             \end{cases}$
   \end{itemize}
    Further, let $\{X_1^{\insc},\ldots,X_\tau^{\insc}\} = \insc(Y)$ for $\tau \geq 1$. After renaming $e_i^1,e_i^2$ to $e_i$ in $G_1,G_2$ respectively for every $1 \leq i \leq k$ it holds $\rho_{G_1}(X_t^{\insc}) \subseteq E'$ for every $1 \leq t \leq \tau$ as well as $\rho_{G_2}(X_r^2) \subseteq E'$ for $1\leq r \leq \ell_2$. Thus, we may unambiguously set $\pi(X_p^{\insc}) \coloneqq \pi_1(X_p^{\insc})$ for every $1 \leq p \leq \tau$ and $\pi(X_q^2) \coloneqq \pi_2(X_q^2)$ for every $q\in \{1,\ldots,\ell_2\}\setminus\{j\}$, and define~$$\knit(\bar{G_2},\bar{G_1};\pi_2,X_2,\pi_1,Y) \coloneqq ((V',E',\operatorname{inc}'),\pi, X_1^{\insc},\ldots,X_\tau^{\insc},X_1^2,\ldots,X_{j-1}^2,X_{j+1}^2,\ldots,X_{\ell_2}^2).$$

We say that~$\knit(\bar{G_2},\bar{G_1};\pi_2,X_2,\pi_1,Y) $ is obtained by \emph{knitting~$\bar G_2$ along $\pi_{2}(X_2)$ to~$\bar G_1$ along~$\pi_{1}(Y)$}.
\end{definition}

Again, the following observations, although a bit tedious, are straightforward to verify from the definition.

\begin{observation}\label{obs:knitting_fundamentals}
    Let~$\bar{G}_i=(G_i,\pi_i,X_1^i,\ldots,X_{\ell_i}^i)$, $X_2$, and $Y$ be as in \cref{def:knitting_rooted_graphs}. Then
    \begin{enumerate}
        \item $\bar G' = \knit(\bar{G_2},\bar{G_1};\pi_2,X_2,\pi_1,Y)$ is a well-defined rooted Eulerian digraph, and if $ G_1, G_2$ are both loopless, then $ G'$ is loopless,\label{obs:knitting_fundamentals:1}
         \item the sets~$\bar X_2,Y \subset V(G')$  partition~$V(G')$ whence they induce the same cut~$\rho_{G'}(\bar X_2) = \rho_{G'}(Y)$ of size~$k\in 2\N$, \label{obs:knitting_fundamentals:2}
        \item $G'[Y] = G_1[Y]$ and~$G'[\bar X_2] = G_2[\bar X_2]$, and\label{obs:knitting_fundamentals:3}
        \item $\stitch(\bar{G_1};\pi_{G_1},\tilde X) \cong \stitch(\bar{G'};\pi_{G'},\tilde X)$ for every $\tilde X \in \insc(Y)$ as well as $\stitch(\bar{G_2};\pi_{G_1},\tilde X) \cong \stitch(\bar{G'};\pi_{G'},\tilde X)$ for every $\tilde X \in \{X_1^2,\ldots,X_{\ell_2}^2\}\setminus \{X_2\}$.\label{obs:knitting_fundamentals:4}
    \end{enumerate}
\end{observation}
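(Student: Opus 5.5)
The statement is \cref{obs:knitting_fundamentals}, and I will verify its four items in order, directly from \cref{def:knitting_rooted_graphs}. Throughout I will identify $e_i^1$ and $e_i^2$ with the new edge $e_i$.

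\textbf{Item 2.} Since $V' = Y \cup \bar X_2$ is a disjoint union by construction, the two sets partition $V(G')$. Every edge of $E(G_1[Y])$ has both incidences in $Y$, and every edge of $E(G_2[\bar X_2])$ has both incidences in $\bar X_2$. Each $e_i$ receives exactly one incidence in $Y$, namely the one $e_i^1$ had in $G_1$ at its $Y$-end, since $e_i^1 \in \rho_{G_1}(Y)$. It receives exactly one incidence in $\bar X_2$, namely the one $e_i^2$ had at its $\bar X_2$-end. Here I use the reading that the second clauses of the incidence definitions refer to $G_2[\bar X_2]$. The compatibility condition $e_i^1 \in \rho^+(Y) \iff e_i^2 \in \rho^-(\bar X_2)$ guarantees that the two incidences are of opposite kind, one a tail and one a head. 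So each $e_i$ is a genuine directed edge, and $\rho_{G'}(Y) = \rho_{G'}(\bar X_2) = \{e_1,\dots,e_k\}$, which has size $k \in 2\N$.

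\textbf{Item 3.} The induced subgraphs $G'[Y]$ and $G'[\bar X_2]$ are exactly the edges that were copied over with their original incidences. This gives $G'[Y] = G_1[Y]$ and $G'[\bar X_2] = G_2[\bar X_2]$.

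\textbf{Item 1.} For the degree conditions, take any $v \in Y$. Its incident edges in $G'$ are precisely its incident edges in $G_1$, after renaming, and they carry the same directions, so $v$ has the same in- and out-degree as in $G_1$. The same holds for $v \in \bar X_2$ relative to $G_2$. The degree-one vertices of $G'$ therefore lie in the root sets of $G_1$ contained in $Y$ (those in $\insc(Y)$) or in the root sets $X^2_q$ with $q\neq j$. The root sets $X_q^2$ with $q \neq j$ lie in $\bar X_2$ because the $X^2_q$ are pairwise disjoint.

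Each retained root set $X$ has $\rho_{G'}(X)$ equal to its old cut, because all incidences at vertices of $X$ are preserved. Hence the orderings $\pi$ are well-defined on it, and its balance is unchanged since balance only depends on the vertices of $X$ and their degrees. This shows $\bar G'$ is a rooted Eulerian digraph. For looplessness, a loop would need both incidences in the same part. Edges inside $Y$ or inside $\bar X_2$ are old edges, and the $e_i$ cross the partition, so $G'$ is loopless whenever $G_1$ and $G_2$ are.

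\textbf{Item 4.} Take $\tilde X \in \insc(Y)$. Then $\tilde X \subseteq Y$, and $\rho_{G'}(\tilde X) = \rho_{G_1}(\tilde X)$ with the same incidences, by the argument in Item 1. Consider the down-stitch at $\tilde X$ in the sense of \cref{def:stitching_std}. It only retains $G[\tilde X]$, the cut edges $\rho(\tilde X)$ with their $\tilde X$-incidences, the new vertex $\tilde x_*$, and those root sets contained in $\tilde X$.

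\begin{itemize}
\item All of this data coincides in $G_1$ and in $G'$, by Item 3 and because $\tilde X \subseteq Y$.
\item The root sets contained in $\tilde X$ are the same in both graphs.
\end{itemize}

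So the identity on vertices and edges, mapping $\tilde x_*$ to $\tilde x_*$, is an isomorphism of rooted digraphs that respects $\pi$. The symmetric argument handles the root sets $X^2_q$ with $q \neq j$, which lie in $\bar X_2$; the statement's ``$\pi_{G_1}$'' there should read $\pi_{G_2}$.

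\textbf{Main obstacle.} The only delicate point is the bookkeeping in Item 2, namely checking that each $e_i$ gets exactly one tail and one head. This depends on the orientation-compatibility hypothesis and on reading the index sets correctly, as $\bar X_2$ rather than $X_2$.
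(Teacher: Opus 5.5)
Your proposal is correct and takes the same route as the paper, which states this observation without proof as ``straightforward to verify from the definition.'' Your item-by-item unwinding of \cref{def:knitting_rooted_graphs} is exactly that verification. It includes the check that the orientation hypothesis gives each $e_i$ one tail and one head, and it correctly reads $G_2[X_2]$ as $G_2[\bar X_2]$ and $\pi_{G_1}$ as $\pi_{G_2}$.
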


The definition of knitting readily extends to~$\Omega$-knitworks.

\begin{definition}\label{def:knitting_knitworks}
     Let~$\Omega$ be a well-quasi-order and let~$\GGG_i = (\bar{G}_i,\mu_i,\m_i,\Phi_i)$ be~$\Omega$-knitworks with rooted Eulerian digraphs~$\bar{G_i}=(G_i,\pi_i,X_1^i,\ldots,X_{\ell_i}^i)$ for $\ell_i \geq 1$ and~$i \in \{1,2\}$. Let $X_2 \coloneqq X^2_j$ for some $1 \leq j \leq \ell_2$ with~$\pi_{2}(X_2) = (e^2_1,\ldots,e^2_k)$ for some~$k \in 2\N$. Let~$Y \subset V(G_1)$ induce a proper rooted~$k$-cut in~$G_1$. Let~$\pi_{G_1}(Y) \coloneqq (e_1^1,\ldots,e^1_k)$ be an ordering of~$\rho(Y)$ such that~$e_i^1 \in \rho^+(Y) \iff e_i^2 \in \rho^-(\bar X_2)$ for every~$i\in\{1,\ldots,k\}$. Let~$\bar G' \coloneqq \knit(\bar G_2, \bar G_1;\pi_2,X_2,\pi_1,Y)$. We define 
     \begin{itemize}
         \item $\mu' \coloneqq \begin{cases}
             \mu_1(x) \text{ renaming } e_i^1 \text{ to } e_i, &\text{ if } x \in Y\\
             \mu_2(x) \text{ renaming } e_i^2 \text{ to } e_i, &\text{ if } x \in \bar X_2\\
         \end{cases}$,
         
           \item $\m' \coloneqq \begin{cases}
             \m_1(x) \text{ renaming } e_i^1 \text{ to } e_i, &\text{ if } x \in Y\\
             \m_2(x) \text{ renaming } e_i^2 \text{ to } e_i, &\text{ if } x \in \bar X_2\\
         \end{cases}$,  and 
         
         \item $\Phi'(x) \coloneqq \begin{cases}
             \Phi_1(x), &\text{ if } x \in Y\\
             \Phi_2(x), &\text{ if } x\in \bar X_2.
         \end{cases}$
     \end{itemize}

  Finally we define~$\GGG'\coloneqq \knit(\GGG_2,\GGG_1;\pi_2,X_2,\pi_{1},Y) \coloneqq \big(\bar G', \mu', \m', \Phi'\big)$ and say that~$\GGG'$ is obtained by \emph{knitting~$\GGG_2$ along $\pi_2(X_2)$ to~$\GGG_1$ along~$\pi_{1}(Y)$}.
\end{definition}

Again, using the above definition and \cref{obs:knitting_fundamentals} the following is straightforward.

\begin{observation}\label{obs:knitting_fundamentals_knitworks}
      Let~$\Omega$ be a well-quasi-order. Let~$\GGG_i = (\bar{G}_i,\mu_i,\m_i,\Phi_i)$, $X_2$ and $Y$ be as in \cref{def:knitting_knitworks}.  Let~$\GGG' \coloneqq \knit(\GGG_2,\GGG_1;\pi_2,X_2,\pi_{1},Y) =\big(\bar G', \mu',\m', \Phi'\big)$. Then
     \begin{enumerate}
        \item  ~$\GGG'$ is a well-defined~$\Omega$-knitwork,
         \item $\mu'(x) = \mu_1(x)$ for all~$x \in Y$ and~$\mu'(x) = \mu_2(x)$ for all~$x \in \bar X_2$,
         \item $\m'(x) = \m_1(x)$ for all~$x \in Y$ and~$\m'(x) = \m_2(x)$ for all~$x \in \bar X_2$, and
         
         \item $\Phi'(x) = \Phi_1(x)$ for all~$x \in Y$ and~$\Phi'(x) = \Phi_2(x)$ for all~$x \in \bar X_2$.
     \end{enumerate}
\end{observation}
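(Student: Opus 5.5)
The statement to prove is \cref{obs:knitting_fundamentals_knitworks}. The plan is to reduce everything to \cref{obs:knitting_fundamentals} and the explicit case definitions in \cref{def:knitting_knitworks}. Parts 2--4 are immediate from the definition. By \crefthm{obs:knitting_fundamentals}{2}, the sets $Y$ and $\bar X_2$ partition $V(G')$, so the case distinctions defining $\mu',\m',\Phi'$ are exhaustive and non-overlapping. On each side the maps coincide with $\mu_i,\m_i,\Phi_i$, modulo the renaming $e_i^1,e_i^2\mapsto e_i$, which by the convention used for stitching we treat as the identity on edges as objects.

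The substance is Part~1: checking conditions \crefdef{def:knitwork}{1}--\crefdef{def:knitwork}{4} of \cref{def:knitwork} for $\GGG'$.

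Condition (1) is \crefthm{obs:knitting_fundamentals}{1}, since $G_1$ and $G_2$ are loopless.

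For the domains, note that the root set of $\bar G'$ is $\bigcup\insc(Y)\cup\bigcup_{q\neq j}X^2_q$. Since $\dom(\mu_1)$ avoids the roots of $\bar G_1$, we get $\dom(\mu_1)\cap Y$ disjoint from $\bigcup\insc(Y)$. Since $\dom(\mu_2)$ avoids all $X^2_q$, we get $\dom(\mu_2)\cap\bar X_2$ disjoint from the remaining roots. So $\dom(\mu')$ avoids the new root set, and likewise $\dom(\Phi')$. Also $\dom(\m')\subseteq\dom(\mu')$ holds sidewise.

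The key step is that for $x\in Y$ (and symmetrically $x\in\bar X_2$) we have $\rho_{G'}(x)=\rho_{G_1}(x)$ after renaming, with in- and out-edges preserved. For edges of $G_1[Y]$ this follows from \crefthm{obs:knitting_fundamentals}{3}. For a cut edge $e_i$, the incidence definition gives $e_i$ exactly the endpoint in $Y$ that $e_i^1$ had. Hence $\mu'(x)$ is an ordering of $\rho_{G'}(x)$. Moreover $\m'(x)\subseteq\operatorname{Match}(\rho^-_{G'}(x),\rho^+_{G'}(x))$, and $\m'(x)$ is reliable because renaming commutes with taking subsets. Finally $\Phi'(x)\in V(\Omega)$ trivially.

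The only mild obstacle is the orientation bookkeeping for the identified cut edges: we must ensure an edge that is incoming at $x\in Y$ in $G_1$ is still incoming in $G'$. This is exactly what the compatibility hypothesis $e_i^1\in\rho^+(Y)\iff e_i^2\in\rho^-(\bar X_2)$ in \cref{def:knitting_knitworks} provides. So I would spell this out once and then conclude.
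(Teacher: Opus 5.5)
Your proposal is correct and follows the paper's argument: the paper records the statement as straightforward from \cref{def:knitting_knitworks} and \cref{obs:knitting_fundamentals}, and your checks are exactly that unpacking, namely the domains avoiding the new root set, $\rho_{G'}(x)=\rho_{G_1}(x)$ for $x\in Y$ (and symmetrically for $x\in\bar X_2$), and reliability under renaming. One small correction: the orientation of a cut edge $e_i$ at its endpoint in $Y$ comes directly from the incidence clauses defining $\operatorname{inc}'$, while the compatibility hypothesis is what ensures each $e_i$ gets exactly one tail and one head in $G'$, so it is used in Part~1 of \cref{obs:knitting_fundamentals}, not in your key step.
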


\section{Decomposition Theorems for Knitworks}\label{sec:decomposition_theorem}
In this section we will prove important theorems that will allow us to apply induction on the structure of Eulerian digraphs in the hopes to tackle a proof of \cref{conj:wqo_dreg,conj:wqo_gen} and related well-quasi-ordering results. Most notably, we prove two decomposition theorems in \cref{subsec:knitting_immersions,subsec:decomp_and_altering_kitworks}. In the last \cref{subsec:manipulate_knitworks} we present two more results that allow us to manipulate knitworks, one of which justifies why we may restrict to loopless digraphs. All of the following results hold true for stable and non-stable $\Omega$-knitwork immersion as will become apparent.

\subsection{Knitting strong immersions}
\label{subsec:knitting_immersions}
The definitions of stitching and knitting allow us to unambiguously decompose~$\Omega$-knitworks along proper rooted cuts and knit them back together in a unique and reversible way (up to isomorphisms). 
\begin{lemma}[Stitch-and-Knit]\label{lem:stitch-and-knit}
    Let~$\Omega$ be a well-quasi-order. Let~$\bar G$ be a loopless rooted Eulerian digraph. Let~$Y \subset V(G)$ induce a proper rooted~$k$-cut in~$\bar G$ for some~$k\in 2\N$ and let~$\pi_G(Y)=(e_1,\ldots,e_k)=\pi_G(\bar Y)$ be an ordering of~$\rho(Y)$. Let
    \begin{align*}
         &\text{$\bar G_Y \coloneqq \stitch(\bar G;\pi_G,Y)$ with down-stitch vertex $y_*$, and}\\
        &\text{$ \bar G^Y \coloneqq \stitch(\bar G;\pi_G,\bar{Y})$ with up-stitch vertex $y^*$.}
    \end{align*}
Let $(\mu_Y,\m_Y,\Phi_Y)$ be an $\Omega$-sleeve for $\bar G_Y$ and let $(\mu^Y,\m^Y,\Phi^Y)$ be an $\Omega$-sleeve for $G^Y$ and refer to the respective $\Omega$-knitwork as $\HHH_Y=(\bar G_Y,\mu_Y,\m_Y,\Phi_Y)$ and $\HHH^Y = (\bar G^Y, \mu^Y, \m^Y, \Phi^Y)$. 

Then, fixing the ordering~$\pi_{G_Y}(Y) =(e_1,\ldots,e_k)$ of $\rho_G(Y)$, it holds that $$\HHH \coloneqq \knit( \HHH^Y, \HHH_Y; \pi_{G^Y},y^*,\pi_{G_Y},Y) $$
is an $\Omega$-knitwork satisfying $H = G$ and $\restr{\mu_H}{Y} = \mu_Y$ as well as $\restr{\mu_H}{\bar Y} = \mu^Y$, and analogously for $\m_H$ and $\Phi_H$.
\end{lemma}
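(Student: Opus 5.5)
The proof is an unwinding of \cref{def:stitching_std,def:knitting_rooted_graphs,def:knitting_knitworks}, checking the graph, the rooting and the $\Omega$-sleeve separately.

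\textbf{Step 1: the knit is well-defined.} We must check the preconditions of \cref{def:knitting_knitworks} for knitting $\HHH^Y$ along $\pi_{G^Y}(y^*)$ into $\HHH_Y$ along $\pi_{G_Y}(Y)$.
\begin{itemize}
\item $\{y^*\}$ is a root set of $\bar G^Y$ of order $k$, by \crefthm{obs:stitching_fundamentals}{4}.
\item $Y$ induces a proper rooted $k$-cut in $\bar G_Y$: its root sets are exactly $\insc(Y)$, all contained in $Y$, and $\rho_{G_Y}(Y)=\rho_G(Y)$ by \crefthm{obs:stitching_fundamentals}{3}.
\item Both orderings are $(e_1,\ldots,e_k)$, since $\pi_{G^Y}(y^*)=\pi_G(\bar Y)=\pi_G(Y)$.
\item The direction condition holds: $e_i\in\rho^+_{G_Y}(Y)$ iff $e_i\in\rho^+_G(Y)$ iff $e_i\in\rho^-_G(\bar Y)$ iff $e_i\in\rho^-_{G^Y}(\overline{\{y^*\}})$. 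The last equivalence uses that in $G^Y$ the complement of $\{y^*\}$ is $\bar Y$, whose incidences with $e_i$ agree with those in $G$.
\end{itemize}
Hence $\HHH$ is a well-defined $\Omega$-knitwork by \cref{obs:knitting_fundamentals_knitworks}.

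\textbf{Step 2: $H=G$.}
\begin{itemize}
\item Vertices: $V(H)=Y\cup(V(G^Y)\setminus\{y^*\})=Y\cup\bar Y=V(G)$.
\item Edges: $E(H)=E(G_Y[Y])\cup E(G^Y[\bar Y])\cup\{e_1,\ldots,e_k\}=E(G[Y])\cup E(G[\bar Y])\cup\rho_G(Y)=E(G)$, using \crefthm{obs:stitching_fundamentals}{2}. Since both sides carry the same labels $e_i$, the renaming in the knit is the identity.
\item Incidences: edges inside $Y$ or inside $\bar Y$ keep their incidences from $G$, as stitching does not change them. For a cut edge $e_i$, the knit takes the end in $Y$ from $G_Y$ and the end in $\bar Y$ from $G^Y$. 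These are exactly the original $G$-incidences, because stitching only redirects the far end of $e_i$ to $y_*$ or $y^*$, and those vertices are discarded in the knit.
\end{itemize}

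\textbf{Step 3: rooting and sleeve.} The root sets of $\HHH$ are $\insc(Y)$ followed by $\outc(Y)$ (the root sets of $\bar G^Y$ other than $y^*$), with orderings inherited from $\pi_G$. So $\bar H$ and $\bar G$ agree, up to the order in which root sets are listed. The sleeve claims $\restr{\mu_H}{Y}=\mu_Y$, $\restr{\mu_H}{\bar Y}=\mu^Y$, and likewise for $\m_H$ and $\Phi_H$, follow directly from \cref{obs:knitting_fundamentals_knitworks}, again because the renaming is trivial.

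\textbf{Expected obstacle.} The only delicate point is the incidence bookkeeping for the cut edges $e_i$ in Step 2, in particular the orientation condition in Step 1, which relies on how the complement of the root $\{y^*\}$ behaves inside $G^Y$. Looplessness is needed only so that \cref{def:knitting_knitworks} applies, and it is preserved by \crefthm{obs:stitching_fundamentals}{1}.
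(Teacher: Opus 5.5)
Your proposal is correct and follows essentially the same route as the paper. You check the orientation condition $e_i\in\rho^+_{G_Y}(Y)\iff e_i\in\rho^-_{G^Y}(\bar Y)$ so that the knit is defined, verify $H[Y]=G[Y]$, $H[\bar Y]=G[\bar Y]$ and that the cut edges regain their original incidences, and then read off the sleeve from \cref{obs:knitting_fundamentals_knitworks}. Your remark that the knit lists its root sets as $\insc(Y)$ followed by $\outc(Y)$ is accurate and slightly more careful than the paper's claim $\bar G=\knit(\ldots)$; also, since $\mu_Y,\m_Y$ at $y_*$ are simply discarded, the sleeve equalities are meant on $Y$ and $\bar Y$.
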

\begin{proof}

     By \crefthm{obs:stitching_fundamentals}{3} we have~$\rho_{G_Y}(Y) = \rho_G(Y) = \{e_1,\ldots,e_k\}$, whence the choice~$\pi_{G_Y}(Y) = \pi_G(Y)$ is well-defined. By \crefthm{obs:stitching_fundamentals}{4} note that~$\pi_{G}(\bar Y) = \pi_{G^Y}(y^*)$ and by assumption $\pi_G(\bar Y) = \pi_G(Y)$. In particular we have~$e_i \in \rho_{G_Y}^+(Y) \iff e_i \in \rho_{G^Y}^-(\bar Y)$ for~$i\in\{1,\ldots,k\}$, whence $\knit(\HHH^Y,\HHH_Y; \pi_{G^Y},y^*,\pi_{G_Y},Y)$ is well defined using \cref{def:knitting_rooted_graphs}. Note here that knitting loses the information of~$\mu_y,\m_Y$ on~$y_\ast$, and of ~$\mu^Y,\m^Y$ on~$y^\ast$ respectively (here they are not defined by definition of up-stitching), for they are not part of the resulting graph. 
    
    Let~$ \knit(\HHH^Y,\HHH_Y; \pi_{G^Y},y^*,\pi_{G_Y},{Y}) = \big(\knit(\bar G^Y,\bar G_Y; \pi_{G^Y},y^\ast,\pi_{G_Y},{Y}), \mu_H,\m_H,\Phi_H\big)$ with a respective $\Omega$-sleeve~$(\mu_H,\m_H,\Phi_H)$ as given by \cref{def:knitting_knitworks}.
    \begin{claim}
        $ \bar{G} = \knit(\bar G^Y,\bar G_Y; \pi_{G^Y},y^\ast,\pi_{G_Y},{Y})$.
    \end{claim}
    \begin{claimproof}
        Let~$\bar H = \knit(\bar G^Y,\bar G_Y; \pi_{G^Y},y^\ast,\pi_{G_Y},{Y})$. Then by Part 1 and 3 of \cref{obs:knitting_fundamentals} $\bar H$ is a well-defined rooted Eulerian digraph satisfying~$H[Y] = G_Y[Y]$ as well as~$H[\bar{Y}] = G^Y[\bar{Y}]$.
        
        By \crefthm{obs:stitching_fundamentals}{2}~$Y$ induces a proper rooted~$k$-cut in~$\bar G_Y$ and~$\bar{Y}$ induces a (not necessarily proper) rooted~$k$-cut in~$\bar G^Y$ such that~$G_Y[Y] = G[Y]$ and~$G^Y[\bar{Y}] = G[\bar Y ]$. In particular then~$H[Y] = G[Y]$ and $H[\bar{Y}] = G[\bar{Y}]$ by \cref{def:knitting_rooted_graphs}.

        By 3. of \cref{obs:stitching_fundamentals} together with our choice of~$\pi_{G_Y}(Y)$ we derive $\pi_{G^Y}(y^\ast) = (e_1,\ldots,e_k) = \pi_{G_Y}(Y) = \pi_G(Y)$. Thus, by \cref{def:knitting_rooted_graphs} we get~$\rho_H(Y) = \rho_G(Y)$ respecting the same incidences; the claim follows as we did not alter any other incidences of root edges of $\bar G$ during stitching (or knitting). 
    \end{claimproof}
    Recall the \cref{def:knitting_knitworks} of knitting $\Omega$-knitworks. Combining Parts 2, 3, and 4 of \cref{obs:stitching_fundamentals_knitworks} with Parts 2, 3, and 4 of \cref{obs:knitting_fundamentals_knitworks} we immediately get~$\mu_H(x) = \mu_{G_Y}(x)$,~$\m_H(x) = \m_{G_Y}(x)$ and~$\Phi_H(x) = \Phi_{G_Y}(x)$ for all~$x \in Y$, and analogously with $G^Y$ and $\bar Y$. This concludes the proof.
\end{proof}

In particular we derive the following.
\begin{corollary}\label{cor:stitch-and-knit}
    Let~$\Omega$ be a well-quasi-order. Let~$\GGG$ be an $\Omega$-knitwork. Let~$Y \subset V(G)$ induce a proper rooted~$k$-cut in~$\GGG$ for some~$k\in 2\N$ and let~$\pi_G(Y)=(e_1,\ldots,e_k)=\pi_G(\bar Y)$ be an ordering of~$\rho(Y)$. Let
    \begin{align*}
         &\text{$\GGG_Y \coloneqq \stitch(\GGG;\pi_G,Y)$ with down-stitch vertex $y_*$, and}\\
        &\text{$ \GGG^Y \coloneqq \stitch(\GGG;\pi_G,\bar{Y})$ with up-stitch vertex $y^*$.}
    \end{align*}
Then, fixing the ordering~$\pi_{G_Y}(Y) = (e_1,\ldots,e_k) = \pi_{G^Y}(y^*)$ of $\rho_G(Y)$, it holds that $$\knit( \GGG^Y, \GGG_Y; \pi_{G^Y},y^*,\pi_{G_Y},Y) = \GGG.$$
\end{corollary}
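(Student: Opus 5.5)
The plan is to derive the corollary directly from \cref{lem:stitch-and-knit}, by checking that the $\Omega$-sleeves produced by stitching $\GGG$ are valid inputs to that lemma and that the lemma's output sleeve coincides with that of $\GGG$.

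First, set $\GGG_Y = (\bar G_Y,\mu_Y,\m_Y,\Phi_Y)$ and $\GGG^Y = (\bar G^Y,\mu^Y,\m^Y,\Phi^Y)$ as given by \cref{def:stitching_knitwork}. By \crefthm{obs:stitching_fundamentals_knitworks}{1} both are well-defined $\Omega$-knitworks. Their underlying rooted Eulerian digraphs are exactly $\stitch(\bar G;\pi_G,Y)$ and $\stitch(\bar G;\pi_G,\bar Y)$. Moreover, $\bar G$ is loopless by \crefdef{def:knitwork}{1}. Hence $(\mu_Y,\m_Y,\Phi_Y)$ and $(\mu^Y,\m^Y,\Phi^Y)$ are $\Omega$-sleeves for $\bar G_Y$ and $\bar G^Y$, which is precisely the setting of \cref{lem:stitch-and-knit}. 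The ordering hypothesis also holds: by \crefthm{obs:stitching_fundamentals_knitworks}{2} we have $\pi_{G^Y}(y^*) = \pi_G(\bar Y) = \pi_G(Y) = (e_1,\ldots,e_k)$, and we fix $\pi_{G_Y}(Y)$ to the same tuple.

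Applying \cref{lem:stitch-and-knit}, the knitted knitwork $\HHH = \knit(\GGG^Y,\GGG_Y;\pi_{G^Y},y^*,\pi_{G_Y},Y)$ has underlying rooted digraph equal to $\bar G$. The roots and their orderings match: the root sets in $\insc(Y)$ are inherited from $\bar G_Y$, those in $\outc(Y)$ from $\bar G^Y$, and in both cases with the orderings $\pi_G$. The lemma further gives $\restr{\mu_H}{Y} = \mu_Y$ and $\restr{\mu_H}{\bar Y} = \mu^Y$, and likewise for $\m_H$ and $\Phi_H$. Combining this with Parts 2--4 of \cref{obs:stitching_fundamentals_knitworks}, we have $\mu_Y = \mu$ on $Y$ and $\mu^Y = \mu$ on $\bar Y$, and analogously for $\m$ and $\Phi$. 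Since $\{Y,\bar Y\}$ partitions $V(G) = V(H)$, this yields $\mu_H = \mu$, $\m_H = \m$ and $\Phi_H = \Phi$, so $\HHH = \GGG$.

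The only delicate point is domains. The values of $\mu_Y$ and $\m_Y$ on the down-stitch vertex $y_*$ (namely $(e_1,\ldots,e_k)$ and $\MM(\rho,\bar Y)$) must not leak into $\HHH$. This holds because $y_* \notin Y$ and $y^* \notin \bar Y$, so both vertices are discarded by knitting, as already noted in the proof of \cref{lem:stitch-and-knit}. We also need $\dom(\mu_H) = \dom(\mu)$, not merely agreement on the common domain. Since stitching restricts the domains exactly to $Y$ and $\bar Y$ (adding only the discarded stitch vertices), the domains agree as well, and the equality $\HHH = \GGG$ is literal rather than merely an isomorphism.
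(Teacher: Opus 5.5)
Your proposal is correct and follows the paper's route: the corollary is read off from \cref{lem:stitch-and-knit} by taking the sleeves induced by stitching as the input sleeves, and then identifying them with $(\mu,\m,\Phi)$ on $Y$ and $\bar Y$ via Parts 2--4 of \cref{obs:stitching_fundamentals_knitworks}. Your additional check, that the values at $y_*$ and $y^*$ are discarded by knitting so the domains agree and the equality is literal, is consistent with the paper's remark that no isomorphism is needed because the cut edges are shared objects rather than renamed copies.
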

\begin{remark}
  Note that the corollary holds with equality instead of isomorphism due to the intrinsic identifications in all the definitions (renaming edges).
\end{remark}

Finally, we also need the following technical variation of \cref{lem:stitch-and-knit} regarding sub-knitworks.
\begin{corollary}\label{cor:stitch-and-knit_subknitwork}
    Let~$\Omega$ be a well-quasi-order. Let~$\GGG$ be an $\Omega$-knitwork. Let~$Y \subset V(G)$ induce a proper rooted~$k$-cut in~$\GGG$ for some~$k\in 2\N$ and let~$\pi_G(Y)=(e_1,\ldots,e_k)=\pi_G(\bar Y)$ be an ordering of~$\rho(Y)$. Let
    \begin{align*}
         &\text{$\GGG_Y \coloneqq \stitch(\GGG;\pi_G,Y)$ with down-stitch vertex $y_*$, and}\\
        &\text{$ \GGG^Y \coloneqq \stitch(\GGG;\pi_G,\bar{Y})$ with up-stitch vertex $y^*$.}
    \end{align*}

Let $\HHH_Y \subseteq \GGG^Y$ such that $y_* \in V(H_Y)$ and $\rho_{G_Y}(y_*) = \rho_{H_Y}(y_*)$. Let $Y' \coloneqq V(H_Y) \setminus \{y_*\}$. Let $\HHH^Y \subseteq \GGG^Y$, whence by definition $y^* \in V(H^Y)$ and $\rho_{G^Y}(y^*) = \rho_{H^Y}(y_*)$. Then, fixing the ordering~$\pi_{H_Y}(Y') = (e_1,\ldots,e_k) = \pi_{H^Y}(y^*)$ of $\rho_G(Y)$, it holds that $$\HHH \coloneqq \knit( \HHH^Y, \HHH_Y; \pi_{H^Y},y^*,\pi_{H_Y},Y') \subseteq \GGG.$$
\end{corollary}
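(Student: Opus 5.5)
The plan is to reduce the statement to \cref{cor:stitch-and-knit}, which gives $\knit(\GGG^Y,\GGG_Y;\pi_{G^Y},y^*,\pi_{G_Y},Y)=\GGG$. Then I would check, component by component, that knitting two sub-knitworks produces a sub-knitwork of that knitted object. Throughout I read the hypothesis on $\HHH_Y$ as $\HHH_Y\subseteq\GGG_Y$; this is evidently intended, since $y_*\in V(H_Y)$.

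\textbf{Step 1: the knit is well defined.} Since $V(H_Y)=Y'\cup\{y_*\}$ and $H_Y$ is loopless, every edge of $H_Y$ leaving $Y'$ is incident to $y_*$. Hence $\rho_{H_Y}(Y')=\rho_{H_Y}(y_*)=\rho_{G_Y}(y_*)=\rho_G(Y)$, with the same head/tail orientation. Likewise $\rho_{H^Y}(y^*)=\rho_{G^Y}(y^*)=\rho_G(\bar Y)$.

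Because $\HHH_Y$ and $\GGG_Y$ are rooted in the same sets, namely $\insc(Y)$, each such root set lies in $Y'$. So $Y'$ induces a proper rooted $k$-cut in $\bar H_Y$. The orientation condition of \cref{def:knitting_rooted_graphs} for the ordering $(e_1,\ldots,e_k)$ is inherited verbatim from the corresponding condition in $\GGG$. Therefore $\HHH$ is a well-defined $\Omega$-knitwork by \cref{obs:knitting_fundamentals_knitworks}.

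\textbf{Step 2: $H$ is a rooted sub-digraph of $\bar G$.} By \cref{obs:knitting_fundamentals}, $V(H)=Y'\cup(V(H^Y)\setminus\{y^*\})$, and the edges of $H$ are:
\begin{itemize}
\item the edges of $H_Y[Y']\subseteq G_Y[Y]=G[Y]$;
\item the edges of $H^Y[V(H^Y)\setminus\{y^*\}]\subseteq G[\bar Y]$;
\item the cut edges $e_1,\ldots,e_k$.
\end{itemize}
Each of these edges keeps its incidence from $G$, using \cref{obs:stitching_fundamentals}\,\ref{obs:stitching_fundamentals:2} for the first two groups and the identification in \cref{def:knitting_rooted_graphs} for the cut edges. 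Hence $H\subseteq G$.

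For the roots, take $X\in\insc(Y)$. Then $\rho_H(X)=\rho_{H_Y}(X)=\rho_{G_Y}(X)=\rho_G(X)$, using \cref{obs:stitching_fundamentals}\,\ref{obs:stitching_fundamentals:3} and the fact that $\HHH_Y$ is a sub-knitwork. The case $X\in\outc(Y)$ is symmetric, using $\HHH^Y$. The orderings $\pi$ are copied unchanged, so $\bar H$ is a rooted sub-digraph of $\bar G$ in the sense of \cref{def:subknitwork}.

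\textbf{Step 3: the sleeves agree.} I must show that the sleeve of $\HHH$ produced by knitting coincides with the sleeve induced by $H$ from $\GGG$. Take $x\in Y'$. By \cref{obs:knitting_fundamentals_knitworks}, $\mu_H(x)=\mu_{H_Y}(x)$. This is $\mu_{G_Y}(x)=\mu_G(x)$ restricted to $\rho_{H_Y}(x)=\rho_H(x)$, which is exactly the induced ordering. The same argument gives the domains $\dom(\mu_H)=\dom(\mu_G)\cap V(H)$, and it applies to $\m$, using $\m_{G_Y}(x)=\m(x)$ and intersecting matchings with $\operatorname{Match}(\rho^-_H(x),\rho^+_H(x))$. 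It also applies to $\Phi$ by restriction. Vertices $x\in V(H^Y)\setminus\{y^*\}$ are handled identically via $\HHH^Y$ and \cref{obs:stitching_fundamentals_knitworks}. The special vertices $y_*,y^*$ disappear under knitting, so the labels placed on them, in particular $\MM(\rho,\bar Y)$ on $y_*$, play no role.

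\textbf{Main obstacle.} I expect the main difficulty to be bookkeeping rather than a genuine mathematical step. The subtle points are that $Y'$ may be strictly smaller than $Y$, and that the restriction of a link to $\rho_H(x)$ commutes with the edge renaming done in knitting. The first is handled by the loopless argument in Step 1. The second holds because stitching and knitting keep edges as the same objects, as noted in the remarks after \cref{obs:stitching_fundamentals} and \cref{cor:stitch-and-knit}.
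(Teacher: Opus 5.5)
Your proposal is correct and follows essentially the paper's route. The paper also gets well-definedness from the knitting definition and then verifies $\HHH\subseteq\GGG$ against \cref{lem:stitch-and-knit}; it merely phrases the edge bookkeeping as deleting the Eulerian digraphs $G_Y-H_Y$ and $G^Y-H^Y$ (via \cref{lem:subknitwork_difference_eulerian}) and adapting the sleeve, whereas you check the conditions of \cref{def:subknitwork} directly, and your reading of the hypothesis as $\HHH_Y\subseteq\GGG_Y$ is the intended one.
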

\begin{proof}
    By \cref{def:knitting_knitworks} $\HHH$ is a well-defined $\Omega$-knitwork, and, similar to \cref{lem:stitch-and-knit} one easily verifies that $\HHH \subseteq \GGG$. This follows essentially from \cref{lem:stitch-and-knit} noting that $\HHH \subseteq \GGG$ can be obtained by deleting the Eulerian digraphs $\CCC_Y \coloneqq G_Y - H_Y$ as well as $\CCC^Y \coloneqq G^Y - H^Y$ as guaranteed by \cref{lem:subknitwork_difference_eulerian} and adapting the $\Omega$-sleeve accordingly.
\end{proof}

Similar to the Stitch-and-Knit \cref{lem:stitch-and-knit}, stitching and knitting can be used to reduce the question of whether a graph~$H$ strongly immerses into~$G$ by decomposing the two graphs at cuts via stitches and asking whether the resulting stitches can be immersed into each other; if so, we can knit the immersions back together to yield an immersion of~$H$ in~$G$. 

Before we state this formally, we observe the following which is imminent from the \cref{def:knitwork_immersion}.
\begin{lemma}\label{obs:knitwork_immersion_on_mu}
    Let~$\Omega$ be a well-quasi-order, let~$\GGG = (\bar{G},\mu,\m,\Phi)$ and~$\GGG'=(\bar{G}',\mu',\m',\Phi')$ be~$\Omega$-knitworks and let~$\gamma: \GGG\hookrightarrow^* \GGG'$ be an~$\Omega$-knitwork immersion. Let~$v\in \dom(\mu)$ and let~$(e_1,\ldots,e_k) = \mu(v)$. Further let~$(e_1',\ldots,e_k') = \mu(v')$ for~$v' = \gamma(v)$, in particular $\deg_{G}(v) = \deg_{G'}(v')$.
    Then~$e_i' \in \rho^+(v') \iff e_i \in \rho^+(v)$; in particular $e_i'$ is an end of $\gamma(e_i)$.
\end{lemma}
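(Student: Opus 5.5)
We work directly from \cref{def:knitwork_immersion} and first reduce to the witness. Let $\GGG'' \subseteq \GGG'$ be the sub-knitwork witnessing $\gamma$, with sleeve $(\mu'',\m'',\Phi'')$. By \crefdef{def:knitwork_immersion}{2}, $v' = \gamma(v) \in \dom(\mu'')$. By \crefdef{def:knitwork_immersion}{3}, $\mu''(v') = (e_1',\ldots,e_k')$ has the same index $k$ as $\mu(v)$, and $e_i' \in \gamma(e_i)$ for every $1 \le i \le k$. Here $(e_1',\ldots,e_k')$ is read as $\mu''(v')$, which is how the statement's $\mu(v')$ should be interpreted. This gives $\deg_{G''}(v') = k = \deg_G(v)$, using that $G$ and $G''$ are loopless, so that $\rho(v) = E(v)$ by \cref{obs:cut_at_vertex}. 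In the stable case $G'' = G'$ and the degree claim holds as stated.

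Next we show that $e_i'$ is an end of $\gamma(e_i)$. Since $v \in \dom(\mu)$, $v$ is not in the root set, and $e_i$ is incident with $v$. So $v$ is an endpoint of $e_i$, and $\gamma(e_i)$ starts or ends in $v'$ by \crefdef{def:immersion}{3}. The path $\gamma(e_i)$ contains the edge $e_i'$, which is incident with $v'$. The point to establish is that $e_i'$ is the first or last edge of $\gamma(e_i)$ and not an interior edge passing through $v'$.

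For this, count edges at $v'$. The $k$ paths $\gamma(e_1),\ldots,\gamma(e_k)$ are pairwise edge-disjoint. Each of them has $v'$ as an endpoint, so each uses at least one edge of $\rho(v')$ as an end edge, namely its first edge if it starts at $v'$ and its last edge if it ends there. These $k$ end edges are distinct and lie in $\rho_{G''}(v')$, which has exactly $k$ elements. So every edge of $\rho(v')$ is the end edge of exactly one of the paths $\gamma(e_j)$, attached at $v'$.

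Since $e_i' \in \gamma(e_i)$ and the $\gamma(e_j)$ are edge-disjoint, $e_i'$ must be the end edge of $\gamma(e_i)$ at $v'$. A path through $v'$ would need a second edge at $v'$, and none remain. In the weak case, a path $\gamma(e_j)$ could a priori re-enter $v'$ internally, but the count rules this out because all $k$ incident edges are already used as end edges. One caveat: if $\gamma(e_i)$ had both endpoints at $v'$, it would consume two edges, and the count would exceed $k$. So no such path exists.

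Finally we derive the direction statement. If $e_i \in \rho^+(v)$, then $\tail(e_i) = v$, so $\gamma(e_i)$ starts at $v'$. Its first edge is therefore its end edge at $v'$, namely $e_i'$, and $\tail(e_i') = v'$, so $e_i' \in \rho^+(v')$. The case $e_i \in \rho^-(v)$ is symmetric: $e_i'$ is then the last edge of $\gamma(e_i)$ and lies in $\rho^-(v')$. This gives both implications of the equivalence.

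The main obstacle is the counting step, in particular ruling out that $e_i'$ is used internally in the weak (non-strong) setting. It is handled by the equality of degrees forced by \crefdef{def:knitwork_immersion}{3} together with edge-disjointness.
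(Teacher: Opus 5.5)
Your proof is correct and rests on the same mechanism as the paper's. Since $\rho(v')=\{e_1',\ldots,e_k'\}$ and $e_j'\in\gamma(e_j)$, edge-disjointness forces the end edge of $\gamma(e_i)$ at $v'$ to be $e_i'$. The paper argues this directly: a first edge $e_j'$ with $j\neq i$ would lie in $\gamma(e_i)\cap\gamma(e_j)$. So your counting/bijection step is an unnecessary but harmless detour. Passing to the witness $\GGG''$ is a useful precision, because it makes explicit the equal-degree (i.e.\ stability) hypothesis that the paper uses without stating it.
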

\begin{proof}
    By \cref{def:knitwork_immersion} of~$\Omega$-knitwork immersion~$\gamma$ respects~$\mu$ and $\mu'$, and since $v$ and $v'$ are of equal degree,  we derive that~$e_i' \in \gamma(e_i)$ for all~$1 \leq i \leq k$. Since~$\gamma$ is in particular an immersion of~$G$ into~$G'$, the edge~$e_i=(v,w) \in \rho^+(v)$ for some~$i \in \{1,\ldots,k\}$ is mapped to a path starting in~$v' = \gamma(v)$. If the path~$\gamma(e_i)$ does not start in~$e_i'$ then it must start in some other edge~$f \in N_{G'}(v')$, but clearly~$f$ cannot be a loop since we do not allow for loops; thus~$f = e_j$ for some~$j \in \{1,\ldots,k\}$ with~$i \neq k$. But then~$e_j  \in \gamma(e_i') \cap \gamma(e_j')$; a contradiction to the paths being edge-disjoint using the fact that~$\gamma$ is an immersion of rooted Eulerian digraphs.
\end{proof}

Complementing the above, note that due to the \cref{def:knitwork_immersion}, more specifically the fact that immersions need to respect $\mu$ and $\mu'$, no edge $e_i \in \rho(v)$ can be mapped to a path that visits $v$ twice, regardless whether or not the immersion is strong.

The following is the main result of this subsection. 

\begin{theorem}\label{thm:knitting_knitwork_immersion}
    Let~$\Omega=(V(\Omega),\preceq)$ be a well-quasi-order. Let~$\GGG=((G,\pi,X_1,\ldots,X_\ell),\mu,\m,\Phi)$ and~$\HHH=((H,\pi,A_1,\ldots,A_\ell),\nu,\n,\Psi)$ be~$\Omega$-knitworks of common index~$\ell \in 2\N$. Let~$Y \subset V(G)$ and~$B \subset V(H)$ induce proper rooted~$k$-cuts in~$\bar{G}$ and~$\bar{H}$ respectively for some~$k \in 2\N$ such that $\Abs{\insc(Y)} = \Abs{\insc(B)}$. Let~$\pi(Y)=(e_1^Y,\ldots,e_k^Y)=\pi(\bar Y)$ and~$\pi(B)=(e_1^B,\ldots,e_k^B) =\pi(\bar{B})$ be orderings of~$\rho_G(Y),\rho_H(B)$ respectively. Let
    \begin{align*}
        \GGG_Y &=  \stitch(\GGG;\pi,Y) \text{ with down-stitch vertex } y_\ast,\\
       \GGG^Y &=\stitch(\GGG;\pi,\bar{Y}) \text{ with up-stitch vertex } y^\ast,\\
        \HHH_B &=  \stitch(\HHH;\pi,B) \text{ with down-stitch vertex } b_\ast,\\
        \HHH^B &=\stitch(\HHH;\pi,\bar{B}) \text{ with up-stitch vertex } b^\ast, 
    \end{align*}

    respectively. Further let
    \begin{align*}
        \gamma_d:&  \HHH_B \hookrightarrow \GGG_Y, \text{ with } \gamma_d(b_\ast) = y_\ast\quad  \text{ and } \quad
        \gamma_u:  \HHH^B \hookrightarrow \GGG^Y , \text{ with } \gamma_u(b^\ast) = y^\ast,
    \end{align*}be strong~$\Omega$-knitwork immersions. 
    
    Then there exists a strong~$\Omega$-knitwork immersion~$\gamma:\HHH \hookrightarrow \GGG$ such that~$$\restr{\gamma}{H[B]} = \restr{\gamma_u}{H_B[B]} \quad \text{ and } \quad\restr{\gamma}{H[\bar{B}]} = \restr{\gamma_d}{H^B[\bar{B}]}.$$ The same holds true for weak immersion $\hookrightarrow^*$.
\end{theorem}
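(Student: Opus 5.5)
We will glue $\gamma_d$ and $\gamma_u$ along the two cuts $\rho_H(B)$ and $\rho_G(Y)$. By \cref{cor:stitch-and-knit} we have $\HHH = \knit(\HHH^B,\HHH_B;\pi_{H^B},b^*,\pi_{H_B},B)$ and likewise $\GGG = \knit(\GGG^Y,\GGG_Y;\pi_{G^Y},y^*,\pi_{G_Y},Y)$. It therefore suffices to build an immersion between the knitted objects. As a first step we reduce to stable immersions. Let $\GGG_Y'\subseteq\GGG_Y$ and $\GGG^{Y\prime}\subseteq \GGG^Y$ be witnesses for $\gamma_d$ and $\gamma_u$. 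By \crefdef{def:knitwork_immersion}{2} and \crefdef{def:knitwork_immersion}{3}, $y_*\in\dom(\mu_Y)$ is the image of $b_*$, and $b_*\in\dom(\nu_B)$. Hence in the witness all $k$ edges at $y_*$ survive, and since $y^*$ is a root, all edges at $y^*$ survive as well. Thus \cref{cor:stitch-and-knit_subknitwork} applies, and knitting the two witnesses yields a sub-knitwork $\GGG'\subseteq\GGG$. This $\GGG'$ will serve as the witness for $\gamma$.

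Next we define $\gamma$. On vertices, set $\gamma(v)=\gamma_d(v)$ for $v\in B$ and $\gamma(v)=\gamma_u(v)$ for $v\in\bar B$. This map is injective because $\gamma_d(B)\subseteq Y$: here $b_*$ is the only vertex mapped to $y_*$, and $\gamma_u(\bar B)\subseteq \bar Y$ since $y^*$ is a root and rooted immersions respect root sides.

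For edges in $E(H[B])$, the path $\gamma_d(e)$ avoids the edges at $y_*$ except possibly as ends; indeed every edge of $\rho(y_*)$ lies on $\gamma_d(e_i^B)$ by \cref{obs:knitwork_immersion_on_mu}. So $\gamma_d(e)$ is a path in $G[Y]$, and we set $\gamma(e)=\gamma_d(e)$. For edges in $E(H[\bar B])$ we set $\gamma(e)=\gamma_u(e)$ analogously. This uses that a path through the root $y^*$ would have to use two root edges, which would contradict \crefdef{def:rooted_immersion}{3}.

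For a cut edge $e_i^B=e_i$, \cref{obs:knitwork_immersion_on_mu} says that $\gamma_d(e_i)$ has $e_i^Y$ as its end at $y_*$. By \crefdef{def:rooted_immersion}{3} applied to $\gamma_u$ with root $b^*\mapsto y^*$, the path $\gamma_u(e_i)$ contains $e_i^Y$ and no other root edge, so $e_i^Y$ is its end at $y^*$. We then set $\gamma(e_i)=P_d\circ P_u$, the concatenation along the common edge $e_i^Y$ in the sense of \cref{def:concatenation_of_paths}, with the order determined by the direction of $e_i$. The two pieces live in $G[Y]\cup\rho(Y)$ and in $G[\bar Y]\cup\rho(Y)$ respectively, and share only $e_i^Y$. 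Hence the concatenation is a path, and paths for distinct edges are edge-disjoint.

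Finally we verify conditions (1)--(6) of \cref{def:knitwork_immersion}. Strongness holds because each internal vertex of $\gamma(e)$ is internal in $\gamma_d$ or $\gamma_u$, and vertices of $\gamma(V(H))$ are images under the corresponding map; the cut edges create no new internal vertices on the cut. The roots are respected because each $X_i$ lies in $\insc(Y)$ or $\outc(Y)$ and is handled by $\gamma_d$ or $\gamma_u$ respectively. Conditions (2), (3), (5) and (6) are local at vertices of $Y$ or $\bar Y$, and the sleeves there agree with those of the stitches by \cref{obs:stitching_fundamentals_knitworks}.

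The main obstacle is condition (4) at a vertex $v'\in\dom(\m')$, together with the claim that the paths around the cut are consistent. We have $M_\gamma(v')=M_{\gamma_d}(v')$ if $v'\in Y$, because concatenation does not change consecutive pairs inside $Y$, and similarly on $\bar Y$; this is exactly where we need the cut-edge paths to split cleanly at $e_i^Y$. The weak case is identical: strongness was never used except in verifying that the result is strong. The one remaining point is that $\gamma_d$ may route through vertices of $Y$ that are images of $H$-vertices, which is harmless.
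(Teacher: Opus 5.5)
Your proposal is correct and follows the paper's proof essentially step by step. You knit the two witnesses into a sub-knitwork of $\GGG$ via \cref{cor:stitch-and-knit_subknitwork}, and use \cref{obs:knitwork_immersion_on_mu} for $\gamma_d$ and \crefdef{def:rooted_immersion}{3} for $\gamma_u$ to see that $e_i^Y$ is the common end edge of the two halves. You then glue $\gamma_d$ and $\gamma_u$ by concatenation along $e_i^Y$ and check the conditions of \cref{def:knitwork_immersion} locally on $Y$ and $\bar Y$. Using the knitted witness directly, rather than first reducing to the stable case and re-invoking the theorem, is only a cosmetic difference.

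One caveat applies equally to the paper. Your remark that the roots are respected tacitly needs $A_i \in \insc(B)$ if and only if $X_i \in \insc(Y)$, with matching order. The hypothesis that $\insc(Y)$ and $\insc(B)$ have equal cardinality does not by itself make the glued map send $A_i$ to $X_i$.
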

\begin{proof}
We prove the theorem for strong $\Omega$-knitwork immersion; the other case is analogous (we highlight the relevant differences where needed). First, note that it is enough to discuss \emph{stable} $\Omega$-knitwork immersion, as guaranteed by the following.
\begin{claim}\label{thm:knitting_knitwork_immersion_claim_stable}
    If the theorem holds for stable (strong) $\Omega$-knitwork immersion, then it holds for (strong) $\Omega$-knitwork immersion.
\end{claim}
\begin{claimproof}
Again, we give a proof for strong immersion, for the other case follows verbatim. Let $\HHH,\GGG$ as well as $\gamma_d:\HHH_B \hookrightarrow \GGG_Y$ and $\gamma_u : \HHH^B \hookrightarrow \GGG^Y$ as in the theorem. We are to show that $\gamma:\HHH \hookrightarrow \GGG$. Let $\GGG_d \subseteq \GGG_Y$ be the witness for $\gamma_d$ and $\GGG_u \subseteq \GGG^Y$ be the witness for $\gamma_u$, in particular $\gamma_d: \HHH_B \hookrightarrow \GGG_d$ and $\gamma_u:\HHH^Y \hookrightarrow \GGG_u$ are stable, and by assumption of the theorem we derive that $y_* \in V(G_d)$ and $y^* \in V(G_u)$. Since $\delta_G(Y) = \delta_H(B)$ we further derive that $\rho_{G_u}(y^*) = \rho_{G^Y}(y^*)$ as well as $\rho_{G_d}(y_*) = \rho_{G_Y}(y_*)$. Let $Y' \coloneqq V(G_d)\setminus \{y_*\}$, then by the above $\GGG' \coloneqq \knit(\GGG_u,\GGG_d; \pi_{\GGG_u},y^*,\pi_{\GGG_d},Y')$ is a well-defined $\Omega$-knitwork and by \cref{cor:stitch-and-knit_subknitwork} we have $\GGG' \subseteq \GGG$. By \cref{cor:stitch-and-knit} we derive that $\stitch(\GGG';\pi_{\GGG'},Y') = \GGG_d$ as well as  $\stitch(\GGG';\pi_{\GGG'},\bar Y') = \GGG_u$. Thus, by assumption of the claim, applying the theorem to $\GGG',\HHH$ as well as the stable strong $\Omega$-knitwork immersions $\gamma_d \colon \HHH_Y \hookrightarrow \GGG_d$ and $\gamma_u \colon \HHH^Y \hookrightarrow \GGG_u$, we derive the existence of a stable strong $\Omega$-knitwork immersion $\gamma: \HHH \hookrightarrow \GGG'$. Since $\GGG' \subseteq \GGG$, \cref{def:knitwork_immersion} implies that $\gamma: \HHH \hookrightarrow \GGG$ as desired, concluding the proof.
\end{claimproof}

Throughout the rest of the proof we assume that the immersions $\gamma_u$ and $\gamma^d$ are stable without further mention, implicitly using \cref{thm:knitting_knitwork_immersion_claim_stable}. In particular both immersions satisfy $(1)$-$(6)$ of \cref{def:knitwork_immersion}.

    \begin{claim}\label{claim:stitch_knit_corr_paths_under_partial_immersion_uno}
        Let~$e_i^Y \in \rho_G^+(Y)$ for some $i \in \{1,\ldots,k\}$. Then~$\gamma_d(e_i^B) \subset G_Y$ is a path ending in~$e_i^Y \in \rho_{G_Y}^+(Y)$ that is otherwise disjoint from~$\rho_{G_Y}(Y)$ and~$\gamma_u(e_i^B) \subset G^Y$ is a path starting in~$e_i^Y \in \rho_{G^Y}^-(\bar{Y})$ that is otherwise disjoint from~$\rho_{G^Y}(\bar{Y})$.
    \end{claim}
    \begin{claimproof}
       We start with a proof for~$\gamma_d$. To this extent note that by \crefthm{obs:stitching_fundamentals}{3}~$\rho_{H_B}(b_\ast) = \{e_1^B,\ldots,e_k^B\} = \rho_H(B)$ and~$\rho_{G_Y}(y_\ast) = \{e_1^Y,\ldots,e_k^Y\} = \rho_G(Y)$ and by \crefthm{obs:stitching_fundamentals_knitworks}{3} we know that~$\nu_B(b_\ast) = (e_1^B,\ldots,e_k^B)$ as well as~$\mu_Y(y_\ast) = (e_1^Y,\ldots,e_k^Y)$. By the assumption of the theorem we know that~$\gamma_d(b_\ast) = y_\ast$ and since~$\gamma_d$ is a stable strong~$\Omega$-knitwork immersion we derive from \crefdef{def:knitwork_immersion}{3} that for every~$1 \leq i \leq k$ the path~$\gamma_d(e_i^B)$ contains the edge~$e_i^Y$. Since all these paths are edge-disjoint the path~$\gamma_d(e_i^B)$ contains no other edge of~$\rho_{G_Y}(y_\ast)$. Further, by \cref{obs:knitwork_immersion_on_mu} (note that the lemma is true for strong and weak immersion, and its assumptions hold true since we assume the immersions to be stable), we derive that $e_i^B \in \rho_{H_B}(b_\ast)^- \iff e_i^Y \in \rho_{G_Y}(y_\ast)^-$. Further, using the fact that~$\rho_{G}^+(Y) = \rho_{G_Y}^-(y_\ast)$ as well as~$\rho_{H}^+(B) = \rho_{H_B}^-(b_\ast)$ (and analogously for switched signs), the path $\gamma_d(e_i^B)$ must end in~$e_i^Y$---for it uses exactly one edge incident to $y_\ast$---concluding the proof for this case.

       The proof for~$\gamma_u$ follows verbatim using~$\gamma_u(b^\ast) = y^\ast$. Note that by \cref{def:stitching_knitwork}, $y^\ast \notin \dom(\mu^Y)$ as well as $b^\ast \notin \dom(\nu^B)$, so we use $\pi_{G^Y}(y^\ast)$ as well as $\pi_{G^B}(b_\ast)$ in the proof instead, together with the \cref{def:rooted_immersion} of immersion (the arguments are identical). The only difference is that the path $\gamma_u(e_i^B)$ starts in the respective edge~$e_i^Y \in \rho^-(\bar{Y})$ by symmetry, using that $e_i^Y \in \rho_G^+(Y) \iff e_i^Y \in \rho_G^-(\bar Y)$.  
    \end{claimproof}

    Analogously by switching signs and using the symmetry for cut edges we get the following.
     \begin{claim}\label{claim:stitch_knit_corr_paths_under_partial_immersion_dos}
        Let~$e_i^Y \in \rho_G^-(Y)$ for some $i \in \{1,\ldots,k\}$. Then~$\gamma_d(e_i^B) \subset G_Y$ is a path starting in~$e_i^Y \in \rho_{G_Y}^-(Y)$  that is otherwise edge-disjoint from~$\rho_{G_Y}(Y)$ and~$\gamma_u(e_i^B) \subset G^Y$ is a path ending in~$e_i^Y \in \rho_{G^Y}^-(\bar{Y})$ that is otherwise edge-disjoint from~$\rho_{G^Y}(\bar{Y})$.
    \end{claim}
    For weak immersion both claims hold analogously.

    Combining Claims \ref{claim:stitch_knit_corr_paths_under_partial_immersion_uno} and \ref{claim:stitch_knit_corr_paths_under_partial_immersion_dos} together with \crefthm{obs:stitching_fundamentals}{2} we derive the following.
    \begin{claim}\label{claim:stitch_knit_effect_on_cut}
        For every~$1 \leq i \leq k$,~$e_i^Y$ is an end---once first and once last edge---to both otherwise vertex-disjoint paths~$\gamma_u(e_i^B) \subset G^Y$ and~$\gamma_d(e_i^B)\subset G_Y$; in particular $e_i^Y \in E(\gamma_u(e_i^B))\cap E(\gamma_d(e_i^B))$ and $V^\circ(\gamma_u(e_i^B))\cap V^\circ(\gamma_d(e_i^B)) = \emptyset$. Further~$e_i^Y \in \gamma_u(e)$ or $e_i^Y \in \gamma_d(e)$ if and only if~$e = e_i^B$.
    \end{claim}
    The same claim holds for weak immersion; in particular the paths are also vertex-disjoint.

    Finally, we use \cref{claim:stitch_knit_effect_on_cut} to ``knit'' both immersions~$\gamma_u,\gamma_d$ when knitting the stitched graphs back together. We define a strong immersion~$\gamma: \HHH \hookrightarrow \GGG$ as follows. Recall \cref{def:paths} and \cref{def:concatenation_of_paths}.

    \begin{align*}
        &\gamma\colon V(H)\cup E(H) \to G,\\
        &\restr{\gamma}{H[B]} \coloneqq \restr{\gamma_d}{H_B[B]},\\
        &\restr{\gamma}{H[\bar{B}]} \coloneqq \restr{\gamma_u}{H^B[\bar{B}]}, \text{ and }\\
        &\gamma(e_i^B) = \begin{cases}
            \gamma_u(e_i^B) \circ \gamma_d(e_i^B),& e_i^B \in \rho_H^+(B)\\
            \gamma_d(e_i^B) \circ \gamma_u(e_i^B),& e_i^B \in \rho_H^-(B).
        \end{cases}
    \end{align*}
    By construction the above is well-defined. To see this, it suffices to show that~$ \gamma_u(e_i^B) \circ \gamma_d(e_i^B)$ and~$\gamma_d(e_i^B) \circ \gamma_u(e_i^B)$ are paths in~$G$ for $e_i^B \in \rho_H^+(B)$ and $e_i^B \in \rho_H^-(B)$ respectively. By \cref{claim:stitch_knit_effect_on_cut}, the respective summands are internally edge-disjoint, and for~$\gamma_u(e_i^B)=(f_1,\ldots,f_\ell)$ with~$f_1,\ldots,f_\ell \in E(G_Y) \cap E(G)$ it holds that~$(f_1,\ldots,f_\ell)_G$ is a path in~$G$, in particular~$\gamma_u(e_i^B)$ is. Analogously, all summands are paths in~$G$ and the claim follows by \cref{claim:stitch_knit_effect_on_cut} and \cref{obs:concat_paths}. 
    
    It is now straightforward to verify that~$\gamma$ is a strong rooted immersion.
\begin{claim}
        $\gamma:\bar H \hookrightarrow \bar G$ is a strong rooted immersion. 
    \end{claim}
    \begin{claimproof}
        The fact that~$\gamma: V(H) \to V(G)$ is injective is clear by construction together with the fact that~$\gamma_u,\gamma_d$ are strong immersions for two disjoint sets~$B,\bar{B}$ where~$V(H) = B \cup \bar{B}$ and $V(G) = Y \cup \bar Y$ by \crefthm{obs:knitting_fundamentals}{2}. Further, since $E(H[B]) \cap E(H[\bar{B}]) = \emptyset$, it follows by definition of $\gamma$ together with the \cref{def:immersion} of strong immersion, and \cref{claim:stitch_knit_effect_on_cut}, that every~$e \in E(H) \setminus \rho(B)$ is mapped to a unique path~$\gamma(e)$ in~$G$. In addition, the respective paths are edge-disjoint where no such path contains any vertex of~$\gamma(V(G))$ as an internal vertex---since both linkages $\gamma_u(E(H_B))$ and $\gamma_d(E(H^B))$ are strong---and no path contains an edge of~$\rho(Y)$. In particular~$\gamma(e)$ is either completely contained in~$G[Y]$ or in~$G[\bar{Y}]$. Finally, for~$e_i^B \in \rho(B)$ we have seen that~$\gamma(e_i^B)$ is a path in~$G$ containing~$e_i^Y$, for every~$1 \leq i \leq k$; note that~$E(\gamma_u(e_i^B)\cap E(\gamma_d(e_i^B)) =\{e_i^Y\}$ and~$V^\circ(\gamma_u(e_i^B)) \cap V^\circ(\gamma_d(e_i^B)) = \emptyset$ and thus $\gamma(e_i^B)$ contains no vertex of $\gamma(V(G))$ as an internal vertex (since $\gamma_u(e_i^B)$ and $\gamma_d(e_i^B)$ do not by assumption), concluding the proof that $\gamma$ is a strong immersion. (Again the proof for weak immersion follows verbatim with the only difference that $\gamma(e_i^B)$ may admits vertices of $V(G)$ as internal vertices depending on whether $\gamma_u(e_i^B)$ or $\gamma_d(e_i^B)$ did).

        To see that it is a \emph{rooted} strong immersion, note that~$\gamma_u,\gamma_d$ are rooted immersion (see \cref{def:rooted_immersion}) which implies that they respect the order of the roots. Note that $E(H) = E(H_B) \cup E(H^B)$ and $E(G) = E(G_Y) \cup E(G^Y)$ by \cref{cor:stitch-and-knit} and $\gamma$ agrees with $\gamma_u,\gamma_d$ away from $\rho(B)$ and extends them at $\rho(B)$. The claim follows using Parts 2, 3, and 4. of \cref{obs:stitching_fundamentals}. 
    \end{claimproof}

Finally, we derive the following.
    \begin{claim}
        $\gamma:\big(\bar H, \nu,\n,\Psi\big) \hookrightarrow \big(\bar G,\mu,\m,\Phi\big)$ is a strong~$\Omega$-knitwork immersion. 
    \end{claim}
    \begin{claimproof}
        By the previous claim we are left to verify the conditions of \cref{def:knitwork_immersion} on~$\nu,\mu$ as well as~$\n,\m$ and~$\Psi,\Phi$. But these are imminent from the definition:~$\Psi(v) \preceq \Phi(\gamma(v))$ using the fact that for~$v \in B$ we have~$\gamma(v) \in Y$ and~$\Psi(v) = \Psi_{B}(v)$ as well as~$\Phi(\gamma(v)) = \Phi_{Y}(\gamma_d(v))$ whence the claim follows from the fact that~$\Phi_{B}(v) \preceq \Phi_{Y}(\gamma_d(v))$ using that~$\gamma_d$ is an~$\Omega$-knitwork immersion, and analogously for~$v \in \bar{Y}$ implying~$\gamma(v) \in \bar{B}$. 

        The claims for~$\nu,\mu$ and~$\n,\m$ follow similarly from \cref{cor:stitch-and-knit} using the fact that~$\gamma_u,\gamma_d$ are~$\Omega$-knitwork immersions together with \cref{claim:stitch_knit_effect_on_cut} and Parts 2 and 3 of \cref{obs:knitting_fundamentals_knitworks} (for~$\mu$ and~$\m$ respectively) and the fact that~$\{B,\bar{B}\}$ and~$\{Y,\bar{Y}\}$ partition the vertex sets~$V(H)$ and~$V(G)$ respectively.  
    \end{claimproof}
This concludes the proof; recall that we implicitly used \cref{thm:knitting_knitwork_immersion_claim_stable}.
\end{proof}
\begin{remark}
    Note that for both strong and weak immersion the above knitting of immersion never produces ``new vertices'' for which a path self-intersects, simply because we never reroute at vertices and concatenate paths at common edges. Thus, this result would equally well hold when defining $\Omega$-knitwork immersion via linear paths instead. Compare with \cref{lem:knitwork_immersion_weak_vs_strong} and the preceding discussion. 
\end{remark}
    
Using \cref{lem:stitch-and-knit} we derive the following more general version of \cref{thm:knitting_knitwork_immersion}.

\begin{theorem}\label{thm:knitting_knitwork_immersion_general}
    Let~$\Omega=(V(\Omega),\preceq)$ be a well-quasi-order and let~$\bar G=(G,\pi,X_1,\ldots,X_\ell)$ and~$\bar H=(H,\pi,A_1,\ldots,A_\ell)$ be loopless rooted Eulerian digraphs of common index~$\ell \in 2\N$. Let~$Y \subset V(G)$ and~$B \subset V(H)$ induce proper rooted~$k$-cuts in~$\bar{G}$ and~$\bar{H}$ respectively for some~$k \in 2\N$ such that $\Abs{\insc(Y)} = \Abs{\insc{B}}$. Let~$\pi(Y)=(e_1^Y,\ldots,e_k^Y)=\pi(\bar Y)$ and~$\pi(B)=(e_1^B,\ldots,e_k^B) =\pi(\bar{B})$ be orderings of~$\rho_G(Y),\rho_H(B)$ respectively. Let
    \begin{align*}
        \bar G_Y &=  \stitch(\bar G;\pi,Y) \text{ with down-stitch vertex } y_\ast,\\
       \bar G^Y &=\stitch(\bar G;\pi,\bar{Y}) \text{ with up-stitch vertex } y^\ast,\\
        \bar H_B &=  \stitch(\bar H;\pi,B) \text{ with down-stitch vertex } b_\ast,\\
        \bar H^B &=\stitch(\bar H;\pi,\bar{B}) \text{ with up-stitch vertex } b^\ast, 
    \end{align*}

    respectively. Let $(\mu_Y,\m_Y,\Phi_Y), (\mu^Y,\m^Y,\Phi^Y)$ be a $\Omega$-sleeves for $\bar G_Y$ and $\bar G^Y$ respectively, and let $(\nu_B,\n_B,\Psi_B)$, $(\nu^B,\n^B,\Psi^B)$ be $\Omega$-sleeves for $\bar H_B$ and $\bar H^B$ respectively; denote the respective $\Omega$-knitworks by $\GGG_Y,\GGG^Y$ as well as $\HHH_B,\HHH^B$. Further let
    \begin{align*}
        \gamma_d:&  \HHH_B \hookrightarrow \GGG_Y, \text{ with } \gamma_d(b_\ast) = y_\ast,\text{ and}\\
        \gamma_u:&  \HHH^B \hookrightarrow \GGG^Y , \text{ with } \gamma_u(b^\ast) = y^\ast,
    \end{align*}be strong~$\Omega$-knitwork immersions. Fix orderings~$\pi_{G_Y}(Y) \coloneqq \pi_{G^Y}(y^*)$ of $\rho_G(Y)$ as well as~$\pi_{H_B}(B) \coloneqq \pi_{H^B}(b^*)$ of $\rho_H(B)$. Let $\GGG \coloneqq \knit( \GGG^Y, \GGG_Y; \pi_{G^Y}, y^*,\pi_{G_Y},Y)$ and $\HHH \coloneqq \knit( \HHH^B, \HHH_B; \pi_{H^B},b^*,\pi_{H_B},B)$.
    
    Then there exists a strong~$\Omega$-knitwork immersion~$\gamma:\HHH \hookrightarrow \GGG$ such that~$\restr{\gamma}{H[B]} = \restr{\gamma_u}{H_B[B]}$ and~$\restr{\gamma}{H[\bar{B}]} = \restr{\gamma_d}{H^B[\bar{B}]}$. The same holds true for weak immersion $\hookrightarrow^*$.
\end{theorem}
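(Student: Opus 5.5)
The plan is to reduce the statement to \cref{thm:knitting_knitwork_immersion}: I would show that $\GGG$ and $\HHH$, built by knitting, are exactly the $\Omega$-knitworks whose stitches at $Y$ and $B$ are $\GGG_Y,\GGG^Y$ and $\HHH_B,\HHH^B$. Concretely, set $\GGG \coloneqq \knit(\GGG^Y,\GGG_Y;\pi_{G^Y},y^*,\pi_{G_Y},Y)$. By \cref{lem:stitch-and-knit} applied to $\bar G$ and the given sleeves, $\GGG$ is a well-defined $\Omega$-knitwork with underlying rooted Eulerian digraph $\bar G$. Its sleeve agrees with $(\mu_Y,\m_Y,\Phi_Y)$ on $Y$ and with $(\mu^Y,\m^Y,\Phi^Y)$ on $\bar Y$. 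The same argument applies to $\HHH$ with $B$.

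Next I would compare $\stitch(\GGG;\pi,Y)$ with the given $\GGG_Y$, and similarly for the other three stitches. The underlying rooted digraphs coincide by definition, since $\stitch(\bar G;\pi,Y)=\bar G_Y$. On vertices of $Y$ the sleeves agree by \cref{obs:stitching_fundamentals_knitworks} combined with the previous paragraph. The sleeves may differ only at the stitch vertices, and this is where the main obstacle lies. The stitch-induced sleeve has $\mu(y_*)=\pi(Y)$, $\m(y_*)=\MM_\GGG(\rho,\bar Y)$ and $\Phi$ undefined at $y_*$, whereas the given $(\mu_Y,\m_Y,\Phi_Y)$ is arbitrary there. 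At the up-stitch vertices nothing is defined in the stitched knitwork, while the given sleeves might define $\mu^Y(y^*)$ and related data.

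To handle this I would not invoke \cref{thm:knitting_knitwork_immersion} as a black box. Instead I would rerun its proof with the given $\gamma_d,\gamma_u$ and observe that the sleeve data at $y_*,y^*,b_*,b^*$ is never used there beyond two facts. First, $\gamma_d(b_*)=y_*$ and $\gamma_u(b^*)=y^*$. Second, the ordering of the cut edges is respected, which yields \cref{claim:stitch_knit_corr_paths_under_partial_immersion_uno,claim:stitch_knit_corr_paths_under_partial_immersion_dos}. Where $\mu$ is defined at the stitch vertex, this ordering information comes from \crefdef{def:knitwork_immersion}{3} with the fixed identification $\pi_{G_Y}(Y)=\pi_{G^Y}(y^*)$. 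Otherwise it comes from the rooted-immersion condition \crefdef{def:rooted_immersion}{3}: $y^*$ is a root of $\bar G^Y$, so $\gamma_u$ respects $\pi_{G^Y}(y^*)$. Because the two orderings are fixed to agree, the edge $e_i^Y$ is the common end of $\gamma_u(e_i^B)$ and $\gamma_d(e_i^B)$. One caveat concerns the down-stitch side: if $\mu_Y$ is undefined at $y_*$, I would instead need that the paths $\gamma_d(e_i^B)$ meet $\rho(y_*)$ in the right edges. I would first try to extract this from strongness together with the degree equality at $y_*$, in the manner of \cref{obs:knitwork_immersion_on_mu}. Failing that, I would note that the intended hypothesis must include the ordering at $y_*$.

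With these claims in hand, I would define $\gamma$ exactly as in the proof of \cref{thm:knitting_knitwork_immersion}, using $\gamma_d$ on $H[B]$, $\gamma_u$ on $H[\bar B]$, and concatenation at $e_i^Y$ for the cut edges. The verification of conditions (1)--(6) of \cref{def:knitwork_immersion} would be copied from there. It only involves vertices other than the stitch vertices, whose sleeve data in $\GGG,\HHH$ is inherited from the given sleeves by \cref{obs:knitting_fundamentals_knitworks}. For condition (4) at vertices $v'\in Y$, $M_\gamma(v')=M_{\gamma_d}(v')$, because concatenation happens only along cut edges. The reduction to stable immersions from \cref{thm:knitting_knitwork_immersion_claim_stable} goes through via \cref{cor:stitch-and-knit_subknitwork}, and the weak case is verbatim.
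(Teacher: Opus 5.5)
You are right that the sleeve data at the down-stitch vertices is the crux. However, the step you plan to try first cannot succeed, and your treatment of the case where $\mu$ is defined there is also not right.

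Strongness together with $\deg(b_\ast)=\deg(y_\ast)=k$ only shows that $\gamma_d$ induces a direction-preserving bijection $\rho(b_\ast)\to\rho(y_\ast)$. That is, $e^Y_{\sigma(i)}\in E(\gamma_d(e^B_i))$ for \emph{some} permutation $\sigma$. \cref{obs:knitwork_immersion_on_mu} does not help: it assumes $v\in\dom(\mu)$ and is a consequence of condition~(3) of \cref{def:knitwork_immersion}.

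If $\mu$ is defined at $b_\ast,y_\ast$, condition~(3) ties $\gamma_d$ to the orderings $\nu_B(b_\ast)$ and $\mu_Y(y_\ast)$, which the statement leaves arbitrary. The identification $\pi_{G_Y}(Y)=\pi_{G^Y}(y^\ast)$ only governs the knitting and says nothing about them.

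On the up side the order genuinely is forced, by condition~(3) of \cref{def:rooted_immersion}. Incidentally, no sleeve data can live at the roots $y^\ast,b^\ast$ at all, so your worry about $\mu^Y(y^\ast)$ is moot. So when $\sigma\neq\mathrm{id}$, the paths $\gamma_u(e_i^B)$ and $\gamma_d(e_i^B)$ share no edge, and there is nothing to concatenate.

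This cannot be repaired, because the statement as written is false. Here is a counterexample.
\begin{itemize}
\item Let $G$ be the union of the $2$-cycles $x\leftrightarrow c$, $x\leftrightarrow d$, $c\leftrightarrow a$, $d\leftrightarrow b$, rooted at $\{x\}$, with $Y=\{x,c,d\}$.
\item Let $H$ be the same on $x',c',d',a',b'$, except with $c'\leftrightarrow b'$ and $d'\leftrightarrow a'$, rooted at $\{x'\}$, with $B=\{x',c',d'\}$.
\item Give $a,b,c,d$ four pairwise incomparable labels, give each primed vertex the label of its unprimed counterpart, and leave $\mu,\m$ undefined everywhere.
\item Let $\pi(B)$ list the cut edges at $b'$ and then those at $a'$, and let $\pi(Y)$ list those at $b$ and then those at $a$, with matching directions.
\end{itemize}
Then $b'\mapsto b$, $a'\mapsto a$ gives $\gamma_u$. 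The identity on $x',c',d'$ (cut edges at $c'$ to those at $c$, at $d'$ to those at $d$) is a strong $\gamma_d$ with $\gamma_d(b_\ast)=y_\ast$. Yet $\HHH\not\hookrightarrow^{*}\GGG$. The root and the labels force $x'\mapsto x$, $c'\mapsto c$, $d'\mapsto d$, $b'\mapsto b$. Then the images of both $x'\to d'$ and $c'\to b'$ must use the unique edge $x\to d$.

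So your fallback is not optional. One must assume $\nu_B(b_\ast)=\pi(B)$ and $\mu_Y(y_\ast)=\pi(Y)$, or directly that $e_i^Y\in E(\gamma_d(e_i^B))$ for all $i$.

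This is what the paper's two-line proof tacitly uses. It asserts $\stitch(\GGG;\pi,Y)=\GGG_Y$ and $\stitch(\HHH;\pi,B)=\HHH_B$ via \cref{lem:stitch-and-knit}, and then applies \cref{thm:knitting_knitwork_immersion}. By \cref{def:stitching_knitwork}, these equalities hold only if the given sleeves at $y_\ast,b_\ast$ are the stitch-induced ones. In the paper's own applications, e.g.\ \cref{lem:knitting_immersions_decomposition_handson}, the stitched cut is a root set, so the root condition supplies the order and nothing breaks there.

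With the extra hypothesis, your rerun of the proof of \cref{thm:knitting_knitwork_immersion} is correct. It is also slightly more general than the paper's black-box reduction. It shows that $\m$ and $\Phi$ at the down-stitch vertices play no role. The black-box route, by contrast, needs $\gamma_d$ to be an immersion between the actual stitches.
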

\begin{proof}
    By \cref{obs:stitching_fundamentals_knitworks} and \cref{lem:stitch-and-knit}  we derive that $\stitch(\HHH;\pi,B) = \HHH_B$ and $\stitch(\HHH;\pi,\bar B) = \HHH^B$ as well as  $\stitch(\GGG;\pi,Y) = \GGG_Y$ and $\stitch(\GGG;\pi,\bar Y) = \GGG^Y$. The claim then follows from \cref{thm:knitting_knitwork_immersion} applied to $\HHH$ and $\GGG$ respectively.
\end{proof}

Let $\Omega$ be a well-quasi-order and let~$\GGG$ and~$\HHH$ be clamped $\Omega$-knitworks with underlying rooted Eulerian digraphs $\bar G = (G,\pi,X)$ and $\bar H = (H,\pi,A)$. Let $Y \subset V(G)$ induce a proper rooted~$k$-cut for some $k \in 2\N$ in $\bar G$.
The following is imminent by \cref{def:rooted_cut} and \cref{obs:stitching_fundamentals,obs:stitching_fundamentals_knitworks}. 
\begin{observation}
   It holds $\insc(Y) = \{X\}$ and $\outc(Y) = \emptyset$. Further, let $\pi(Y)=\pi(\bar Y)$ be orderings of $\rho(Y)$, then $\stitch(\GGG;\pi,Y)$ and $\stitch(\GGG; \pi,\bar Y))$ are clamped $\Omega$-knitworks.
\end{observation}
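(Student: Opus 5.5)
The plan is to unfold the definitions, since the statement is essentially bookkeeping. First, for $\insc(Y)=\{X\}$ and $\outc(Y)=\emptyset$: because $\GGG$ is clamped, its underlying rooted digraph has index $\ell=1$ with the single root set $X$. By \cref{def:rooted_cut}, $Y$ being a rooted cut means $X\subseteq Y$ or $X\cap Y=\emptyset$. Properness means $\insc(Y)\neq\emptyset$, which forces $\insc(Y)=\{X\}$. Since $\insc(Y)$ and $\outc(Y)$ partition $\{X\}$, we get $\outc(Y)=\emptyset$.

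Next, the down-stitch. By \cref{def:stitching_std}, $\stitch(\bar G;\pi,Y)=(G_Y,\pi,X^{\insc}_1,\ldots,X^{\insc}_\tau)$ with $\tau=\Abs{\insc(Y)}=1$, so it is rooted in the single set $X$. By \crefthm{obs:stitching_fundamentals}{3} it is a well-defined rooted Eulerian digraph of index $1$. By \crefthm{obs:stitching_fundamentals_knitworks}{1}, $\stitch(\GGG;\pi,Y)$ is a well-defined $\Omega$-knitwork over it, hence it is clamped.

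Then the up-stitch. Here $\stitch(\bar G;\pi,\bar Y)=(G^Y,\pi,y^*,X^{\outc}_1,\ldots,X^{\outc}_{\ell-\tau})$ with $\ell-\tau=0$, so its only root set is $\{y^*\}$, with ordering $\pi(y^*)=\pi(\bar Y)$. Again \cref{obs:stitching_fundamentals,obs:stitching_fundamentals_knitworks} give a well-defined $\Omega$-knitwork of index $1$, so it is clamped. Since $\Abs{\{y^*\}}=1$, it is in fact planted.

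The only point needing care is an index mismatch in \crefthm{obs:stitching_fundamentals}{4}: it states the index as $\tau-\ell+1$, which should read $\ell-\tau+1$. I would recompute the index directly from \cref{def:stitching_std} rather than quote that formula. Otherwise there is no real obstacle.
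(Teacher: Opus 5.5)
Your proposal is correct and follows the paper's approach. The paper treats the observation as immediate from \cref{def:rooted_cut} together with \cref{obs:stitching_fundamentals,obs:stitching_fundamentals_knitworks}, which is the same unfolding you carry out. You are also right that the index in part 4 of \cref{obs:stitching_fundamentals} should read $\ell-\tau+1$; the typo does no harm here, since $\ell=\tau=1$.
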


Assume further that $\delta(A) = k$. Then we write $\knit(\HHH,\GGG;\pi,Y)$ instead of $\knit(\HHH,\GGG;\pi_H,A,\pi_G,Y)$ for simplicity, as we may only knit $\HHH$ to $\GGG$ along $\pi(A)$ by \cref{def:knitting_knitworks,def:knitting_rooted_graphs}.

\cref{thm:knitting_knitwork_immersion} allows us to make the following simplification regarding clamped~$\Omega$-knitworks.

\begin{corollary}\label{cor:immersion_of_stitches_yields_immersion}
    Let~$\GGG= \big((G,\pi,X),\mu,\m,\Phi\big)$ and~$\HHH= \big((H,\pi,A),\nu,\n,\Psi\big)$ be clamped~$\Omega$-knitworks. Let~$\GGG_X \coloneqq \stitch(\GGG;\pi,X)$ and $\GGG^X \coloneqq \stitch(\GGG;\pi,\bar{X})$, as well as~$\HHH_A \coloneqq \stitch(\HHH;\pi,A)$ and~$\HHH^A \coloneqq \stitch(\HHH;\pi,\bar{A})$. Then~$$\HHH \hookrightarrow \GGG \text{ if and only if } \HHH_A \hookrightarrow \GGG_X \text{ and } \HHH^A \hookrightarrow \GGG^X,$$ 
   and the same holds true for immersion $\hookrightarrow^*$.
\end{corollary}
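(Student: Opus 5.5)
The ``if'' direction is a direct application of \cref{thm:knitting_knitwork_immersion} with $Y\coloneqq X$ and $B\coloneqq A$. Both $X$ and $A$ induce proper rooted cuts, since $\insc(X)=\{X\}$ and $\insc(A)=\{A\}$. Both have the same order: the immersions $\HHH_A\hookrightarrow\GGG_X$ and $\HHH^A\hookrightarrow\GGG^X$ are rooted immersions, so $\delta(A)=\delta(X)=k$. We take the orderings $\pi(X)$ and $\pi(A)$ given by the clamping.

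The extra hypotheses $\gamma_d(a_*)=x_*$ and $\gamma_u(a^*)=x^*$ come for free:
\begin{itemize}
\item In $\GGG^X$ and $\HHH^A$ the up-stitch vertices are the unique roots, and a rooted immersion maps roots to roots by \crefdef{def:rooted_immersion}{2}. Hence $\gamma_u(a^*)=x^*$.
\item In $\GGG_X$ the root set is $X$ itself, so $\bar X=\{x_*\}$. A rooted immersion maps $\bar A=\{a_*\}$ into $\bar X=\{x_*\}$, so $\gamma_d(a_*)=x_*$.
\end{itemize}
The theorem then glues $\gamma_d$ and $\gamma_u$ into $\gamma\colon\HHH\hookrightarrow\GGG$, for both strong and weak immersion.

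For the ``only if'' direction, fix $\gamma\colon\HHH\hookrightarrow\GGG$. By \cref{lem:knitwork_imm_is_quasi_order} (its first claim) we may assume $\gamma$ is stable, or else replace $\GGG$ by a witness; the stitches of a witness are sub-knitworks of the stitches of $\GGG$ by \cref{cor:stitch-and-knit_subknitwork}, and immersions lift by \cref{obs:knitwork_imm_lifts_to_subknitwork}. By \crefdef{def:rooted_immersion}{2}, $\gamma$ maps $A$ into $X$ and $\bar A$ into $\bar X$. By \crefdef{def:rooted_immersion}{3}, for each $e\in\rho(A)$ the path $\gamma(e)$ contains exactly the one cut edge $\eta(e)\in\rho(X)$.

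We then restrict $\gamma$ to each side.
\begin{itemize}
\item On the up-stitch side, set $\gamma_u(a^*)=x^*$ and let $\gamma_u$ agree with $\gamma$ on $H[\bar A]$. For $e\in\rho(A)$, let $\gamma_u(e)$ be the subpath of $\gamma(e)$ lying in $\bar X$ together with the edge $\eta(e)$, whose end at $X$ is now attached to $x^*$. This is a path in $G^X$, and its start and end are correct because $\gamma(e)$ crosses the cut exactly once.
\item On the down-stitch side we proceed symmetrically, with $x_*$ playing the role of $\bar X$.
\end{itemize}

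Two points need checking.
\begin{itemize}
\item Paths of $\gamma$ for edges inside $H[A]$ must stay inside $G[X]$. Such a path cannot use any edge of $\rho(X)$, since all of those are already used by the images of $\rho(A)$ (counting $|\rho(A)|=|\rho(X)|$). The same argument works inside $\bar A$.
\item The sleeve conditions of \cref{def:knitwork_immersion} must hold.
  \begin{itemize}
  \item On old vertices, $\mu,\m,\Phi$ are inherited by \cref{obs:stitching_fundamentals_knitworks}.
  \item At $x_*$, $\mu_X(x_*)=\pi(X)$ and $\nu_A(a_*)=\pi(A)$, and $\gamma$ respects the root order; this gives condition (3). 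Condition (2) holds because both $a_*$ and $x_*$ lie in the domains.
  \item Condition (5) at $a_*$ is the delicate step. For every $M\in\n_A(a_*)=\MM_\HHH(\rho,\bar A)$ we need a corresponding element of $\MM_\GGG(\rho,\bar X)$.
  \end{itemize}
\end{itemize}

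I would handle condition (5) at $a_*$ as follows. $M$ is the type of an $\n$-respecting strong linkage $\LLL$ through $\bar A$. Its image $\gamma(\LLL)$, restricted to $\bar X$ as above, is a linkage through $\bar X$ whose type is $M$ renamed via $\eta$. It is $\m$-respecting: at vertices outside $\gamma(V(H))$ this follows from condition (4) for $\gamma$ and reliability of the links, and at image vertices it follows from condition (5). It is strong, because $\gamma$ is strong and the paths of $\LLL$ have no internal vertices in $\gamma(V(H))$ other than through strongly routed edges. In the weak case strongness of $\gamma(\LLL)$ is the point most likely to need care: if internal vertices of $\gamma(e)$ hit $\gamma(V(H))$ at a vertex of $\dom(\mu)$, this is excluded by (3), but other vertices might not be. So I expect to argue there via $M_\gamma$ and reliability, or by checking that the linkage definition only forbids internal visits to its own endpoint set. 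The map $M\mapsto\eta(M)$ is injective, which settles (5); all remaining conditions are routine.
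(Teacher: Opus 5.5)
Your proposal is correct and follows the paper. The backward direction is exactly the paper's application of \cref{thm:knitting_knitwork_immersion} with $Y=X$, $B=A$, with the stitch-vertex hypotheses verified as you do; the forward direction, which the paper dismisses as clear, is the restriction argument you give.

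The point you left open in the weak case is immediate. The restricted linkage through $\bar X$ has all its endpoints in $X$ and all its internal vertices in $\bar X$, so it is strong whether or not $\gamma$ is. Moreover, conditions (2) and (3) of \cref{def:knitwork_immersion} force any $v=\gamma(u)\in\dom(\m)$ to satisfy $u\in\dom(\n)\subseteq\dom(\nu)$ with every edge at $v$ an end of an image path. Hence no path passes through $v$ internally, and the pairs at $v$ are just the renamed pairs of $M_\LLL(u)$, which condition (5) handles.
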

\begin{proof}
    The forward implication is clear and the backward implication follows from \cref{thm:knitting_knitwork_immersion}, noting that the additional assumptions on $\gamma_d$ and $\gamma_u$ are trivially satisfied.
\end{proof}

\subsection{Decomposing Knitworks} 
\label{subsec:decomp_and_altering_kitworks}
\cref{thm:knitting_knitwork_immersion} is a powerful tool, that allows us to simplify the question whether given two $\Omega$-knitworks $\HHH$ and $\GGG$ it holds $\HHH \hookrightarrow \GGG$, by decomposing $\HHH$ and $\GGG$ into smaller pieces via stitching, inductively proving that the stitches can be immersed into each other, and then knitting the pieces back together using \cref{thm:knitting_knitwork_immersion} to prove $\HHH \hookrightarrow \GGG$. On a high level it is the main tool that allows us to inductively prove well-quasi-order results by decomposing our graphs into pieces for which we have already proved well-quasi-ordering. In this section, we provide a first ``meta theorem'' in that direction.

Recall the \cref{def:torso_and_pieces} of pieces of $\GGG$.
\begin{definition}\label{def:stitch_class}
    Let $\Omega$ be a well-quasi-order. Let $\mathbf{G}(\ell;\Omega)$ be a class of $\Omega$-knitworks of index $\ell$. We define $\stitch(\mathbf{G}(\ell;\Omega))$\index{stitch!of $\mathbf{G}(\ell;\Omega)$} to be the class of $\Omega$-knitworks obtained by adding for every $\GGG \in \mathbf{G}(\ell;\Omega)$ the pieces of $\GGG$.
    
    We define $\torso(\mathbf{G}(\ell;\Omega))$\index{torso!of $\mathbf{G}(\ell;\Omega)$} to be the class of $\Omega$-knitworks obtained by adding $\torso(\GGG)$ for every $\GGG \in \mathbf{G}(\ell;\Omega)$.
\end{definition}

\begin{remark}
    Recall that if $\GGG$ is clamped, then $\bar G = (G, \pi,X)$ for some $X \subseteq V(G)$, and we have $\torso(\GGG) = \stitch(\GGG;\pi,\bar X)$ by definition.
\end{remark}

The following is an easy observation from the \cref{def:rooted_immersion} of rooted immersion, where the first part follows from \crefdef{def:rooted_immersion}{2}, noting that the down-stitch vertex is the only non-root vertex, and the second part uses that torsos are controlled.
\begin{observation}\label{obs:immersion_on_torso_stitches_respects_stitchvertices}
    Let $\Omega$ be a well-quasi-ordered. Let $\mathbf{G}(\ell;\Omega)$ be a class of $\Omega$-knitworks of index $\ell$.
    \begin{enumerate}
        \item Let $\GGG_1,\GGG_2 \in \stitch(\mathbf{G}(\ell;\Omega))$ with down-stitch vertices $x^1_*$ and $x^2_*$ respectively.
    Let $\gamma: \GGG_1 \hookrightarrow^* \GGG_2$, then $\gamma(x^1_*)=x^2_*$ and $\gamma$ is stable.\label{obs:immersion_on_torso_stitches_respects_stitchvertices:1}
    \item Let $\GGG,\HHH \in \torso(\mathbf{G}(\ell;\Omega))$ with up-stitch vertices $(x_1^*,\ldots,x_\ell^*)$ and $(y_1^*,\ldots, y_\ell^*)$ respectively.
    Let $\gamma: \GGG \hookrightarrow^* \HHH$, then $\gamma(x_i^*)=y_i^*$ for every $1 \leq i \leq \ell$.\label{obs:immersion_on_torso_stitches_respects_stitchvertices:2}
    \end{enumerate}
\end{observation}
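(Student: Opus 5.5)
Both parts come straight from the definitions of rooted immersion (\cref{def:rooted_immersion}) and knitwork immersion (\cref{def:knitwork_immersion}), combined with the structure of pieces and torsos (\cref{def:torso_and_pieces}, \cref{obs:stitching_fundamentals}).

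For Part~\ref{obs:immersion_on_torso_stitches_respects_stitchvertices:1}, let $\GGG_j$ be the $i_j$-th piece of some $\GGG'_j\in\mathbf{G}(\ell;\Omega)$. By \crefthm{obs:stitching_fundamentals}{3} and \cref{def:stitching_std}, the underlying rooted digraph is $(G_j,\pi,X_{i_j})$, where $V(G_j)=X_{i_j}\cup\{x^j_*\}$. So $\GGG_j$ is clamped, its root set is $X_{i_j}$, and $\overline{X_{i_j}}=\{x^j_*\}$. A rooted immersion $\gamma$ must map $\bar X$ into $\bar X'$ by \crefdef{def:rooted_immersion}{2}. Applied to the single non-root vertex, this forces $\gamma(x^1_*)=x^2_*$.

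Stability needs a little more. Let $\GGG_2'\subseteq\GGG_2$ be any witness. By \cref{def:subknitwork}, $\GGG_2'$ keeps the root sets and $\rho_{G_2'}(X_{i_2})=\rho_{G_2}(X_{i_2})$. Every edge of $G_2$ has an endpoint in $X_{i_2}$, and the only edges leaving it go to $x^2_*$, so $\rho_{G_2}(x^2_*)=\rho_{G_2}(X_{i_2})$ lies in $G_2'$. Moreover, $x^1_*\in\dom(\mu)$ with $\mu(x^1_*)$ of full length $k=\delta(X_{i_1})=\delta(X_{i_2})$. By \crefdef{def:knitwork_immersion}{3}, $\mu'(x^2_*)$ has length $k$, so the degree of $x^2_*$ is not reduced. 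What remains is $E(G_2[X_{i_2}])$, which a sub-knitwork could delete, and this is the point I expect to be the main obstacle. I would first check whether the intended argument is simply that $\gamma$ may be taken with witness $\GGG_2$ itself. This holds because enlarging the witness back to $\GGG_2$ only adds edges not used by $\gamma$ and does not touch vertices in $\dom(\m)$ along used routes, while condition~(4) on $M_\gamma$ is unaffected. Reliability of the links lets us pass matchings from $\m'$ to $\m$: if $\MMM'=\MMM\cap\operatorname{Match}(\rho^-_{H},\rho^+_H)$ and $M_\gamma\in \m'(v)$, then $M_\gamma\subseteq \MMM$ for some $\MMM\in\m(v)$, so $M_\gamma\in\m(v)$. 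Condition~(5) lifts the same way via the injection into $\m'$ composed with the inclusion $\MMM'\mapsto$ some preimage. Conditions~(2), (3) and (6) are unchanged because $\mu,\Phi$ restricted agree. Hence $\GGG_2$ is a witness too, so $\gamma$ is stable.

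For Part~\ref{obs:immersion_on_torso_stitches_respects_stitchvertices:2}, the torso is controlled by the remark after \cref{def:torso_and_pieces}, with roots $\{x_1^*\},\ldots,\{x_\ell^*\}$ in $\GGG$ and $\{y_1^*\},\ldots,\{y_\ell^*\}$ in $\HHH$. The rooted immersion maps the $i$-th root set into the $i$-th root set by \crefdef{def:rooted_immersion}{2}. Since these sets are singletons, $\gamma(x_i^*)=y_i^*$.
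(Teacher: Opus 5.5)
\paragraph{Verdict.} Your argument that $\gamma(x^1_*)=x^2_*$ in Part~1, and your Part~2, match the paper: the down-stitch vertex is the only non-root vertex of a piece, torsos are controlled, and \crefdef{def:rooted_immersion}{2} then fixes the vertices. The stability step does not hold up as written.

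\paragraph{The gap.} You justify stability by a general claim: any witness $\GGG_2'\subseteq\GGG_2$ can be enlarged back to $\GGG_2$, with conditions (2), (3) and (6) ``unchanged because $\mu,\Phi$ restricted agree''. This is false in general. By \cref{def:subknitwork}, $\mu'(w)$ is the subsequence of $\mu(w)$ on the retained edges. So $\mu'(w)$ is strictly shorter whenever the witness deletes edges at $w$, and then \crefdef{def:knitwork_immersion}{3} fails for the full target. For example, let $\HHH\subsetneq\GGG$ arise by deleting a cycle that passes through some $w\in\dom(\mu_\GGG)$ of degree at least four and avoids the root cuts. The identity is then an immersion witnessed by $\HHH$ but not by $\GGG$. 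If your enlargement claim held, every $\Omega$-knitwork immersion would be stable. Your ``preimage'' lift for condition (5) breaks down in the same situation, because the indices of $\mu'(\gamma(v))$ and $\mu(\gamma(v))$ no longer match.

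\paragraph{The missing fact.} What makes stability work, and what the paper's remark uses, is a fact you never state: the sleeve of a piece is supported only on its down-stitch vertex.
\begin{itemize}
\item By \cref{def:knitwork}, the maps $\mu,\m,\Phi$ of the knitwork in $\mathbf{G}(\ell;\Omega)$ from which $\GGG_j$ is cut are undefined on its root sets.
\item Hence, by \cref{def:stitching_knitwork}, writing $(\mu_j,\m_j,\Phi_j)$ for the sleeve of $\GGG_j$, we have $\dom(\mu_j)=\dom(\m_j)=\{x^j_*\}$ and $\dom(\Phi_j)=\emptyset$.
\item Your correct observation $\rho_{G_2'}(x^2_*)=\rho_{G_2}(x^2_*)$ then gives $\mu_2'(x^2_*)=\mu_2(x^2_*)$ and $\m_2'(x^2_*)=\m_2(x^2_*)$. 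So the witness and $\GGG_2$ carry the same sleeve.
\end{itemize}
Consequently conditions (2), (3), (5) and (6) read identically for both. Condition (4) is vacuous, since $\dom(\m_2)=\{x^2_*\}=\gamma(\dom(\m_1))$. Condition (1) survives re-adding the edges of $G_2[X_{i_2}]$.

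The edges a witness may delete inside $X_{i_2}$, which you flagged as the main obstacle, are therefore invisible to the sleeve conditions. No general enlargement lemma, and no appeal to reliability, is needed.
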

\begin{remark}
    Note that, regarding $(1)$ of the observation, the $\Omega$-sleeve of $\GGG_i$ is only defined for $x_i^*$ whence the map is easily seen to be stable.
\end{remark}

 We have the following decomposition theorem.
 
\begin{theorem}\label{lem:knitting_immersions_decomposition}
    Let $\Omega=(V_\Omega,\preceq)$ be a well-quasi-order and let $\ell \geq 1$. Let $\mathbf{G}(\ell)$ be a class of $\Omega$-knitworks of index $\ell$.
    If $\torso(\mathbf{G}(\ell))$ and $\stitch(\mathbf{G}(\ell))$ are well-quasi-ordered by (strong) $\Omega$-knitwork immersion, then $\mathbf{G}(\ell)$ is well-quasi-ordered by (strong) $\Omega$-knitwork immersion.
\end{theorem}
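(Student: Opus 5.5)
Given an infinite sequence $(\GGG_n)_{n\in\N}$ in $\mathbf{G}(\ell)$, I will extract a good pair by repeatedly passing to infinite subsequences, using \cref{obs:wqo_yields_infinite_chain} and the hypotheses on $\torso(\mathbf{G}(\ell))$ and $\stitch(\mathbf{G}(\ell))$. I then reassemble the immersions with the knitting theorem, applied once per root set. Write $\bar G_n=(G_n,\pi_n,X^n_1,\ldots,X^n_\ell)$, let $\TTT_n \coloneqq \torso(\GGG_n)$ have up-stitch vertices $x^{n,*}_1,\ldots,x^{n,*}_\ell$, and let $\PPP^n_i \coloneqq \stitch(\GGG_n;\pi_n,X^n_i)$ be the pieces, with down-stitch vertices $x^{i}_{n,*}$.

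First I pass to an infinite index set $I_0$ on which $(\TTT_n)_{n\in I_0}$ is a chain. For $i=1,\ldots,\ell$ in turn, I refine $I_{i-1}$ to an infinite $I_i$ on which $(\PPP^n_i)_{n\in I_i}$ is a chain; this works because every piece lies in $\stitch(\mathbf{G}(\ell))$, which is well-quasi-ordered. (Equivalently, apply \cref{obs:wqo_of_tuples} to the tuples $(\TTT_n,\PPP^n_1,\ldots,\PPP^n_\ell)$.) Pick $m<n$ in $I_\ell$. This gives strong immersions $\gamma_0:\TTT_m\hookrightarrow\TTT_n$ and $\gamma_i:\PPP^m_i\hookrightarrow\PPP^n_i$. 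By \cref{obs:immersion_on_torso_stitches_respects_stitchvertices}, $\gamma_0(x^{m,*}_i)=x^{n,*}_i$ and $\gamma_i(x^i_{m,*})=x^i_{n,*}$. In particular $\delta(X^m_i)=\delta(X^n_i)$, and the cut orderings agree since $\pi_{G^*}(x^*_i)=\pi_G(X_i)$ by \cref{obs:torso}.

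Next I glue the pieces back one root at a time. Let $\GGG^{(0)}_n \coloneqq \TTT_n$, and let $\GGG^{(i)}_n$ be obtained from $\GGG^{(i-1)}_n$ by knitting $\PPP^n_i$ in at the vertex $x^{n,*}_i$. By \cref{cor:stitch-and-knit}, applied iteratively together with \cref{obs:torso,obs:torso_and_pieces_fundamentals}, this yields $\GGG^{(\ell)}_n=\GGG_n$. By induction on $i$ I claim $\GGG^{(i)}_m\hookrightarrow\GGG^{(i)}_n$, with the immersion sending $x^{m,*}_j$ to $x^{n,*}_j$ for all $j>i$. The inductive step applies \cref{thm:knitting_knitwork_immersion_general}. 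The cut $Y$ is the one induced by the $i$-th piece inside $\GGG^{(i)}_n$, so the down-stitch is $\PPP^n_i$, and by \cref{cor:stitch-and-knit} the up-stitch is $\GGG^{(i-1)}_n$ with $y^*=x^{n,*}_i$. The two given immersions map the stitch vertices correctly, which is exactly the extra hypothesis the theorem needs. The theorem also restricts to the old maps on each side, so the remaining root vertices $x^{m,*}_j$ with $j>i$ are still sent to $x^{n,*}_j$.

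I expect the bookkeeping of roots to be the main obstacle. \cref{thm:knitting_knitwork_immersion_general} is stated for knitworks of a common index, with $Y$ a proper rooted cut satisfying $\Abs{\insc(Y)}=\Abs{\insc(B)}$. In the intermediate knitworks $\GGG^{(i)}$, some roots are original sets $X_j$ and others are still single up-stitch vertices. I therefore need to check three things. First, the $i$-th piece's vertex set, namely $X_i$ together with its surroundings, induces a proper rooted cut whose $\insc$ is $\{X_i\}$ on both sides. Second, the root orderings are preserved by the knitting. Third, the immersions respect root order as in \crefdef{def:rooted_immersion}{3}. If the interleaving becomes messy, I would fall back on the clamped case, \cref{cor:immersion_of_stitches_yields_immersion}, which treats a single root directly. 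For $\ell>1$ I would then argue root by root, noting that the torso is controlled and so each $x^*_i$ is its own root set. The strong and weak cases are verbatim the same, since both cited results hold for $\hookrightarrow^*$.
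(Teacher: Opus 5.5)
Your proposal is correct and follows the paper's own proof. The paper likewise extracts chains of torsos and piece-tuples via \cref{obs:wqo_of_tuples} and \cref{obs:wqo_yields_infinite_chain}, then reassembles $\GGG_m \hookrightarrow \GGG_n$ by knitting the pieces back onto the torso one root at a time with \cref{thm:knitting_knitwork_immersion_general}; it packages this gluing step as the separate \cref{lem:knitting_immersions_decomposition_handson}, where it knits sleeve-free copies $\tilde\PPP$ of the pieces, which changes nothing since knitting discards the sleeve at the stitch vertex anyway.
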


We outsource the most crucial step of the proof of \cref{lem:knitting_immersions_decomposition} as an independent lemma.

\begin{lemma}\label{lem:knitting_immersions_decomposition_handson}
    Let $\Omega=(V_\Omega,\preceq)$ be a well-quasi-order and let $\ell \geq 1$. For $i=1,2$ let $\GGG_i$ be $\Omega$-knitworks of index $\ell$ rooted in $(\pi_i,X_1^i,\ldots,X_\ell^i)$ respectively. Let $\HHH_i \coloneqq \torso(\GGG_i)$ with up-stitch vertices $(h_1^i,\ldots,h_\ell^i)$. Let $\PPP_i^j \coloneqq \stitch(\GGG_i;\pi,X_i^j)$ with down-stitch vertex $x_i^j$ for every $1 \leq j \leq \ell$, and let $\bar P_i^j$ be its underlying rooted Eulerian digraph. Let $\gamma:\HHH_1 \hookrightarrow \HHH_2$ and $\eta_j:\bar P_1^j\hookrightarrow \bar P_2^j$ for every $1 \leq j \leq \ell$. Then there exists $\gamma^*:\GGG_1 \hookrightarrow \GGG_2$ that agrees with the maps $\gamma,\eta_1,\ldots,\eta_\ell$ where defined.

    The same holds true for weak-immersion, replacing $\hookrightarrow$ with $\hookrightarrow^*$.
\end{lemma}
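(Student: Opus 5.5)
We prove the lemma by induction on $\ell$, peeling off one root set at a time and gluing with \cref{thm:knitting_knitwork_immersion} (in the form of \cref{thm:knitting_knitwork_immersion_general}).

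First we fix conventions. By \crefthm{obs:immersion_on_torso_stitches_respects_stitchvertices}{2}, $\gamma(h^1_j)=h^2_j$ for every $j$. Since $\PPP_i^j$ is clamped with non-root vertex set consisting only of the down-stitch vertex, \crefthm{obs:immersion_on_torso_stitches_respects_stitchvertices}{1} together with \crefdef{def:rooted_immersion}{2} gives $\eta_j(x_1^j)=x_2^j$, and $\eta_j$ is stable. Here we read $\eta_j$ as a knitwork immersion $\PPP_1^j\hookrightarrow\PPP_2^j$. This needs a small check: the sleeve of a piece lives only on the down-stitch vertex, so conditions (2)--(5) of \cref{def:knitwork_immersion} reduce to respecting $\mu$ at $x^j_*$ and matching the links $\MM(\rho,\bar X^j_i)$. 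The first is automatic from rooted immersion at $x^j_*$ by \cref{obs:knitwork_immersion_on_mu}. The link condition is the delicate point, and we treat it as part of the hypothesis, namely that $\eta_j$ is an immersion of the pieces viewed as knitworks.

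For the base case $\ell=1$, $\GGG_i$ is clamped and $\HHH_i=\stitch(\GGG_i;\pi,\bar X^1_i)$. We apply \cref{cor:immersion_of_stitches_yields_immersion}, or directly \cref{thm:knitting_knitwork_immersion} with $Y=X_2^1$ and $B=X_1^1$. Its hypotheses $\gamma_d(b_*)=y_*$ and $\gamma_u(b^*)=y^*$ are exactly the facts established above. The glued map agrees with $\eta_1$ on $H[X^1_1]$ and with $\gamma$ on the rest, and on the cut edges it is the concatenation of the two pieces.

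For general $\ell$, we stitch along $Y\coloneqq X^\ell_2$ and $B\coloneqq X^\ell_1$, which are proper rooted cuts with $\insc=\{X^\ell\}$ on both sides. The down-stitches are $\PPP^\ell_i$, immersed by $\eta_\ell$. The up-stitches $\GGG_i^{(\ell)}\coloneqq\stitch(\GGG_i;\pi,\bar X^\ell_i)$ are $\Omega$-knitworks of index $\ell$, with $X_i^\ell$ replaced by the singleton root $\{y^*\}$. By \cref{obs:stitching_fundamentals,obs:torso}, the torso of $\GGG_i^{(\ell)}$ is $\HHH_i$, since iterated up-stitching commutes. For $j<\ell$, its pieces are $\PPP_i^j$ up to isomorphism, because $X^j_i\subseteq \bar X^\ell_i$ is untouched (this is \crefthm{obs:knitting_fundamentals}{4}, read via \cref{cor:stitch-and-knit}). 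Its $\ell$-th piece is trivial: a single root vertex stitched against $y_*$, i.e.\ a bouquet of $\delta(X^\ell)$ edges between two vertices, and it is immersed identically. We can therefore handle $\GGG^{(\ell)}_i$ by iterating the single-cut gluing over $j=1,\ldots,\ell-1$ (formally an induction on the number of non-singleton root sets), which yields $\gamma':\GGG_1^{(\ell)}\hookrightarrow\GGG_2^{(\ell)}$ agreeing with $\gamma$ and $\eta_1,\ldots,\eta_{\ell-1}$ and sending $y^*_1\mapsto y^*_2$. Finally we apply \cref{thm:knitting_knitwork_immersion} to $\gamma'$ and $\eta_\ell$, and \cref{cor:stitch-and-knit} identifies the knit with $\GGG_i$.

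We expect the main obstacle to be the bookkeeping rather than any single idea. It consists of three checks. First, stitching along different root sets commutes and produces literally the same torso and pieces, so that the hypotheses of \cref{thm:knitting_knitwork_immersion} apply verbatim at each stage. Second, the $\Omega$-sleeves match: the torso carries $\mu,\m,\Phi$ on $\bar X$, pieces carry data only on the down-stitch vertex, and knitting discards that data, so conditions (2)--(6) of \cref{def:knitwork_immersion} for $\gamma^*$ reduce to those of $\gamma$. Third, the non-stable case reduces to the stable one via the argument of \cref{thm:knitting_knitwork_immersion_claim_stable}. The weak case is verbatim.
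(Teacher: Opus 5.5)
There is a genuine gap, at exactly the point you flag yourself. The lemma only assumes rooted immersions $\eta_j:\bar P_1^j\hookrightarrow\bar P_2^j$ of the underlying rooted digraphs. You upgrade these to knitwork immersions $\PPP_1^j\hookrightarrow\PPP_2^j$, which additionally require condition (5) of \cref{def:knitwork_immersion} at the down-stitch vertex. That condition asks for an order-respecting injection of $\MM_{\GGG_1}(\rho,\bar X_1^j)$ into $\MM_{\GGG_2}(\rho,\bar X_2^j)$. This is a statement about routing through the entire outside of $X^j$. It is not given, and it is essentially as strong as the conclusion, so as written you prove a strictly weaker lemma.

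Your recursion also breaks even under the strengthened hypothesis. The pieces of $\GGG_i^{(\ell)}$ carry at their down-stitch vertex the link $\MM_{\GGG_i^{(\ell)}}(\rho,\bar X_i^j)$. This link is computed with $X_i^\ell$ collapsed to the link-free vertex $y^*$, so in general it differs from the link of $\PPP_i^j$. The two therefore agree only as rooted digraphs, not as knitworks. Likewise, the trivial $\ell$-th piece is not ``immersed identically'' as a knitwork unless its two links happen to correspond.

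The missing observation is that the sleeve at a down-stitch vertex never matters. Knitting discards it, the vertices of $X_i^j$ are roots and carry no sleeve, and \cref{thm:knitting_knitwork_immersion_general} allows arbitrary sleeves on both stitched parts. The paper therefore replaces each piece by $\tilde\PPP_i^j\coloneqq(\bar P_i^j,\emptyset,\emptyset,\emptyset)$. For these, $\eta_j$ is trivially a stable knitwork immersion $\tilde\PPP_1^j\hookrightarrow\tilde\PPP_2^j$: conditions (2)--(6) are vacuous, and the cut edges are matched correctly by the root order on $\rho(X_i^j)=\rho(x_i^j)$. Moreover, $\knit(\cdot,\tilde\PPP_i^j;\ldots)$ still reproduces the original sleeve exactly.

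The paper then builds up $\GGG_i^0\coloneqq\HHH_i$ and $\GGG_i^{p+1}\coloneqq\knit(\GGG_i^p,\tilde\PPP_i^{p+1};\pi,h_i^{p+1},\pi,X_i^{p+1})$, checks that $\GGG_i^\ell=\GGG_i$, and at each step glues $\xi_{p-1}$ with $\eta_p$. Your top-down peeling is the same induction run in reverse, and it goes through once you make this replacement.
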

\begin{proof}
The proofs for $\hookrightarrow$ and $\hookrightarrow^*$ are analogous; we provide a proof for strong $\Omega$-knitwork immersion. 
Note that by assumption of the theorem the maps $\eta_j$ are independent of $\Omega$-sleeves, i.e., it is only defined on the rooted digraphs $\bar P_i^j = (P_i^j,\pi_{\PPP_i},X_i)$ for $1 \leq j \leq \ell$ and $i=1,2$. A

Let $\tilde\PPP_i^j = \left(\bar P_i^j,\emptyset, \emptyset,\emptyset)\right)$ where $\emptyset$ means the maps are nowhere defined. Then the following is true by construction and the definition of pieces.
\begin{claim}\label{lem:knitting_immersions_decomposition_handson-claimweird}
    For every $i=1,2$ and $1 \leq i \leq j$, the $\Omega$-sleeve of $\tilde\PPP_i^j$ and the one of $\PPP_i^j$ agree everywhere except for the down-stitch vertex $x_i^j$.
\end{claim}

First note that the theorem holds for $\ell=1$ by \cref{thm:knitting_knitwork_immersion_general} and \cref{lem:stitch-and-knit} noting that for a clamped $\Omega$-knitwork $\GGG$ with underlying rooted Eulerian digraph $\bar G = (G,\pi,X_1)$ we have $\torso(\GGG) = \stitch(\GGG;\pi,\bar X_1)$ where we set $\pi(\bar X_1) \coloneqq \pi(X_1)$ as usual, together with \cref{obs:immersion_on_torso_stitches_respects_stitchvertices}, and using \cref{lem:knitting_immersions_decomposition_handson-claimweird}. 
    
 Henceforth, assume that $\ell \geq 2$. The proof is by iterative application of \cref{thm:knitting_knitwork_immersion_general} and \cref{lem:stitch-and-knit}, knitting the pieces back to the torsos (where we have the option to arbitrarily define the sleeve at the down-stitch vertices). Observe that \crefthm{obs:immersion_on_torso_stitches_respects_stitchvertices}{1} implies that $x_1^j$ is mapped to $x_2^j$ under the respective immersion $\eta_j:\bar P_1^j \hookrightarrow \bar P_2^j$ for every $1 \leq j \leq \ell$. Similarly, \crefthm{obs:immersion_on_torso_stitches_respects_stitchvertices}{2} implies that $\gamma: \HHH_1 \hookrightarrow \HHH_2$ satisfies $\gamma(h_1^p)= h_2^p$ for every $1 \leq p \leq \ell$. Let $X_i \coloneqq V(G_i) \setminus\bigcup_{j=1}^{\ell}X_i^j$ for $i=1,2$.
Inductively define $\GGG_i^0 \coloneqq \HHH_i$ and $\GGG_i^{p+1} \coloneqq \knit\hspace*{-0.7ex}\big(\GGG_i^p,\tilde\PPP_i^{p+1}; \pi_{\GGG_i^p},h_i^{p+1},\pi_{\PPP_i^{p+1}},X_i^{p+1}\big)$ for all $p<\ell$ and both $i=1,2$. Let $i \in \{1,2\}$.
    \begin{claim}\label{claim:stitching_back_the_pieces}
        $\GGG_i^p$ is a well-defined $\Omega$-knitwork for every $0 \leq p \leq \ell$ and it satisfies the following conditions:
    \begin{enumerate}[label=(\roman*)]
        \item for every $p < q \leq \ell$, $h_i^q \in V(G_i^p)$ and $\rho_{G_i^p}(h_i^q) = \rho_{H_i}(h_i^q)$ as well as $\pi_{G_i^p}(h_i^q)= \pi_{H_i}(h_i^q)$; essentially we do not touch the vertices or incidences of $h_i^q$.
        \item $V(G_i^p)$ is partitioned into $X_i$, $X_i^1,\ldots,X_i^p$, $\{h_i^j \mid  p+1\leq j \leq \ell\}$.
        \item Let $V_i^p \coloneqq X_i \cup \bigcup_{j=1}^p X_i^j$. Then $G_i[V_i^p] = G_i^p[V_i^p]$ and their $\Omega$-sleeves agree on $V_i^p$.
    \end{enumerate}
    In particular $\GGG_i^\ell = \GGG_i$.
    \end{claim}
    \begin{claimproof}
    For $p=0$ the claim is clear. Thus, assume $(i)-(iii)$ to be true for some $p \geq 0$; we give a proof for $p+1\leq \ell$.
     Note that $X_i^{p+1}$ induces a proper rooted cut in $\tilde\PPP_i^{p+1}$ and, by \cref{def:stitching_std} of stitches and \crefthm{obs:torso_and_pieces_fundamentals}{2}, it holds $\rho_{P_i^{p+1}}(X_i^{p+1}) = \rho_{P_i^{p+1}}(x_i^{p+1})= \rho_{H_i}(h_i^{p+1})$ (see also \cref{obs:stitching_fundamentals}). Again, simply by \cref{def:stitching_std} of up- and down-stitches, for $\pi_{H_i}(h_i^{p+1})=(e_1,\ldots,e_t)$, say, we have $\pi_{P_i^{p+1}}(X_i^{p+1})=(e_1,\ldots,e_t)$ for the same edges. In particular the edges satisfy $e_k\in \rho_{P_i^{p+1}}^+(X_i^{p+1}) \iff e_t\in \rho^-_{H_i}(h_i^{p+1})$ for $1 \leq k \leq t$; this is essentially a special case of \cref{obs:knitwork_immersion_on_mu}. By our assumption $(i)$ we derive that $$e_k\in \rho_{P_i^{p+1}}^+(X_i^{p+1}) \iff e_k\in \rho^-_{G_i^p}(h_i^{p+1}).$$  Thus, since $\GGG_i^p$ is a well-defined $\Omega$-knitwork by assumption, $\GGG_i^{p+1}$ is a well-defined $\Omega$-knitwork by \cref{obs:knitting_fundamentals_knitworks} as all the assumptions to apply the \cref{def:knitting_knitworks} of knitting are given.

     We are left to prove that $(ii)$ and $(iii)$ remain satisfied. For $(ii)$ this is a straightforward consequence of the fact that $X_i^j \cap X_i^t = \emptyset$ for any two distinct $j,t \in \{1,\ldots,\ell\}$, as imposed by
     \cref{def:rooted_graph} of rooted digraphs, as well as $(ii)$ for $G_i^p$ and \crefthm{obs:knitting_fundamentals}{2}. 

     Finally, $(iii)$ follows from \crefthm{obs:knitting_fundamentals}{3}, \cref{obs:knitting_fundamentals_knitworks} and \cref{lem:knitting_immersions_decomposition_handson-claimweird}, where the details are exactly as in the proof of \cref{lem:stitch-and-knit}. Note again that by definition, the resulting graph is in fact \emph{equal} and not only isomorphic to the original graph . This concludes the proof of the claim.
    \end{claimproof}

    With \cref{claim:stitching_back_the_pieces} we are left to iteratively apply \cref{thm:knitting_knitwork_immersion}. 

    \begin{claim}
        For every $0 \leq p \leq \ell$ there is a strong immersion $\xi_p:\GGG_1^p\hookrightarrow \GGG_2^p$ that agrees with $\gamma,\eta_1,\ldots,\eta_p$ where defined.
    \end{claim}
    \begin{claimproof}
        The claim is true for $p=0$ by assumption of the theorem and setting $\xi_p = \gamma$. 
       Suppose that the claim was false and let $1 \leq p \leq \ell$ be minimal with $\GGG_1^p \not\hookrightarrow \GGG_2^p$ with a respective strong immersion agreeing with $\gamma,\eta_1,\ldots,\eta_p$ where defined. Henceforth, let $1 \leq p\leq \ell$ be minimal refuting the claim. In particular there is no such $\xi_p:\GGG_1^p \hookrightarrow \GGG_2^p$. 
      For $t =1,2$ let $\AAA_t^{p} = \stitch(\GGG_t^p;\pi,X_t^{p})$ with down-stitch vertex $x_t^1$ and let $\BBB_t^p=\stitch(\GGG_t^p;\pi,\bar X_t^p)$ with up-stitch vertex $h_t^p$. By construction (and \cref{lem:stitch-and-knit}) we derive that $\AAA_t^p = \PPP_t^p$ (and not $\tilde \PPP_t^p$) and $\BBB_t^p=\GGG_t^{p-1}$. 
        
    By \cref{thm:knitting_knitwork_immersion_general} using \cref{lem:knitting_immersions_decomposition_handson-claimweird} we derive that we can ``knit'' $\eta_p:\tilde\PPP_1^p  \hookrightarrow \tilde\PPP_2^p$ and $\xi_{p-1}:\GGG_1^{p-1} \hookrightarrow \GGG_2^{p-1}$---which exists by minimality assumption on $p$---accordingly; a contradiction. Thus there is no such $p$ and the claim follows.
    \end{claimproof}
    
    By the previous claim, we have $\xi_\ell:\GGG_1^\ell \hookrightarrow \GGG_2^\ell$, which with \cref{claim:stitching_back_the_pieces} implies that there is a map $\gamma^*:\GGG_1 \hookrightarrow \GGG_2$ as desired, concluding the proof.
\end{proof}

With \cref{lem:knitting_immersions_decomposition_handson}, we have gathered all the tools to prove \cref{lem:knitting_immersions_decomposition}

\begin{proof}[Proof of \cref{lem:knitting_immersions_decomposition}]
    As before, we settle for a proof of the strong version without loss of generality. 

    Let $(\GGG_i)_{i \in \N}$ be a sequence of $\Omega$-knitworks $\GGG_i \in \mathbf{G}(\ell)$ with $\bar G_i = (G_i,\pi,X_i^1,\ldots,X_i^\ell)$ for every $i \in \N$. To prove the theorem we need to show the following. 
    \begin{itemize}
        \item[$(\star)$]There exist $i,j \in \N$ with $i<j$ such that $\GGG_i \hookrightarrow \GGG_j$. 
    \end{itemize} Let $X_i \coloneqq V(G_i) \setminus\bigcup_{j=1}^{\ell}X_i^j$. Let $\HHH_i \coloneqq \torso(\GGG_i)$ with up-stitch vertices $(h_{i}^1,\ldots,h_{i}^\ell)$ and let $\PPP_i^j \coloneqq \stitch(\GGG_i;\pi,X_i^j)$ with down-stitch vertex $x_i^j$ for every $1 \leq j \leq \ell$ and every $i \in \N$. By definition, for every $i \in \N$ and every $1 \leq j \leq \ell$ it holds $\HHH_i \in \torso(\mathbf{G}(\ell))$ and $P_i^j \in \stitch(\mathbf{G}(\ell))$. 

    Since $\stitch(\mathbf{G}(\ell))$ is well-quasi-ordered by assumption of the theorem, \cref{obs:wqo_of_tuples} implies that the class of $\ell$-tuples of elements in $\stitch(\mathbf{G}(\ell))$---we denote it by $\stitch(\mathbf{G}(\ell))^\ell$---is well-quasi-ordered by~$\preceq^\ell$, where we define $(a_1,\ldots,a_\ell) \preceq^\ell (b_1,\ldots,b_\ell)$ if and only if $a_j \preceq b_j$ for every $1 \leq j \leq \ell$. Let $\PPP_i \coloneqq (\PPP_i^1,\ldots,\PPP_i^\ell) \in \stitch(\mathbf{G}(\ell))^\ell$ for every $i \in \N$. Then \cref{obs:wqo_yields_infinite_chain} implies the existence of an infinite index set $I \subseteq \N$ such that $(\PPP_i)_{i \in I}$ is a chain with respect to $\preceq^\ell$. In particular, for $i,j \in I$ with $i<j$ we have $\PPP_i \hookrightarrow \PPP_j$, which implies that $\PPP_i^k \hookrightarrow \PPP_j^k$ for every $1 \leq k \leq \ell$. 

    Similarly, again applying \cref{obs:wqo_yields_infinite_chain}, there exists an infinite index set $J \subseteq I$ such that $(\HHH_i)_{i\in J}$ is a chain with respect to $\preceq$. Let $i,j \in J$ with $i<j$, then $\HHH_i \hookrightarrow \HHH_j$ as well as $\PPP_i^t \hookrightarrow \PPP_j^t$ for every $1 \leq t \leq \ell$ by construction of $J$. By \cref{lem:knitting_immersions_decomposition_handson} we derive that $\GGG_i \hookrightarrow \GGG_j$, proving $(\star)$.
\end{proof}

The following is a generalisation of \cref{lem:knitting_immersions_decomposition}.

\begin{theorem}\label{thm:knitting_immersions_decomposition}
    Let $\Omega$ be a well-quasi-order and let $\Omega^*$ be an extension of $\Omega$. Let $\ell \geq 1$ and let $\mathbf{G}(\ell;\Omega)$ be a class of $\Omega$-knitworks of index $\ell$. Let $\mathbf{G^*}(\Omega^*)$ be a class of $\Omega^*$-knitworks well-quasi-ordered by (strong) $\Omega^*$-knitwork immersion.

    If $\stitch(\mathbf{G}(\ell;\Omega)),\torso(\mathbf{G}(\ell;\Omega)) \subseteq \mathbf{G^*}(\Omega^*)$, then  $\mathbf{G}(\ell;\Omega)$ is well-quasi-ordered by (strong) $\Omega$-knitwork immersion.
\end{theorem}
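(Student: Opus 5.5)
We reduce \cref{thm:knitting_immersions_decomposition} to \cref{lem:knitting_immersions_decomposition}, or more directly to \cref{lem:knitting_immersions_decomposition_handson}, by re-running the Ramsey-type argument of the previous proof inside the larger order $\Omega^*$. The key observation is that every $\Omega$-knitwork is also an $\Omega^*$-knitwork: $\Phi$ takes values in $V(\Omega)\subseteq V(\Omega^*)$, and all other data of \cref{def:knitwork} is independent of the order. Moreover, since $\restr{\preceq^*}{V(\Omega)\times V(\Omega)}=\preceq$, for $\Omega$-knitworks $\HHH,\GGG$ a map $\gamma$ is an (strong) $\Omega$-knitwork immersion if and only if it is an (strong) $\Omega^*$-knitwork immersion. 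Only condition (6) of \cref{def:knitwork_immersion} refers to the order, and it compares two labels lying in $V(\Omega)$. Hence $\mathbf{G}(\ell;\Omega)$, $\stitch(\mathbf{G}(\ell;\Omega))$ and $\torso(\mathbf{G}(\ell;\Omega))$ may all be viewed as classes of $\Omega^*$-knitworks with the same immersion relation.

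Now let $(\GGG_i)_{i\in\N}$ be a sequence in $\mathbf{G}(\ell;\Omega)$. As in the proof of \cref{lem:knitting_immersions_decomposition}, we form the torsos $\HHH_i=\torso(\GGG_i)$ and the pieces $\PPP_i^1,\ldots,\PPP_i^\ell$. By hypothesis these all lie in $\mathbf{G^*}(\Omega^*)$, which is well-quasi-ordered. We therefore apply \cref{obs:wqo_of_tuples} to the $(\ell+1)$-tuples $(\HHH_i,\PPP_i^1,\ldots,\PPP_i^\ell)$, regarded in the product of $\ell+1$ copies of $\mathbf{G^*}(\Omega^*)$. Together with \cref{obs:wqo_yields_infinite_chain}, this yields indices $i<j$ with $\HHH_i\hookrightarrow\HHH_j$ and $\PPP_i^t\hookrightarrow\PPP_j^t$ for all $t$, all as $\Omega^*$-knitwork immersions. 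By the observation in the first paragraph, these are also $\Omega$-knitwork immersions.

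To finish, we pass from the knitwork immersions of the pieces to immersions $\eta_t:\bar P_i^t\hookrightarrow\bar P_j^t$ of the underlying rooted digraphs. This is simply forgetting the sleeve conditions, since condition (1) of \cref{def:knitwork_immersion} gives a rooted immersion into a sub-digraph, and hence into $\bar P_j^t$. We then invoke \cref{lem:knitting_immersions_decomposition_handson} to obtain $\GGG_i\hookrightarrow\GGG_j$. The weak case is verbatim with $\hookrightarrow^*$.

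The main point to check is this last step: that a knitwork immersion between pieces indeed restricts to a rooted immersion $\bar P_i^t\hookrightarrow\bar P_j^t$ as required by the lemma. The witness may be a proper sub-knitwork, but \cref{obs:immersion_on_torso_stitches_respects_stitchvertices} shows the map is stable and sends $x_i^t$ to $x_j^t$, so the restriction goes through. The only other subtlety is confirming that membership of the torsos and pieces in $\mathbf{G^*}(\Omega^*)$, rather than in an $\Omega$-class, causes no loss; this is exactly the label-compatibility observation above.
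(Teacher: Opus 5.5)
Your proposal is correct and matches the paper's argument. The paper notes that $\stitch(\mathbf{G}(\ell;\Omega))$ and $\torso(\mathbf{G}(\ell;\Omega))$ are subclasses of $\mathbf{G^*}(\Omega^*)$, hence well-quasi-ordered, and then cites \cref{lem:knitting_immersions_decomposition}, whereas you unfold that theorem's proof via \cref{lem:knitting_immersions_decomposition_handson}; your check that $\Omega^*$-knitwork immersion coincides with $\Omega$-knitwork immersion on $\Omega$-knitworks (only condition (6) involves the order, and $\preceq^*$ restricts to $\preceq$) is exactly the step the paper leaves implicit.
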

\begin{proof}
    This follows at once from \cref{lem:knitting_immersions_decomposition} and the fact that subclasses of well-quasi-ordered classes remain well-quasi-ordered.
\end{proof}

By our definition of $\Omega$-knitwork, its pieces have almost nowhere defined $\Omega$-sleeves. We can prove a stronger, more technical version of \cref{lem:knitting_immersions_decomposition_handson} useful for inductive purposes in future work as well as for the inductive proof of \cref{thm:intro}; noting that in the above results the restriction that their sleeves were almost nowhere defined was irrelevant.

\begin{definition}\label{def:focus}
    Let $\Omega$ be a well-quasi-order. Let $\GGG=((G,\pi,X),\mu,\m,\Phi)$ be a clamped $\Omega$-knitwork. Let $\ell \geq 1$ and let $\LL^- \coloneqq \left((X_1,\ldots X_\ell),\pi')\right)$ such that $X_i,X_j \subset V(G)$ are pairwise disjoint for $1\leq i<j\leq \ell$ and such that $\bar X_i$ induces a proper rooted cut in $\GGG$. Further $\pi'(X_i) = \pi'(\bar X_i)$ fixes some linear order on $\rho(X_i)$ for every $1 \leq i \leq \ell$. Then we call $\LL^-$ a \emph{location} in $\GGG$. Let $Y \coloneqq \bar X \cap \bar X_1 \cap \ldots\cap \bar X_\ell$.

    We define $\GGG(X_1,\ldots,X_\ell) \coloneqq ((G,\pi^*,X, X_1,\ldots,X_\ell),\mu',\m',\Phi')$ to be the $\Omega$-knitwork obtained by letting $f' \coloneqq \restr{f}{Y}$ for all $f' \in \{\mu,\m,\Phi\}$ and letting $\pi^*(X) \coloneqq \pi(X)\eqqcolon \pi^*(\bar X)$ as well as $\pi^*(X^i)\coloneqq \pi(X^i)\eqqcolon \pi^*(\bar X^i)$ for every $1 \leq i \leq \ell$. 
    Define $\GGG^{\bar X_i} \coloneqq \stitch(\GGG;\pi^*,X_i)$ with up-stitch vertex $x_i^*$. 

    Then we call $\GGG(X_1,\ldots,X_\ell)$ the \emph{$\LL^-$-focus of $\GGG$}, and $\GGG^{\bar X_i}$ its \emph{$i$-th piece}.
\end{definition}
\begin{remark}
    By definition $\GGG(X_1,\ldots,X_\ell)$ is an $\Omega$-knitwork (the sleeve is no longer defined on the new root sets). Note that the focus depends on a choice of $\pi'$.
\end{remark}

\begin{lemma}\label{lem:knitting_immersions_decomposition_handson_located}
    Let $\Omega=(V_\Omega,\preceq)$ be a well-quasi-order and let $\ell \geq 1$. For $i=1,2$ let $\GGG_i$ be clamped $\Omega$-knitworks rooted in $(\pi_i,X^i)$ respectively. Let $\LL_i^-=((X_1^i,\ldots,X_\ell^i),\pi_i')$ be a location in $\GGG_i$. Let $\HHH_i \coloneqq \torso(\GGG_i(X_1^i,\ldots,X_\ell^i))$, let $\PPP_i \coloneqq \stitch(\GGG_i(X_1^i,\ldots,X_\ell^i); \pi_i,X^i)$ with down-stitch vertex $x^i_*$ and let $\GGG_i^j \coloneqq \stitch(\GGG_i;\pi_i',X_i^j)$ with up-stitch vertex $x_i^j$ for every $1 \leq j \leq \ell$. Assume that $\gamma:\HHH_1 \hookrightarrow \HHH_2$, $\eta:\PPP_1\hookrightarrow \PPP_2$, and $\eta_j:\GGG_1^j \hookrightarrow \GGG_2^j$ for every $1 \leq j \leq \ell$. Then there exists $\gamma^*:\GGG_1 \hookrightarrow \GGG_2$ that agrees with the maps $\gamma,\eta,\eta_1,\ldots,\eta_\ell$ where defined.

    The same holds true for weak-immersion, replacing $\hookrightarrow$ with $\hookrightarrow^*$.
\end{lemma}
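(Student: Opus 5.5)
The plan is to reduce to two applications of the decomposition machinery already established: first \cref{lem:knitting_immersions_decomposition_handson} for the focus, then iterated uses of \cref{thm:knitting_knitwork_immersion_general} together with \cref{lem:stitch-and-knit} to reattach the $\ell$ pieces $\GGG_i^j$.

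First, for $i=1,2$ I would consider the focus $\FFF_i \coloneqq \GGG_i(X_1^i,\ldots,X_\ell^i)$, an $\Omega$-knitwork of index $\ell+1$ rooted in $(\pi^*,X^i,X_1^i,\ldots,X_\ell^i)$. Its torso is $\HHH_i$. Its pieces are $\PPP_i$ (at $X^i$) and the down-stitches $\stitch(\FFF_i;\pi_i',X_i^j)$ at the new roots. I would apply \cref{lem:knitting_immersions_decomposition_handson} to $\FFF_1,\FFF_2$, using $\gamma$ on the torsos and $\eta$ on the $X$-piece. For the pieces at $X_i^j$ that lemma only requires immersions of the underlying rooted digraphs, and these come from $\eta_j$. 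The point to check here is a bookkeeping identity: the down-stitch of $\GGG_i$ at $X_i^j$ and the up-stitch of $\GGG_i$ at $\bar X_i^j$ have the same underlying rooted digraph, namely $G_i[X_i^j]$ plus a single stitch vertex carrying the cut $\rho(X_i^j)$ ordered by $\pi_i'$. I would verify this from \cref{def:stitching_std} and \cref{def:focus}. Since the $\Omega$-knitwork immersion $\eta_j$ in particular restricts to a rooted immersion of these digraphs, the lemma yields $\xi:\FFF_1\hookrightarrow\FFF_2$ agreeing with $\gamma,\eta$ and the underlying maps of the $\eta_j$.

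The immersion $\xi$ alone does not suffice, because the sleeves of $\FFF_i$ are undefined on $X_i^j$. I would therefore not pass through $\FFF_i$ directly. Instead I would mimic the proof of \cref{lem:knitting_immersions_decomposition_handson}: start from $\HHH_i$ and knit back, one at a time, first $\PPP_i$ and then, for $j=1,\ldots,\ell$, the full knitwork $\GGG_i^j$ (with its genuine sleeve on $X_i^j$) along $\pi_i'$. At each step, \cref{thm:knitting_knitwork_immersion_general} glues the current immersion to $\eta$ or $\eta_j$. This works because the stitch vertices are mapped correctly: for the torso this follows from \crefthm{obs:immersion_on_torso_stitches_respects_stitchvertices}{2}. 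For $\eta_j$ it holds because $x_i^j$ is the unique root of the clamped knitwork $\GGG_i^j$, so \crefdef{def:rooted_immersion}{2} forces $\eta_j(x_1^j)=x_2^j$. An analogue of \cref{claim:stitching_back_the_pieces} shows that after all $\ell+1$ knittings one recovers exactly $\GGG_i$, with sleeve equal to $\restr{\mu}{Y}$ etc.\ from the torso part, $\mu$ on $X_i^j$ from $\GGG_i^j$, and nothing on $X^i$ since $\GGG_i$ is clamped there. Reattaching the sleeves on $X_i^j$ is consistent because \cref{obs:stitching_fundamentals_knitworks} says the up-stitch at $\bar X_i^j$ keeps $\mu,\m,\Phi$ on $X_i^j$.

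I expect the main obstacle to be the indexing and orientation conventions. The ``$j$-th piece'' in the statement is an \emph{up}-stitch whose retained side is $X_i^j$, so in \cref{thm:knitting_knitwork_immersion_general} it plays the role of $\GGG^Y$ with $Y$ the complement of $X_i^j$. I also need the cut orderings $\pi_i'(X_i^j)=\pi_i'(\bar X_i^j)$ to match the torso's ordering at its up-stitch vertex, so that the edge-direction compatibility condition of \cref{def:knitting_knitworks} holds. I would settle this first by writing out the iterative construction $\GGG_i^{(0)}=\HHH_i$, $\GGG_i^{(1)}=\knit(\ldots,\PPP_i,\ldots)$, $\GGG_i^{(j+1)}=\knit(\GGG_i^{j+1},\GGG_i^{(j)};\ldots)$ and checking well-definedness as in \cref{claim:stitching_back_the_pieces}. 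Once that is done, the induction on $j$ is routine, and the weak case follows verbatim since all cited results hold for $\hookrightarrow^*$.
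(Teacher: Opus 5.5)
Your proposal is correct, but it is organised differently from the paper's proof.

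\textbf{The paper's route.} The paper's proof is essentially your first step plus an upgrade. It applies \cref{lem:knitting_immersions_decomposition_handson} to the foci $\GGG_i(X_1^i,\ldots,X_\ell^i)$, transferring each $\eta_j$ to a rooted immersion of the down-stitch of the focus at $X_j^i$. That down-stitch has the same underlying digraph and cut ordering as $\GGG_i^j$, but its root is $X_j^i$ rather than the stitch vertex. So your ``same underlying rooted digraph'' is not literally true, although the rooted-immersion conditions do coincide. The paper then declares the resulting map to be an immersion $\GGG_1\hookrightarrow\GGG_2$. Its justification is that every vertex outside $X^i$ carries its $\GGG_i$-sleeve in one of $\HHH_i,\PPP_i,\GGG_i^1,\ldots,\GGG_i^\ell$, and it asserts, via the immersion-model characterisation, that the glued linkage is $\m_{\GGG_2}$-respecting.

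\textbf{Your route.} You drop the focus and rebuild $\GGG_i$ by iterated knitting with the sleeved pieces $\GGG_i^j$. Every sleeve condition on $X_j^i$ is then certified by \cref{thm:knitting_knitwork_immersion_general}, rather than being left to an ``easily verified'' check. The cost is that you re-run the induction and must handle reversed roles.

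\textbf{A step to make explicit.} When you attach $\GGG_i^j$, the partially rebuilt knitwork sits on the down-stitch side of \cref{thm:knitting_knitwork_immersion_general}. There the stitch vertex must be a non-root vertex, but in your intermediate knitworks $h_i^j$ is still a root. You therefore have to de-root $h_i^j$ and set $\mu(h_i^j):=\pi(h_i^j)$, leaving $\m$ and $\Phi$ undefined there. This is allowed, since that theorem takes arbitrary sleeves on the stitches. The previous immersion then satisfies conditions (2) and (3) of \cref{def:knitwork_immersion} at $h_1^j\mapsto h_2^j$ by the root-order condition of \cref{def:rooted_immersion}, and conditions (4) and (5) are vacuous there.

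With this adjustment the induction goes through. Your first step, knitting $\PPP_i$ onto the torso (with the torso on the up-stitch side), is exactly as in the paper's proof of \cref{lem:knitting_immersions_decomposition_handson}.
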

\begin{proof}
    Let $\GGG_i^* \coloneqq \GGG_i(X_1^i,\ldots,X_\ell^i)$ and let $\PPP_i^j \coloneqq \stitch(\GGG_i^*;\pi_i',X_i^j)$ be the respective down-stitches with down stitch vertex $a_i^j$ for every $1 \leq j \leq \ell$ and $i=1,2$.  Note that $\PPP_{i} \cong \stitch(\GGG_i;\pi_i,X^i))$, i.e., it is a piece in the sense of \cref{def:torso_and_pieces} up-to renaming the down-stitch vertex.

    By definition it turns out that for every $i=1,2$ and $1 \leq j \leq \ell$ the underlying digraphs $G_i^j \cong P_i^j$ are isomorphic by construction. Further, $\pi_{\GGG_i^j}(x_i^j) = \pi_{\PPP_i^j}(a_i^j)$ again by construction whence one easily verifies that $\eta_j:\bar P_1^j\hookrightarrow \bar P_2^j$ is a strong rooted immersion.

    By \cref{lem:knitting_immersions_decomposition_handson} we derive the existence of a strong $\Omega$-knitwork immersion $$\gamma^*:\GGG_1(X^1,X_1^1,\ldots,X_\ell^1) \hookrightarrow \GGG_2(X^2,X_1^2,\ldots,X_\ell^2)$$ that agrees with $\gamma,\eta,\eta_1,\ldots,\eta_\ell$ where defined, where $\left(\dom(\gamma),\dom(\eta),\dom(\eta_1),\ldots,\dom(\eta_\ell)\right)$ is a partition of $V(G_1)$ by \cref{def:focus}. Further note that for $i=1,2$ and every $v \in V(G_i)\setminus X$ there is $\DDD \in \{\HHH_i,\PPP_i,\GGG_i^1,\ldots,\GGG_i^\ell\}$ such that $v \in V(D)$ and $\mu_\DDD(v) = \mu_{\GGG_i}(v)$, as well as  $\m_\DDD(v) = \m_{\GGG_i}(v)$ and $\Phi_\DDD(v) = \Phi_{\GGG_i}(v)$ (or the maps are not defined on $v$ in both $\Omega$-knitworks). In particular, one easily verifies that $\gamma^*(E(G_1))$ is an $\m_{\GGG_2}$-respecting linkage and further $\left(\gamma^*(V(G_1)),\gamma^*(E(H))\right)$ is a strong $\Omega$-knitwork immersion model of $\GGG_1$ in $\GGG_2$, concluding the proof.  
\end{proof}

\cref{thm:knitting_immersions_decomposition} is a first version of our main tool to reduce general results about $\Omega$-knitworks to controlled and clamped $\Omega$-knitworks which will be of importance for future research. Although it does not explicitly read that way, it is worth pointing out that this theorem roughly suggests that if we manage to nicely decompose graphs in a tree-like fashion such that the torso of each bag is an $\Omega$-knitwork, then it suffices to prove that the classes of torsos ``induced'' by the bags are well-quasi-ordered. A full proof of this claim would need a stronger version of \cref{thm:knitting_immersions_decomposition} relying on \cref{lem:knitting_immersions_decomposition_handson_located}, more definitions and another argument regarding linkedness, which we omit here as it does not fit the scope of this paper. We defer this to future work.

\subsection{Manipulating Knitworks}
\label{subsec:manipulate_knitworks}
We discuss two more easy but helpful results that allow us to alter $\Omega$-knitworks. The first one verifies that a well-quasi-ordered class of $\Omega$-knitworks stays well-quasi-ordered if one omits some roots.

\begin{lemma}\label{lem:wqo_loosing_roots}
    Let $\Omega$ be a well-quasi-order and let $\ell \geq 1$. Let $\mathbf{G}(\ell; \Omega)$ be a class of $\Omega$-knitworks of index $\ell$ well-quasi-ordered by (strong) $\Omega$-knitwork immersion. Let $\ell' < \ell$ and let $\mathbf{G}(\ell';\Omega)$ be a class of $\Omega$-knitworks such that for every $\GGG= \big((G,\pi,X_1,\ldots,X_{\ell'}), \mu, \m ,\Phi) \in \mathbf{G}(\ell'; \Omega) $ there exists an $\Omega$-knitwork $\GGG^*=(\bar G^*, \mu, \m, \Phi) \in \mathbf{G}(\ell; \Omega)$ where $\bar G^* = (G, \pi,Y_1,\ldots,Y_\ell)$ and there are $1 \leq i_1<\ldots <i_{\ell'} \leq \ell$ with $Y_{i_j} = X_j$ for $1 \leq j \leq \ell'$.

    Then $\mathbf{G}(\ell';\Omega)$ is well-quasi-ordered by (strong) $\Omega$-knitwork immersion.
\end{lemma}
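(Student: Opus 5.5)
The plan is to reduce a bad sequence in $\mathbf{G}(\ell';\Omega)$ to a sequence in $\mathbf{G}(\ell;\Omega)$ and transfer an immersion back. Let $(\GGG_n)_{n\in\N}$ be an infinite sequence in $\mathbf{G}(\ell';\Omega)$. For each $n$, fix a witness $\GGG_n^*\in\mathbf{G}(\ell;\Omega)$ with the same digraph $G_n$ and the same $\Omega$-sleeve $(\mu_n,\m_n,\Phi_n)$, rooted in $(\pi_n,Y^n_1,\ldots,Y^n_\ell)$, together with indices $1\le i^n_1<\ldots<i^n_{\ell'}\le\ell$ such that $Y^n_{i^n_j}=X^n_j$. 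There are only finitely many increasing $\ell'$-tuples in $\{1,\ldots,\ell\}$, so by pigeonhole we may pass to an infinite subsequence on which the index tuple $(i_1,\ldots,i_{\ell'})$ is constant. Since $\mathbf{G}(\ell;\Omega)$ is well-quasi-ordered, \cref{obs:wqo_yields_infinite_chain} then gives $n<m$ in this subsequence and $\gamma:\GGG_n^*\hookrightarrow\GGG_m^*$.

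It remains to show that the same $\gamma$ is an immersion $\GGG_n\hookrightarrow\GGG_m$. Take the witness $\GGG_m'\subseteq\GGG_m^*$ for $\gamma$ and regard it as a sub-knitwork of $\GGG_m$ by forgetting the roots $Y^m_i$ with $i\notin\{i_1,\ldots,i_{\ell'}\}$. This is legitimate, since the sub-knitwork conditions only require the remaining roots to be kept with their cuts intact.

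Next I would check the conditions of \cref{def:knitwork_immersion}. Conditions (2)--(6) only involve the sleeves, the maps $\M_\gamma$ and $\Phi$, and these are literally identical in $\GGG_n$ and $\GGG_n^*$, and likewise in $\GGG_m$ and $\GGG_m^*$. So they transfer verbatim.

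For condition (1), $\gamma$ is a (strong) immersion of the underlying digraphs. It maps $Y^n_{i_j}=X^n_j$ into $X^m_j$ and $\bar X^n_j$ into $\bar X^m_j$, it respects $\pi_n(X^n_j)$ versus $\pi_m(X^m_j)$, and $\delta(X^n_j)=\delta(X^m_j)$. All of this is inherited from \cref{def:rooted_immersion} applied to $\gamma:\GGG_n^*\hookrightarrow\GGG_m^*$ at the indices $i_j$. Two small points need care here. First, the orderings $\pi$ on the retained roots must agree between $\GGG_n$ and $\GGG_n^*$, which the hypothesis provides via the common $\pi$. Second, the index sets must line up, which is exactly why the pigeonhole step fixing $(i_1,\ldots,i_{\ell'})$ is needed.

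The only genuinely delicate step is conceptual: a vertex of a forgotten root set is no longer a root in $\GGG_n$, so it may in principle lie in the domain of the sleeve. However, the hypothesis states that $\GGG^*$ carries the same sleeve $(\mu,\m,\Phi)$. By \cref{def:knitwork} that sleeve avoids all of $Y_1\cup\ldots\cup Y_\ell$, so nothing new needs to be checked there. Hence $\GGG_n\hookrightarrow\GGG_m$, no bad sequence exists, and the weak case is identical.
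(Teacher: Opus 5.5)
Your proposal is correct and follows the paper's proof. You pigeonhole to fix the positions $(i_1,\ldots,i_{\ell'})$ of the retained root sets, use the well-quasi-order on $\mathbf{G}(\ell;\Omega)$ to obtain $\gamma\colon\GGG_n^*\hookrightarrow\GGG_m^*$, and observe that $\gamma$ is already an immersion $\GGG_n\hookrightarrow\GGG_m$ because the sleeves, the retained roots and their orderings coincide. The paper compresses your checks of the witness sub-knitwork and of Conditions (1)--(6) into a single appeal to \cref{def:rooted_immersion} and \cref{def:knitwork_immersion}, so your write-up is a more explicit version of the same argument.
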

\begin{proof}
    We give a proof for strong $\Omega$-knitwork immersion, the other case follows verbatim. Let $(\GGG_i)_{i \in \N}$ be an infinite sequence of elements in $\mathbf{G}(\ell';\Omega)$ such that $\bar G_i = (G_i, \pi,X_1^i,\ldots,X_{\ell'}^i)$ for every $i \in \N$. Let $(\GGG_i^*)_{i \in \N}$ be a respective sequence of elements in $\mathbf{G}(\ell;\Omega)$ such that $\GGG_i^*$ and $\GGG_i$ share a common underlying quasi-Eulerian digraph, the same $\Omega$-sleeve and $\bar G_i^* = (G_i,\pi,Y_1^i,\ldots,Y_\ell^i)$ such that there are $1 \leq j_1^i,\ldots,j_{\ell'}^i \leq \ell$ with $X_j^i = Y_{j_i}^i$ for every $i \in \N$. By the pigeonhole principle there is an infinite index set $I \subseteq \N$ such that $j_t^p = j_t^q$ for every $p,q \in I$ and $1 \leq t \leq \ell'$; let them be defined via $j_t \coloneqq j_t^p$ for simplicity.

    Since $\mathbf{G}(\ell;\Omega)$ is well-quasi-ordered by strong $\Omega$-knitwork immersion, there is $\gamma: \GGG_i^* \hookrightarrow \GGG_j^*$ for some $i < j $ with $i,j \in J$. By \cref{def:rooted_immersion} and \cref{def:knitwork_immersion} it follows that $\gamma : \GGG_i \hookrightarrow \GGG_j$ concluding the proof.
\end{proof}
\begin{remark}
    Note that simply omitting roots without further assumptions is not possible as we need the graph to be Eulerian away from the vertices in the root sets.
\end{remark}
The second one, as alluded to in \cref{subsec:knitting_immersions}, justifies why we can restrict to loopless graphs. We fix the following.

\begin{definition}[$\Omega$-independence]\label{def:Omega_independent}
   Let $\Omega$ be a well-quasi-order and let $\mathbf{G}(\Omega)$ be a class of $\Omega$-knitworks. We call $\mathbf{G}(\Omega)$ \index{knitwork!$\Omega$-independent}\index{OMEGA@$\Omega$-independent}\emph{$\Omega$-independent} if for every $\GGG =( \bar G, \mu, \m ,\Phi) \in \mathbf{G}(\Omega)$ and every partial function $\Psi : V(G) \to V(\Omega)$ it holds $(\bar G, \mu,\m,\Psi) \in \mathbf{G}(\Omega)$.

\end{definition}

 We further define the \emph{core} of a class of $\Omega$-knitworks.
\begin{definition}
    Let $\Omega$ be a well-quasi-order and let $\mathbf{G}(\Omega)$ be a class of $\Omega$-knitworks. Let $\mathbf{G}$ be the class of rooted digraphs obtained from $\mathbf{G}(\Omega)$ by adding $\bar G$ to $\mathbf{G}$ for every $\GGG \in \mathbf{G}(\Omega)$. Then we call $\mathbf{G}$ the \emph{core}\index{knitworks!core}\index{core} of $\mathbf{G}(\Omega)$.
\end{definition}

Recall that by \cref{obs:wqo_of_tuples} given a well-quasi-order $\Omega$ as well as the natural well-quasi-order $(\N,\leq )$ we have that $\Omega \times (\N,\leq)$ is a well-quasi-order. One easily sees that $\Omega \cong \Omega \times \{0\}$, thus we may see $\Omega \times (\N,\leq)$ as an extension of $\Omega$ in the obvious way.

\begin{lemma}\label{lem:wqo_assume_no_loops}
    Let $\mathbf{G}$ be a class of planted Eulerian digraphs, possibly with loops. Let $\mathbf{G^*}$ be obtained from $\mathbf{G}$ by deleting the loops for every graph.
    Let $\Omega$ be any well-quasi-order and let $\Omega^* \coloneqq \Omega \times (\N,\leq)$. Let $\mathbf{G^*}(\Omega^*)$ be an $\Omega^*$-independent class of planted $\Omega^*$-knitworks with core $\mathbf{G^*}$. 

    If $\mathbf{G^*}(\Omega^*)$ is well-quasi-ordered by (strong) $\Omega^*$-knitwork immersion, then $\mathbf{G}$ is well-quasi-ordered by (strong) rooted immersion.
\end{lemma}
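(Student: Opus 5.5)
The plan is to encode loops as vertex labels. Given $\bar G=(G,\pi,X)\in\mathbf{G}$, let $G^*$ be obtained by deleting all loops, so $\bar G^*=(G^*,\pi,X)\in\mathbf{G^*}$. Deleting a loop changes neither in- nor out-degree balance, and $\rho(X)$ is unaffected, so $\bar G^*$ is still a planted (hence Eulerian) rooted digraph. Define a partial labelling $\Psi_G\colon V(G^*)\setminus X\to V(\Omega)\times\N$ by $\Psi_G(v)\coloneqq(\omega_0,\Abs{\loops_G(v)})$ for a fixed $\omega_0\in V(\Omega)$; if $\Omega$ is empty, take any element of $\N$ in the second coordinate, which is harmless since only the $\N$-coordinate matters. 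The root vertex is handled separately: it is a single vertex, and its loops can be recorded by partitioning the sequence by pigeonhole on the count only if needed. Since the class is planted we may first pass to a subsequence in which the number of loops at the root vertex is nondecreasing, using that $(\N,\leq)$ is a well-quasi-order. Fix some $\Omega^*$-sleeve $(\mu,\m,\Phi)$ with $\GGG^*=(\bar G^*,\mu,\m,\Phi)\in\mathbf{G^*}(\Omega^*)$, which exists because the core of $\mathbf{G^*}(\Omega^*)$ is $\mathbf{G^*}$. By $\Omega^*$-independence, $\GGG_G\coloneqq(\bar G^*,\mu,\m,\Psi_G)\in\mathbf{G^*}(\Omega^*)$.

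Given an infinite sequence $(\bar G_i)_{i\in\N}$ in $\mathbf{G}$, first restrict to the subsequence with nondecreasing root-loop counts. Then apply well-quasi-ordering of $\mathbf{G^*}(\Omega^*)$ to $(\GGG_{G_i})$ to obtain $i<j$ and $\gamma\colon\GGG_{G_i}\hookrightarrow\GGG_{G_j}$. Note that $\mathbf{G^*}(\Omega^*)$ restricted to an infinite subsequence is still well-quasi-ordered. To pass to root-loop monotone pairs, it is cleaner to use \cref{obs:wqo_yields_infinite_chain} on the root counts and then a bad-sequence argument on the remaining indices.

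The immersion $\gamma$ comes with a witness sub-knitwork, whose underlying digraph is a subgraph of $G_j^*$. By \crefdef{def:knitwork_immersion}{1} it is in particular a (strong) rooted immersion $\bar G_i^*\hookrightarrow\bar G_j^*$, and by \crefdef{def:knitwork_immersion}{6} it satisfies $\Abs{\loops_{G_i}(v)}\le\Abs{\loops_{G_j}(\gamma(v))}$ for every non-root $v$. At the root, $\gamma$ maps the root to the root, and the loop inequality there holds by our subsequence choice. Now extend $\gamma$ to $G_i$ by mapping the loops at $v$ injectively onto loops at $\gamma(v)$, each as a length-one path. These paths are edge-disjoint from each other and from $\gamma(E(G_i^*))$ because loops of $G_j$ lie outside $G_j^*$. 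They have no internal vertices, so strongness is preserved, and the root conditions of \cref{def:rooted_immersion} are untouched because loops are not in $\rho(X)$. This yields the required (strong) rooted immersion $\bar G_i\hookrightarrow\bar G_j$.

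The main subtlety I expect is the root vertex. Its loops cannot be encoded in $\Psi$, since sleeves are undefined on the root set. The subsequence argument above handles this, and that needs to be stated carefully. Apart from that, the only check is that $\gamma$ uses no loop edges of $G_j$, which holds since the knitwork lives on the loopless $G_j^*$.
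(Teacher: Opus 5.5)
Your proposal is correct and follows the paper's proof essentially step for step. You pass to a subsequence on which the number of loops at the root is nondecreasing, label every non-root vertex of the loop-deleted graph by $(\star,\Abs{\loops(v)})$ using $\Omega^*$-independence, obtain $\gamma$ from the well-quasi-ordering hypothesis, and extend it by sending loops injectively to loops at the image vertex.

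Two small remarks. Your aside about an empty $\Omega$ does not work: then $\Omega^*=\Omega\times(\N,\leq)$ is empty too, so there is no label available, though the paper likewise tacitly assumes $V(\Omega)\neq\emptyset$. You are also slightly more careful than the paper in first using the core assumption to obtain some sleeve on $\bar G^*$ in the class before applying independence.
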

\begin{proof}
    We prove it for strong immersion; the other case is analogous. 
    
    Let $(\bar G_i)_{i\in \N}$ be a sequence of rooted Eulerian digraphs in $\mathbf{G}$ with roots $(\pi_i,x_i)$. Since $(\N,\preceq)$ is a well-quasi-order, \cref{obs:wqo_yields_infinite_chain} yields the existence of $I\subseteq \N$ such that $\loops(x_i) \leq \loops(x_j)$ for every $1 \leq i \leq j$ with $i,j \in I$; identify $I$ with $\N$. For every $i \in \N$ construct an $\Omega^*$-knitwork $\GGG_i^*$ from $\bar G_i$ as follows. For every $i \in \N$, let $\bar G_i^*$ be obtained from $\bar G_i$ by deleting the loops for every $v \in V(G_i)$ (note here that loops are of no relevance to the roots, and thus $\bar G_i^*$ and $\bar G_i$ share the same roots). Fix some element $\star \in V(\Omega)$. We define a sleeve $(\mu_i, \m_i, \Psi_i)$ for $\bar G_i^*$ where $\Psi_i(v) \coloneqq (\star, \Abs{\loops_{G_i}(v)})$ for every $v \in V(G_i)\setminus \{x_i\}$. Since $\mathbf{G^*}( \Omega^*)$ is $\Omega^*$-independent it holds that $\GGG_i^*=(\bar G_i^*,\mu_i,\m_i,\Psi_i) \in \mathbf{G^*}(\Omega^*)$ for every $i \in \N$ by definition. In particular there exists $1 \leq i < j $ such that $\gamma^* \colon\GGG_i^* \hookrightarrow \GGG_j^*$ by assumption of the lemma. We claim that there is a rooted immersion $\gamma \colon \bar G_i \hookrightarrow \bar G_j$ agreeing with $\gamma^*$ where defined. Note that $\gamma^*: \bar G_i^* \hookrightarrow \bar G_j^*$ is a strong rooted immersion by \cref{def:knitwork_immersion}, and thus it suffices to argue that we can extend $\gamma^*$ for loops (since $G_i^* \subseteq G_i$ and $E(G_i) \setminus E(G_i^*) = \loops(G_i)$).
    
   Note that \crefdef{def:knitwork_immersion}{6} for $\Omega^*$-knitwork immersion implies that $\Abs{\loops_{G_i}(v)}\leq \Abs{\loops_{G_j}(\gamma^*(v))}$, which in turn implies that for every $v \in V(G_i)$---including $v=x_i$ due to our first step---there is an injection $\ell: \loops_{G_i}(v) \to \loops_{G_j}(\gamma^*(v))$. Thus, we may easily extend $\gamma^*$ on $e \in \loops_{G_i}(v)$ by setting $\gamma(e) \coloneqq \ell(e) \in \loops_{G_j}(\gamma(v))$. This proves the lemma.
\end{proof}

\section{A Framework for ordering Graphs encoded by Carvings}
\label{sec:tree_lemma}
\subsection{Carvings and Tree Decompositions}

We assume the reader to be familiar with the definition of \emph{tree and branch decompositions} of undirected graphs as well as \emph{treewidth} and \emph{branchwidth}; see \cite{Diestel2017} for details. Throughout this exposition, whenever we talk about the treewidth of a digraph $G$---written $\tw{G}$---we implicitly mean the undirected treewidth of the underlying undirected graph of $G$ unless explicitly stated otherwise.

In this subsection we discuss the concept of \emph{carving width} introduced and analysed in \cite{carvingwidth}\footnote{Here the parameter is referred to as \emph{congestion}.} as well as \cite{RobertsonST1994}, a notion similar to branch width tailored towards induced cuts of a graph and suited for immersions. 

\begin{definition}[Carving]
    \label{def:carving}
    Let~$G$ be a (possibly not weakly connected) digraph. A~\emph{carving} of~$G$ is given by a pair~$\TTT \coloneqq (T,\ell)$ where
    \begin{enumerate}[label=(\arabic*)]
        \item $T$ is a cubic undirected tree, and 
        \item $\ell: V(G) \to \leaves{T}$ is an injective map labelling part of the leaves.
    \end{enumerate}

    Let~$e=\{t_1,t_2\} \in E(T)$ and let~$T_1,T_2$ be the components of~$T-\{t_1,t_2\}$ containing~$t_1$ and~$t_2$ respectively. Let $X_e \coloneqq\ell^{-1}(\leaves{T_1})$  where for a leaf~$t$ not in the image of~$\ell$ we define~$\ell^{-1}(t) \coloneqq \emptyset$. Then we call $\{X_e,\bar X_e\}$ the \emph{partition induced by $e$}. We define~$\epsilon_{\TTT,G}:E(T) \to 2^{E(G)}$ via~$\epsilon_{\TTT,G}(\{t_1,t_2\})\coloneqq \rho_G(X_e)$. We may omit the subscripts $\TTT,G$ and simply write~$\epsilon$ if the carving and graph are clear from context.
    
     We define the \emph{width} of~$\{t_1,t_2\}$ through~$\w_G(\{t_1,t_2\}) \coloneqq \Abs{\epsilon(\{t_1,t_2\})}$. Finally, we define the width of the decomposition via~$\w_G(\TTT) \coloneqq \max\{\w(e) \mid e \in E(T)\}$. Again we may omit the subscript $G$ if the graph is clear from context.

    The \emph{carving width of~$G$} is defined via~$$\cw{G} \coloneqq \min\{\w_G(\TTT) \sth \TTT \text{ is a carving of } G\}.$$
\end{definition}
\begin{remark}
    We allow for leafs~$t$ with~$\ell^{-1}(t) = \emptyset$ so that a branch decomposition always exists (since~$T$ needs to be cubic this is needed for~$\Abs{V(G)} \leq 2$. Otherwise, it may be neglected.
    Note that the same carving $(T,\ell)$ may be a carving of different graphs $G_1,G_2$ (for example if $G_2$ is obtained from $G_1$ by adding some edges) which is why we introduce the definition of the width with a subscript. This will ease the notation in later sections.
\end{remark}

The next lemma guarantees that for any $d\geq 1$, the class of Eulerian digraphs of carving width at most $d$ is closed under immersions. While this is again folklore, we add a proof for completeness.

\begin{lemma}[Homogeneity for Immersions]\label{lem:cw_closed_under_immersion}
    Let~$G$ and~$H$ be Eulerian digraphs. If $G$ immerses $H$ then $\cw{H} \leq \cw{G}$.
\end{lemma}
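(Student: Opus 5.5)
The plan is to transfer an optimal carving of $G$ to $H$ along the immersion and to bound every cut of the transferred carving by the corresponding cut of $G$. Let $\gamma\colon H \hookrightarrow^* G$ be an immersion and let $\TTT=(T,\ell)$ be a carving of $G$ with $\w_G(\TTT)=\cw{G}$. Since $\restr{\gamma}{V(H)}$ is injective, the map $\ell_H \coloneqq \ell\circ\restr{\gamma}{V(H)}\colon V(H)\to\leaves{T}$ is injective. Hence $\TTT_H\coloneqq(T,\ell_H)$ is a carving of $H$ on the same cubic tree, and leaves labelled by vertices of $G\setminus\gamma(V(H))$ simply become unlabelled, which \cref{def:carving} allows. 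Consequently $\cw{H}\le \w_H(\TTT_H)$, and it suffices to show $\w_H(e)\le \w_G(e)$ for every $e\in E(T)$.

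Fix $e=\{t_1,t_2\}\in E(T)$. Let $X_e\subseteq V(G)$ be the side induced by $e$ in $\TTT$, and let $X_e^H$ be the corresponding side in $\TTT_H$. By construction $X_e^H=\gamma^{-1}(X_e)\cap V(H)$, so a vertex $u\in V(H)$ lies in $X_e^H$ if and only if $\gamma(u)\in X_e$. Take any edge $f=(u,v)\in\rho_H(X_e^H)$. Then exactly one of $\gamma(u),\gamma(v)$ lies in $X_e$. The path $\gamma(f)$ runs from $\gamma(u)$ to $\gamma(v)$, so it must contain at least one edge of $\rho_G(X_e)=\epsilon_{\TTT,G}(e)$; choose one such edge and call it $\phi(f)$. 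Because the paths $\gamma(f)$ for distinct $f$ are edge-disjoint by \crefdef{def:immersion}{2}, the map $\phi\colon\rho_H(X_e^H)\to\rho_G(X_e)$ is injective. Therefore $\w_H(e)=\Abs{\rho_H(X_e^H)}\le\Abs{\rho_G(X_e)}=\w_G(e)\le\cw{G}$, and taking the maximum over $e$ gives $\cw{H}\le\cw{G}$.

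I expect only minor obstacles. Loops of $H$ are never counted in cuts, by \cref{obs:cut_at_vertex}. Disconnected graphs are already allowed in \cref{def:carving}. The argument never uses strongness of $\gamma$, Eulerianness, or the linearity of the paths, so it covers weak immersion and hence strong immersion as well. The one point to check carefully is the existence of a crossing edge on $\gamma(f)$: a path whose endpoints lie on opposite sides of the partition $\{X_e,\bar X_e\}$ must use an edge with one end in each side. This holds for our non-linear paths too, since consecutive edges share a vertex.
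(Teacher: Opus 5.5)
Your proof is correct, and it takes a more direct route than the paper's proof.

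The paper first invokes \cref{obs:immersion_robust_under_splitting_off}, which it states as folklore without proof, to obtain $H$ from an Eulerian subgraph of $G$ by sequentially splitting off edge-pairs. It then checks in two separate claims that the fixed carving $(T,\ell)$ stays a carving and no edge width increases:
\begin{enumerate}
\item when the edges of a cycle are deleted, and
\item under a single splitting-off, where the new edge crosses the cut of $e$ only if one of the two removed edges did.
\end{enumerate}

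You instead pull the labelling back along $\gamma$ in one step. You then inject $\rho_H(X_e^H)$ into $\rho_G(X_e)$ by picking a crossing edge on each image path and using that the paths are edge-disjoint.

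The underlying mechanism is the same: an edge of $H$ crossing a cut is witnessed by an edge of $G$ crossing the corresponding cut. You apply it globally to whole image paths, whereas the paper applies it locally to one split pair at a time.

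Your version buys two things:
\begin{itemize}
\item It is self-contained and does not rely on the unproved splitting-off decomposition.
\item It never uses Eulerianness, so it holds verbatim for weak immersions between arbitrary digraphs.
\end{itemize}
The paper's route yields closure under the two elementary operations separately. Those facts also follow from your statement, since subgraphs and split-offs of $G$ immerse in $G$.
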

\begin{proof}
    Let~$H,G$ be Eulerian digraphs such that~$G$ immerses $H$. Let~$\TTT=(T,\ell)$ be a carving of~$G$ witnessing its carving width. Since~$G$ contains~$H$ as an immersion, \cref{obs:immersion_robust_under_splitting_off} implies the existence of an Eulerian subgraph~$G' \subseteq G$ such that $H$ can be obtained from $G$ by sequentially splitting of edge-pairs at vertices.  Thus, it suffices to prove that carving width is closed under taking Eulerian subgraphs and splitting off edge-pairs at vertices. 
    
    \begin{claim}
        Let~$G' \subseteq G$ be Eulerian. Then~$\cw{G'} \leq \cw{G}$.
    \end{claim}
    \begin{claimproof}
        Since $G,G'$ are Eulerian, the graph $G'' \coloneqq G-G'$ is Eulerian (not necessarily connected). By \cref{obs:covering_eulerian_digraphs} there exists a cycle cover $\CCC$ of $G''$. In particular $G'$ can be obtained from $G$ by subsequently subtracting $E(C)$ for $C \in \CCC$ from $G$ and deleting isolated vertices. Thus, it suffices to prove the claim for~$G' \coloneqq G-E(C)$ for some fixed cycle~$C$ in $G$ (the claim is trivial if $G'$ is obtained by deleting an isolated vertex). But clearly~$\TTT$ is still a valid~carving of~$G'$ (where $\ell_{G'}^{-1}(t) = \emptyset$ if $\ell_G^{-1}(t)=v$ for some $v$ that became isolated). Since we only deleted edges, the width of any edge in~$E(T)$ cannot increase; this concludes the proof of the claim.
    \end{claimproof}

    \begin{claim}
        Let~$G'$ be obtained from~$G$ by splitting off $(e_1,e_2)$ at~$v \in V(G)$ for $e_1,e_2 \in E(G)$. Then~$\cw{G'} \leq \cw{G}$.
    \end{claim}
    \begin{claimproof}
        Let $v_l,v_r \in V(G)$ such that~$e_1=(v_l,v)$ and $e_2=(v,v_r)$. Let $(G',e') = \spl(G;(e_1,e_2))$. We claim that~$\TTT$ still witnesses that~$G'$ satisfies~$\cw{G'} \leq \cw{G}$. To see this note that the resulting edge~$e'$ appears in~$\epsilon(\{t_1,t_2\})$ if and only if $v_l$ and~$v_r$ lie on opposite sides of~$T \setminus \{t_1,t_2\}$. But then~$\epsilon(\{t_1,t_2\}) \cap \{e_1,e_2\} \neq \emptyset$ and thus the width of the edge cannot increase: for we delete the edges $e_1,e_2$ in $G$ and add $e'$.
    \end{claimproof}
This concludes the proof.
\end{proof}
\begin{remark}
    Note that if a loopless directed graph~$G$ contains a vertex~$v$ of degree~$k \in \N$, then~$\cw{G} \geq k$ simply because the edge incident to the leaf representing~$v$ in the cubic tree must have width~$k$. 
\end{remark}

Finally, there is a nice qualitative relation between carving width and treewidth due to Bienstock \cite[Theorem 1 and preceding discussion]{carvingwidth}. 

\begin{lemma}[$ ${\cite[Theorem 1]{carvingwidth}}]
\label{thm:qualitative_equivalence_of_tw_and_cw}
    Let~$\Delta \in \N$ and~$G$ be a digraph of maximum degree~$\Delta$. Then~$\frac{2}{3}\tw{G} \leq \cw{G} \leq \Delta \cdot \tw{G}$. 
\end{lemma}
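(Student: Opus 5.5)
The lemma is quoted from \cite{carvingwidth}, but I would reprove both inequalities directly by converting carvings into tree decompositions and back. Throughout, $G$ is identified with its underlying undirected graph, since neither parameter depends on edge orientations. Loops do not contribute to any cut $\rho(\cdot)$ and may be ignored. I may also assume $G$ is connected and has at least three vertices, since small cases are trivial and both parameters behave well under disjoint unions.

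\emph{Lower bound $\tfrac23\tw{G}\le\cw{G}$.} Let $\TTT=(T,\ell)$ be a carving of width $w$; I build a tree decomposition on the same tree $T$.
\begin{enumerate}
\item For an internal node $t$ with incident tree edges $f_1,f_2,f_3$, the three subtrees of $T-t$ partition the labelled leaves. An edge of $G$ lies in some $\epsilon(f_i)$ if and only if its endpoints are in different subtrees, and then it lies in exactly two of the three sets $\epsilon(f_i)$. Hence at most $3w/2$ edges of $G$ cross at $t$.
\item Put $B_t$ to be the set of tails of the edges crossing at $t$, so $\Abs{B_t}\le 3w/2$.
\item For a leaf $\ell(v)$, put $B_{\ell(v)}=\{v\}\cup\{\tail(e)\mid e\in E(v)\}$, so $\Abs{B_{\ell(v)}}\le 1+\deg(v)\le 1+w$, using the remark after \cref{lem:cw_closed_under_immersion}. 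Unlabelled leaves get empty bags.
\item Edge coverage: an edge $(u,v)$ is contained in $B_{\ell(v)}$.
\item Connectivity: the nodes whose bag contains $u$ are $\ell(u)$, the leaves $\ell(v)$ of out-neighbours $v$ of $u$, and the internal nodes on the $T$-paths from $\ell(u)$ to these leaves. This is a union of paths sharing the endpoint $\ell(u)$, hence a subtree.
\end{enumerate}
Therefore $\tw{G}\le \max(3w/2,w+1)-1\le 3w/2$ for $w\ge2$, which gives the first inequality.

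\emph{Upper bound $\cw{G}\le\Delta\cdot\tw{G}$.} Start from a tree decomposition $(T,\{B_t\})$ of width $k=\tw{G}$. I normalise it so that no bag is contained in an adjacent bag; then every adhesion satisfies $\Abs{B_s\cap B_t}\le k$. Assign each vertex $v$ to one node $t(v)$ whose bag contains $v$, hang a new leaf labelled $v$ off $t(v)$, and finally make the tree cubic by splitting high-degree nodes and suppressing degree-two nodes.

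The key claim concerns a tree edge $st$ of the original decomposition and an edge $uv$ of $G$ whose endpoints are assigned to opposite sides of $st$. Some bag $B_w$ contains both $u$ and $v$, and $w$ lies on one side of $st$, say the side of $v$'s assigned node. The subtree of bags containing $u$ then contains both $t(u)$ and $w$, so it crosses $st$, and $u\in B_s\cap B_t$. Consequently every cut edge at $st$ has an endpoint in the adhesion, so the width of $st$ is at most $\Delta\,\Abs{B_s\cap B_t}\le\Delta k$.

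\emph{Main obstacle.} The difficulty is the tree edges created when splitting a high-degree node $t$ into a binary tree of copies. Each such edge separates groups of neighbours of $t$, and the same argument as above only shows that every crossing edge has an endpoint in $B_t$. This yields the bound $\Delta(k+1)$ rather than $\Delta k$.

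I would first attempt to recover $\Delta k$ as follows. Pick a vertex $x\in B_t$. Assign the whole bag $B_t$'s leaf group, together with the neighbour branches containing $x$, to one side of the split. The aim is that on each binarization edge some vertex of $B_t$ never separates from all its neighbours, saving one factor of $\Delta$. If that bookkeeping fails, I would settle for $\Delta(\tw{G}+1)$, which is qualitatively equivalent and suffices for the paper's purpose: bounded carving width is equivalent to bounded degree plus bounded treewidth, as needed for \cref{thm:intro_2}.
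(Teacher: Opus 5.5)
Your lower bound is correct; for $w\ge 2$ it in fact gives $\tfrac23(\tw{G}+1)\le\cw{G}$. Your upper-bound construction is a valid proof of $\cw{G}\le\Delta(\tw{G}+1)$. The gap is that the lemma asserts $\Delta\cdot\tw{G}$, and this is not established.

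At a binarization edge of a node $t$ with $\Abs{B_t}=k+1$, your argument only shows that every crossing edge meets $B_t$. That bound permits all $k+1$ bag vertices to contribute. Your proposed repair, fixing one $x\in B_t$ and keeping its branches on one side, is not yet an argument. It says nothing about the contributions of the other $k$ bag vertices. The count also depends on the order in which the branches are merged, and you never choose that order. Falling back on $\Delta(\tw{G}+1)$ proves a weaker statement, even if that is harmless for the paper's qualitative use. (The paper itself gives no proof and only cites \cite{carvingwidth}.)

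The missing idea is to use a decomposition in which every node forgets exactly one vertex, and to choose the merge order.

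\emph{Setup.} Take a chordal supergraph $H\supseteq G$ with clique number $k+1$ and a perfect elimination ordering of it. Let the parent of $v$ be its smallest higher $H$-neighbour. Let $D_v$ be the set of descendants of $v$ (including $v$), and $N^+(v)$ the set of higher $H$-neighbours, so $\Abs{N^+(v)}\le k$.

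\emph{Standard facts.} Following parent chains and using that higher neighbourhoods are cliques, one checks three things. Every edge leaving $D_v$ ends in $N^+(v)$, so $\delta(D_v)\le k\Delta$. There are no edges between $D_c$ and $D_{c'}$ for distinct children $c,c'$ of $v$. Finally, $N^+(c)\subseteq\{v\}\cup N^+(v)$.

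\emph{Merging.} Build $D_v$ by a caterpillar: start with the leaf of $v$ and attach the child groups $D_{c_1},\dots,D_{c_m}$ one at a time. Let $a_j$ be the number of edges between $v$ and $D_{c_j}$, and $b_j$ the number between $N^+(v)$ and $D_{c_j}$. Attaching $D_{c_j}$ changes the current cut by exactly $b_j-a_j$. Add the children in nondecreasing order of $b_j-a_j$. Then the sequence of cut values is convex, so it is bounded by its endpoints, $\deg(v)\le\Delta$ and $\delta(D_v)\le k\Delta$. The carving edges above the $D_{c_j}$ have width $\delta(D_{c_j})\le k\Delta$. Roots of the elimination forest have $\delta(D_r)=0$, so components are joined at no cost.

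This yields a carving of width at most $\Delta\cdot\tw{G}$ whenever $\tw{G}\ge 1$; the case $\tw{G}=0$ is trivial.
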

\subsection{Carving Knitworks}

We extend the \cref{def:carving} of carvings of digraphs to carvings of rooted digraphs as follows.

\begin{definition} \label{def:knitworks_carving}
    Let $t \geq 1$ and~$\bar G = (G,\pi,X_1,\ldots,X_t)$ be a rooted Eulerian digraph. 
    
    Let $\torso(\bar G) = (G^\star, \pi,x_1^*,\ldots,x_t^*)$ for respective up-stitch vertices $x_1^*,\ldots,x_t^*$. Let~$(T,\ell)$ be a carving of~$G^*$ and let~$e_1,\ldots,e_t\in E(T)$ be the edges incident to $\ell(x_1^*),\ldots,\ell(x_t^*)$ respectively. We define~$\rt(T) \coloneqq (e_1,\ldots,e_t)$ and call~$(T,\ell; \rt(T))$ a \emph{rooted carving of~$\bar G$}. The width is defined via~$\w(T,\ell;\rt(T)) \coloneqq \operatorname{max}(\w(T,\ell), \Abs{X_1}, \ldots, \Abs{X_t})$ and the carving width~$\cw{\bar G}$ of the rooted Eulerian digraph is defined as the minimum width over all rooted carvings of~$\bar G$. 
\end{definition}
\begin{remark}
    If $\bar G$ is clamped, whence $\rt(T)=(e_1)$, we write $\rt(T) = e_1$ for simplicity.

    \cref{def:knitworks_carving} may seem odd at first: We want to point out that we could loosen the above definition to define the carving of $\GGG$ to be the minimum over the carving width of its torso and pieces. Although all the relevant results would still hold, we can derive these versions from the above definition by simply combining the carvings.
\end{remark}

We lift the definition to $\Omega$-knitworks as usual, making it independent of a choice of $\Omega$-sleeve.
\begin{definition}
    Let~$\Omega$ be a well-quasi-order and let~$\GGG=(\bar{G},\mu,\m,\Phi)$ be an~$\Omega$-knitwork. We define the \index{knitwork!carving width}\emph{carving width} of~$\GGG$, denoted $\cw{\GGG}$, as the carving width of~$\bar{G}$, i.e.,~$\cw{\GGG} \coloneqq \cw{\bar{G}}$. 
\end{definition}

Carving width for rooted and unrooted Eulerian digraphs is closely related; in particular if the rooted digraphs are controlled it is equal.

\begin{lemma}\label{lem:cw_of_rooted_vs_unrooted}
    Let~$\bar G = (G,\pi,X_1,\ldots,X_t)$ be a rooted Eulerian digraph of index $t \geq 1$. Let $k \in 2\N$ such that $\Abs{X_i} \leq k$ and $\delta(X_i) \leq k$ for every $1 \leq i \leq t$. Then~$\cw{\bar G} \leq \cw{G} + t\cdot k$. 
    
    If $\bar G$ is controlled, then $\cw{\bar G} = \cw{G}$.
\end{lemma}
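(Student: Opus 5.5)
The plan is to compare carvings of $G$ with carvings of the torso $G^\star=\torso(\bar G)$, and then account for the extra terms $\Abs{X_i}$ in the rooted width. Recall from \cref{obs:torso} that $G^\star$ arises from $G$ by contracting each root set $X_i$ to $x_i^*$. Also recall from \cref{def:knitworks_carving} that $\cw{\bar G}$ is the minimum over carvings $(T,\ell)$ of $G^\star$ of $\max(\w(T,\ell),\Abs{X_1},\ldots,\Abs{X_t})$. Since $\Abs{X_i}\le k\le \cw{G}+tk$ by hypothesis, it suffices to produce a carving of $G^\star$ of width at most $\cw{G}+tk$.

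First I fix an optimal carving $(T,\ell)$ of $G$. I turn it into a carving of $G^\star$ as follows. For every vertex $v\in\bigcup_i X_i$, unlabel its leaf, so that $\ell^{-1}$ of that leaf becomes $\emptyset$; this is allowed by \cref{def:carving}. Then attach $t$ new leaves labelled $x_1^*,\ldots,x_t^*$ by subdividing some edge of $T$ and hanging each new leaf off a subdivision vertex, which keeps $T$ cubic.

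Next I bound the width of each edge $e$ of the new tree. Take the partition $\{X_e,\bar X_e\}$ of $V(G^\star)$ induced by $e$, and let $Z_e\subseteq V(G)$ be the corresponding side in $G$ obtained by removing the $x_i^*$. The edges of $G^\star$ crossing this partition are of two kinds. Edges with both ends in $\bar X$ cross it exactly when they cross the partition of $V(G)$ induced by the old edge of $T$, and there are at most $\cw{G}$ of these. Every other crossing edge lies in $\rho_G(X_i)$ for some $i$, and there are at most $\sum_i\delta(X_i)\le tk$ of those. Edges created by subdivision separate a subset of the new leaves $\{x_i^*\}$ from the rest, so their cut in $G^\star$ is contained in $\bigcup_i\rho(X_i)$ together with a cut of the old edge being subdivided. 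Hence every width is at most $\cw{G}+tk$, and this gives the first claim.

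The main care is the controlled case. There $\Abs{X_i}\le1$, so $X_i=\{v_i\}$ and $G^\star\cong G$ with $x_i^*$ playing the role of $v_i$; this identification is just the up-stitch at a single vertex. Carvings of $G$ and of $G^\star$ therefore correspond bijectively, with $\rt(T)$ given by the leaf edges at the $v_i$. Since $\Abs{X_i}\le1\le\cw{G}$ whenever $G$ has an edge, the extra terms in the maximum are irrelevant; the edgeless corner case I would treat separately. Hence $\cw{\bar G}=\cw{G}$. The only subtle step is making the identification $G^\star=G$ rigorous from \cref{def:stitching_std}: I would check that up-stitching a single vertex $v$ only renames $v$ to $x^*$ and keeps all incidences. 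I expect this routine verification, rather than the width bound, to be where details could slip.
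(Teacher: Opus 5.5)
Your proposal is correct and is essentially the paper's proof: the paper likewise starts from an optimal carving (of $G[X_0]$ with $X_0 = V(G)\setminus\bigcup_i X_i$, which amounts to your unlabelling of the root leaves), subdivides one tree edge $t$ times to hang new leaves for $x_1^*,\ldots,x_t^*$, bounds every width by $\cw{G}+\sum_i\delta(x_i^*)\le\cw{G}+tk$, and settles the controlled case via $\torso(\bar G)\cong\bar G$. Two minor remarks. First, the edgeless corner case cannot occur, since the $X_i$ are non-empty strict subsets and digraphs are assumed weakly connected; it also could not be ``treated separately'' anyway, because there $\cw{\bar G}\geq 1>0=\cw{G}$. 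Second, up-stitching a single root vertex discards any loops at it, so $G^\star$ agrees with $G$ only up to such loops, which is harmless because loops never contribute to a cut.
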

\begin{proof}
    For the upper bound we construct a rooted carving witnessing said width as follows. Let~$X_0 \coloneqq V(G) \setminus \bigcup_{i=1}^t X_i$ possibly empty. Let~$\torso(\bar G) = (G^\star, \pi,x_1^*,\ldots,x_t^*)$. By \cref{obs:torso} $V(G^\star)$ is partitioned by $X_0,\{x_1^*\},\ldots,\{x_t^*\}$.
    Let $(T_0,\ell_0)$ be a carving of~$G[X_0]$ witnessing its carving width (if $X_0 = \emptyset$ let $T$ be given by a cubic tree on $4$ vertices). Let $T$ be obtained from $T_0$ by subdividing any edge $e \in E(T)$ $t$ times, introducing vertices $\tilde v_1,\ldots,\tilde v_t$ and adding new vertices $v_1,\ldots,v_t$ and  for every $1 \leq i \leq t$ adding the edge $e_i\coloneqq \{v_i,\tilde v_i\}$. Then $T$ is a tree satisfying $\leaves{T} = \leaves{T_0} \cup \{v_1,\ldots,v_t\}$. Define $\ell: V(G^\star) \to \leaves{T}$ via $\restr{\ell}{X_0} = \ell_0$ and $\ell(x_i^*) = v_i$ for every $1 \leq i \leq t$. Then $(T,\ell;(e_1,\ldots,e_t))$ is a rooted carving of $\torso(\bar G)$ satisfying $$\w(T,\ell;(e_1,\ldots,e_t)) \leq \w(T_0,\ell_0) + \delta(x_1^*) + \ldots \delta(x_t^*) \leq \cw{G} + t\cdot k.$$ 

    For the second part of the statement, note that if $\bar G$ is controlled then $\torso(\bar G) \cong \bar G$ and the claim follows. 
\end{proof}

We define the first distinct class of $\Omega$-knitworks of interest for this paper.

\begin{definition}\label{def:class_of_bounded_cw}
    Let $\Omega$ be a well-quasi-order and fix~$k, \ell \geq 1$. We define the class $\mathbf{G}(k, \ell; \Omega)$ of well-linked $\Omega$-knitworks of index $\ell$ admitting carving-width at most~$k$.
\end{definition}
\begin{remark}
    The class $\mathbf{G}(k,\ell;\Omega)$ is $\Omega$-independent by definition.
\end{remark}

Our first main result reads as follows. 

\begin{theorem}[Well-Quasi-Order for bounded Carving Width]\label{thm:wqo_bounded_carvingwidth_knitworks_gen}
    Let $\Omega$ be a well-quasi-order and fix~$k, \ell \geq~1$. Then $\mathbf{G}(k, \ell; \Omega)$ is well-quasi-ordered by strong $\Omega$-knitwork immersion.
\end{theorem}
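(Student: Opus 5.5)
The proof will follow the Nash-Williams minimal-bad-sequence strategy, lifted to rooted carvings. The first step is a reduction to index one. By \cref{lem:knitting_immersions_decomposition} it suffices to show that $\torso(\mathbf{G}(k,\ell;\Omega))$ and $\stitch(\mathbf{G}(k,\ell;\Omega))$ are well-quasi-ordered.

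The pieces are clamped. Each has carving width bounded in terms of $k$, since $\Abs{X_i}\leq k$ and $\delta(X_i)\leq k$ by \cref{def:knitworks_carving}. So a piece is a rooted digraph of bounded size on the root side attached to a single down-stitch vertex. I would therefore handle the torsos and the clamped case uniformly by proving the statement for controlled $\Omega$-knitworks of bounded carving width and bounded index. This is legitimate because torsos are controlled and remain well-linked by \crefthm{obs:torso}{6}, and \cref{lem:cw_of_rooted_vs_unrooted} gives a width bound. The clamped pieces then reduce to planted ones via \cref{cor:immersion_of_stitches_yields_immersion}, with the down-stitch side having bounded size. To keep the induction uniform I will also enlarge $\Omega$ to an extension $\Omega^*$ that can record finite data as labels; \cref{thm:knitting_immersions_decomposition} permits this.

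The core of the argument is a minimal bad sequence. Suppose $(\GGG_i)_{i\in\N}$ is bad. Fix for each $\GGG_i$ a rooted carving $(T_i,\ell_i;\rt(T_i))$ of width at most $k$, and choose the sequence lexicographically minimal with respect to $\Abs{V(T_i)}$. Each edge $f$ of $T_i$ other than the root edges induces a proper rooted cut $Y_f$ of order at most $k$, namely the side containing the roots. Up-stitching at $Y_f$ gives a planted knitwork $\stitch(\GGG_i;\pi,\bar Y_f)$ of strictly smaller carving tree, and it stays well-linked by \cref{obs:up-stitch_of_well-linked_is_well-linked}. The usual argument then shows that the set of all such ``branch'' knitworks, over all $i$ and all orderings $\pi$ of the at most $k$ cut edges, is well-quasi-ordered.

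Next comes the decomposition at the root. For each $i$, take the vertex $r_i$ of $T_i$ adjacent to the root edge. It has at most two further branches $f_i^1,f_i^2$. The knitwork $\GGG_i$ is recovered by knitting the two branch up-stitches into the torso-like remainder, as in \cref{lem:stitch-and-knit} and \cref{lem:knitting_immersions_decomposition_handson_located}. That remainder is a focus with a bounded number of vertices and cut edges, together with labels in $\Omega$ and reliable links at non-root vertices. There are only finitely many such links, since degrees are at most $k$.

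Now apply Higman's \cref{thm:higman} together with \cref{obs:wqo_of_tuples} and \cref{obs:wqo_yields_infinite_chain}. For $i<j$ in an infinite index set this gives simultaneous immersions of the bounded centre parts, which is a finite-isomorphism-type pigeonhole plus label comparisons, and of the two branches, with compatible root orderings. Gluing these via \cref{lem:knitting_immersions_decomposition_handson_located} yields $\GGG_i\hookrightarrow\GGG_j$, a contradiction.

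I expect the main obstacle to be keeping the link data consistent when branches are knitted back. Conditions (4) and (5) of \cref{def:knitwork_immersion} require the composed linkage to be $\m$-respecting at every vertex. The branch immersions carry no $\m$ at their up-stitch vertex, so the glued paths must still respect the original links. My first attempt would be to exploit that the branches are immersed stably with the up-stitch vertex fixed, as in \cref{obs:immersion_on_torso_stitches_respects_stitchvertices}, so that \cref{thm:knitting_knitwork_immersion} applies verbatim. Where centre vertices in $\dom(\m)$ are routed through, I would use well-linkedness and \cref{lem:from_linear_to_respecting_linkage_well-linked} to reroute inside the bounded centre.
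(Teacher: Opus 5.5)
\textbf{Gap: the minimal-bad-sequence step.} The problem is the sentence ``The usual argument then shows that the set of all such branch knitworks is well-quasi-ordered.''

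The Nash-Williams argument needs every branch to lie below its parent. From a bad sequence $B_1,B_2,\ldots$ of branches, with $B_k$ taken from $\GGG_{i_k}$ and $i_1<i_2<\cdots$, you form $\GGG_1,\ldots,\GGG_{i_1-1},B_1,B_2,\ldots$. This sequence is bad only because $\GGG_j\hookrightarrow B_k$ and $B_k\hookrightarrow\GGG_{i_k}$ would give $\GGG_j\hookrightarrow\GGG_{i_k}$. Here $B_k\hookrightarrow\GGG_{i_k}$ fails in general, for two reasons.
\begin{itemize}
\item Rooted immersion requires equal root orders (\cref{def:rooted_immersion}), and carving widths are not monotone along the tree. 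Already the two child edges of the root edge can be wider than $\delta(x_i)$. Such a branch is then incomparable with its parent, yet it may lie above an earlier $\GGG_j$ of matching root order, so the new sequence need not be bad. You cannot escape by restricting the class to a single root order, because the branches would then leave the class.
\item Even when the orders agree, $B_k\hookrightarrow\GGG_{i_k}$ needs an $\m$-respecting strong linkage of full order from the cut at $f$ to the root cut. You only get this if the carving is \emph{linked} (\cref{cor:linked_ebw}), no tree edge between the two cuts is narrower, and well-linkedness is applied through \cref{lem:Menger_for_Euler} and \cref{lem:from_linear_to_respecting_linkage_well-linked}.
\end{itemize}
Your plan fixes an arbitrary carving. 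It also spends well-linkedness on rerouting inside the bounded centre, where nothing needs rerouting: \cref{lem:knitting_immersions_decomposition_handson_located} glues stable immersions by concatenating paths at common cut edges.

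\textbf{What the paper does instead.} It replaces the minimal bad sequence by the Robertson--Seymour/Geelen--Gerards--Whittle \cref{cubic-tree-lemma}. This is applied to linked carvings labelled by $\omega(e)=\Abs{\epsilon(e)}$, with $e\preceq e'$ meaning that the up-stitch at $e$ immerses into the up-stitch at $e'$. The lemma's hypotheses are exactly what can be verified here:
\begin{itemize}
\item it only asks that a descendant edge lie below an ancestor edge \emph{when the pair is $\omega$-linked}, which the paper proves via the linkages above after choosing all cut orderings $\pi$ consistently by pushing them down $\omega$-linked pairs;
\item leaf edges are well-quasi-ordered by \cref{lem:wqo_finitely_many_vertices};
\item root edges are not, by badness;
\item $\preceq$ has no infinite strictly descending sequence.
\end{itemize}
The lemma then yields an antichain of edges whose left and right children form chains, and your final gluing step applies to that.

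\textbf{Two smaller points.} First, for $\ell\geq 2$ a carving edge may separate the roots $x_1^*,\ldots,x_\ell^*$ of a torso, so ``the side containing the roots'' need not exist. The paper first makes torsos planted by giving $x_2^*,\ldots,x_\ell^*$ pairwise incomparable labels in $\Omega\times(\{0,\ldots,\ell\},=)$. Second, the pieces have at most $k+1$ vertices and are handled directly by \cref{lem:wqo_finitely_many_vertices}. They need not be well-linked, so they cannot be folded into a well-linkedness-based argument.
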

\begin{remark}
    We want to emphasise that it seems very plausible that a version of the theorem for maximum degree four (omitting the maps $\mu$ and $\m$ and in sleeves) could be directly derived from the respective well-quasi-ordering result of bounded rankwidth graphs (specifically circle graphs) via the pivot-minor relation; see \cite{wqo_rankwidth} for details. 
\end{remark}

In fact we can prove the following ``stronger'' version.
\begin{theorem}[Well-Quasi-Order for bounded Carving Width]\label{thm:wqo_bounded_carvingwidth_knitworks}
    Let $\Omega$ be a well-quasi-order and fix~$k, \ell \geq~1$. Then $\mathbf{G}(k, \ell; \Omega)$ is well-quasi-ordered by stable strong $\Omega$-knitwork immersion.
\end{theorem}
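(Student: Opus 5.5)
Our plan is a Nash-Williams/Robertson--Seymour style minimal bad sequence argument over rooted carvings, with the knitting machinery of \cref{sec:decomposition_theorem} taking the place of tree surgery. First we reduce to the planted case. By \cref{lem:knitting_immersions_decomposition_handson} and the proof of \cref{lem:knitting_immersions_decomposition}, a knitwork of index $\ell$ can be cut into its torso and its $\ell$ pieces. The torso is controlled and still well-linked by \cref{obs:torso}, and its carving width is at most $k$. Each piece is clamped, with at most $k$ root vertices and a root cut of order at most $k$. Pieces with bounded vertex set and bounded cut fall into finitely many isomorphism types, up to a bounded amount of sleeve data and the labels $\Phi$ on at most $k$ vertices. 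Those labels are handled by \cref{obs:wqo_of_tuples} and Higman's \cref{thm:higman}. So it suffices to treat controlled knitworks of index $\ell$ and width $\le k$. The paper's lemmas are stated for arbitrary immersion, so for stability we track witnesses and check that every gluing step keeps $\GGG'=\GGG$.

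Next we set up the induction on a rooted carving $(T,\mathrm{lab};\rt(T))$ of width $\le k$; we write $\mathrm{lab}$ for the carving's leaf map to avoid a clash with the index $\ell$. Orient $T$ away from the root edges. Every tree edge $f$ separates a set $X_f$ (the side away from the roots). This set induces a proper rooted cut in the complement, of order $\le k$, and stitching there yields a smaller well-linked knitwork. It is well-linked by \cref{obs:up-stitch_of_well-linked_is_well-linked}, since the side that keeps the roots is an up-stitch, and the new vertex carries no $\m$. To make the induction close we strengthen the statement so that one class contains both the "branch" knitworks, namely up-stitches rooted in $\rho(X_f)$ plus the original roots, and their gluings. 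Because $\ell$ varies along the way, we fix a bound $\ell+1$ and add dummy roots, then apply \cref{lem:wqo_loosing_roots}. An internal tree vertex $t$ has three incident edges. So a branch at the edge above $t$ is obtained by knitting its two child branches into a bounded gadget: the torso at $t$, which has at most three vertices, namely the stitch vertices and possibly one labelled leaf, with cuts of order $\le k$.

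Now take a minimal bad sequence $(\GGG_i)$, minimal with respect to the size of the underlying graph. Let $\mathcal{B}$ be the set of all proper child branches of all $\GGG_i$. Each branch is strictly smaller, so the standard argument shows $\mathcal{B}$ is well-quasi-ordered by stable strong immersion. A bad sequence in $\mathcal{B}$ would otherwise splice into the original sequence and contradict minimality; the splicing uses \cref{obs:subknitworks_immerse}/transitivity, \cref{lem:knitwork_imm_is_quasi_order}, to get from a branch to the knitwork containing it. The gadget at the top node of each $\GGG_i$ ranges over a set that is well-quasi-ordered by \cref{obs:wqo_of_tuples}. That set consists of finitely many shapes, each carrying orderings $\mu$, reliable links $\m$ (finitely many choices for degree $\le k$), root orders $\pi$, and $\Omega$-labels. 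Extracting chains twice (\cref{obs:wqo_yields_infinite_chain}) gives $i<j$ with the gadget of $\GGG_i$ immersing into that of $\GGG_j$ and each child branch of $\GGG_i$ immersing into the corresponding child of $\GGG_j$. We then glue with \cref{thm:knitting_knitwork_immersion_general}, or better \cref{lem:knitting_immersions_decomposition_handson_located}, and obtain $\GGG_i\hookrightarrow\GGG_j$, a contradiction.

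The main obstacle is making the gluing legitimate. \cref{thm:knitting_knitwork_immersion} needs both immersions to agree exactly on the stitch vertices with equal cut orders. But a branch of $\GGG_i$ may immerse into a branch of $\GGG_j$ whose separating cut is \emph{larger}, or whose separator edge lies deep inside $\GGG_j$'s tree. Following Robertson--Seymour's lemma on well-linked trees (GM IV), we would first choose the carving to be \emph{linked*}: between any two tree edges, the minimum width along the tree path equals the Eulerian edge-connectivity $\delta$ between the corresponding sides. Eulerian Menger (\cref{lem:Menger_for_Euler}) supplies the linkages. Using these linkages, together with \cref{lem:from_linear_to_respecting_linkage_well-linked} to make them $\m$-respecting (this is exactly where well-linkedness is needed), we reroute an immersion of a small branch into a large one so that its image passes through a cut of matching order, in a matching order $\pi$. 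Doing this while preserving stability and strongness is, we expect, the delicate technical core.
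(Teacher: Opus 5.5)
Your proposal has a genuine gap in the step you call ``the standard argument'': the claim that the set $\mathcal{B}$ of child branches is well-quasi-ordered because a bad sequence $B_1,B_2,\dots$ in $\mathcal{B}$ (with $B_j$ a branch of $\GGG_{n_j}$ and $n_1$ minimal) splices into a bad sequence $\GGG_1,\dots,\GGG_{n_1-1},B_1,B_2,\dots$. For the spliced sequence to be bad you need $\GGG_i\hookrightarrow B_j\Rightarrow\GGG_i\hookrightarrow\GGG_{n_j}$, i.e.\ that a branch immerses into the knitwork it came from. This fails in general.

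A branch is an up-stitch with a new root vertex, not a sub-knitwork, so \cref{obs:subknitworks_immerse} does not apply. Moreover, a rooted immersion (\cref{def:rooted_immersion}) forces the root cuts to have equal order and to be matched edge by edge in the prescribed order. So a branch at a tree edge of width $t$ maps into $\GGG_{n_j}$ only if three things hold: $\delta(x_{n_j})=t$; there is an $\m$-respecting strong linkage of order $t$ from $\rho(x_{n_j})$ to the branch cut; and the ordering $\pi$ of that cut was chosen along this linkage. Suppose the branch cut has order $t=\delta(x_i)$ but is separated from the root of $\GGG_{n_j}$ by a cut of smaller order, or $\delta(x_{n_j})\neq t$. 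Then $\GGG_i\hookrightarrow B_j$ is fully compatible with $(\GGG_i)$ being bad, and minimality gives nothing. This is exactly why a Nash-Williams argument does not transfer to rooted objects with varying cut orders. It is also why the paper, following Robertson--Seymour and Geelen--Gerards--Whittle, uses the lemma on cubic trees (\cref{cubic-tree-lemma}) rather than a minimal bad sequence.

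Here is what the paper does in \cref{thm:wqo_bounded_carvingwidth_knitworks_red}:
\begin{enumerate}
\item It takes linked carvings (\cref{cor:linked_ebw}) and labels tree edges by width.
\item For every $\omega$-linked pair it applies \cref{lem:from_linear_to_respecting_linkage_well-linked} to get an $\m$-respecting strong linkage between the two cuts. This is where well-linkedness is really needed.
\item It pushes the orderings $\pi$ down the tree along these linkages, so that for every $\omega$-linked pair the up-stitch below immerses into the up-stitch above.
\item With $\preceq$ defined by immersion of these up-stitches, leaf edges are well-quasi-ordered (\cref{lem:wqo_finitely_many_vertices}), there is no infinite strictly decreasing sequence, and root edges are not well-quasi-ordered. \cref{cubic-tree-lemma} then yields an antichain of edges whose left and right children form chains.
\end{enumerate}

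The gluing you single out as the delicate core is then routine. Since $\preceq$ is itself rooted immersion of up-stitches, cut orders agree and stitch vertices go to stitch vertices automatically, and \cref{lem:knitting_immersions_decomposition_handson_located} applies directly. Your linkedness and rerouting idea is the right ingredient, but it belongs in proving compatibility of $\preceq$ with $\omega$-linkedness, which replaces your splicing step, not in the gluing.

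Two smaller points.
\begin{itemize}
\item Pieces do not fall into finitely many isomorphism types, because $G[X_i]$ may carry unboundedly many parallel edges. \cref{lem:wqo_finitely_many_vertices} handles this.
\item With $\ell\ge 2$ roots, ``orienting $T$ away from the root edges'' is not well-defined, ``adding dummy roots'' is not a valid operation, and \cref{lem:wqo_loosing_roots} only transfers well-quasi-ordering from more roots to fewer. The paper instead reduces to the planted case: it turns the other $\ell-1$ up-stitch vertices of the torso into non-root vertices carrying pairwise incomparable labels from $\Omega\times\Omega_\ell$.
\end{itemize}
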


Note that given a class $\CCC(\Omega)$ of $\Omega$-knitworks, if $\CCC(\Omega)$ is well-quasi-ordered by stable (strong) $\Omega$-knitwork immersion, then it is also well-quasi-ordered by (strong) $\Omega$-knitwork immersion. 
\begin{observation}\label{obs:stable_wqo_implies_gen}
    Let $\Omega$ be a well-quasi-order. Let $\CCC(\Omega)$ be a class of $\Omega$-knitworks. If $\CCC(\Omega)$ is well-quasi-ordered by stable (strong) $\Omega$-knitwork immersion, then $\CCC(\Omega)$ is well-quasi-ordered by (strong) $\Omega$-knitwork immersion respectively.
\end{observation}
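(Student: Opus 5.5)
The plan is to unwind the definitions: a stable immersion is one witnessed by $\GGG'=\GGG$ itself, which is a legitimate choice of sub-knitwork. The only thing to check is that $\GGG$ is indeed a sub-knitwork of itself in the sense of \cref{def:subknitwork}.

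Concretely, take $H=G$ in \cref{def:subknitwork}. Then the root sets are contained in $V(H)$ and $\rho_H(X_i)=\rho_G(X_i)$. The induced orderings $\mu_\HHH(v)$ coincide with $\mu_\GGG(v)$, since $\rho_H(v)=\rho_G(v)$ for every $v$. The links satisfy $\m_\HHH(v)=\{\MMM\cap\operatorname{Match}(\rho^-(v),\rho^+(v))\}=\m_\GGG(v)$, and $\Phi_\HHH=\Phi_\GGG$. Hence the induced sub-knitwork is $\GGG$ itself, so $\GGG\subseteq\GGG$. Consequently, any stable (strong) $\Omega$-knitwork immersion $\gamma:\HHH\hookrightarrow\GGG$ (respectively $\hookrightarrow^*$) satisfies conditions (1)--(6) of \cref{def:knitwork_immersion} with witness $\GGG$, and is therefore a (strong) $\Omega$-knitwork immersion in the general sense.

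Now let $(\GGG_i)_{i\in\N}$ be any infinite sequence in $\CCC(\Omega)$. By hypothesis there are $i<j$ with a stable (strong) immersion $\GGG_i\hookrightarrow\GGG_j$, and by the previous paragraph this is also a (strong) $\Omega$-knitwork immersion. So no infinite bad sequence exists for the general relation, which is exactly well-quasi-ordering. Both relations are quasi-orders by \cref{lem:knitwork_imm_is_quasi_order}, though only the non-existence of bad sequences is needed here.

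There is no real obstacle. The only point needing care is the sub-knitwork identity $\GGG\subseteq\GGG$, and in particular that the restriction defining $\m_\HHH$ leaves each matching unchanged. It does, because every $\MMM\in\m_\GGG(v)$ already lies in $\operatorname{Match}(\rho^-(v),\rho^+(v))$ by \cref{def:knitwork}.
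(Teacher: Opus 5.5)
Your argument is correct and matches the paper's reasoning, which the paper leaves implicit: a stable immersion is by definition an $\Omega$-knitwork immersion with witness $\GGG$ itself, so any pair $i<j$ found for the stable relation also works for the general one. Your explicit check that $\GGG\subseteq\GGG$ adds a little care that the paper does not state, but it does not change the argument.
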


Combining \cref{thm:wqo_bounded_carvingwidth_knitworks} and \cref{obs:stable_wqo_implies_gen} we derive  \cref{thm:wqo_bounded_carvingwidth_knitworks_gen}; hence we are left to prove \cref{thm:wqo_bounded_carvingwidth_knitworks}.

\medskip

 It is an easy exercise to verify that \cref{thm:wqo_bounded_carvingwidth_knitworks} (even for weak $\Omega$-knitwork immersion) and \cref{thm:wqo_bounded_carvingwidth_knitworks_gen} are in fact equivalent. One direction is given by \cref{obs:stable_wqo_implies_gen}. The other follows from the fact that bounded carving width implies bounded degree, and the fact that $\mathbf{G}(k,\ell;\Omega)$ is $\Omega$-independent: We may ``add colours'' that encode the degrees of the vertices and thus make sure that the resulting ``weak'' $\Omega$-knitwork immersion is in fact forced to be strong and stable, since all the vertices are ``fully used''.

\subsection{A Lemma on Trees}

To prove \cref{thm:wqo_bounded_carvingwidth_knitworks} we need some results from the literature.
The following results are originally due to Robertson and Seymour \cite{GMIV} and have been adapted to a setting akin to ours by Geelen, Gerards, and Whittle \cite{Gee02}. We start with some definitions that we transcribe from \cite{Gee02} to our setting.

\begin{definition}[Labeled Rooted Trees]
\label{def:labeled_trees}
    Let~$T$ be a tree and~$k \in \N$. Let~$r \in V(T)$ be some vertex and direct all the edges of~$T$ away from~$r$, such that~$r$ has in-degree~$0$ and every other vertex has in-degree exactly~$1$. Then we call~$(T,r)$ a \emph{tree rooted at~$r$} or simply \emph{a rooted tree}. Let $P$ be a path starting in $r$ and ending in a leaf of $T$ then we call $P$ a \emph{root-to-leaf} path.
    
    Let~$\omega: E(T) \to \{1,\ldots,k\}$ be some labelling function. Then we call~$(T,\omega)$ a~\emph{$k$-labelled tree with label~$\omega$}, and similarly we call~$(T,r,\omega)$ a~\emph{$k$-labelled rooted tree}.
\end{definition}
Given labelled trees, they define \emph{linkedness} as follows.

\begin{definition}[Linked labeled Trees]
\label{def_linked_trees}
    Let~$(T,r,\omega)$ be a $k$-labelled rooted tree. 
    Let~$e_1,e_2$ be two edges in~$E(T)$ with~$\omega(e_1) = \omega(e_2) = \rho$ such that there is a directed root-to-leaf path~$P$ in~$(T,r)$ visiting~$e_1$ prior to~$e_2$, and let~$P^1_2 \subseteq P$ be the sub-path with first edge~$e_1$ and last edge~$e_2$. If~$\omega(e) \geq \rho$ for every~$e \in E(P^1_2)$, then we say that~$(e_1,e_2)$ is \emph{$\omega$-linked in $(T,r,\omega)$} or~\emph{$e_2$ is~$\omega$-linked to~$e_1$}. 
\end{definition}
\begin{remark}
Clearly~$\omega$-linkedness is a transitive relation, i.e., if~$(f,f')$ and~$(e,f)$ are~$\omega$-linked, then~$(e,f')$ is~$\omega$-linked.
\end{remark}

The above definition is easily extended to rooted and labelled forests, where a rooted forest is a (possibly infinite) set of disjoint rooted trees with a common label~$\omega$ obtained by combining the labels of each tree to a single one. (Here we assume all the trees to have pairwise disjoint vertex and edge sets).

Let~$\FFF$ be a rooted forest and let~$S \subseteq E(\FFF)$ be some set of edges. Then we define~$u_\FFF(S)$ to be the set of those edges in~$\FFF$ whose tail is a head of an edge in~$S$; in a sense the ``child-edges'' with~$u$ referring to ``under''.

The following is the version of the Tree Lemma \cite[Lemma 2.2]{GMIV} due to Robertson and Seymour, in the setting presented by Geelen, Gerards, and Whittle; in their setting an antichain given  a quasi-order $(V_\Omega,\preceq)$ is a set $\AAA \subseteq V_\Omega$ such that element sin $\AAA$ are pairwise incomparable. This is trivially seen to be equivalent to our definition.

\begin{lemma}[Lemma on Trees; Theorem 3.1 in \cite{Gee02}]
\label{tree-lemma}
    Let~$\mathcal{F}$ be a rooted forest of~$k$-labelled rooted trees with common label~$\omega$ for some~$k \in \N$. Let~$\preceq$ denote a quasi-order on the edges of~$\mathcal{F}$ with no infinite strictly descending sequence and such that~$e \preceq f$ whenever~$(f,e)$ is~$\omega$-linked. If the edges of~$\FFF$ are not well-quasi-ordered by~$\preceq$ then there exists an infinite antichain~$\AAA$ of edges of~$\FFF$ such that~$(u_\FFF(\AAA),\preceq)$ is a well-quasi-order.
\end{lemma}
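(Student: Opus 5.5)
Since this is a result from the literature, the shortest route is to invoke it: it is \cite[Theorem~3.1]{Gee02}, a reformulation of the Robertson--Seymour Lemma on Trees \cite[Lemma~2.2]{GMIV}, and the only work needed is to check that our \cref{def:labeled_trees} and \cref{def_linked_trees} agree with theirs. Two points need checking. First, our notion of antichain (an infinite incomparable family) matches theirs (a set of pairwise incomparable elements). Second, $u_\FFF$ is the same ``child-edge'' operator. If instead I wanted a self-contained argument, I would follow Nash-Williams' minimal bad sequence method combined with an induction on the number $k$ of labels.

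\textbf{Self-contained plan.} The first reduction uses Ramsey's theorem together with the hypothesis that $\preceq$ has no infinite strictly descending sequence. Together these show that a quasi-order fails to be a well-quasi-order if and only if it contains an infinite antichain. So I would assume the edges of $\FFF$ are not well-quasi-ordered and argue by contradiction that every infinite antichain $\AAA$ has a child set $u_\FFF(\AAA)$ that is not well-quasi-ordered.

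\textbf{Minimal antichain.} Next I would build a ``minimal'' antichain $a_1,a_2,\ldots$ greedily. At step $i$, I choose $a_i$ to be $\preceq$-minimal among all edges that extend $a_1,\ldots,a_{i-1}$ to an infinite antichain. Such a minimal edge exists because there is no infinite strictly descending sequence. By assumption, $u_\FFF(\{a_i\})$ then contains an infinite antichain, and after passing to a subsequence I can take it to be $b_1,b_2,\ldots$ with each $b_j$ a child of some $a_{i_j}$, where $i_1<i_2<\cdots$.

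\textbf{The key case split.} Fix $j$, and let $\rho=\omega(a_{i_j})$. The case split is on the label of the child $b_j$.
\begin{itemize}
\item If $\omega(b_j)\geq\rho$, then $(a_{i_j},b_j)$ is $\omega$-linked, so $b_j\preceq a_{i_j}$. If strict inequality held for infinitely many $j$, then replacing $a_{i_j}$ by $b_j$ in the tail of the sequence would contradict the minimal choice of $a_{i_j}$, once I check that the result is still an infinite antichain (by passing to a further subsequence via Ramsey). If instead $b_j\sim a_{i_j}$, then the $b_j$ inherit incomparability and are also incomparable with the earlier $a_i$, and I would aim to derive the same contradiction.
\item If $\omega(b_j)<\rho$, then the child starts a new ``lower-label'' regime. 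Here I would cut $\FFF$ at every edge whose label is at least the current minimum, obtaining a forest that effectively uses fewer labels. Applying the induction hypothesis on $k$ to the subtrees hanging below the $b_j$ would then produce either a well-quasi-order or a new antichain whose children are well-quasi-ordered.
\end{itemize}

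\textbf{Main obstacle.} I expect the hardest step to be the interaction between these two cases. The $\omega$-linked comparisons only propagate along stretches where the labels do not drop, so the induction on $k$ must be organised carefully: the minimality is taken first with respect to the largest label class, and then the lower labels are handled by the inductive call. This is exactly where the transitivity of $\omega$-linkedness (the remark after \cref{def_linked_trees}) is used. It is also why I would ultimately defer to the published proof in \cite{Gee02} rather than re-derive it.
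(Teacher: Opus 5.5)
Your primary route is exactly what the paper does. The paper gives no proof of its own: it imports \cite[Theorem~3.1]{Gee02} and remarks, as you planned to check, that the notion of antichain there is equivalent to its own. Your optional self-contained sketch should not be relied on, however. $\omega$-linkedness requires the two edges to have equal labels, so a child $b_j$ with $\omega(b_j)>\omega(a_{i_j})$ is not $\omega$-linked to $a_{i_j}$ and yields no comparison $b_j\preceq a_{i_j}$, which already breaks the first branch of your case split.
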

\begin{remark}
    Think of the quasi-order on the edges as~$e \preceq e'$ if and only if the rooted tree below~$e$ can be ``(strongly) immersed'' in the rooted tree below~$e'$ in some sense that needs to be made precise. 
\end{remark}

Geelen, Gerards, and Whittle extracted a very useful corollary to \cref{tree-lemma} suitable for cubic trees; we introduce the needed notation before stating their result. A \emph{leaf edge} of a forest~$\FFF$ is an edge~$e \in E(\FFF)$ that is incident to some~$v \in \leaves{\FFF}$, \emph{root-edges} and \emph{non-leaf edges} are defined in the obvious way.

\begin{definition}[Binary Forest]
    A \emph{($k$-labelled) binary forest}~$(\FFF,\mathrm{left},\mathrm{right})$ is a ($k$-labelled) rooted forest~$\FFF$ where the roots of the rooted trees have out-degree~$1$, together with maps~$\mathrm{left},\mathrm{right}$ which are defined on the non-leaf edges such that the head of each non-leaf edge~$e \in E(\FFF)$ has exactly two distinct out-going edges~$\lenks{e},\riets{e}$.
\end{definition}

\begin{lemma}[Lemma on Cubic Trees, Theorem 3.2 in \cite{Gee02}]
\label{cubic-tree-lemma}
    Let~$k \in \N$ and let~$(\FFF,\mathrm{left},\mathrm{right})$ be an infinite~$k$-labelled binary forest with label~$\omega$. Let~$\preceq$ denote a quasi-order on~$E(\FFF)$ with no infinite strictly decreasing sequences, such that~$e\preceq f$ whenever~$(f,e)$ is~$\omega$-linked. If the leaf edges of~$\FFF$ are well-quasi-ordered by~$\preceq$ but the root edges of~$\FFF$ are not, then~$\FFF$ contains an infinite sequence~$(e_i)_{i \in \N}$ of non-leaf edges such that:
    \begin{enumerate}
        \item $(e_i)_{i \in \N}$ is an antichain with respect to~$\preceq$,
        \item $(\lenks{e_i})_{i \in \N}$ is a chain with respect to $\preceq$, i.e., $\lenks{e_i} \preceq \lenks{e_j}$ for every $0 \leq i < j $,
        \item $(\riets{e_i})_{i \in \N}$ is a chain with respect to $\preceq$, i.e., $\riets{e_i} \preceq \riets{e_j}$ for every $0 \leq i < j $.
    \end{enumerate}
\end{lemma}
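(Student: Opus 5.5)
The plan is to derive the statement directly from \cref{tree-lemma}, applied to the binary forest $\FFF$ viewed as a $k$-labelled rooted forest with label $\omega$ and with the given quasi-order $\preceq$ on $E(\FFF)$. The hypotheses of \cref{tree-lemma} coincide with those assumed here: there are no infinite strictly descending sequences, and $e \preceq f$ whenever $(f,e)$ is $\omega$-linked. It remains to check that the edges of $\FFF$ are not well-quasi-ordered by $\preceq$. This holds because the root edges are not: a bad sequence of root edges is in particular a bad sequence of edges of $\FFF$. Hence \cref{tree-lemma} yields an infinite antichain $\AAA \subseteq E(\FFF)$ such that $(u_\FFF(\AAA),\preceq)$ is a well-quasi-order.

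Next I would remove the leaf edges from $\AAA$. Since the leaf edges are well-quasi-ordered by $\preceq$, any infinite subset of leaf edges contains a comparable pair. Therefore the antichain $\AAA$ contains only finitely many leaf edges, and discarding them leaves an infinite antichain $\AAA'$ consisting solely of non-leaf edges. Every $e \in \AAA'$ is a non-leaf edge, so $\lenks{e}$ and $\riets{e}$ are defined. Both are out-going edges at the head of $e$, so both lie in $u_\FFF(\AAA)$.

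Now enumerate $\AAA'$ as a sequence $(e_i)_{i\in\N}$ of pairwise distinct edges. Consider the sequence of pairs $(\lenks{e_i},\riets{e_i})$ in $u_\FFF(\AAA)\times u_\FFF(\AAA)$. By \cref{obs:wqo_of_tuples} this product is a well-quasi-order under the componentwise order. By \cref{obs:wqo_yields_infinite_chain} there is therefore an infinite index set $I\subseteq\N$ on which the pairs form a chain. On $I$ both $(\lenks{e_i})_{i\in I}$ and $(\riets{e_i})_{i\in I}$ are then chains with respect to $\preceq$. A subsequence of an antichain is again an antichain, so after re-indexing $I$ as $\N$ all three required properties hold.

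I expect no step to be a real obstacle; the substantive work is contained in \cref{tree-lemma}. The only points needing care are bookkeeping ones. First, the antichain from \cref{tree-lemma} may contain leaf edges, which is why I discard them using the well-quasi-ordering of the leaf edges. Second, I must make sure that $\lenks{e}$ and $\riets{e}$ genuinely belong to $u_\FFF(\AAA)$, so that the well-quasi-ordering of $u_\FFF(\AAA)$ can be applied to them.
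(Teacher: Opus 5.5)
Your proposal is correct. The paper gives no proof of its own: it imports the lemma as Theorem~3.2 of \cite{Gee02}. Your derivation from \cref{tree-lemma} is essentially the standard argument: discard the finitely many leaf edges from the antichain, observe that $\lenks{e},\riets{e}\in u_\FFF(\AAA)$, and extract a subsequence on which both child sequences are chains (via the product order of \cref{obs:wqo_of_tuples}).
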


To make use of \cref{cubic-tree-lemma}, and inspired by the above \cref{def_linked_trees}, we define \emph{linked} carvings as follows.

\begin{definition}[Linked carving]
\label{def:linked_carving}
    Let~$G$ be an Eulerian digraph and let~$(T,\ell)$ be a carving of~$G$ of width~$k \in \N$. Let~$e_1,e_2 \in E(T)$ and let~$T_1,T_2$ be subtrees of~$T$ such that~$T_i$ is the component of~$T-e_i$ not containing $e_{3-i}$ for~$i=1,2$. Let~$P$ be a shortest path in~$T$ containing both~$e_1$ and $e_2$. We call~$\{e_1,e_2\}$ \emph{linked} if the minimum width of an edge in~$P$ equals~$\delta(V(T_1),V(T_2))$. We call~$(T,\ell)$ \emph{linked}, if every pair of edges is linked.
\end{definition}

The following is the aforementioned result due to Geelen, Gerards, and Whittle, where the definition of ``branch width'' depends on the symmetric submodular function.

\begin{theorem}[Theorem 2.1 in \cite{Gee02}]
    An integer-valued symmetric submodular function with branch width $n$ has a linked branch decomposition of width~$n$.
\end{theorem}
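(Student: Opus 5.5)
The statement is quoted from \cite{Gee02}, so we only sketch how we would reprove it. Fix an integer-valued symmetric submodular function $f$ on subsets of a finite ground set $E$. A branch decomposition is a cubic tree $T$ whose leaves are in bijection with $E$. An edge $g$ of $T$ has width $f(B_g)$, where $B_g$ is the set of leaves on one side of $g$; by symmetry this does not depend on the side. A pair $\{e_1,e_2\}$ is linked if the minimum width along the path $P$ joining $e_1$ and $e_2$ equals $\min\{f(X) \sth A_1\subseteq X\subseteq E\setminus A_2\}$. Here $A_i$ is the leaf set of the component of $T-e_i$ that avoids $e_{3-i}$.

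The approach is extremal. Among all branch decompositions of width $n$, choose one, $T$, that minimises a potential such as $\Phi(T)=\sum_{g\in E(T)} c^{f(B_g)}$ with $c$ larger than the number of edges of $T$. Equivalently, we minimise the multiset of edge widths lexicographically from the top. Suppose some pair $\{e_1,e_2\}$ is not linked. Let $\lambda$ be the minimum width on $P$. Then there is $X$ with $A_1\subseteq X$, $X\cap A_2=\emptyset$ and $f(X)<\lambda$. We choose $X$ with $f(X)$ minimum, and subject to that with $|X|$ minimum.

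Next we build a new decomposition $T'$. Take two copies of $T$. In the first copy, replace every leaf set by its intersection with $X$: delete the leaves outside $X$ and suppress the resulting degree-two vertices. In the second copy, do the same with $E\setminus X$. Join the two copies by a new edge whose width is $f(X)$. In the first copy, an edge $g$ of $T$ induces the leaf set $B\cap X$. Here $B$ is the side of $g$ that contains $A_1$, or that is directed towards $e_1$ along $T$. In the second copy, $g$ induces $B\cup X$ up to complementation. Submodularity gives $f(B\cap X)+f(B\cup X)\le f(B)+f(X)$.

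The key step is to orient $B$ correctly for each edge $g$. For edges $g$ on the $e_2$-side of $P$, we want $B\cap X$ to still separate $A_1$ from $A_2$. Minimality of $f(X)$ then gives $f(B\cap X)\ge f(X)$, hence $f(B\cup X)\le f(B)$. For the remaining edges the symmetric choice applies, comparing $X\cup B$ with $X$. In every case each new edge has width at most the width of its parent edge in $T$. In particular, the width of $T'$ is still at most $n$.

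To compare potentials, note that the edges of $P$ have width at least $\lambda>f(X)$. For at least one of them, the image edge whose width could equal $f(B)$ is collapsed or has width $\le f(X)$. The potential of $T'$ is then strictly smaller, which contradicts the choice of $T$. Hence every pair of edges in $T$ is linked.

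We expect the main obstacle to be this bookkeeping: choosing the orientation of $B$ for each edge so that minimality of $X$ applies, and checking that the two copies do not produce more high-width edges than they remove. The tie-break on $|X|$ is needed when the copies create several edges of the same width, so that $\Phi$ strictly drops. That strict decrease is the delicate point we would verify first, following the argument of \cite{Gee02}.
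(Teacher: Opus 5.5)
The paper does not prove this statement; it imports it from \cite{Gee02}. Taken on its own, your sketch has a genuine gap, and it sits exactly at the step you defer: the strict decrease of $\Phi$. The reason you give for that decrease does not work.

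In $T'$, every edge $g$ of $T$ can have two images, one in each copy. Showing that $\Phi$ drops therefore needs a level-by-level count. For every width $h>f(X)$, $T'$ must have no more edges of width $h$ than $T$, and at the topmost level where the counts differ $T'$ must have fewer. It does not help that one edge of $P$ loses its full-width image if an edge of larger width keeps a full-width image and places its second image at an intermediate width above $f(X)$.

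On $P$ this cannot happen. Take $B$ to be the side containing $A_1$. Then $f(B\cap X)+f(B\cup X)\le f(B)+f(X)$, so if one image keeps width $f(B)$ the other has width at most $f(X)$.

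Off $P$ the situation is different. First, you need to say where the new edge is attached: at the position of $e_2$ in the $X$-copy and of $e_1$ in the other copy. With that attachment, the correct $B$ for an off-path edge is the side away from $P$, not the side containing $A_1$ as you state. Then $B\cap(A_1\cup A_2)=\emptyset$, and the two images are $B\cap X$ and $B\setminus X$. These are controlled by two different submodular inequalities: $B$ paired with $X$, and $B$ paired with $E\setminus X$.

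The $X$-image can keep full width only if $X\cup B$ is again a minimum separator. In that case you only get $f(B\setminus X)\le 2f(X)-f(\emptyset)$, and this can lie strictly between $f(X)$ and $f(B)$. Here is a concrete instance.
\begin{itemize}
\item Take the cut function of the multigraph on $a,x,\ell,z,c$ with $p$ edges $ax$, $s$ edges $x\ell$, $s$ edges $\ell z$ and $q$ edges $zc$, where $p,q>s$.
\item Take a decomposition in which the leaf edges $e_1,e_2$ of $a$ and $c$ share a node, and some off-path edge displays $B=\{x,\ell\}$.
\item Then $e_1,e_2$ are not linked, and $X=\{a,x\}$ is the smallest minimum separator.
\item The images of that edge have widths $f(\{x\})=p+s=f(B)$ and $f(\{\ell\})=2s$, and $s<2s<p+s$.
\end{itemize}

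Your tie-break on $|X|$ only rules out the mirror situation. There, full width for $B\setminus X$ forces $X\setminus B$ to be a minimum separator, and minimality of $|X|$ then gives $B\cap X=\emptyset$, so no second image exists.

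So nothing in your argument prevents $T'$ from being lexicographically larger than $T$. To close the gap you need two things. First, a choice of $X$ that also controls the off-path $X$-images. One option is to pick, among minimum separators, one crossing the fewest outer sides of off-path edges. This works because these sides form a laminar family, so replacing $X$ by $X\cup B$ or $X\setminus B$ uncrosses $B$ without crossing anything new. Second, a genuine top-down count per width level. This bookkeeping is the actual content of the cited proof, not a detail that can be postponed.
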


We immediately derive the following.
\begin{corollary}
\label{cor:linked_ebw}
    Let~$G$ be an Eulerian digraph with carving width~$n \in \N$. Then~$G$ admits a linked carving of width~$n$.
\end{corollary}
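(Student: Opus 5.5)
The plan is to read the carving width of $G$ as the branch width of the cut function $\delta$ on $V(G)$, and then apply the theorem of Geelen, Gerards, and Whittle (Theorem~2.1 in \cite{Gee02}) directly. Concretely, I take the ground set $V(G)$ and the function $f\colon 2^{V(G)}\to\N$, $f(X)\coloneqq\delta(X)$. This function is integer-valued, symmetric ($\delta(X)=\delta(\bar X)$) and submodular, as recalled in the subsection on induced cuts.

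First I would match the two notions of decomposition. A branch decomposition of $f$ in the sense of \cite{Gee02} is a cubic tree together with a bijection from $V(G)$ onto its leaves. The width of a tree edge is $f$ of the leaf set on one side, so it is exactly $\w(e)=\Abs{\rho_G(X_e)}$ from \cref{def:carving}. Carvings differ only in that they allow leaves $t$ with $\ell^{-1}(t)=\emptyset$. For $\Abs{V(G)}\geq 3$ (the degenerate cases $\Abs{V(G)}\le 2$ I would check by hand) I would remove such empty leaves one at a time. Each removal deletes the leaf and suppresses the resulting degree-two vertex. This does not change the vertex partition induced by any remaining edge, so widths are preserved and a carving of minimum width can be assumed to use a bijective labelling. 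Conversely, every branch decomposition of $f$ is a carving. Hence the branch width of $f$ equals $\cw{G}=n$.

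Next I would compare the two linkedness notions. In \cite{Gee02} a pair of tree edges $e_1,e_2$ is linked if the minimum width along the tree path joining them equals $\min\{f(Z)\sth A\subseteq Z\subseteq V(G)\setminus B\}$. Here $A$ and $B$ are the leaf sets of the two outer components $T_1,T_2$ of \cref{def:linked_carving}. By \cref{lem:relating_cut_to_induced_cuts} this minimum is precisely $\delta(V(T_1),V(T_2))$, where $V(T_i)$ is identified with the vertex set $\ell^{-1}(\leaves{T_i})$. So the two definitions coincide, and the linked branch decomposition of width $n$ supplied by \cite{Gee02} is a linked carving of $G$ of width $n$.

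The main obstacle, though a mild one, is this bookkeeping. It consists of the empty-leaf normalisation, together with making sure that $\delta(X_1,X_2)$ for disjoint vertex sets (and in particular its convention for disconnected $G$, defined component-wise) agrees with the connectivity function $\lambda_f$ used in \cite{Gee02}. I would settle the latter by appealing to \cref{lem:relating_cut_to_induced_cuts}, and I would treat tiny or empty sides separately, where both quantities are trivially equal.
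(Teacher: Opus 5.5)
Your proposal is correct and is exactly the paper's argument: the paper derives the corollary immediately from Theorem~2.1 of Geelen, Gerards and Whittle, applied to the integer-valued symmetric submodular cut function $\delta$ on $V(G)$. Your identification of carvings with branch decompositions (after removing unlabelled leaves), and of $\delta(V(T_1),V(T_2))$ with their connectivity function via \cref{lem:relating_cut_to_induced_cuts}, is the bookkeeping the paper leaves implicit.
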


With \cref{cubic-tree-lemma} and \cref{cor:linked_ebw} we have gathered all the tools needed from the literature to prove \cref{thm:wqo_bounded_carvingwidth_knitworks}.

\section{Ordering Eulerian digraphs of Bounded Carving Width}
\label{sec:carving_width}

The main goal of this section is to prove \cref{thm:wqo_bounded_carvingwidth_knitworks}. In light of \cref{obs:stable_wqo_implies_gen} we will present all the results in this section in the setting of stable (strong) $\Omega$-knitwork immersion.

As a first step, we prove that the class of Eulerian digraphs on a fixed set of vertices is well-quasi-ordered, noting that the carving width of the class may be unbounded.
\begin{lemma}\label{lem:wqo_finitely_many_vertices}
    Let $\Omega=(V(\Omega),\preceq)$ be a well-quasi-order and let $k, \ell, \Delta \geq 1$. Let $\mathbf{C}(k,\ell,\Delta)$ be a class of $\Omega$-knitworks of index $\ell$ such that for $\GGG \in \mathbf{C}(k,\ell,\Delta)$ it holds $\Abs{V(G)} \leq k$ and for every $v \in V(G)$ with $\deg(v) > \Delta$ it holds $v \notin \dom(\mu_G)$. Then $\mathbf{C}(k,\ell,\Delta)$ is well-quasi-ordered by stable strong $\Omega$-knitwork immersion.
\end{lemma}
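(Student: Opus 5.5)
Given an infinite sequence $(\GGG_i)_{i\in\N}$ in $\mathbf{C}(k,\ell,\Delta)$, the plan is to pass repeatedly to infinite subsequences, using \cref{obs:wqo_yields_infinite_chain}, until all finite combinatorial data coincide. Then we compare the unbounded data, namely edge multiplicities and the $\Phi$-labels, via \cref{obs:wqo_of_tuples} and Higman's \cref{thm:higman}.

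First we fix the combinatorial skeleton. Since $\Abs{V(G_i)}\le k$, by pigeonhole we may assume every $V(G_i)$ is identified with a fixed set $V=\{1,\dots,n\}$. We may further assume the following are the same for all $i$:
\begin{itemize}
\item the partition of $V$ into root sets $X_1,\dots,X_\ell$ and non-root vertices;
\item the sets $\dom(\mu)$, $\dom(\m)$ and $\dom(\Phi)$, and whether each vertex has degree $\le\Delta$.
\end{itemize}
Vertices in $\dom(\mu)$ have degree at most $\Delta$, so their incident edges form a bounded family. Hence the following finitely many data can also be fixed:
\begin{itemize}
\item the multigraph of edges incident to $\dom(\mu)$, together with $\mu$ and the links $\m$ on these vertices, up to the obvious identification;
\item the root orderings $\pi(X_j)$. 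Each $\delta(X_j)$ is not a priori bounded, but $\delta(X_j)$ must match exactly under rooted immersion. So we first pass to a subsequence where each $\delta(X_j)$ is constant or strictly increasing; the latter case needs care, see below.
\end{itemize}
The remaining unbounded data are the multiplicities $m_i(u,v)\in\N$ of edges between pairs $u,v$ not both of bounded type, and the labels $\Phi_i(v)\in V(\Omega)$. Applying \cref{obs:wqo_of_tuples} to $(\N,\le)^{n^2}\times\Omega^{n}$, with a dummy element for undefined labels, yields $i<j$ with $m_i(u,v)\le m_j(u,v)$ and $\Phi_i(v)\preceq\Phi_j(v)$ everywhere.

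Next we build a stable immersion $\gamma$ from $\GGG_i$ to $\GGG_j$ as the identity on vertices and an injection on edges. Each edge is mapped to a single edge with the same endpoints, so every path has length one. This is automatically strong, and condition (4) of \cref{def:knitwork_immersion} is vacuous, because no path passes through a vertex. Conditions (3) and (5) hold because the neighbourhoods of $\dom(\mu)$, together with $\mu$ and $\m$, were made identical. Condition (6) holds by the choice of $i<j$. However, $\GGG_i$ is only a \emph{sub}structure, so stability fails: $G_j$ has surplus edges. Stable immersion requires the witness to be $\GGG_j$ itself, so $G_j$ must not contain unused edges at vertices in $\dom(\mu)$. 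That holds, since those edges were fixed. For vertices outside $\dom(\mu)$, surplus edges are allowed.

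The main obstacle is the rooted condition. The orderings $\pi(X_j)$ must be respected and $\delta(X_j)$ must agree, so surplus edges leaving a root set $X_j$ are forbidden. My first attempt is to treat the root cuts as bounded data: $\Abs{X_j}\le k$, and I would look for whether $\delta(X_j)$ is controlled in the intended applications. If it is not, I would encode each root cut as a word, namely the sequence of pairs (endpoints, direction) along $\pi(X_j)$. Comparing these words by Higman's theorem gives only subsequence embeddings, which change $\delta$. To fix this, I would instead route surplus cut edges in $G_j$ by pairing them into Eulerian cycles using the Eulerian property (\cref{lem:rooted_Eulerian_cuts_are_Eulerian}), so that they need not be images. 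I expect this part to need the most care, and possibly a restriction to bounded root-cut order.
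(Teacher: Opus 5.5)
\textbf{Agreement with the paper.} Apart from the root cuts, your plan coincides with the paper's proof. You fix the vertex set, the root partition and the domains of $\mu,\m,\Phi$, and pigeonhole the bounded data at $\dom(\mu)$, including $\mu$ and $\m$ there. You then compare the remaining edge multiplicities in $(\N,\le)$ and the labels in $\Omega$, and use the identity on vertices with edges sent to parallel copies. Your remarks are correct: condition (4) of \cref{def:knitwork_immersion} is vacuous, and surplus edges do not destroy stability as long as they avoid $\dom(\mu)$. The paper's proof concludes from $H_i\subseteq H_j$ with agreeing sleeves and never treats $\pi(X_p)$ or $\delta(X_p)$, so it overlooks the issue you single out.

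\textbf{The gap, and a counterexample.} That issue is a genuine gap, and no routing argument can close it, because the statement as written is false. By \cref{def:rooted_immersion}, $\delta(X_p)=\delta(X_p')$ is a requirement on the two graphs themselves. A witness sub-knitwork keeps $\rho(X_p)$ unchanged by \cref{def:subknitwork}. So surplus root-cut edges of $G_j$ can neither be left unused nor deleted, which is why your idea of absorbing them into Eulerian cycles fails. Concretely, let $\GGG_m$ have vertex set $\{x,v\}$, $m$ parallel edges $(x,v)$ and $m$ parallel edges $(v,x)$, root set $X_1=\{x\}$ with any ordering, and $\mu,\m,\Phi$ nowhere defined. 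The class $\{\GGG_m \mid m\ge 1\}$ satisfies all hypotheses with $k=2$, $\ell=1$ and any $\Delta$. But $\delta(X_1)=2m$, so $\GGG_m\hookrightarrow^*\GGG_{m'}$ forces $m=m'$, and the class is an antichain even for weak immersion. Your ``strictly increasing'' subcase is exactly this situation and cannot be handled.

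\textbf{The missing hypothesis and the repaired argument.} What is missing is a hypothesis bounding the root cuts, for instance that every vertex of every root set has degree at most $\Delta$; with $\Abs{V(G)}\le k$ this gives $\delta(X_p)\le k\Delta$. This holds wherever the paper applies the lemma:
\begin{itemize}
\item leaf up-stitches and the three-vertex torsos in the carving-width proof have all degrees at most $k$;
\item pieces of $\mathbf{G}(k,\ell;\Omega)$ have $\delta(X_j)\le k$, because $x_j^*$ is a loopless vertex of the carved torso;
\item in $\mathbf{T}(k,d,\Delta,\ell;\Omega)$, root sets have at most $k$ vertices, each of degree at most $\Delta$.
\end{itemize}
Under this hypothesis your first option is the right one. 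Pigeonhole, as one finite labelled structure, the sub-multigraph of all edges lying in some $\rho(X_p)$ or incident to $\dom(\mu)$, together with the orderings $\pi(X_p)$, $\mu$ and $\m$. Then a single fixed relabelling respects all of them simultaneously. Compare multiplicities in $(\N,\le)$ only for the remaining edges, those with both ends outside $\dom(\mu)$ lying inside one root set or outside all root sets. The identity-based map you describe is then a stable strong $\Omega$-knitwork immersion.
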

\begin{proof}
 Let $(\GGG_i)_{i \in \N}$ be a sequence of $\Omega$-knitworks in $\mathbf{C}(k,\ell,\Delta)$ where $\GGG_i = \big(( G_i,\pi,X_1^i,\ldots,X_\ell^i),\mu_i,\m_i,\Phi_i)$. By moving to a suitable subsequence indexed by some infinite set $I \subseteq \N$ we may assume that~$\Abs{V(G_i)} =t+1 \leq k$ for some $t \in \N$ and by using a suited bijection we may assume that $V(H_i) = V=\{v_0,\ldots,v_t\}$ for all $i \in I$. Similarly, using the pigeonhole principle, we may assume for every $i,j \in I$ that $\dom(\mu_{H_i}) = \dom(\mu_{H_j})$ as well as $\dom(\m_{H_i}) = \dom(\m_{H_j})$ and $\dom(\Phi_{H_i}) = \dom(\Phi_{H_j})$. 

It is easily verified that there are only finitely many loopless non-isomorphic rooted digraphs on $t+1$ vertices without parallel edges. Thus, again using the pigeonhole principle, there is an infinite $I' \subseteq I$ such that all digraphs $H_i$ indexed by $I'$ are isomorphic (up to parallel edges), i.e., for every $i,j \in I'$ there is a map $\xi_i^j:  V(H_i) \to V(H_j)$ such that $(u,v) \in E(H_i)$ if and only if $(\xi_i^j(u),\xi_i^j(v)) \in E(H_j)$. Note that there are only $2\binom{t+1}{2}$ possible ``types'' of edges on $t+1$ vertices, and for every pair $0 \leq p , q \leq t$ of distinct $p,q$ let $\alpha^i_{p,q} \in \N$ be the number of $(v_p,v_q)$ edges in $H_i$. Since $(\N, \leq)$ is a well-quasi-order, by applying the pigeonhole principle at most $2\binom{t+1}{2}$ times, we find an infinite index set $J \subseteq I'$ such that for every $i < j $ with $i,j \in J$ it holds $\alpha^i_{p,q} \leq \alpha^j_{p,q}$ for every pair of distinct $0 \leq p,q \leq t$. In particular $H_i \subseteq H_j$ for every $i,j \in J$ with $i<j$.

Let $i \in J$ be fixed. By \cref{def:rooted_graph} of rooted digraphs, $X_p^i \cap X_q^i = \emptyset$ for every $1 \leq p < q \leq \ell$, implying that there are only finitely many distinct partitions of $V$ into $\ell$ sets. Thus, again applying the pigeonhole principle there is an infinite index set $J' \subseteq J$ such that $X_p^i = X_p^j$ for every $1 \leq p \leq \ell$ and every $i,j \in J'$. Further note that for every $j \in J'$ and $v\in V$ with $v \in \dom(\mu_{H_j})$ it holds $\delta(v) \leq \Delta$ and thus by repeated use of the pigeonhole principle---filtering once for each vertex---there is an infinite index set $J^* \subseteq J'$ such that $\mu_i(v) = \mu_j(v)$ as well as $\m_i(v) = \m_j(v)$ (after relabelling the edges) for every $v\in V$ and every $i,j \in J^*$. (Note that $\mu_i(v)$ may contain several edges of the same ``type'' as discussed above). 

Finally, again filtering $t$ times (once for each vertex) and using the fact that $\Omega$ is a well-quasi-order, there is an infinite index set $I^* \subseteq J^*$ such that $\Phi_i(v) \preceq \Phi_j(v)$ for every $i,j \in I^*$. This concludes the proof, since $H_i \subseteq H_j$ by construction, with agreeing $\Omega$-sleeves.
\end{proof}

As alluded to in \cref{sec:knitworks,sec:decomposition_theorem} one can reduce \cref{thm:wqo_bounded_carvingwidth_knitworks} to the setting of clamped knitworks; we may even reduce it to planted knitworks. Thus, we prove the following theorem which easily implies the former as we will see later.

\begin{theorem}\label{thm:wqo_bounded_carvingwidth_knitworks_red}
     Let~$k \in \N$ and let~$\Omega=(V(\Omega),\ll)$ be a well-quasi-order. Let~$(\GGG_i)_{i\in \N}$ be a sequence of well-linked planted~$\Omega$-knitworks~$\GGG_i=((G_i,\pi_i,x_i),\mu_i,\m_i,\Phi_i)$, and such that~$\cw{\GGG_i} \leq k$ for all~$i \in \N$. Then there exist~$j>i \geq 1$ such that~$\GGG_i \hookrightarrow \GGG_j$ by stable (strong) $\Omega$-knitwork immersion.
\end{theorem}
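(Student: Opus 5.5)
We argue by contradiction in the style of Nash-Williams and Robertson–Seymour, using the Lemma on Cubic Trees (\cref{cubic-tree-lemma}). For every $i$, fix by \cref{cor:linked_ebw} a linked carving $(T_i,\ell_i)$ of the Eulerian digraph $G_i$; since $\GGG_i$ is planted, $\cw{G_i}=\cw{\GGG_i}\le k$ by \cref{lem:cw_of_rooted_vs_unrooted}. We root $T_i$ at the leaf $\ell_i(x_i)$ and orient all edges away from it; after suppressing degree-two vertices this is a binary forest $\FFF$, with each $T_i$ a component. We fix $\lenks{\cdot},\riets{\cdot}$ arbitrarily and label each edge $f$ by $\omega(f)\coloneqq \w(f)\le k$. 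Each edge $f$ of $T_i$ determines a set $Y_f\subseteq V(G_i)$: the vertices labelled by leaves below $f$. This set induces a proper rooted cut in $\bar G_i$, because the root $x_i$ lies above $f$. We assign to $f$ the planted knitwork
\[
\GGG(f)\coloneqq\stitch(\GGG_i;\pi_f,\bar Y_f),
\]
the up-stitch at $Y_f$, with up-stitch vertex as root. It is an $\Omega$-knitwork, well-linked by \cref{obs:up-stitch_of_well-linked_is_well-linked}, and its carving width is at most $k$ (we restrict the carving to the subtree below $f$). The order $\pi_f$ is fixed canonically, e.g.\ by choosing it consistently along linked edges.

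We define a quasi-order on $E(\FFF)$ by $f\preceq g$ iff $\GGG(f)\hookrightarrow\GGG(g)$ by stable strong knitwork immersion, with a suitable compatibility of the root orderings. This relation is transitive by \cref{lem:knitwork_imm_is_quasi_order}. Root edges carry $\GGG(f)=\GGG_i$ up to the trivial root relabelling, so a bad sequence of the $\GGG_i$ gives non-wqo root edges. Leaf edges carry knitworks on at most two vertices of degree at most $k$, which are wqo by \cref{lem:wqo_finitely_many_vertices}. There is no infinite strictly decreasing sequence, since a strict immersion strictly decreases $|V|+|E|$ (mutual immersion forces isomorphism-size equality). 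The key hypothesis to verify is that $g\preceq f$ whenever $(f,g)$ is $\omega$-linked. In that case $g$ lies below $f$ and all widths on the path are at least $\w(g)=\w(f)$, so by linkedness of the carving $\delta(Y_f^c,Y_g)=\w(g)$. By \cref{lem:from_linear_to_respecting_linkage_well-linked}, applied in $\GGG(f)$ to the cuts at the root and at $Y_g$, there is an $\m$-respecting strong linkage of order $\w(g)$ between $\rho(Y_f)$ and $\rho(Y_g)$ through the region between them. Routing the root edges of $\GGG(g)$ along this linkage, and mapping everything below $g$ identically, yields a stable strong immersion $\GGG(g)\hookrightarrow\GGG(f)$. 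Here the order $\pi_g$ must be chosen to be the one induced via this linkage from $\pi_f$, and making this choice coherently is a technical point we must arrange when fixing the orders, as Geelen–Gerards–Whittle do.

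Now \cref{cubic-tree-lemma} yields an antichain $(e_i)$ of non-leaf edges such that both $(\lenks{e_i})$ and $(\riets{e_i})$ are chains. We pass to an infinite subsequence on which the local picture at the head of $e_i$ is constant up to isomorphism: the three cut sizes, which edges of $\rho(Y_{e_i})$ lie in $\rho(Y_{\lenks{e_i}})$ and which in $\rho(Y_{\riets{e_i}})$, the edges between the two child sets, and their positions in the orderings. This is a finitely-bounded datum since all widths are at most $k$. Then $\GGG(e_i)$ is obtained by knitting $\GGG(\lenks{e_i})$ and $\GGG(\riets{e_i})$ into a fixed small planted knitwork on three vertices. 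Applying \cref{lem:knitting_immersions_decomposition_handson_located}, or equivalently \cref{thm:knitting_knitwork_immersion_general} twice, with the identity on this small torso and the given immersions of the children, gives $e_i\preceq e_j$ for $i<j$. This contradicts the antichain property, and the theorem follows.

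The main obstacle is the linkedness step: turning an $\omega$-linked pair into a genuine stable strong immersion that respects $\m$ and the root orderings. This forces the canonical choice of $\pi_f$ and uses well-linkedness crucially through \cref{lem:from_linear_to_respecting_linkage_well-linked}. The final knitting step only needs the orderings to be consistent across the subsequence, which the pigeonhole filtering provides.
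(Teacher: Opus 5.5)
Your route is the paper's own: a linked carving rooted at $\ell_i(x_i)$, tree edges compared through the up-stitches below them, \cref{lem:from_linear_to_respecting_linkage_well-linked} for $\omega$-linked pairs, \cref{cubic-tree-lemma}, and the children knitted back via \cref{lem:knitting_immersions_decomposition_handson_located}. However, your verification that $\preceq$ has no infinite strictly decreasing sequence is wrong as stated.

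A strict stable immersion need not decrease $\Abs{V}+\Abs{E}$. Two up-stitches can have identical underlying rooted digraphs and identical $\mu$, yet be strictly comparable. This happens if some label satisfies $\Phi(v)\preceq\Phi(v')$ but $\Phi(v')\not\preceq\Phi(v)$ in $\Omega$, or if $\m(v)\subsetneq\m(v')$, or if $\dom(\Phi)$ is larger. So your parenthetical, that equal size forces mutual immersion, is false.

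What is true is weaker, and suffices. Along $\HHH_1\succ\HHH_2\succ\cdots$ the sizes are non-increasing, hence eventually constant. Between knitworks of equal size, every stable immersion sends edges to single edges and is bijective, so it is an isomorphism of the underlying rooted digraphs. Once the size is bounded there are only finitely many types (rooted digraph with root order, $\mu$, $\m$, $\dom(\Phi)$), so pass to one type with fixed identifications. Then the identity is an immersion $\HHH_i\hookrightarrow\HHH_j$ whenever the $\Phi$-tuples are coordinatewise comparable. By \cref{obs:wqo_of_tuples} this happens for some $i<j$, contradicting strict decrease. The paper argues the same way, reducing to a strictly decreasing sequence in $\Omega$.

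Two smaller points. First, the coherent choice of the orders $\pi_f$, which you leave open, is routine and is exactly the paper's construction:
\begin{enumerate}
\item Fix $\pi_{e_i}\coloneqq\pi_i(x_i)$ on the root edge and proceed top-down.
\item Define $\pi_g$ from $\pi_f$ via the linkage, where $f$ is the lowest ancestor with $(f,g)$ $\omega$-linked; choose $\pi_g$ arbitrarily if there is none.
\item Any other $\omega$-linked ancestor $f'$ of $g$ is $\omega$-linked to this $f$, since all widths between them are at least $\omega(g)=\omega(f)$.
\item Induction plus transitivity (\cref{lem:knitwork_imm_is_quasi_order}) then gives $g\preceq f'$.
\end{enumerate}
Second, it is the complement $\bar Y_f$ (the side containing $x_i$), not $Y_f$, that induces a proper rooted cut. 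Since the set argument of $\stitch$ is the side that is kept, the knitwork you mean is $\stitch(\GGG_i;\pi_f,Y_f)$, not $\stitch(\GGG_i;\pi_f,\bar Y_f)$.
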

\begin{proof}
We give a proof for stable strong immersion since the general case follows from it. We will omit to say ``stable'' and tacitly assume all immersions to satisfy $(1)-(6)$ of \cref{def:knitwork_immersion} for notational simplicity. Towards a contradiction, assume that the theorem is false; let~$(\GGG_i)_{i \in \N}$ be a bad sequence of well-linked planted~$\Omega$-knitworks with~$\cw{\GGG_i} \leq k $ for all~$i \in \N$ with respect to strong~$\Omega$-knitwork immersion.

Let~$\GGG_i=(\bar{G_i},\mu_i,\m_i,\Phi_i)$ where~$\bar G_i = (G_i,\pi_{G_i},x_i)$ for some~$x_i \in V(G_i)$ and recall that~$x_i \notin \dom(\Phi_i) \cup \dom(\mu_i)$ by \cref{def:knitwork}, whence $x_i \notin \dom(\m_i)$, for every~$i \in \N$. Note that since $\bar G_i$ is controlled, $G_i$ is Eulerian (in particular its defect is $0$). \cref{lem:cw_of_rooted_vs_unrooted} yields $\cw{\GGG_i} = \cw{G_i}$, hence~$\delta(x_i) \leq k$ for every~$i \in \N$. By a variation of the pigeonhole principle, there exists an infinite index set $I\subseteq \N$ such that~$\delta(x_i) = \delta(x_j)$ for every~$i,j \in I$. Let $\delta \coloneqq \delta(x_i)$ for $i \in I$ and identify $I\cong \N$ for simplicity.

 By \cref{cor:linked_ebw} there exists a linked carving~$(T_i,\ell_i)$ for~$G_i$ and every~$i \in \N$ witnessing its carving width and in particular witnessing~$\cw{\GGG_i}$. For every~$i \in \N$, let~$e_i \in E(T_i)$ be the unique edge incident to~$r_i \coloneqq \ell_i(x_i)$, then~$(T_i,\ell_i;e_i)$ is a rooted carving of~$\bar G_i$ and we root the cubic trees~$T_i$ at~$r_i$. Thus the edges of the rooted tree~$(T_i,r_i)$ are directed away from $r_i$ from here on, whereas for $(T_i,\ell_i;e_i)$ we may technically view $T_i$ as an undirected tree. Let~$\epsilon_i \coloneqq \epsilon_{(T_i,\ell_i)}$ and define~$\omega_i(e) \coloneqq \Abs{\epsilon_i(e)}$ for every~$e \in E(T_i)$ and every~$i \in \N$. Then, for every $i\in \N$,~$\omega_i\colon E(T_i) \to \{1,\ldots,k\}$ is a labelling function and~$(T_i,r_i,\omega_i)$ is a~$k$-labelled rooted tree with label~$\omega_i$. 
 
 Fix $i\in \N$ and let~$e=(u,v) \in E(T_i)$ be some directed edge of the rooted tree $(T_i,r_i)$. Let~$T_i^u$,~$T_i^v$ be the two unique components of~$T_i-e$, where~$T_i^u$ contains~$u$ and~$T_i^v$ contains~$v$. We define~$X_e \coloneqq \ell_i^{-1}(\leaves{T_i^u})$ and~$X^e \coloneqq \ell_i^{-1}(\leaves{T_i^v})$. Clearly~$X^e = \bar{X_e}$ by definition. We derive the following.

\begin{claim}\label{claim:thm_wqo_bounded_carvingwidth_knitworks_red_uno}
    $X_e$ induces a proper rooted cut in~$\bar G_i$ and~$\rho(X_e) = \epsilon_i(e) = \rho(X^e)$.
\end{claim}
\begin{claimproof}
    Since~$(T_i,r_i)$ is a rooted tree with~$r_i =\ell(x_i)$, there is a root-to-leaf path $P$ in $T_i$ that contains $(u,v)$. Thus $r_i 
\in T_i^u$ implying~$x_i \in X_e$. Therefore~$X_e$ induces a proper rooted cut; the rest follows trivially by definition.
\end{claimproof}

 Assuming that~$(V(T_i)\cup E(T_i)) \cap (V(T_j)\cup E(T_j)) = \emptyset$ for distinct~$ i,j \in \N$---otherwise rename the trees accordingly---we let~$(\FFF,\mathrm{left},\mathrm{right})$ be the binary forest obtained by the union of all the rooted trees~$(T_i,r_i)$ after a choice of~$\mathrm{left}$ and~$\mathrm{right}$ maps for each rooted tree which is feasible since they are cubic. Let~$\omega:E(\FFF) \to \{1,\ldots,k\}$ be the respective labelling function of~$\FFF$ obtained by setting~$\restr{\omega}{E(T_i)} = \omega_i$ for every $i\in \N$. 
 
 We continue by defining a quasi-order~$\preceq$ on $E(\FFF)$ that respects~$\omega$-linkedness. To this extent, we need the following technical result that bridges the \cref{def:linked_carving} of linked carvings to $\omega$-linkedness. Intuitively speaking the following claim states that, given a pair $(e,f)$ of $\omega$-linked edges in $T_i$, the cuts (of equal order $t\in 2\N$) that they induce in $G_i$ can be connected by a $t$-linkage $\LLL_i$ such that $\LLL_i$ respects the restrictions imposed by $\m_i$, i.e., it is an $\m_i$-respecting linkage.  Recall \cref{def:types_of_linkages_on_a_cut}.

 \begin{claim}\label{claim:thm_wqo_bounded_carvingwidth_knitworks_red_feasible_linkages}
      Let~$i \in \N$ be fixed and~$e,f \in E(T_i)$ be two edges such that~$(e,f)$ is~$\omega$-linked and let~$t \coloneqq \omega(e) = \omega(f)$. Let $X \coloneqq X_f\cap X^e$. Then there exists an $\m_i$-respecting strong~$\{\rho(X^f),\rho(X_e)\}$-linkage~$\LLL_i \in \LLL(\rho,X)$ of order~$t$. Further every path~$P \in \LLL_i$ has one endpoint in~$X^f$ and one endpoint in~$X_e$.
 \end{claim}
 \begin{claimproof}
    Note that $X_f$ induces a proper rooted cut and $X^e$ induces a rooted cut ($\bar X^e$ induces a proper rooted cut) with $X_f\cap X^e = \emptyset$. Since the carving~$(T_i,\ell_i)$ is linked, using the \cref{def:linked_carving} and \cref{lem:Menger_for_Euler}, the claim follows from \cref{lem:from_linear_to_respecting_linkage_well-linked} for all assumptions are satisfied.
 \end{claimproof}

Recall that~$X^f = \bar{X_f}$.
  \begin{claim}\label{claim:thm_wqo_bounded_carvingwidth_knitworks_red_linked}
        Let~$i \in \N$ be fixed and~$e,f \in E(T_i)$ be two edges such that~$(e,f)$ is~$\omega$-linked. Let~$\pi_{G_i}(X_e)=\pi_{G_i}(X^e)$ be an ordering of~$\rho_{G_i}(X_e)$. Then there exists an ordering~$\pi_{G_i}(X_f)=\pi_{G_i}(X^f)$ of~$\rho_{G_i}(X_f)$ such that~$\stitch(\GGG_i;\pi_{G_i},X^f) \hookrightarrow \stitch(\GGG_i;\pi_{G_i},X^{e})$ by strong $\Omega$-knitwork immersion.
    \end{claim}
    \begin{claimproof}
      For notational convenience we omit the script $i$ throughout the proof of the claim, i.e., we write~$\GGG= \GGG_i$ and $T = T_i$. Let~$P$ be the shortest directed path in~$T$ visiting the edges~$e$ and~$f$ in this order. Then, since~$(e,f)$ is~$\omega$-linked, we know that~$\omega(e) = t = \omega(f)$ and~$\omega(\eta) \geq t$ for all~$\eta \in E(P)$ and some~$t \in 2\N$. In particular, since the carving is linked, we know that~$\Abs{\epsilon(e)} = \Abs{\epsilon(f)}$ and by \cref{def:linked_carving} of linkedness there is no $\{X^f,X_e\}$-cut in~$G$ of lower order. By \cref{claim:thm_wqo_bounded_carvingwidth_knitworks_red_feasible_linkages} there exists a strong~$\{\rho(X^f),\rho(X_e)\}$-linkage~$\LLL=\{P_1,\ldots,P_t\}$ of order~$t$ such that~$M_{\LLL}(v) \in \m(v)$ for every~$v \in \dom(\m) \cap X_f \cap X^e$ and such that~$\bigcup_{i=1}^t V^\circ(P_i)  \cap (X^f \cup X_e) = \emptyset$ (since $\LLL \in \LLL(\rho,X_f \cap X^e)$).

      \smallskip
      
      Let~$\pi_{G}(X_e) = (e_1,\ldots,e_t)$ be the ordering as given by the claim and without loss of generality assume that~$e_j$ is an end of~$P_j$ for every~$1 \leq j \leq t$ (else relabel the edges of the paths accordingly). 
        Let~$\rho(X^f) = \{f_1,\ldots,f_t\}= \rho(X_f) $ and without loss of generality assume that~$f_j$ is the other end of~$P_j$---note that~$f_j = e_j$ is possible---for every~$1\leq j \leq t$, else rename the edges accordingly. We set~$\pi_{G}(X_f) \coloneqq (f_1,\ldots,f_t)\eqqcolon \pi_G(X^f)$. Finally, let~$\GGG^{X_f}\coloneqq (\bar G^{X_f}, \mu^{X_f},\m^{X_f}.\Phi^{X_f}) =\stitch(\GGG;\pi_{G},X^f)$ and~$\GGG^{X_e} \coloneqq (\bar G^{X_e}, \mu^{X_e},\m^{X_e},\Phi^{X_e}) =\stitch(\GGG;\pi_{G},X^e)$ be the respective up-stitches with up-stitch vertices~$f^*$ and~$e^*$. Note that by \cref{claim:thm_wqo_bounded_carvingwidth_knitworks_red_uno}~$X_e,X_f$ are proper rooted cuts with $X^f = \bar X_f$ as well as $X^e = \bar X_e$ whence the above defined up-stitches are well-defined. Further recall that $V(G^{X_f}) = X^f \cup \{f^*\}$ by \cref{def:stitching_std}.
      
      We claim that~$\GGG^{X_f} \hookrightarrow \GGG^{X_e}$ by strong $\Omega$-knitwork immersion. To see this, define~$\gamma\colon V(G^{X_f}) \cup E(G^{X_f}) \to G^{X_e}$ to be the identity on~$X^f \subseteq X^e$  and set~$\gamma(f^*) \coloneqq e^*$. Further let~$\gamma$ be the identity on~$E(G[X^f]) \subseteq E(G[X^e])$ and set~$\gamma(f_j) \coloneqq P_j$ for every~$1 \leq j \leq t$. We claim that~$\gamma$ yields a strong~$\Omega$-knitwork immersion. We verify Conditions (1)-(6) of \cref{def:knitwork_immersion}.
      \begin{itemize}
      \item[(1)] By construction~$\gamma\colon G^{X_f} \hookrightarrow G^{X_e}$ is a strong immersion where further~$\gamma\big((f_1,\ldots,f_t)\big) = (e_1,\ldots,e_t)$ since~$e_j \in P_j = \gamma(f_j)$, implying that~$\gamma\colon \bar G^{X_f} \hookrightarrow \bar G^{X_e}$ is a strong rooted immersion; Condition (1) is satisfied. 
      \item[(2)] For $\m^{X_f}$ and $\m^{X_e}$ this is trivial since $\restr{\gamma}{X^f}$ is the identity and they are both not defined on $f^*$ and $e^*$ respectively. The same holds true for $\mu^{X_f}$ and $\mu^{X_e}$; Condition (2) is satisfied. 
      \item[(3)] The claim is trivially true for all~$v \in X^f$ with~$\rho(v) \subset E(G[X^f])$ again since~$\restr{\gamma}{E(G[X^f])}$ is the identity and thus~$\mu^{X_f}(v) = \mu^{X_e}(v)$ and in particular Condition (3) holds. Recall that~$f^* \notin \dom(\mu^{X_f})$ by definition. Thus let~$v \in X^f$ be a vertex with~$\rho_{G^{X_f}}(v) \cap \{f_1,\ldots,f_t\} \neq \emptyset$. By \cref{def:stitching_std,def:stitching_knitwork}~$\rho_{G^{X_f}}(v) = \rho_{G^{X_e}}(v) $ and~$\mu_{G^{X_f}}(v) = (g_1,\ldots,g_p) = \mu_{G^{X_e}}(v)$ for some~$p \leq k$. Again Condition (3) is trivially satisfied for all~$1 \leq j \leq p$ with~$g_j \notin \{f_1,\ldots,f_t\}$ since~$\gamma(g_j) = g_j$, and for~$g_j = f_\ell$ for some~$1 \leq \ell \leq t$ it holds~$\gamma(g_j)=P_\ell \subset G^{X_e}$ which is a path starting in~$f_\ell = g_j$; Condition (3) is satisfied. 
      \item[(4)] Note that~$f^* \notin \dom(\m^{X_f})$ and~$e^* \notin \dom(\m^{X_e})$ by \cref{def:stitching_knitwork} for they are up-stitch vertices, thus Condition (4) does not apply to~$e^*$. Finally, for~$v' \in \dom(\m^{X_e}) \setminus \gamma(\dom(\m^{X_f}))$ with~$v' \notin X_f \cap X^e$ the claim follows once again since~$\gamma$ is the identity (in particular if~$v' \in \dom(\m^{X_e})$ then~$v' = \gamma(v')$ whence~$v' \in \gamma(\dom(\m^{X_f}))$ and Condition (4) does not apply).
      
      If~$v' \in X_f \cap X^e$, then by our choice of~$\LLL_i$ satisfying \cref{claim:thm_wqo_bounded_carvingwidth_knitworks_red_feasible_linkages} and by construction of~$\gamma$,~$M_\gamma(v') = M_{\LLL_i}(v') \in \m^{X_e}(v')$; Condition (4) is satisfied. 

      \item[(5)] Since by definition~$\m^{X_e}$ and~$\m^{X_f}$ are not defined on~$e^*$ and~$f^*$ respectively, Condition (5) follows again by~$\restr{\gamma}{X^f}$ being the identity. 
      
      \item[(6)] Since~$f^* \notin \dom(\Phi^{X_f})$ and $e^* \notin \dom(\Phi^{X_e})$ by \cref{def:stitching_knitwork}, and~$\restr{\gamma}{X^f}$ is the identity, Condition (6) is satisfied.
      \end{itemize}
    Thus the claim follows.
    \end{claimproof}

We derive the following.
\begin{claim}\label{claim:thm_wqo_bounded_carvingwidth_knitworks_red_pi_i}
    Let~$i \in \N$ be fixed. Then there is a map~$\pi_i$ which for each edge~$e \in E(T_i)$ fixes an ordering~$\pi_i(X_e) = \pi_i(X^e)$ of the cut~$\rho_{G_i}(X_e)$, such that if~$(e,f)$ is~$\omega_i$-linked for edges~$e,f \in E(T_i)$, then $\stitch(\GGG_i;\pi_i,X^f) \hookrightarrow \stitch(\GGG_i;\pi_i,X^e)$ by strong $\Omega$-knitwork immersion. Further~$\pi_i(X_{e_i}) = \pi_{G_i}(x_i) = \pi_i(X^{e_i})$.
\end{claim}
\begin{claimproof}

    The proof is via induction by pushing orders starting from the root edge~$e_i$ as follows. Let~$F_i(e_i) \subseteq E(T_i)$ be the maximal set of edges containing all~$f \in E(T_i)$ with~$(e_i,f)$ being~$\omega_i$-linked such that there is no other edge~$f' \in E(T_i)$ on the unique directed~$(e_i,f)$-path in~$T_i$ with~$f' \in F_i(e)$. For each such edge $f \in F_i$ define~$\pi_i(X_f)$ via \cref{claim:thm_wqo_bounded_carvingwidth_knitworks_red_linked}. Let~$T_i^1,T_i^2$ be the sub-trees rooted in~$\lenks{e_i}$ and $\riets{e_i}$ respectively. If~$\lenks{e_i}\in F_i(e)$ then~$\pi_i(X_{\lenks{e_i}})$ was already defined and we may continue inductively as above by defining $F_i(\lenks{e_i})$ analogously. Otherwise, choose an arbitrary ordering~$\pi_i(X_{\lenks{e_i}})$ and again continue as above by defining $F_i(\lenks{e_i})$. Repeat this procedure iteratively until~$\pi_i$ has been defined on all of $E(T_i)$.

    This procedure clearly terminates since every tree has finitely many edges, defining~$\pi_i$ for every edge of~$E(T_i)$; recall that if~$(f,f')$ and~$(e,f)$ are~$\omega_i$-linked then so is~$(e,f')$, i.e., the relation is transitive. Hence, we do not ``doubly define"~$\pi_i$.

    Finally, the proof of the claim follows from the transitivity of strong~$\Omega$-knitwork immersion. To see this let~$(e,f')$ be~$\omega_i$-linked and assume that~$\stitch(\GGG_i;\pi_i,X^{f'})$ does not strongly immerse in $\stitch(\GGG_i;\pi_i,X^e)$. If there is no~$f' \in E(T_i)$ distinct from~$e$ and~$f$ such that~$(e,f)$ and~$(f,f')$ are~$\omega_i$-linked, then the definition of~$\pi_i$ and \cref{claim:thm_wqo_bounded_carvingwidth_knitworks_red_linked} yield a contradiction. Thus let~$e,f'$ be chosen so that the maximal set~$F \subseteq E(T_i)$ with~$f \in F$ if and only if~$(e,f),(f,f')$ are~$\omega_i$-linked has minimal cardinality; by the above~$\Abs{F} \geq 1$. Let~$f \in F$, then~$(e,f),(f,f')$ are~$\omega_i$-linked and by the minimality choice of~$e,f'$ it follows that~$\stitch(\GGG_i;\pi_i,X^{f'}) \hookrightarrow \stitch(\GGG_i;\pi_i,X^f)$ and~$\stitch(\GGG_i;\pi_iX^f) \hookrightarrow \stitch(\GGG_i;\pi_i,X^e)$. By \cref{lem:knitwork_imm_is_quasi_order} strong~$\Omega$-knitwork immersion is transitive, and thus $\stitch(\GGG_i;\pi_i,X^{f'}) \hookrightarrow \stitch(\GGG_i;\pi_i,X^e)$; a contradiction to the choice of~$e,f'$. The claim follows.
\end{claimproof}

By collecting all of the maps~$\pi_i$ of \cref{claim:thm_wqo_bounded_carvingwidth_knitworks_red_pi_i} we may define~$\pi$ for~$E(\FFF)$ via~$\restr{\pi}{E(T_i)} = \pi_i$. Finally, in light of the \cref{cubic-tree-lemma} on Cubic Trees, we define a quasi-order~$\preceq$ on~$E(\FFF)$ as follows.

 \begin{itemize}
        \item[$\preceq:$] Let~$e=(u,v),e'=(u',v') \in E(\FFF)$ with~$e \in E(T_i)$ and~$e' \in E(T_j)$ for some~$i,j \in \N$. Then we define~$e\preceq e'$ if and only if~$\stitch(\GGG_i;\pi,X^e) \hookrightarrow \stitch(\GGG_j;\pi,X^{e'})$ by strong~$\Omega$-knitwork immersion.
\end{itemize}

We next verify that the above-defined objects satisfy the conditions of the \cref{cubic-tree-lemma} on Cubic Trees. By definition of~$\preceq$ and the choice of~$\pi$ respecting \cref{claim:thm_wqo_bounded_carvingwidth_knitworks_red_pi_i} we derive that~$e \preceq f$ whenever~$(e,f)$ is~$\omega$-linked (in particular $e,f$ are edges of the same tree in $\FFF$).

We start with the root edges.
    \begin{claim}
        The root edges are not well-quasi-ordered by $\preceq$.
    \end{claim}
    \begin{claimproof}
        If the root edges were well-quasi-ordered, then this would imply the existence of~$1\leq i<j$ such that $\stitch(\GGG_i; \pi,X^{e_i}) \hookrightarrow \stitch(\GGG_j; \pi,X^{e_j})$ by strong $\Omega$-knitwork immersion, where~$\pi(X^{e_i}) = \pi_{G_i}(x_i)$ and~$\pi(X^{e_j}) = \pi_{G_j}(x_j)$ by definition. By the \cref{def:stitching_knitwork} of up-stitches and the fact that~$x_i \notin \dom(\Phi_i) \cup \dom(\mu_i)$, we derive that~$\stitch(\GGG_i; \pi,X^{e_i}) \cong \GGG_i$, as well as $\stitch(\GGG_j;\pi,X^{e_j}) \cong \GGG_j$. This contradicts the assumption that the initial sequence $(\GGG_i)_{i\in \N}$ was bad.
    \end{claimproof}

 Next, we deal with the leaf edges.
    \begin{claim}\label{claim:thm_wqo_bounded_carvingwidth_knitworks_red_leafs}
        The leaf edges are well-quasi-ordered by $\preceq$.
    \end{claim}
    \begin{claimproof}
        For any $i \in \N$ and any leaf edge~$e=(t,t')\in E(T_i)$ note that the graph~$G_i^e\coloneqq G_i^{X_e}$ of~$\stitch(\GGG_i;\pi,X^e)$ consists of exactly two vertices: namely the \emph{central vertex}~$v \coloneqq \ell_i^{-1}(t')$ and the newly introduced up-stitch vertex~$x^*$ say. The claim follows by \cref{lem:wqo_finitely_many_vertices}, noting that all the degrees are bounded by $k$.
    \end{claimproof}

We prove the last missing ingredient to apply the \cref{cubic-tree-lemma} on Cubic Trees.

\begin{claim}
    There is no infinite strictly decreasing sequence~$(f_i)_{i \in \N}$ for $\preceq$.
\end{claim}
\begin{claimproof}
    Assume the contrary; let~$\GGG_i = (G_i,\pi,x_i), \mu_i, \m_i, \Phi_i)$ be such that $f_i \in E(G_i)$ for every $i \in \N$ and without loss of generality (by switching to a respective infinite subsequence) assume that for every pair of distinct $i,j \in \N$, $G_i \neq G_j$. For every $i \in \N$ let $\HHH_i \coloneqq \stitch(\GGG_i;\pi,X^{f_i})$ be the respective up-stitch with up-stitch vertex $h_i^*$ and write $\HHH_i = \left((H_i,\pi,h_i^*), \nu_i, \n_i, \Psi_i\right)$ for convenience. Let~$m\coloneqq \Abs{H_1} = \Abs{E(H_1)} + \Abs{V(H_1)}$. Then, by assumption on $(f_i)_{i\in \N}$ being strictly decreasing with respect to $\preceq$, for every~$ j>1$ we derive~$\HHH_j \hookrightarrow \HHH_1$ by definition of $\preceq$ implying~$\Abs{H_j} \leq m$. Since there are (up to isomorphism) only finitely many Eulerian digraphs~$G$ satisfying~$\Abs{G} \leq m$, by the pigeonhole principle there is an infinite subsequence indexed by~$I \subseteq \N$ such that~$H_i \cong H_j$ for every~$i \in I$. Let the vertices of~$H_i$ be given by~$\{u_1^i,\ldots,u_t^i\}$ with the respective isomorphisms mapping~$u_\ell^i$ to~$u_\ell^j$ for every~$i,j \in I$ and every~$1 \leq \ell \leq t$. Similarly, by the pigeonhole principle, there is~$I_2 \subseteq I$ such that~$\dom(\nu_i) = \dom(\nu_j)$, $\dom(\n_i) = \dom(\n_j)$ and~$\dom(\Psi_i) = \dom(\Psi_j)$ (using the obvious identifications given the bijection on the vertices) where further~$\mu_i(u_\ell^i) = \mu_j(u_\ell^j)$---if defined---and $\n_i(u_\ell^i) = \n_j(u^j_\ell)$---if defined---for every~$i,j \in I_2$ and every~$1 \leq \ell \leq t$. Thus, since the sequence is strictly decreasing with respect to~$\preceq$, there must exist~$1 \leq \ell \leq t$ such that~$\Psi_j(u_\ell^j) \ll \Psi_i(u_\ell^i)$ for every~$j > i$ with~$i,j \in I_2$. Then this witnesses an infinite strictly decreasing sequence of elements in~$\Omega$ with respect to $\ll$; a contradiction to~$\Omega$ being a well-quasi-order. The claim follows.
\end{claimproof}

Combining the previous claims, the \cref{cubic-tree-lemma} on Cubic Trees implies the existence of an infinite sequence~$(f_i)_{i\in\N}$ of non-leaf edges such that the sequence forms an antichain with respect to~$\preceq$ whereas~$(\lenks{f_i})_{i\in \N}$ and~$(\riets{f_i})_{i\in \N}$ form chains with respect to~$\preceq$.  By switching to an infinite subsequence similar to above, we may assume that~$f_i \in E(T_i)$ for every~$i \in \N$. We are left to refute the existence of said antichain by proving that we find a pair~$f_i,f_j$ such that we can ``knit'' the~$\Omega$-knitwork immersions given on their left and right children to derive that~$f_i \preceq f_j$ as a contradiction.
\smallskip

We could again provide the respective construction by hand using \cref{subsec:knitting_immersions}, but it is easily derived from the decomposition \cref{lem:knitting_immersions_decomposition_handson_located} as follows. By construction for every $i\in \N$ we have that $\GGG_i^f$ consists of the rooted Eulerian digraph $\bar G_i^f = (G_i^f,\pi,f_i^*)$ with sleeve $(\mu_i^{X_{f_i}}, \m_i^{X_{f_i}}, \Phi_i^{X_{f_i}})$. Recall \cref{def:focus}. For every $i \in \N$ we define a new $\Omega$-knitwork $\GGG_i^{f,*} = \GGG_i^f(X^{\lenks{f_i}}, X^{\riets{f_i}})$, i.e., the $(X^{\lenks{f_i}}, X^{\riets{f_i}})$-focus. Finally, note that $\torso(\bar G_i^{f,*})$ consists of at most three vertices, and $\stitch(\bar G_i^{f,*};\pi,f^*)$ consists of at most two vertices. 

Thus using \cref{lem:wqo_finitely_many_vertices} and applying \cref{obs:wqo_yields_infinite_chain} twice, there exists an infinite index set $J\subseteq \N$ such that $(\torso(\bar G_i^{f,*}))_{i \in J}$  as well as $(\stitch(\bar G_i^{f,*};\pi,f^*))_{i \in J}$ are chains with respect to strong $\Omega$-knitwork immersion.

Further, by assumption of the Cubic Tree \cref{cubic-tree-lemma} and applying \cref{obs:wqo_yields_infinite_chain} twice, there is an infinite index set $I \subseteq J$ such that the up-stitches of the left child edges $\left(\stitch(\GGG_i^{f,*};\pi,X^{\lenks{f_i}})\right)_{i \in I} $ as well as the up-stitches of the right child edges $\left(\stitch(\GGG_i^{f,*};\pi,X^{\riets{f_i}})\right)_{i \in I}$, are chains with respect to strong $\Omega$-knitwork immersion. By \cref{lem:knitting_immersions_decomposition_handson_located} we derive that $(\GGG_i^f)_{i \in I}$ is a chain with respect to strong $\Omega$-knitwork immersion which in turn implies that $(f_i)_{i \in I}$ is a chain with respect to $\preceq$ contradicting the assumption that~$(f_i)_{i\in\N}$ was an antichain; this concludes the proof.
\end{proof}

We are ready to derive \cref{thm:wqo_bounded_carvingwidth_knitworks} from \cref{thm:wqo_bounded_carvingwidth_knitworks_red}.

\begin{proof}[Proof of \cref{thm:wqo_bounded_carvingwidth_knitworks}]

Let $\mathbf{G}(k,\ell;\Omega)$ be as in the \cref{thm:wqo_bounded_carvingwidth_knitworks} and fix $\Omega=(V(\Omega),\preceq)$. We derive the proof from \cref{thm:wqo_bounded_carvingwidth_knitworks_red} as follows. Throughout this proof, whenever we write $\hookrightarrow$ we mean the $\Omega$-knitwork immersion to be strong and stable.

\begin{claim}
    The class $\stitch(\mathbf{G}(k, \ell; \Omega))$ is well-quasi-ordered by stable strong $\Omega$-knitwork immersion.
\end{claim}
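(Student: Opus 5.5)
The plan is to check that every knitwork in $\stitch(\mathbf{G}(k,\ell;\Omega))$, as produced by \cref{def:stitch_class}, has at most $k+1$ vertices. Its sleeve is only defined on a single vertex, of degree at most $k$. This puts the whole class inside a class of the form $\mathbf{C}(k+1,1,k)$ from \cref{lem:wqo_finitely_many_vertices}. That lemma then yields the well-quasi-order by stable strong $\Omega$-knitwork immersion directly, so no tree argument is needed at this stage.

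\textbf{Step 1 (shape of a member).} Let $\GGG\in\mathbf{G}(k,\ell;\Omega)$ be rooted in $(\pi,X_1,\ldots,X_\ell)$. The elements contributed by $\GGG$ are its pieces $\PPP_j=\stitch(\GGG;\pi,X_j)$, $1\le j\le\ell$, with down-stitch vertices $x^j_*$. By \cref{def:stitching_std} and \crefthm{obs:stitching_fundamentals}{3}, $\PPP_j$ is a clamped, loopless $\Omega$-knitwork rooted in $X_j$ and of index $1$. Its vertex set is $X_j\cup\{x^j_*\}$, and $\rho(x^j_*)=\rho_G(X_j)$.

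\textbf{Step 2 (bounds from carving width).} By \cref{def:knitworks_carving}, a rooted carving of width at most $k$ forces $\Abs{X_j}\le k$, so $\Abs{V(P_j)}\le k+1$. The edge of $T$ incident to the leaf $\ell(x_j^*)$ of the torso has width $\delta(X_j)$, hence $\delta(X_j)\le k$ and so $\deg(x^j_*)\le k$.

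\textbf{Step 3 (the sleeve).} By \cref{def:knitwork}, $\mu,\m,\Phi$ are undefined on the root set. Hence by \cref{def:stitching_knitwork} the sleeve of $\PPP_j$ is defined at most on $x^j_*$. There $\mu_Y(x^j_*)=\pi(X_j)$ and $\m_Y(x^j_*)=\MM(\rho,\bar X_j)$, and $\Phi_Y$ is left unspecified. So every vertex in $\dom(\mu)$ has degree at most $k$, and the condition of \cref{lem:wqo_finitely_many_vertices} with $\Delta=k$ holds.

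\textbf{Step 4 (apply the lemma).} Consequently $\stitch(\mathbf{G}(k,\ell;\Omega))\subseteq\mathbf{C}(k+1,1,k)$. \Cref{lem:wqo_finitely_many_vertices} shows that the latter is well-quasi-ordered by stable strong $\Omega$-knitwork immersion, and subclasses of well-quasi-ordered classes inherit this.

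The only delicate point is the bookkeeping in \cref{lem:wqo_finitely_many_vertices}. When filtering, one must fix the root partition, namely which vertices form $X_j$ and which vertex is the down-stitch vertex. One must also fix the orderings $\mu(x^j_*)$ and the finite links $\m(x^j_*)$, which range over a finite set since the degree is at most $k$. With these fixed, the inclusion $H_i\subseteq H_j$ given by the edge multiplicities is a rooted immersion that respects the order of the roots and satisfies (3)--(5) of \cref{def:knitwork_immersion}. It is also stable, as \crefthm{obs:immersion_on_torso_stitches_respects_stitchvertices}{1} confirms. I expect this verification to be the main, though routine, obstacle.
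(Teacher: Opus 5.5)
Your proposal is correct and follows the paper's proof. Both use the rooted carving width to get $\Abs{X_j}\le k$, so each piece has at most $k+1$ vertices. Both note that only the down-stitch vertex can lie in $\dom(\mu)$, and then apply the finitely-many-vertices lemma with $\Delta=k$. You also make explicit why the down-stitch vertex has degree $\delta(X_j)\le k$, namely via the leaf edge at $x_j^*$ in the torso carving, which the paper leaves implicit.
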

\begin{claimproof}
    Let $\HHH \in \stitch{\mathbf{G}(k, \ell; \Omega)}$. By \cref{def:stitch_class} of the class there exists an $\Omega$-knitwork $\GGG \in \mathbf{G}(k, \ell; \Omega)$ with $\bar G = (G,\pi,X_1,\ldots,X_\ell)$ and some $1 \leq j \leq \ell$ such that $\HHH \cong \stitch(\GGG;\pi,X_j)$. In particular, since $\cw{\bar G} \leq k$ it follows $\Abs{X_j}\leq k$ implying $\Abs{V(H)} \leq k+1$. Note that $H$ may have degree $1$ vertices (it is quasi-Eulerian). The claim now follows from \cref{lem:wqo_finitely_many_vertices}, noting that only the down-stitch vertex may be contained in $\dom(\mu_\HHH)$, whence $\Delta = k$ does the trick.
\end{claimproof}

\begin{claim}
    $\torso(\mathbf{G}(k, \ell; \Omega))$ is well-quasi-ordered by stable strong $\Omega$-knitwork immersion.
\end{claim}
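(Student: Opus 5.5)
The plan is to reduce the torso class to planted well-linked knitworks and then apply \cref{thm:wqo_bounded_carvingwidth_knitworks_red}, using the extension $\Omega^* \coloneqq \Omega \times (\N,\leq)$ (or a finite extension by fresh incomparable markers) to keep track of which up-stitch vertex is which.

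Take $\HHH = \torso(\GGG)$ for some $\GGG \in \mathbf{G}(k,\ell;\Omega)$. By \cref{obs:torso}, $\HHH$ is a controlled rooted Eulerian knitwork with root vertices $x_1^*,\ldots,x_\ell^*$. By \cref{obs:torso}\ref{obs:torso:6} it is well-linked, and by \cref{lem:cw_of_rooted_vs_unrooted} it has $\cw{\HHH}=\cw{H^*}\le\cw{\GGG}\le k$; here I would use that a rooted carving of $\bar G$ is by definition a carving of its torso. The sleeve is not defined on $x_1^*,\ldots,x_\ell^*$.

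To pass to the planted setting, I re-root $\HHH$ only in $x_1^*$, keeping $\pi(x_1^*)$. The remaining vertices $x_2^*,\ldots,x_\ell^*$ then become ordinary non-root vertices. I encode their role in the sleeve:
\begin{itemize}
\item set $\mu(x_j^*)\coloneqq\pi(x_j^*)$;
\item leave $\m$ undefined at $x_j^*$, so well-linkedness is untouched;
\item give $x_j^*$ the label $(\star_j)$ from $\Omega^*$, where $\star_2,\ldots,\star_\ell$ are new pairwise incomparable elements, also incomparable to all of $\Omega$.
\end{itemize}
The extension $\Omega^*=\Omega\cup\{\star_2,\ldots,\star_\ell\}$ is still a well-quasi-order. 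The resulting $\HHH'$ is a well-linked planted $\Omega^*$-knitwork of carving width at most $k$; re-rooting does not change the underlying graph, so \cref{lem:cw_of_rooted_vs_unrooted} still applies. Given a sequence of torsos $(\HHH_i)$, \cref{thm:wqo_bounded_carvingwidth_knitworks_red} then yields $i<j$ and a stable strong immersion $\gamma:\HHH_i'\hookrightarrow\HHH_j'$.

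It remains to turn $\gamma$ back into an immersion $\HHH_i\hookrightarrow\HHH_j$ of index $\ell$. Condition (6) of \cref{def:knitwork_immersion} forces $\gamma(x_p^*)=y_p^*$, because $\star_p$ is comparable only to itself. Condition (3) for $\mu$, combined with stability and \cref{obs:knitwork_immersion_on_mu}, shows that $\gamma(f)$ contains the edge in the same position of $\pi(y_p^*)$ and ends there. This is exactly the root-order condition of \cref{def:rooted_immersion}, since degrees agree and the paths are edge-disjoint. Condition (2) also holds at the roots: $x_p^*$ lies in $\dom(\mu)$ on both sides, but the unrooted knitworks simply drop $\mu$ there. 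Conditions (4)–(6) on the other vertices carry over unchanged, because $\Phi$ restricted to the old vertices takes values in $\Omega$ and the order on $\Omega^*$ restricts to that of $\Omega$.

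The main obstacle is bookkeeping at the root vertices. First, \cref{def:knitwork_immersion}(2) must hold in the original knitworks, where $x_p^*\notin\dom(\mu)$ on both sides; this holds since $\gamma$ maps roots to roots. Second, I must check that the side condition ``$\gamma$ maps $X_p$ to $X_p$ and $\bar X_p$ to $\bar X_p$'' holds, which again follows from injectivity together with $\gamma(x_p^*)=y_p^*$. Third, a degenerate torso whose vertex set is only $\{x_1^*,\ldots,x_\ell^*\}$ may be disconnected. I would handle it directly with \cref{lem:wqo_finitely_many_vertices}: such torsos have at most $\ell$ vertices and no $\mu$ on high-degree vertices, so they can be split off by the pigeonhole principle.
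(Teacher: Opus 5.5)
Your proposal is correct and is essentially the paper's proof. The paper also re-roots the torso at $x_1^*$ alone, sets $\mu(x_p^*)=\pi(x_p^*)$, and gives $x_p^*$ the label $(\star,p)\in\Omega\times\Omega_\ell$, where $\Omega_\ell$ is an antichain, so these labels play the role of your fresh markers $\star_p$; condition (6) then forces $\gamma(x_p^*)=y_p^*$, condition (3) gives the root order, and \cref{thm:wqo_bounded_carvingwidth_knitworks_red} is invoked. The $\Omega\times(\N,\leq)$ option you mention first would not work, since its markers are comparable and so would not force $x_p^*\mapsto y_p^*$; your actual construction correctly uses incomparable markers, and your separate treatment of degenerate torsos is harmless but unnecessary, because the torso of a connected graph is connected.
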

\begin{claimproof}
    If $\ell=1$ then the claim follows from \cref{obs:torso} and \cref{thm:wqo_bounded_carvingwidth_knitworks_red}; assume $\ell \geq 2$.
    
    Let $\HHH \in \torso(\mathbf{G}(k, \ell; \Omega))$ with $\HHH=\big((H,\pi,x_1^*,\ldots,x_\ell^*),\mu,\m,\Phi\big)$, then $\HHH =\torso(\GGG)$ with up-stitch vertices $x_1^*,\ldots,x_\ell^*$ say, for some $\GGG \in \mathbf{G}(k, \ell; \Omega)$. 
    By repeated application of \cref{obs:up-stitch_of_well-linked_is_well-linked} the torso is again well-linked and $x^* \notin \dom(\Phi_H) \cup \dom(\mu_H)$ by definition. 
    
    Let $\Omega_\ell = (\{0,\ldots,\ell\},=)$ be the well-quasi-order on $\ell$ elements where no two distinct elements are comparable. Let $\Omega^+ = \Omega \times \Omega_\ell$  then $\Omega^+$ is again a well-quasi-order by \cref{obs:wqo_of_tuples}. Fix some element $
\star \in V(\Omega)$. We define the planted $\Omega^+$-knitwork $\HHH^+ \coloneqq  \big((H,\pi,x_1^*),\mu,\m,\Psi\big)$ via $\Psi(x_i^*) = (\star,i)$ for every $2 \leq i \leq \ell$ and $\Psi(v) = (\Phi(v),0)$ otherwise if  and only if $v \in \dom(\Phi)$ (in particular $\Psi$ is not defined on $x_1^*$). 

Let $\HHH_1,\HHH_2 \in \torso(\mathbf{G}(k, \ell; \Omega))$ with up-stitch vertices $x_1^*,\ldots,x_\ell^*$ and $y_1^*,\ldots,y_\ell^*$ respectively. One easily verifies that $\gamma:\HHH_1^+\hookrightarrow \HHH_2^+$ implies $\gamma:\HHH_1 \hookrightarrow \HHH_2$. To see this note that $\mu(x_i^*) = \pi(x_i^*)$, and $x_i^* \notin \dom(\Phi)$ for every $1 \leq i \leq \ell$ whence we do not lose any information (and equivalently for $y_i^*$). Due to $\Psi(x_i^*)=(\star,i) = \Psi(y_i^*)$ and the nature of $\Omega_\ell$, \crefdef{def:knitwork_immersion}{6} makes sure that the $i$-th up-stitch vertex of $\HHH_1$ is mapped to the $i$-th up-stitch vertex of $\HHH_2$ for every $2 \leq i \leq \ell$. Similarly, \crefdef{def:knitwork_immersion}{3} makes sure that $\pi(x_i^*)$ is mapped to $\pi(y_i^*)$, i.e., $\gamma$ satisfies \crefdef{def:rooted_immersion}{3} for $\bar H_1 \hookrightarrow \bar H_2$.

Finally, let $(\HHH_i)_{i \in \N}$ be a sequence of $\Omega$-knitworks with $\HHH_i \in \torso(\mathbf{G}(k, \ell; \Omega))$ for every $i \in \N$. Then $(\HHH_i^+)_{i \in \N}$ is a sequence of planted $\Omega^+$-knitworks with $\HHH_i^+ = \big((H_i,\pi,x_i^*),\mu_i,\m_i,\Psi_i\big)$ such that $x_i^* \notin \dom(\mu_i) \cup \dom(\Psi_i)$ by construction for every $i \in \N$. Then by \cref{thm:wqo_bounded_carvingwidth_knitworks_red} (for $\Omega^+$) there exists $1 \leq i < j$ such that $\HHH_i^+ \hookrightarrow \HHH_j^+$. By the above then $\HHH_i \hookrightarrow \HHH_j$ concluding the proof of the claim.
\end{claimproof}

The theorem follows by \cref{lem:knitting_immersions_decomposition}.
\end{proof}

Finally, note that \cref{thm:wqo_bounded_carvingwidth_knitworks_gen} (and in turn \cref{thm:wqo_bounded_carvingwidth_knitworks}) can be seen as a ``meta theorem'' in the following sense: Intuitively speaking, given a class $\mathbf{G}(\ell)$ of quasi-Eulerian digraphs such that there exists $k\in \N$ such that for every $G \in \mathbf{G}$ there exists a linked ``carving'' $(T,\ell)$ of $G$ into ``pieces''---here leaves $\ell^{-1}(t)$ for a leaf $t \in V(T)$ should be seen as quasi-Eulerian subdigraphs---for which we have already proved that they are well-quasi-ordered by (strong) immersion, and such that the cut-edges are well-linked inside the pieces. Then each such piece is an element of a respective well-quasi-order $\Omega$. In particular one can construct an $\Omega$-knitwork from $G$ of carving-width $\leq k$, by carefully replacing each carved out piece by a single vertex $v$ of degree $\leq k$ where each such vertex comes with a predefined ordering of $\mu(v)$---representing the carved cut---and a well-linked link of feasible linkages $\m(v)$ representing the possible routings between the cut-edges in said piece. Then we immediately derive that $\mathbf{G}(\ell)$ is well-quasi-ordered from \cref{thm:wqo_bounded_carvingwidth_knitworks}. We will provide a stronger version of \cref{thm:wqo_bounded_carvingwidth_knitworks} in future work where we discuss this further (for note that as it is stated here, the pieces do not admit $\Omega$-sleeves; compare with \cref{lem:knitting_immersions_decomposition_handson_located}).

\smallskip

We conclude the section with a proof of \cref{thm:intro}, which we derive as a corollary to the above; recall that we assumed our digraphs to be weakly connected, an assumption that we eliminate next.

\begin{corollary}\label{cor:bd_cw_wqo}
    Let $k \in \N$. The class $\mathbf{G}(k)$ of (possibly disconnected) Eulerian digraphs of carving width at most $k$ is well-quasi-ordered by strong immersion.
\end{corollary}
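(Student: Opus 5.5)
I will derive the corollary from \cref{thm:wqo_bounded_carvingwidth_knitworks_gen} in two steps. First, I reduce to the weakly connected case. Second, I reduce to the planted, loopless setting.

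\textbf{Connected case.} Let $G$ be weakly connected of carving width at most $k$. Pick any vertex $x\in V(G)$ and an arbitrary ordering $\pi(x)$ of $\rho(x)$. The result $\bar G=(G,\pi,x)$ is a planted rooted Eulerian digraph, and by \cref{lem:cw_of_rooted_vs_unrooted} we have $\cw{\bar G}=\cw{G}\le k$. Loops are removed via \cref{lem:wqo_assume_no_loops}, with $\Omega^*=\Omega\times(\N,\le)$ for any $\Omega$, say the one-element order. Deleting loops does not increase carving width, since loops never lie in an induced cut.

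We equip each loopless planted graph with the sleeve where $\mu,\m$ are nowhere defined and $\Phi$ is arbitrary. This is trivially well-linked, since $\dom(\m)=\emptyset$. The resulting class is $\Omega^*$-independent and lies in $\mathbf{G}(k,1;\Omega^*)$, which is well-quasi-ordered by strong knitwork immersion by \cref{thm:wqo_bounded_carvingwidth_knitworks_gen}. Hence the planted graphs are well-quasi-ordered by strong rooted immersion, and forgetting the root gives a strong immersion of the underlying digraphs.

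The root choice is the real issue: a bad sequence $(G_i)$ must be turned into a sequence of rooted graphs. Any choice of root per $G_i$ suffices, since a rooted immersion $\bar G_i\hookrightarrow \bar G_j$ in particular yields $G_i\hookrightarrow G_j$. So fixing one root per graph works.

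There is one subtlety. Rooted immersion requires $\delta(x_i)=\delta(x_j)$ and respecting the order. Mismatched degrees just mean incomparability in the rooted order, which does not break the wqo conclusion, since the rooted class is already wqo as a whole.

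\textbf{Disconnected case.} Each graph is a finite multiset of connected components, each of carving width at most $k$, since carving width is monotone under taking Eulerian subgraphs (see the proof of \cref{lem:cw_closed_under_immersion}). Given a sequence $(G_i)$, apply Higman's \cref{thm:higman} to the finite sets of components, with the wqo from the connected case. This yields $i<j$ and an injection $\alpha$ mapping each component $C$ of $G_i$ to a distinct component $\alpha(C)$ of $G_j$ with $C\hookrightarrow\alpha(C)$.

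Since distinct components of $G_j$ are vertex- and edge-disjoint, the union of these immersions is a strong immersion $G_i\hookrightarrow G_j$. Strongness holds because no path in one component passes through another component.

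The main obstacle is the bookkeeping in the connected case. One must check that the trivial sleeve really produces members of $\mathbf{G}(k,1;\Omega^*)$, which need to be loopless rooted Eulerian digraphs with $\cw{}\le k$. One must also check that the $\Omega^*$-independence hypothesis of \cref{lem:wqo_assume_no_loops} is met by taking the full class of all labellings.
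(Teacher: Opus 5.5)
Your proposal is correct and follows the paper's own proof. Like the paper, you plant each connected graph at an arbitrary vertex, remove loops via \cref{lem:wqo_assume_no_loops} using the $\Omega^*$-independence of the trivially-sleeved class inside $\mathbf{G}(k,1;\Omega^*)$, invoke \cref{thm:wqo_bounded_carvingwidth_knitworks_gen}, and finish with Higman's \cref{thm:higman} on the components. One corner case is left unaddressed (the paper skips it too): a one-vertex component cannot be planted, since root sets must be strict subsets, but such components are trivially ordered by their number of loops.
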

\begin{proof}
   Note that graphs in $\mathbf{G}(k)$ may admit loops and may consist of disjoint connected components. The claim then follows by combining \cref{thm:wqo_bounded_carvingwidth_knitworks_gen} together with \cref{lem:wqo_assume_no_loops}--- noting that $\mathbf{G}(k,\ell;\Omega)$ is $\Omega$-independent for any choice of well-quasi-order $\Omega$---and applying \cref{thm:higman} to the set of components of the respective Eulerian digraphs. This concludes the proof.
\end{proof}

\paragraph{The Importance of Well-linkedness.} It turns out that the \emph{well-linkedness} of the~$\Omega$-knitworks is crucial for a proof of \cref{thm:wqo_bounded_carvingwidth_knitworks}. Without the assumption, ``linkedness'' in \cref{cor:linked_ebw} can not directly be used to ``lift''~$\Omega$-knitwork immersions from children to parents through knitting at cuts. This is due to the restrictions imposed by the feasible linkages encoded by $\m$. To see this let~$Y$ be a rooted cut in some~$\Omega$-knitwork~$\GGG=((G,\pi,X), \mu ,\m,\Phi)$ and let~$\LLL$ be a~$\{\rho(X),\rho(Y)\}$-linkage in~$G$. One would hope to conclude~$\stitch(\GGG;\pi,\bar Y) \hookrightarrow \GGG$ for a suited order~$\pi(\bar Y)$. Unfortunately, it may happen that two edge-disjoint paths~$L_1,L_2 \in \LLL$ ``cross'' in a common vertex~$v \in \dom(\m)$ which implies that we may not use these paths to define the natural immersion if~$\{\tau(L_1),\tau(L_2)\}$ does not appear in a matching of $\m(v)$ (this would violate (4) of \cref{def:knitwork_immersion}). Moreover, given arbitrary~$\Omega$-knitworks it may happen that there exists no feasible linkage that can be used to lift immersions. This is in stark contrast with the undirected case using standard minors. Further, it is also highly different to the undirected case for immersion, since in the undirected setting edge-disjoint paths may cross freely in vertices which would rather correspond to~$\m(v) = \operatorname{Match}(\rho^-(v),\rho^+(v))$, i.e., there are no further restrictions on the linkages, which is a much easier case. The following is an easy ``counterexample'' to the version of \cref{thm:wqo_bounded_carvingwidth_knitworks_red} where we do \emph{not} impose well-linkedness similar to \cref{thm:antichain}. In essence, we encode alternating paths in an $\Omega$-knitwork by imposing restrictions on $\m$.

\begin{example}
    Let $V_i = \{v_0,\ldots,v_i\}$ for every $i \in \N$. Let $G_i$ be the graph on $V_i$ where we add back and forth edges between subsequent vertices $v_j,v_{j+1}$ for $1 \leq j < i$. Finally let $\Omega=(\{0,1\},=)$ be the well-quasi-order on two labels that are not comparable. Let $\Phi_i(v_0) = 0$ and $\Phi_i(v_i) = 1$. Let $\mu_i$ be arbitrarily defined for every vertex of $V(G_i)$ and set $\mu_i(v_j)$ to be the two-paths that start and end in $v_{j-1}$ and start and end in $v_{j+1}$ respectively. In particular we cannot pass through vertices with immersions; the links are not well-linked. One easily verifies that $(\GGG_i)_{i\geq 3}$ is an antichain. While this example may be slightly trivial as we basically encode our graphs to be disconnected (or behave like alternating paths), one easily extends this idea to more interesting examples (even on higher degree).
\end{example}

\section{Allowing a vertex of large degree}
\label{sec:bded_treewidth}

As seen in \cref{cor:antichain_tw}, the class of Eulerian digraphs of bounded treewidth is \emph{not} well-quasi-ordered by strong immersion. If we additionally require the maximum degree of the class to be bounded, \cref{thm:qualitative_equivalence_of_tw_and_cw} together with \cref{cor:bd_cw_wqo} imply that it \emph{is} well-quasi-ordered by strong immersion, proving \cref{thm:intro_2}. Thus, any bad sequence with respect to \emph{strong} immersion of Eulerian digraphs of bounded treewidth must admit unbounded maximum degree. In this section we complement \cref{cor:antichain_tw},  proving that the antichain given in the proof of \cref{thm:antichain} is \emph{not} an antichain for weak immersion, marking a first step towards a proof of the following.

\begin{conjecture}\label{thm:wqo_bounded_tw}
    Let $k \in \N$. The class of Eulerian digraphs of treewidth at most $k$ is well-quasi-ordered by immersion.
\end{conjecture}

It turns out that \cref{thm:wqo_bounded_tw} follows as a corollary from a different result we are pursing in future work, which is why we do not provide a full proof for it here and leave it as a conjecture. It turns out that a full proof of \cref{thm:wqo_bounded_tw} needs a non-trivial amount of work, exceeding the scope of this paper, where most of the techniques we developed can be used to prove a stronger result and thus we will restrict our attention in this section to a small subclass of Eulerian digraphs of unbounded degree. This brings to light a nice trick, linking bounded treewidth to bounded carving width in a way that may be of independent interest.

\smallskip

Recall the \cref{def:knitwork,def:knitwork_immersion} of $\Omega$-knitwork and $\Omega$-knitwork immersion. We slightly adjust these definitions to accommodate for ``large'' degrees.
\begin{definition}[$(\Omega,\Delta)$-knitworks]\label{def:knitwork_large_degree}
    Let $\Omega$ be a well-quasi-order and let $\Delta \in 2\N$. An $(\Omega,\Delta)$-knitwork is an $\Omega$-knitwork $\GGG= \big((G,\pi,X_1,\ldots,X_\ell), \mu, \m,\Phi\big)$ for some $\ell \geq 1$ satisfying the following. Let $Y \subseteq V(G)$ be the set of vertices of degree greater than $\Delta$, then $\dom(\mu) \cap Y = \emptyset$ and $Y \cap \bigcup_{i=1}^\ell X_i= \emptyset$.
\end{definition}
\begin{remark}
Note that $\dom(\m)\cap Y = \emptyset$ by \cref{def:knitwork}.

    If we allow for unbounded degree in our class of interest, then \crefdef{def:knitwork_immersion}{3} can in general not be guaranteed (see also the remark following said definition); when dealing with $(\Omega,\Delta)$-knitworks this problem is mitigated by ``ignoring'' high degree vertices. We want to emphasise that although this definition may seem strange at first, it makes intuitive sense: note that the reason for why we need $\mu$ and $\m$ stems from the fact that we want to cut graphs along cuts and replace pieces by placeholder vertices keeping track of their respective immersion-type and the ways we are allowed to route through them. Until now it seems plausible that we will only cut our graphs along cuts of a predetermined size and thus the degree of vertices for which we will define $\mu$ and $\m$ will be bounded. We leave this for future work.
\end{remark}

The following is a rephrasing of \cref{lem:wqo_finitely_many_vertices}.

\begin{observation}
    Let $\Omega$ be a well-quasi-order and let $\Delta \in 2\N$. Let $k,\ell \geq 1$ and let $\CCC(\ell,k,\Delta)$ be a class of $(\Omega,\Delta)$-knitworks of index $\ell$ and on $\leq k$ vertices. Then $\CCC(\ell,k,\Delta)$ is well-quasi-ordered by strong $\Omega$-knitwork immersion.
\end{observation}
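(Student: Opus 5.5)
The plan is to reduce directly to \cref{lem:wqo_finitely_many_vertices}, since the statement is essentially a rephrasing of it. Let $(\GGG_i)_{i\in\N}$ be an arbitrary sequence in $\CCC(\ell,k,\Delta)$. Every $\GGG_i$ is in particular an $\Omega$-knitwork of index $\ell$ with $\Abs{V(G_i)}\leq k$, so it only remains to check the degree hypothesis of \cref{lem:wqo_finitely_many_vertices}. That hypothesis says that every vertex $v$ with $\deg(v)>\Delta$ satisfies $v\notin\dom(\mu_{G_i})$.

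This is exactly what \cref{def:knitwork_large_degree} provides. If $Y_i$ is the set of vertices of degree greater than $\Delta$ in $G_i$, the definition gives $\dom(\mu_i)\cap Y_i=\emptyset$. In addition, $\dom(\m_i)\cap Y_i=\emptyset$, because $\dom(\m_i)\subseteq\dom(\mu_i)$ by \cref{def:knitwork}. Hence $\CCC(\ell,k,\Delta)\subseteq\mathbf{C}(k,\ell,\Delta)$.

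\cref{lem:wqo_finitely_many_vertices} then shows that the sequence contains $i<j$ with a stable strong $\Omega$-knitwork immersion $\GGG_i\hookrightarrow\GGG_j$. A stable immersion is a strong $\Omega$-knitwork immersion witnessed by $\GGG_j$ itself, so \cref{obs:stable_wqo_implies_gen} transfers the conclusion to non-stable strong immersion. Subclasses of well-quasi-ordered classes remain well-quasi-ordered, which finishes the argument.

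The only point where care is needed is inside the proof of \cref{lem:wqo_finitely_many_vertices}, which the lemma's statement already covers. With high-degree vertices present, the edge multiplicities $\alpha^i_{p,q}$ are unbounded, so one cannot pigeonhole on isomorphism types. Instead the proof uses the well-quasi-order $(\N,\leq)$ on multiplicities to obtain $H_i\subseteq H_j$.

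One must then ensure that the extra parallel edges in $H_j$ do not break the immersion. They are only incident to vertices outside $\dom(\mu)$, so condition (3) of \cref{def:knitwork_immersion} is unaffected. Since $\dom(\m)\subseteq\dom(\mu)$, condition (4) is unaffected as well. The orderings $\mu$ and links $\m$ are defined only at vertices of degree at most $\Delta$, so there are finitely many choices of them and they can be pigeonholed.

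If one worries that the immersion should be stable, i.e.\ use $\GGG_j$ itself as witness, take the sub-knitwork given by $H_i\subseteq H_j$. It is Eulerian up to roots by \cref{lem:subknitwork_difference_eulerian}, and the non-stable notion suffices for the statement, so this causes no problem.
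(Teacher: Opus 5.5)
Your reduction is the one the paper intends: it gives no separate proof and calls the observation a rephrasing of \cref{lem:wqo_finitely_many_vertices}. Your passage from stable to non-stable immersion via \cref{obs:stable_wqo_implies_gen} is fine. The gap is that the hypothesis you verify, $\dom(\mu_i)\cap Y_i=\emptyset$, does not suffice. Read literally, \cref{lem:wqo_finitely_many_vertices} is false: \cref{def:rooted_immersion} and \cref{def:subknitwork} together force $\delta(X_p)$ to agree in source and target, and nothing in the lemma's hypotheses bounds the root cuts. For example, take $V=\{v_0,v_1\}$ with $i$ parallel edges in each direction, rooted in $X_1=\{v_0\}$, with empty sleeve. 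These knitworks lie in $\mathbf{C}(2,1,\Delta)$ for every $\Delta$, and they form an antichain because $\delta(X_1)=2i$ takes pairwise distinct values. The same oversight appears in your discussion of the extra parallel edges in $H_j$. You check conditions (3) and (4) of \cref{def:knitwork_immersion}, but root vertices also lie outside $\dom(\mu)$. Extra parallel edges at a root vertex enlarge $\rho(X_p)$ and break condition (1), namely the rooted immersion and its compatibility with $\pi$.

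The missing ingredient is the other half of \cref{def:knitwork_large_degree}, $Y\cap\bigcup_{p=1}^{\ell}X_p=\emptyset$, which you never use. It gives every root vertex degree at most $\Delta$, so every root cut has order at most $k\Delta$. Moreover, every edge occurring in some $\pi(X_p)$ or $\mu(v)$ is incident to a vertex of degree at most $\Delta$, and there are at most $k\Delta$ such edges. So in the pigeonhole step you should also fix the finite configuration formed by these edges, together with their positions in all orderings $\pi(X_p)$, $\mu(v)$ and the links $\m(v)$. Then only the multiplicities between vertices lying neither in a root set nor in $\dom(\mu)$ stay unbounded, and $(\N,\leq)$ handles them as before. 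The identity map is then a stable strong $\Omega$-knitwork immersion with witness $\GGG_j$ itself. Your final paragraph is therefore unnecessary. As written it describes a non-stable witness, and \cref{lem:subknitwork_difference_eulerian} says that $G-H$ is Eulerian, not $H$.
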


Next we generalise the definition of treewidth to $\Omega$-knitworks in the obvious way.

\begin{definition}[Treewidth for Knitworks]\label{def:knitworks_treewidth}
    Let $\bar G=(G,\pi,X_1,\ldots,X_\ell)$ be a rooted digraph for some $\ell \geq 1$ and $\torso(\bar G)$ its torso. Let $\omega$ be the treewidth of the undirected underlying graph of  $\torso(\bar G)$. Then we define the treewidth of $\bar G$ via $\tw{\bar G} \coloneqq \max(\omega,\Abs{X_1},\ldots,\Abs{X_\ell})$. Given an $\Omega$-knitwork $\GGG$ we define $\tw{\GGG} \coloneqq \tw{G}$.
\end{definition}

Finally, note that the \cref{def:knitworks_carving,def:knitworks_treewidth} together with \cref{thm:qualitative_equivalence_of_tw_and_cw} lift to the following qualitative equivalence.
\begin{lemma}\label{lem:qualitative_equivalence_of_tw_and_cw_knitworks}
Let $\Omega$ be a well-quasi-order and $\Delta \in \N$. Let $\GGG$ be an $\Omega$-knitwork such that $G$ is of degree at most $\Delta$. Then $\frac{2}{3} \tw{\GGG} \leq \cw{\GGG} \leq \Delta \cdot \tw{\GGG}$.
\end{lemma}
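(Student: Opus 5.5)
The plan is to reduce the statement to \cref{thm:qualitative_equivalence_of_tw_and_cw} applied to the torso, and to handle the extra root-set terms in the two width definitions separately. Write $\bar G=(G,\pi,X_1,\ldots,X_\ell)$ and $\torso(\bar G)=(G^*,\pi,x_1^*,\ldots,x_\ell^*)$. Then $\cw{\GGG}=\max(\cw{G^*},\Abs{X_1},\ldots,\Abs{X_\ell})$, since a rooted carving of $\bar G$ is by \cref{def:knitworks_carving} just a carving of $G^*$. Likewise $\tw{\GGG}=\max(\tw{G^*},\Abs{X_1},\ldots,\Abs{X_\ell})$ by \cref{def:knitworks_treewidth}. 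The proof will therefore compare the two maxima term by term.

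First I need a degree bound on $G^*$. Every vertex of $\bar X$ keeps its degree from $G$, so it has degree at most $\Delta$. An up-stitch vertex $x_i^*$, however, has degree $\delta(X_i)$, which can exceed $\Delta$. I will bound it by $\delta(X_i)\le \Delta\Abs{X_i}$. I will then rerun the carving construction from the proof of \cref{thm:qualitative_equivalence_of_tw_and_cw}, or alternatively the subdivision trick from \cref{lem:cw_of_rooted_vs_unrooted}. That construction turns a tree decomposition of width $w$ into a carving in which each cut is crossed only by edges incident to one bag. The crossing edges are therefore bounded by the sum of the degrees in a bag.

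Upper bound. Fix a tree decomposition of $G^*$ of width $\tw{G^*}$. For every $i$, replace $x_i^*$ in each bag by the whole set $X_i$; this is a tree decomposition of the graph obtained by un-contracting. Equivalently, the degree-weighted bag sums are at most $\Delta(\tw{G^*}+1)+\sum_i \Delta\Abs{X_i}$. This already goes beyond $\Delta\cdot\tw{\GGG}$ if it is done naively.

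The main obstacle is exactly this: the up-stitch vertices can have degree up to $\Delta\Abs{X_i}\le\Delta\,\tw{\GGG}$, so the clean constant $\Delta$ needs care. My first attempt would be to note that in the Bienstock argument each cut of the carving is controlled by the degree of a single vertex of a bag, or rather by the edges between two adjacent bags. I would then check that $\delta(X_i)$ counts edges leaving $X_i$, each of which is incident to some vertex of $X_i$ of degree at most $\Delta$. This means I would charge edges of $x_i^*$ to the $\Abs{X_i}\le\tw{\GGG}$ original vertices. If the charging does not give exactly $\Delta\cdot\tw{\GGG}$, I would settle for verifying that the inequality in the cited theorem tolerates this, or interpret the lemma as a qualitative bound with the stated constants up to the root terms. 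The terms $\Abs{X_i}\le\Delta\,\tw{\GGG}$ are trivial.

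Lower bound. We have $\frac23\tw{G^*}\le\cw{G^*}$ directly from \cref{thm:qualitative_equivalence_of_tw_and_cw}, whose lower bound does not depend on the degree. In addition, $\frac23\Abs{X_i}\le\Abs{X_i}\le\cw{\GGG}$ by definition. Taking maxima gives $\frac23\tw{\GGG}\le\cw{\GGG}$.
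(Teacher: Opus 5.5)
Your lower bound is correct and is the paper's argument. The upper bound is where the proposal has a gap. You rightly noticed that $\torso(\bar G)$ need not have maximum degree $\Delta$, because $\deg(x_i^*)=\delta(X_i)$ can be as large as $\Delta\Abs{X_i}$. The paper's one-line proof applies \cref{thm:qualitative_equivalence_of_tw_and_cw} to the torso with constant $\Delta$ and so passes over exactly this point. But you do not resolve it. The charging idea is never carried out, and ``settling'' for a weaker or merely qualitative bound does not prove $\cw{\GGG}\le\Delta\cdot\tw{\GGG}$. In fact no charging argument can succeed, because that inequality is false in general.

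\emph{Counterexample.} Let $n\ge 9$ be divisible by $3$. Take vertices $u_1,\ldots,u_n$ and disjoint sets $X_1,X_2,X_3$ of $n/3$ vertices each. For every $j$ and every $i\in\{1,2,3\}$, add one edge from $u_j$ into $X_i$ and one edge from $X_i$ to $u_j$. Distribute these edges so that every vertex of $X_i$ has in- and out-degree $3$, and so that $G$ is connected.

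Then $G$ is a loopless Eulerian digraph of maximum degree $\Delta=6$, and $(G,\pi,X_1,X_2,X_3)$ with a nowhere-defined sleeve is an $\Omega$-knitwork. Its torso has underlying graph $K_{3,n}$, with each $u_j$ joined to each $x_i^*$ by two antiparallel edges. Hence $\tw{\GGG}=\max(3,n/3)=n/3$ and $\Delta\cdot\tw{\GGG}=2n$.

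Now take any carving of the torso, and let $t$ be the median of the leaves of $x_1^*,x_2^*,x_3^*$. The three edges at $t$ split the $u_j$ into sets $A_1,A_2,A_3$, with $x_i^*$ on the side of $A_i$. The edge towards $x_i^*$ has width $2(n-\Abs{A_i})+4\Abs{A_i}=2n+2\Abs{A_i}$. Since some $\Abs{A_i}\ge n/3$, we get $\cw{\GGG}\ge 8n/3>2n$.

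What the torso argument actually yields is $\cw{\GGG}\le\Delta^*\cdot\tw{\GGG}$, where $\Delta^*\coloneqq\max(\Delta,\delta(X_1),\ldots,\delta(X_\ell))\le\Delta\cdot\tw{\GGG}$. This gives $\cw{\GGG}\le\Delta\cdot\tw{\GGG}^2$. It recovers the stated constant whenever all $\delta(X_i)\le\Delta$, for example when $\GGG$ is controlled. This qualitative equivalence is all that is used later, in \cref{cor:wqo_bounded_cw_Delta_version}.
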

\begin{proof}
   One easily verifies that $\cw{\GGG} = \max(\cw{\torso(\bar G)},\Abs{X_1},\ldots,\Abs{X_\ell})$. By definition, the underlying graph of $\torso(\bar G)$ is Eulerian. Thus $\frac{2}{3} \tw{\torso(\bar G)} \leq \cw{\torso(\bar G)} \leq \Delta \cdot \tw{\torso(\bar G)}$ by \cref{thm:qualitative_equivalence_of_tw_and_cw}. Let $\omega \coloneqq \tw{\torso(\bar G)}$. We derive that $$\frac23 \max(\omega,\Abs{X_1},\ldots,\Abs{X_\ell}) \leq \max(\cw{\torso(\bar G)},\Abs{X_1},\ldots,\Abs{X_\ell}) \leq \Delta \cdot \max(\omega,\Abs{X_1},\ldots,\Abs{X_\ell}),$$
    concluding the proof.
\end{proof}

The base case for \cref{thm:wqo_bounded_tw} extending \cref{thm:wqo_bounded_carvingwidth_knitworks} is given by Eulerian digraphs of bounded treewidth admitting only a single vertex of ``large'' degree. In particular, that vertex may be an obstruction to the carving width being bounded; compare with \cref{thm:qualitative_equivalence_of_tw_and_cw}. To this extent, we define the following.
\begin{definition}
    Let $k,\ell \geq 1$, $d \in \N$ and $\Delta \in 2\N$. Let $\Omega$ be a well-quasi-order. We define $\mathbf{T}(k,d,\Delta,\ell;\Omega)$ to be the class of well-linked $(\Omega,\Delta)$-knitworks of index $\ell$ of treewidth at most $k$ admitting at most $d$ vertices of degree greater than $\Delta$. 
\end{definition}

We highlight the following observation following from the above \cref{def:knitworks_treewidth}.
\begin{observation}\label{obs:bounded_tw_class_then_torso_stitch_bounded_tw}
    Let $k,\ell \geq 1$, $d \in \N$ and $\Delta \in 2\N$. Let $\Omega$ be a well-quasi-order. Let $\mathbf{G}(\ell;\Omega)$ be a class of $\Omega$-knitworks of index $\ell$ such that for every $\GGG \in \mathbf{G}(\ell;\Omega)$ it holds $\tw{\GGG} \leq k$. Then 
    \begin{enumerate}
        \item for every $\HHH \in \stitch(\mathbf{G}(\ell;\Omega)) \cup \torso(\mathbf{G}(\ell;\Omega))$ it holds $\tw{\HHH} \leq k$, and\label{obs:bounded_tw_class_then_torso_stitch_bounded_tw:1}
        \item $\torso(\mathbf{T}(k,\ell,d,\Delta;\Omega)) \subseteq \mathbf{T}(k,\ell,d,\Delta;\Omega)$.\label{obs:bounded_tw_class_then_torso_stitch_bounded_tw:2}
    \end{enumerate}
\end{observation}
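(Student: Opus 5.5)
The plan is to prove both parts directly from \cref{def:knitworks_treewidth}, which expresses the treewidth of an $\Omega$-knitwork as the maximum of the treewidth of its torso and the sizes of its root sets. The basic fact we use throughout is that treewidth of the underlying undirected graph does not increase under taking subgraphs and under contracting a connected vertex set; in particular, identifying a set of vertices into one vertex along a connected piece does not increase treewidth.

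For the torso part of (1), let $\HHH=\torso(\GGG)$ with up-stitch vertices $x_1^*,\ldots,x_\ell^*$. By the remark after \cref{def:torso_and_pieces}, the underlying rooted digraph of $\HHH$ is controlled. Its torso is therefore (up to isomorphism) itself, by \cref{obs:torso}, and its root sets are singletons. Hence
\[
\tw{\HHH}=\max(\tw{\torso(\bar G)},1)\leq \tw{\GGG}\leq k.
\]

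For a piece $\PPP_j=\stitch(\GGG;\pi,X_j)$, the underlying graph has vertex set $X_j\cup\{x^j_*\}$ and root set $X_j$, which has size at most $k$. Its torso up-stitches $X_j$ to a single vertex, so the torso consists of the up-stitch vertex together with $x^j_*$, joined by parallel edges. Its treewidth is at most $1$, so $\tw{\PPP_j}\leq\max(1,\Abs{X_j})\leq k$. Here I expect the only subtlety: if one instead reads $\tw{\GGG}$ as $\tw{G}$ (as in the last sentence of the definition), one argues that the piece arises from $G$ by contracting $\bar X_j$ to one vertex. This is then a minor of $G$ plus possibly parallel edges, so its treewidth is still at most $\tw{G}$ after handling connectivity of $\bar X_j$ componentwise; adding a single vertex adjacent to everything would cost $+1$, and we avoid this by using the torso-based definition.

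For (2), let $\GGG\in\mathbf{T}(k,\ell,d,\Delta;\Omega)$ and $\HHH=\torso(\GGG)$. By the first part, $\tw{\HHH}\leq k$. It has index $\ell$, and it is well-linked by \cref{obs:torso}. The conditions for being an $(\Omega,\Delta)$-knitwork need checking as follows. Vertices of $\bar X$ keep their degree, since \cref{obs:torso} gives $G^*[\bar X]$ together with all the cut edges. Each up-stitch vertex $x_i^*$ has degree $\delta(X_i)$, which we must show is at most $\Delta$. This holds because the vertices of $X_i$ have degree at most $\Delta$ and cannot be in $Y$, but it needs care: $\delta(X_i)$ could exceed $\Delta$ when $\Abs{X_i}\geq 2$. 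I expect this to be the main obstacle, and I would first try to resolve it by reading the index convention so that the claimed containment in (2) is interpreted with the degree bound inherited from the root cuts. Failing that, I would observe that root sets of the torso are the singletons $x_i^*$, which lie outside $\dom(\mu)$ and are roots. The high-degree set of $\HHH$ is then contained in $Y\cup\{x_1^*,\ldots,x_\ell^*\}$, where $x_i^*$ being a root is exactly what the definition forbids. So the intended statement presumably assumes $\delta(X_i)\leq\Delta$, and I would add that as the standing convention. Under it, the high-degree vertices of $\HHH$ are exactly those of $\GGG$ in $\bar X$, so there are at most $d$ of them, avoiding $\dom(\mu)$ and the roots; hence $\HHH\in\mathbf{T}(k,\ell,d,\Delta;\Omega)$.
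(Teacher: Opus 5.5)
Your argument for (1) is correct and is the read-off-the-definition argument the paper intends; the paper itself offers nothing beyond a pointer to \cref{def:knitworks_treewidth}. The torso of $\torso(\GGG)$ is $\torso(\GGG)$ itself with singleton roots. A piece has root set $X_j$ with $\Abs{X_j}\le k$ and a two-vertex torso.

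Drop your aside on the literal reading $\tw{\GGG}=\tw{G}$, though. Identifying a disconnected $\bar X_j$ is not a minor operation, and under that reading (1) is simply false. The bidirected path $a\leftrightarrow b\leftrightarrow c\leftrightarrow d$ rooted in $\{b,c\}$ has $\tw{G}=1$, while its piece is a doubled triangle. So the rooted definition is forced, not merely convenient.

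For (2) your diagnosis is right, and you should state it outright rather than hope for a convention that supplies the bound. Nothing in the paper supplies it, and (2) is false as stated.

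\textbf{Counterexample.} Take the directed $4$-cycle $a\to c\to b\to e\to a$, rooted in $X_1=\{a,b\}$ with empty sleeve, and set $\Delta=k=2$, $\ell=1$.
\begin{itemize}
\item Every vertex has degree $2$.
\item The torso is a doubled star, so it has treewidth $1$, and $\Abs{X_1}=2$.
\item Hence this knitwork lies in $\mathbf{T}(2,d,2,1;\Omega)$ for every $d$.
\item Yet the root $x_1^*$ of its torso has degree $\delta(X_1)=4>\Delta$, which violates \cref{def:knitwork_large_degree}.
\end{itemize}
The paper's one-line justification overlooks exactly this point. The proof of \cref{thm:bounded_tw_basecase_1} inherits the gap, since it uses that the roots of the torsos have degree at most $\Delta$.

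\textbf{Repair.} Rather than adding $\delta(X_i)\le\Delta$ as a hypothesis on the class, weaken the conclusion. Since $\Abs{X_i}\le k$ and every vertex of $X_i$ has degree at most $\Delta$, we get $\delta(X_i)\le k\Delta$. Your remaining checks then go through unchanged:
\begin{itemize}
\item degrees on $\bar X$ are preserved;
\item $\mu$ is only defined on $\bar X$;
\item well-linkedness follows from \cref{obs:torso};
\item the treewidth bound follows from (1).
\end{itemize}
Together these give $\torso(\mathbf{T}(k,d,\Delta,\ell;\Omega))\subseteq\mathbf{T}(k,d,k\Delta,\ell;\Omega)$ with no extra assumption. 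This is all the application needs, because the controlled case in that proof is established for an arbitrary degree threshold.
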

\begin{remark}
    Note that the torsos remain well-linked due to \cref{obs:up-stitch_of_well-linked_is_well-linked}, whereas stitches may \emph{not} be well-linked in general.
\end{remark}

Note first that \cref{thm:wqo_bounded_carvingwidth_knitworks} together with \cref{thm:qualitative_equivalence_of_tw_and_cw} imply the following.

\begin{corollary}\label{cor:wqo_bounded_cw_Delta_version}
    Let $k,\ell \geq 1$ and $\Delta \in 2\N$. Let $\Omega$ be a well-quasi-order. Then $\mathbf{T}(k,0,\Delta,\ell;\Omega)$ is well-quasi-ordered by \emph{strong} $\Omega$-knitwork immersion.
\end{corollary}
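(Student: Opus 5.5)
The plan is to reduce \cref{cor:wqo_bounded_cw_Delta_version} directly to \cref{thm:wqo_bounded_carvingwidth_knitworks_gen} by showing that, when $d=0$, the class $\mathbf{T}(k,0,\Delta,\ell;\Omega)$ is a subclass of $\mathbf{G}(k',\ell;\Omega)$ for a suitable constant $k'$ depending only on $k$ and $\Delta$. The result then follows at once, since a subclass of a well-quasi-ordered class is well-quasi-ordered.

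First I would observe what $d=0$ gives us. Let $\GGG\in\mathbf{T}(k,0,\Delta,\ell;\Omega)$. Then every vertex of $G$ has degree at most $\Delta$, so $G$ has maximum degree at most $\Delta$. Moreover, $\GGG$ is well-linked by definition of the class, and $\tw{\GGG}\leq k$.

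Next I would apply \cref{lem:qualitative_equivalence_of_tw_and_cw_knitworks}, which gives $\cw{\GGG}\leq \Delta\cdot\tw{\GGG}\leq \Delta k$. Hence $\GGG\in\mathbf{G}(\Delta k,\ell;\Omega)$, the class of well-linked $\Omega$-knitworks of index $\ell$ and carving width at most $\Delta k$. One point needs checking here. The lemma is stated via the torso, and the torso may contain up-stitch vertices $x_i^*$ of degree $\delta(X_i)$, which could in principle exceed $\Delta$. The torso's degrees can therefore be as large as $\max(\Delta,\delta(X_i))$.

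This is the only step I expect to need care. I would handle it as follows. Each $X_i$ has $\Abs{X_i}\leq \tw{\GGG}\leq k$, and every vertex of $G$ has degree at most $\Delta$. So $\delta(X_i)\leq k\Delta$, and the torso has maximum degree at most $k\Delta$. Applying \cref{thm:qualitative_equivalence_of_tw_and_cw} to the torso then yields $\cw{\torso(\bar G)}\leq k\Delta\cdot\omega\leq k^2\Delta$, where $\omega$ is the treewidth of the torso. Combining this with $\Abs{X_i}\leq k$ gives $\cw{\GGG}\leq k^2\Delta$. So I set $k'\coloneqq k^2\Delta$.

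Finally, \cref{thm:wqo_bounded_carvingwidth_knitworks_gen} states that $\mathbf{G}(k',\ell;\Omega)$ is well-quasi-ordered by strong $\Omega$-knitwork immersion. Since $\mathbf{T}(k,0,\Delta,\ell;\Omega)\subseteq\mathbf{G}(k',\ell;\Omega)$, the corollary follows.
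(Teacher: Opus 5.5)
Your proposal is correct and follows the paper's route: bounded degree plus bounded treewidth gives bounded carving width, and then the bounded-carving-width theorem applies to what is now a subclass. You also rightly note that the torso's up-stitch vertices can have degree $\delta(X_i)\le k\Delta$, so Bienstock's bound must be applied with $k\Delta$ in place of $\Delta$ (set $k'=\max(k,k^2\Delta)$ to also cover the degenerate case $\Delta=0$); the paper's \cref{lem:qualitative_equivalence_of_tw_and_cw_knitworks} passes over this point.
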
  
In contrast to \cref{cor:wqo_bounded_cw_Delta_version}, when allowing for a single vertex of unbounded degree, the class is not well-quasi-ordered by strong immersion anymore. In particular, \cref{cor:antichain_tw} implies the following.

\begin{corollary}\label{cor:one_vertex_not_wqo_strong_Delta}
    Let $k \geq 2$, $d,\ell\geq 1$, $\Delta\geq 4$. Let $\Omega$ be a well-quasi-order. Then $\mathbf{T}(k,d,\Delta,\ell;\Omega)$ is \emph{not} well-quasi-ordered by \emph{strong} $\Omega$-knitwork immersion.
\end{corollary}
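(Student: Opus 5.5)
The plan is to embed the antichain from \cref{thm:antichain} into $\mathbf{T}(k,d,\Delta,\ell;\Omega)$ as a sequence of $(\Omega,\Delta)$-knitworks. Since the class only grows with $k,d,\ell,\Delta$ (as long as we can realise the needed index and roots), it suffices to treat $k=2$, $d=1$, $\Delta=4$, and then observe that the same objects lie in the larger classes. We take the graphs $G_i'$ for $i\geq 2$ from the proof of \cref{thm:antichain}, subdividing parallel edges so they are simple. They are Eulerian and planar, have treewidth at most $2$, and every vertex except $b^i$ has degree at most $4$. For large $i$, $b^i$ has degree greater than $4$.

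Next we make the $G_i'$ rooted with index $\ell$. We need $\ell$ pairwise disjoint, balanced, non-empty root sets avoiding $b^i$. We use singletons $\{x\}$ of Eulerian vertices of degree at most $4$ far from $v^i_0,v^i_{2i}$, for instance subdivision vertices on the edges incident to $b^i$; for large $i$ there are enough of them. We fix the orderings $\pi$ canonically, so that corresponding roots in different $G_i'$ carry analogous orderings. The sleeve is taken to be nowhere defined, i.e.\ $\mu,\m,\Phi$ empty. This makes the knitwork trivially well-linked and satisfies \cref{def:knitwork_large_degree}. The torso of a controlled rooted digraph is the graph itself, so the treewidth in \cref{def:knitworks_treewidth} is $\max(2,1)\leq k$.

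It remains to show that there is no strong $\Omega$-knitwork immersion $\GGG_i\hookrightarrow\GGG_j$ for $i<j$. Such an immersion, by \cref{def:knitwork_immersion}, restricts to a strong immersion $G_i'\hookrightarrow G_j'$ of the underlying digraphs; the rooting only adds constraints. The argument of \cref{thm:antichain} then applies. Because the immersion is strong and $b^i$ has the largest degree, $b^i$ must map to $b^j$. The two degree-$3$ endpoints of the thick alternating path must map to their counterparts, and adjacent vertices of the alternating path in $G_i'$ would need directed trails in $G_j'$ avoiding $b^j$ internally. Since $j>i$, this is impossible.

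The main obstacle is to redo this degree and routing argument carefully for the subdivided version. Subdivision vertices have degree $2$, and the edges through $b^j$ could otherwise provide detours, so the key point to check is that strongness forbids any trail from passing through $b^j=\gamma(b^i)$. Once that is verified, the contradiction goes through exactly as in \cref{thm:antichain}.
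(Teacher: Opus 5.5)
Your route is the paper's: the paper derives this corollary directly from \cref{cor:antichain_tw}. Your knitwork bookkeeping is fine: empty sleeve, singleton roots avoiding $b^i$, torso treewidth at most $2$, and the remark that roots only add constraints.

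The gap is the subdivision step. It is unnecessary, and if done in the natural way it destroys the antichain. What you call the main obstacle is immediate: $b^j=\gamma(b^i)\in\gamma(V(G_i'))$, so by definition no trail has $b^j$ as an internal vertex. The real problem is different. Subdivision vertices on the arcs at $b^j$ survive in $G_j'-b^j$ and give path vertices new in- and out-neighbours there. So the step ``degree-$3$ vertices go to degree-$3$ vertices'' from \cref{thm:antichain} is lost. If you subdivide every parallel arc, the sequence is not even an antichain.

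Here is why. Write $b^j\to s^j_{t,a}\to v^j_t$ ($a=1,2$) for the two subdivided arcs into an internal even vertex $v^j_t$, and $v^j_t\to s^j_{t,a}\to b^j$ for the two subdivided arcs out of an odd vertex $v^j_t$. Call the resulting graphs $G_i''$, and take $i<j$. Define $\gamma$ as follows:
\begin{itemize}
\item $\gamma$ is the identity on $b$, $l_1,l_2$, $v_0,\ldots,v_{2i}$ and all subdivision vertices of $G_i''$ (their indices are at most $2i-1$);
\item $\gamma(r^i_a)\coloneqq s^j_{2i,a}$, which exists since $2i\le 2j-2$;
\item the arc $v^i_{2i}\to b^i$ is routed along $v^j_{2i}\to v^j_{2i+1}\to s^j_{2i+1,1}\to b^j$;
\item every other arc $(x,y)$ goes to the arc $(\gamma(x),\gamma(y))$, which exists.
\end{itemize}
The only internal vertices used, $v^j_{2i+1}$ and $s^j_{2i+1,1}$, are not images, so this is a strong immersion $G_i''\hookrightarrow G_j''$. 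Your argument, which only looks at the underlying digraphs, therefore fails. Subdividing just one arc of each parallel pair does preserve the antichain, but then you need a directed in/out-degree argument instead of the degree count. So the parenthetical about subdividing in the proof of \cref{thm:antichain} should not be read literally.

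The clean fix, and what the paper implicitly does, is not to subdivide at all. Knitworks only have to be loopless, so keep the $G_i'$ as multigraphs and root them at any $\ell$ vertices other than $b^i$ (for $i\ge\ell$). Then $\gamma(b^i)=b^j$ by degree. Strongness confines the images of the arcs of $G_i$ to $G_j'-b^j=G_j$, so $\gamma$ restricts to a strong immersion $G_i\hookrightarrow G_j$. The first part of the proof of \cref{thm:antichain} rules this out.
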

\begin{remark}
    \cref{cor:one_vertex_not_wqo_strong_Delta} is best possible in the sense that, non-trivial connected Eulerian digraphs admit undirected treewidth at least $2$.
\end{remark}

As claimed above, when omitting the assumption on the immersions to be strong, we conjecture that a positive result complementing \cref{cor:one_vertex_not_wqo_strong_Delta} holds true. 

\begin{conjecture}\label{thm:bounded_tw_basecase}
    Let $k,d,\ell \geq 1$, $\Delta \in 2\N$ and let $\Omega=(V(\Omega),\preceq)$ be a well-quasi-order. Then $\mathbf{T}(k,d,\Delta,\ell;\Omega)$ is well-quasi-ordered by $\Omega$-knitwork immersion.
\end{conjecture}

We only settle the base case of a single vertex in this section, i.e., the following is the main theorem of this section.

\begin{theorem}\label{thm:bounded_tw_basecase_1}
    Let $k,\ell \geq 1$, $d\leq 1$, and $\Delta \in 2\N$. Let $\Omega=(V(\Omega),\preceq)$ be a well-quasi-order. The class~$\mathbf{T}(k,d,\Delta,\ell;\Omega)$ is well-quasi-ordered by $\Omega$-knitwork immersion.
\end{theorem}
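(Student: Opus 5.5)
The case $d=0$ is \cref{cor:wqo_bounded_cw_Delta_version}, since strong immersion implies weak immersion. So we take $d=1$ and a bad sequence $(\GGG_i)_{i\in\N}$ in $\mathbf{T}(k,1,\Delta,\ell;\Omega)$. After passing to an infinite subsequence, every $\GGG_i$ has exactly one vertex $z_i$ of degree greater than $\Delta$. By \cref{def:knitwork_large_degree}, $z_i$ is outside the root sets and outside $\dom(\mu_i)$, hence also outside $\dom(\m_i)$.

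The plan is to \emph{split off} $z_i$: reduce $\GGG_i$ to a bounded-degree knitwork while recording what $z_i$ did in the well-quasi-order labels. Fix a tree decomposition of the torso of width at most $k$. Every bag contains at most $k+1$ vertices, and all of them except $z_i$ have degree at most $\Delta$. Now replace $z_i$ by a structure of bounded degree. For every edge-pair $(e,e')$ at $z_i$, with $e$ entering and $e'$ leaving, we may split off $(e,e')$ at $z_i$ (\cref{def:splitting_edgepairs}). Since we only need weak immersion, a split-off pair $(e,e')$ can later be routed back through $z$, because the target's high-degree vertex may be used as an internal vertex. Concretely, we would use the tree decomposition to partition the edges at $z_i$ into groups by the bag of the tree decomposition they attach to. Each group is replaced by a copy of $z_i$, and these copies are arranged along a subdivided spine that follows the tree decomposition. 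Each spine edge is bidirected with a multiplicity bounded in terms of $k$ and $\Delta$, where the bound uses that $\Delta$ bounds the degree of the bag vertices. This yields a well-linked $\Omega^+$-knitwork of bounded carving width (via \cref{lem:qualitative_equivalence_of_tw_and_cw_knitworks}), with $\Omega^+=\Omega\times\Omega_2$. The new spine vertices get a distinguished label $(\star,1)$ and a free link $\m=\operatorname{Match}(\rho^-,\rho^+)$, which is well-linked. Every other vertex $v\in\dom(\Phi_i)$ keeps the label $(\Phi_i(v),0)$, as in the proof of \cref{thm:wqo_bounded_carvingwidth_knitworks}. Applying \cref{thm:wqo_bounded_carvingwidth_knitworks_gen} gives $i<j$ with a strong $\Omega^+$-knitwork immersion of the modified $\GGG_i$ into the modified $\GGG_j$.

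Finally, to pull back we would identify all spine vertices in both graphs back into $z_i$ and $z_j$. The labels force spine vertices to map to spine vertices. Paths through spine vertices of the target become paths through $z_j$, which is allowed because the immersion is weak and $z_j\notin\dom(\m_j)$. Condition (4) of \cref{def:knitwork_immersion} therefore imposes nothing at $z_j$, and conditions (3) and (5) do not concern $z_j$, since $z_j\notin\dom(\mu_j)$. Edge-disjointness is preserved under identification. Hence $\GGG_i\hookrightarrow^*\GGG_j$, a contradiction.

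The main obstacle is the middle step: making the replacement of $z_i$ both of bounded carving width and \emph{faithful}. Faithfulness requires that every edge pairing at $z_i$ needed by an immersion is realisable through the spine, so the spine must be well-linked with capacity enough to carry the required routings. Bounded width requires the spine multiplicities to be bounded, which is only plausible because each bag has bounded total degree apart from $z_i$. The first attempt would bound the flow crossing each spine edge by the degree of the adjacent separator, using \cref{lem:Menger_for_Euler} to argue that this flow suffices; the point is that each separator has at most $k$ vertices other than $z_i$, each of degree at most $\Delta$. If this fails, the fallback is to encode each $z$-edge group as a labelled leaf and handle $z$ directly within the Cubic Tree Lemma argument.
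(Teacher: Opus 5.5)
Your architecture is the paper's. You replace the unique high-degree vertex by a bounded-degree gadget carrying fresh labels. You then get a strong immersion of the modified knitworks from the bounded-carving-width theorem. Finally you pull back by deleting the auxiliary edges from the image paths and collapsing the gadget onto $z_j$. That is exactly where strongness is lost, and \cref{def:knitwork_immersion} imposes nothing there since $z_j\notin\dom(\mu_j)$.

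The gadget is where you differ. The paper first reduces to controlled knitworks via \cref{lem:knitting_immersions_decomposition}; you should do this too, since the stitches have at most $k+1$ vertices and the torso is Eulerian. It then rips $z_i$ into degree-one ends (\cref{def:ripping}), labels in- and out-ends by two fresh incomparable elements, and re-closes them by a perfect matching from \cref{lem:matching_degree_ones_low_cw}. That lemma bounds the balance across every carving edge by the width, then pairs ends inside minimal subtrees so that the \emph{same} carving has at most doubled width.

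Your spine replaces that lemma by a flow on the decomposition tree. Conservation at the copies forces the net number of spine edges across $tt'$ to equal the imbalance of the $z$-edges assigned to one side. The cut around the vertices lying only on that side is Eulerian, and the at most $k$ other separator vertices have bounded degree, so this imbalance is bounded in terms of $k$ and $\Delta$. So your ``first attempt'' goes through. No Menger argument is needed, but take the decomposition tree of maximum degree three so the copies have bounded degree. This route is arguably more elementary than the matching lemma; the paper's version never leaves the carving framework.

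Three corrections.
\begin{enumerate}
\item ``Faithfulness'' is a red herring. The argument only uses the pull-back direction, and since $z_j\notin\dom(\m_j)$ every pairing of edges at $z_j$ is admissible. The spine therefore only needs to be Eulerian, connected and of bounded multiplicity. Trying to make it carry arbitrary routings would force unbounded capacity, and spine vertices need neither $\mu$ nor a link.
\item Labelling the spine by $(\star,1)$ forgets $\Phi_i(z_i)$, so condition (6) of \cref{def:knitwork_immersion} at $z_i$ is never checked. Either first pass to a subsequence on which $(\Phi_i(z_i))_i$ is a chain (\cref{obs:wqo_yields_infinite_chain}), as the paper does, or label the spine $(\Phi_i(z_i),1)$.
\item Injectivity after collapsing needs that no original vertex is mapped onto the spine, and labels only guarantee this for labelled vertices. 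Using $\Omega$-independence, give every original vertex a label with second coordinate $0$, for instance a fresh bottom element for the previously unlabelled ones.
\end{enumerate}
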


While this may seem to be a trivial extension at first glance, it does not follow from the bounded carving width result effortlessly. The following technical lemma lies at the heart of the proof and may be of its own interest. Given two digraphs $H,G$ with $V(H) \subseteq V(G)$ we define $G+H\coloneqq (V(G),E(G)\cup E(H))$ possibly introducing parallel edges.

\begin{lemma}\label{lem:matching_degree_ones_low_cw}
    Let $G$ be a quasi-Eulerian digraph and let $(T,\ell)$ be a carving of $G$ witnessing its carving width $k \in \N$. Then there exists a perfect matching $M \in \operatorname{Match}(V^-(G),V^+(G))$ such that $(T,\ell)$ is a carving of $G+M$ of width $\leq 2k$.  
\end{lemma}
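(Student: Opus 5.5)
The plan is to build $M$ greedily along the cubic tree $T$, so that every edge of $T$ is crossed by only a few matching edges. Root $T$ at an arbitrary leaf. For a tree edge $e$, let $X_e$ be the side of the induced partition away from the root. By the balance argument of \cref{obs:balance_is_symmetrical} (add any perfect matching and use Eulerianity), the imbalance $\operatorname{bal}(X_e)$ equals the difference between $|\rho^+(X_e)|$ and $|\rho^-(X_e)|$. Hence $\operatorname{bal}(X_e)\le |\rho(X_e)| = \w(e)\le k$.

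Key steps, in order:
\begin{enumerate}
\item Process $T$ bottom-up. At each node, collect the still-unmatched degree-one vertices coming from its children's subtrees and its own leaf label.
\item Match as many $V^-$ vertices with $V^+$ vertices as possible among them, arbitrarily. What remains consists only of $V^-$ vertices or only of $V^+$ vertices, and these are passed upward.
\item By induction, the unmatched set passed up across the edge $e$ has size exactly $\operatorname{bal}(X_e)$. Every matching edge $m=(u,w)$ with $u\in X_e$, $w\notin X_e$ crosses $e$ only if one endpoint was unmatched inside $X_e$. Since matching is done at the lowest common ancestor node, the number of $M$-edges crossing $e$ equals the number of vertices of $X_e$ left unmatched when passing $e$, which is $\operatorname{bal}(X_e)\le k$.
\item At the root, everything is matched because $|V^+(G)|=|V^-(G)|$. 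So $M$ is a perfect matching in $\operatorname{Match}(V^-(G),V^+(G))$, oriented from the in-degree-one vertex to the out-degree-one vertex, which makes $G+M$ Eulerian.
\item Then for each tree edge $e$, $|\rho_{G+M}(X_e)| = |\rho_G(X_e)| + |M\cap \rho(X_e)| \le k+k=2k$. So $(T,\ell)$ is a carving of $G+M$ of width at most $2k$.
\end{enumerate}

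The main obstacle is step 3: justifying that a matching edge crosses $e$ exactly when one endpoint was left over below $e$. This is the standard ``match at the lowest common ancestor'' bookkeeping. I would phrase it as an invariant: the multiset of unmatched vertices passed across $e$ is exactly the unmatched vertices of $X_e$, and it has size $\operatorname{bal}(X_e)$, since within $X_e$ all possible $\pm$ pairs have already been cancelled. Leaves with $\ell^{-1}(t)=\emptyset$ contribute nothing and need no special care. Connectivity of $G$ is not needed, since the construction never uses paths.
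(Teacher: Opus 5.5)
Your proof is correct and follows essentially the same idea as the paper. The paper also matches degree-one vertices inside the smallest subtree of $T$ containing both an in-degree-one and an out-degree-one vertex, which is your lowest-common-ancestor greedy, so that at most $\operatorname{bal}(X_e)\le k$ matching edges cross each tree edge $e$. The differences are cosmetic: you get $\operatorname{bal}(X_e)=\bigl||\rho^+(X_e)|-|\rho^-(X_e)|\bigr|\le \w(e)$ from a direct degree count rather than the paper's circle-cover and Menger argument, and your explicit bottom-up construction replaces its minimal-counterexample induction on the defect, which also avoids that induction's awkward hypothesis $\tau\ge 2k+2$.
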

\begin{proof}
Let $G$ be a quasi-Eulerian digraph of defect $\tau \in 2\N$  with $\cw{G} \leq k$ and let $(T,\ell)$ be a carving witnessing said width. First note that if $\tau \leq 2k$ then the lemma is trivially satisfied by choosing an arbitrary perfect matching $M \in \operatorname{Match}(V^-(G),V^+(G))$ as we only add $k$ edges, hence increasing the width of the carving by at most $k$. Henceforth, assume that $\tau \geq 2k+2$. Let $V \subseteq V(G)$ be the set of degree-one vertices in $G$. Let $G^*$ be obtained from $G$ by identifying $V$ to a single vertex $v^*$. Since $\Abs{V^+(G)} = \Abs{V^-(G)}$, the graph $G^*$ is Eulerian. Let $\CCC$ be a circle cover of $G^*$ which exists by \cref{obs:covering_eulerian_digraphs} and \cref{obs:linkage_gives_linear_linkage}. Then $\Abs{\CCC} \geq \frac{\deg(v^*)}{2}$, since every circle visits the same vertex at most once; let $\CCC^* \subseteq \CCC$ be minimal such that every edge incident to $v^*$ is part of a cycle in $\CCC^*$. Clearly $\Abs{\CCC^*} = \frac{\deg(v^*)}{2}$. Note further that $E(G^*) = E(G)$ by construction, therefore every circle $C \in \CCC^*$ corresponds to a unique $(V^+(G),V^-(G))$-path in $G$ in the obvious way: starting with the unique edge in $\rho^+(v^*)\cap E(C)$ and ending with the unique edge in $\rho^-(v) \cap E(C)$. Thus, $\CCC^*$ unambiguously defines a strong linear~$(V^+(G),V^-(G))$-linkage~$\LLL^*$ of order~$\frac{\deg(v^*)}{2}$ in~$G$. 

 \begin{claim}\label{claim:matching_degree_ones_low_cw}
     Let $e \in E(T)$ and let $\{X_e,\bar X_e\}$ denote the partition induced by $e$. Then $\operatorname{bal}_G(X_e) \leq k$.
 \end{claim}
 \begin{claimproof}
     Assume the contrary and without loss of generality let $\operatorname{bal}_G(X_e) = \Abs{X_e \cap V^+(G)} - \Abs{X_e \cap V^-(G)} \geq k+1$ say. This implies that at least $k+1$ paths of $\LLL^*$ have one endpoint in $X_e$---namely the vertex in which the path starts---and one endpoint in $\bar X_e$. Since $X_e$ induces a $\{X_e,\bar X_e\}$-cut in $G$, \cref{lem:Menger_for_Euler} implies that $\delta(X_e)\geq k+1$ as a contradiction to $\w(e) \leq k$ (recall \cref{def:carving}). The claim follows.
 \end{claimproof}

In light of \cref{claim:matching_degree_ones_low_cw}, for every $e \in E(T)$ we define $\operatorname{bal}_G(e) \coloneqq \operatorname{bal}(X_e)$; this is well-defined by symmetry, i.e.,  \cref{obs:balance_is_symmetrical} implies $\operatorname{bal}_G(e) = \operatorname{bal}(\bar X_e)$. Finally, we use \cref{claim:matching_degree_ones_low_cw} to prove a stronger version of the lemma.

\begin{claim}
     Let $G$ be a quasi-Eulerian digraph and $(T,\ell)$ a carving of $G$ witnessing its carving width $k \in \N$. Let $\tau \geq 2k+2$ be the defect of $G$. Then, there is a matching $M \in \operatorname{Match}(V^-(G),V^+(G))$ such that $(T,\ell)$ is a carving of $G+M$ such that for every $e \in E(T)$ it holds $\w_{G+M}(e) \leq \w_{G}(e) + \operatorname{bal}_G(e)$.
\end{claim}
\begin{claimproof}
    Towards a contradiction assume the contrary and let $G$ be a quasi-Eulerian digraph and $(T,\ell)$ a carving of $G$ witnessing its carving width $k \in \N$ refuting the claim such that
\begin{itemize}
    \item[1.] $G$ admits minimal defect $\tau \geq 2k+2$, and
    \item[2.] subject to 1. $k$ is minimal, and
    \item[3.] subject to 1. and 2. $G$ admits a minimal vertex set, and
    \item[4.] subject to 1. and 2. and 3. $G$ admits a minimal edge-set.
\end{itemize}
By 3. $G$ is connected, for else we could choose a connected component of $G$ refuting minimality (note that each connected component is quasi-Eulerian again). By 4. $G$ is loopless, since loops do not affect the carving width. Note that $\cw{G} \leq k$ implies that the maximum degree of $G$ is $k$. 

We use \cref{claim:matching_degree_ones_low_cw} to refute assumption 1. on the minimality of $G$ above. To this extent, let $L^- \subset V(T)$ such that $\ell^{-1}(L^-) = V^-(G)$. Let $\{t_1,\ldots,t_{\frac{\tau}{2}}\} = L^-$ be an enumeration of the set. For every $j \in \{1,\ldots,\frac\tau 2\}$ and $e^* \in E(T)$ let $T^j_{e^*}$ be the component of $T - e^*$ containing $t_j$. Choose $j \in \{1,\ldots,\frac\tau 2\}$ and $e^* \in E(T)$ such that $T_{e^*}^j$ is edge-minimal satisfying $\ell^{-1}(V(T^j_{e^*})) \cap V^+(G) \neq \emptyset$ (note that $t_j \in \ell^{-1}(V(T^j_{e^*})) \cap V^-(G)$). Let $X^- \coloneqq \ell^{-1}(V(T^j_{e^*})) \cap V^-(G)$ and $X^+ \coloneqq \ell^{-1}(V(T^j_{e^*})) \cap V^+(G)$ then both sets are non-empty by definition. 

Let $\lenks{e^*},\riets{e^*}$ be the two distinct edges of $T^j_{e^*}$, that share a common end with $e^*$; they exist since $T$ is cubic and $e^*$ cannot be a leaf edge. Let $T_{\lenks{e^*}}$ be the component of $T-\lenks{e^*}$ completely contained in $T^j_{e^*}$ and define $T_{\riets{e^*}}$ analogously. Without loss of generality assume that $t_j \in V(T_{\lenks{e^*}})$. By minimality of our choice of $e^*$ above we derive the following.
\begin{itemize}
    \item[$(\star)$] $X^- \subseteq \ell^{-1}(\leaves{T_{\lenks{e^*}}})$ and $X^+ \subseteq \ell^{-1}(\leaves{T_{\riets{e^*}}})$.
\end{itemize}
For otherwise $X^- \cap \ell^{-1}(\leaves{T_{\lenks{e^*}}}) \neq \emptyset$ and $X^+ \cap \ell^{-1}(\leaves{T_{\lenks{e^*}}}) \neq \emptyset$, say (the other case is analogous), and thus $\lenks{e^*}$ would contradict the minimality of $e^*$. By \cref{claim:matching_degree_ones_low_cw} applied to $\lenks{e^*}$ and $\riets{e^*}$ together with $(\star)$ we deduce that $\Abs{X^-} \leq k$ as well as $\Abs{X^+} \leq k$.

\smallskip

Let $M_{e^*} \in \operatorname{Match}(X^-,X^+)$ be a maximum matching and $G_{e^*} \coloneqq G+M_{e^*}$. Then $G_{e^*}$ has defect less than $\tau$ by construction since $\Abs{M_{e^*}} \geq 1$. Further $(T,\ell)$ is a carving for $G_{e^*}$: we claim that for every $e \in E(T)$ the following hold
\begin{itemize}
    \item[(1)] $\w_{G_{e^*}}(e) \leq \w_{G}(e) + \operatorname{bal}_G(e)$, and
    \item[(2)] $\operatorname{bal}_{G}(e) = \operatorname{bal}_{G_{e^*}}(e) + \delta_{G_{e^*}}(X_e) - \delta_{G}(X_e)$
\end{itemize}
By construction (1) is clear for every $e \notin E(T^j_{e^*})$ since the width of these edges did not alter, i.e., $\w_{G_{e^*}}(e) = \w_G(e)$. Similarly (2) follows for every $e \notin E(T^j_{e^*})$ from \cref{obs:balance_is_symmetrical} since $G_{e^*}$ is quasi-Eulerian and we did not add or remove any edges with endpoints in $\bar X_{e^*}$, and therefore  $\operatorname{bal}_G(\bar X_e^*) = \operatorname{bal}_{G_{e^*}}(\bar X_e^*)$ and $\delta_{G_{e^*}}(X_e) = \delta_{G}(X_e)$. Thus, let $e \in T^j_{e^*}$, then either $e \in T_{\lenks{e^*}}$ or $e \in T_{\riets{e^*}}$; without loss of generality assume that $e \in T_{\lenks{e^*}}$ for the other case is symmetrical. Let $\{X_e,\bar X_e\}$ be the partition induced by $e$ and without loss of generality assume that $X_e \subseteq \ell^{-1}(V(T_{e^*})$. By $(\star)$ we derive that $X^+ \cap X_e = \emptyset$.  Let $M^- \subseteq X^-$ be the set of vertices which are part of an edge of $M_{e^*}$. By construction we derive that $\w_{G_{e^*}}(e) = \w_G(e) + M^-\cap X_e$; the claim (1) follows. 
Similarly from $(\star)$ we derive that $\operatorname{bal}_{G_{e^*}}(e) = \Abs{X^-\cap X_e} - \Abs{M^- \cap X_e}$ and since $\operatorname{bal}_{G}(e) = \Abs{X^-\cap X_e}$ we derive that $\operatorname{bal}_{G}(e) = \operatorname{bal}_{G_{e^*}}(e) + \Abs{M^- \cap X_e}$. Clearly $\Abs{M^- \cap X_e} = \delta_{G_{e^*}}(X_e) - \delta_{G}(X_e)$ as these are exactly the new edges that ``add to the cut''; the claim (2) follows.

\smallskip

Since $G_{e^*}$ admits less defect than $G$ it is no counterexample to the claim, whence there is a matching $M' \in \operatorname{Match}(V^-(G_{e^*}),V^+(G_{e^*}))$ such that $(T,\ell)$ is a carving of $G_{e^*}+M'$ such that for every $e \in E(T)$ it holds $\w_{G_{e^*}+M'}(e) \leq \w_{G_{e^*}}(e) + \operatorname{bal}_{G_{e^*}}(e)$. Together with $(1)$ and $(2)$ we derive that $M \coloneqq M' \cup M_{e^*}$ is a perfect matching in $\operatorname{Match}(V^-(G),V^+(G))$ satisfying the claim as a contradiction to $G$ being a counterexample.
\end{claimproof}

The claim immediately implies the lemma concluding the proof.
\end{proof}

Next we introduce \emph{ripping}---an operation that takes a high-degree vertex and ``rips'' it into many vertices of degree one. The intuitive idea behind ripping is as follows: Given a digraph $G$ and a loopless vertex $v \in V(G)$ we define a new digraph $G'$ on the same edge-set as $G$, where all the adjacencies of edges with $v$ are replaced by new adjacencies with degree one vertices.

\begin{definition}[Ripping]\label{def:ripping}
    Let $\bar G=(G,\pi,X_1,\ldots,X_\ell)$ be a rooted Eulerian digraph of index $\ell \geq 1$ where $G=(V,E,\operatorname{inc})$. Let $v \in V(G) \setminus \bigcup_{i=1}^\ell X_i$ be loopless and let $d \coloneqq \frac{\deg(v)}{2}$. Let $v_1^-,\ldots,v_d^-$  and $v_1^+,\ldots,v_d^+$ be distinct new elements that are not part of $V \cup E$. Let $\{e_1^-,\ldots,e_d^-\} = \rho^-(v)$ and $\{e_1^+,\ldots,e_d^+\} = \rho^+(v)$. We define $\rip(G,v) \coloneqq (V',E,\operatorname{inc}')$ via
    \begin{itemize}
        \item  $V' \coloneqq (V \setminus \{v\}) \cup \{v_1^-,v_1^+,\ldots,v_d^-,v_d^+\},$
        \item $\operatorname{inc}' = \operatorname{inc}^- \cup \operatorname{inc}^+,$ with
    \begin{align*}
        \operatorname{inc}^- &\coloneqq \{(e,u) \in \operatorname{inc} \mid u\neq v\} \cup \{(e_i^-,v_i^-) \mid 1 \leq i \leq d\}, \text{ and }\\
        \operatorname{inc}^+ &\coloneqq \{(u,e) \in\operatorname{inc} \mid u \neq v \} \cup \{(v_i^+,e_i^+) \mid 1 \leq i \leq d\}.
    \end{align*}\end{itemize}
Let $\pi_G(v)$ be an ordering of $\rho(v)$ and let $X_v \coloneqq \{v_1^-,v_1^+,\ldots,v_d^-,v_d^+\}$. Then $\rho_G(v) = \rho_{\rip(G,v)}(X_v)$. Fix $\pi(X_v) \coloneqq \pi(v)$ by construction.
We define $\rip(\bar G;\pi(v)) \coloneqq (\rip(G,v),\pi,X_1,\ldots,X_\ell,X_v)$.

We call $\rip(\bar G; \pi(v))$ a \emph{rooted ripping of $v$ (in $\bar G$)} and similarly we call $\rip(G,v)$ a \emph{ripping of $v$ (in $G$)}. We refer to $X_v$ as the \emph{ripped ends (of $v$)}. 
\end{definition}
\begin{remark}
    Clearly $X_v$ is balanced by definition, thus the above is well-defined.
\end{remark}

The following relations are readily extracted from the definition.

\begin{observation}\label{obs:ripping_fundamentals}
    Let $\ell \geq 1$ and let $\bar G=(G,\pi,X_1,\ldots,X_\ell)$ be a rooted Eulerian digraph of index $\ell \geq 1$. Let $v \in V(G) \setminus \bigcup_{i=1}^\ell X_i$ be loopless and let $\pi(v)$ be an ordering of $\rho(v)$. Then 
    \begin{enumerate}
        \item \label{obs:ripping_fundamentals:1}$\bar G'\coloneqq \rip(\bar G;\pi(v))$ is a well-defined rooted Eulerian digraph of index $\ell +1$ and order $k + \delta(v)$ where $k$ is the order of $\bar G$,
        \item \label{obs:ripping_fundamentals:2} $E(G') = E(G)$,  $V(G) \setminus \{v\} \subset V(G')$ and for every $u \in V(G) \cap V(G')$ and every $e \in E(G)$ it holds $\operatorname{inc}_G(e,u) \iff \operatorname{inc}_{G'}(e,u)$ as well as $\operatorname{inc}_G(u,e) \iff \operatorname{inc}_{G'}(u,e)$.
    \end{enumerate}
\end{observation}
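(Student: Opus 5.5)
The plan is to verify both items of \cref{obs:ripping_fundamentals} directly from \cref{def:ripping}, checking first the incidence structure and then the rooted Eulerian conditions.

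\emph{Item 2.} By construction $E(G')=E$ and $V'=(V\setminus\{v\})\cup X_v$, with the elements of $X_v$ fresh. The incidences $\operatorname{inc}'$ restricted to vertices $u\neq v$ are exactly those of $\operatorname{inc}$, since $\operatorname{inc}^\pm$ keep every pair with $u\neq v$ and add only pairs involving new vertices. We then check that $G'$ is a digraph in the sense of the preliminaries, i.e.\ that every edge has exactly one tail and one head. Because $v$ is loopless, each edge incident to $v$ has exactly one incidence at $v$. Each such incidence is replaced by exactly one incidence at a new vertex, namely $e_i^-$ receives head $v_i^-$ and $e_i^+$ receives tail $v_i^+$. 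Since $v$ is Eulerian, $|\rho^-(v)|=|\rho^+(v)|=d$, so the two enumerations have the same length.

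\emph{Item 1.} Each $v_i^-$ has in-degree one and out-degree zero, and each $v_i^+$ has out-degree one and in-degree zero. Every other vertex keeps its incidences by item 2, so it remains Eulerian or of degree one exactly as before. Hence $G'$ is quasi-Eulerian and Eulerian up to (the old degree-one set) $\cup X_v$.

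We next verify the conditions of \cref{def:rooted_eulerian_digraph}:
\begin{itemize}
\item The old degree-one vertices already lie in $\bigcup X_i$, and the new ones lie in $X_v$.
\item $X_v$ is disjoint from each $X_i$ because its elements are fresh.
\item $X_v$ is balanced, since $|X_v\cap V^+(G')|=|X_v\cap V^-(G')|=d$.
\item Each $X_i$ stays balanced and satisfies $\rho_{G'}(X_i)=\rho_G(X_i)$. Since $v\notin X_i$, incidences inside $X_i$ are unchanged, and an edge leaving $X_i$ toward $v$ now ends in $X_v\subseteq \bar X_i$.
\end{itemize}
So $\pi(X_i)$ is still an ordering of $\rho_{G'}(X_i)$. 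The ordering $\pi(X_v)=\pi(v)$ is well defined because $\rho_{G'}(X_v)=\rho_G(v)$. This holds as no edge has both ends in $X_v$ (there are no loops at $v$), and every edge of $\rho_G(v)$ has exactly one end in $X_v$.

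Finally we compute the order and index. The order is $\sum_i\delta(X_i)+\delta(X_v)=k+\delta(v)$, and the index is $\ell+1$. All roots are nonempty strict subsets; strictness of $X_v$ holds since $\bar X_v\supseteq V\setminus\{v\}$, which is nonempty whenever $v$ has a neighbour.

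The only delicate points are the careful treatment of incidences, where looplessness is essential, and the small edge case of isolated $v$ (then $d=0$ and $X_v=\emptyset$), which we exclude as implicit in the definition. No real obstacle is expected.
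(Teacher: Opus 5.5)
Your proposal is correct and is exactly the direct verification from the definition of ripping that the paper intends; the paper states this observation without proof, as readily extracted from the definition. Two points you pass over quickly also go through. First, $v$ is Eulerian because every degree-one vertex of a rooted Eulerian digraph lies in $\bigcup_i X_i$ and $v\notin\bigcup_i X_i$. Second, the isolated case $d=0$ is ruled out by the paper's standing weak-connectivity convention, since $\emptyset\neq X_1\subseteq V(G)\setminus\{v\}$ forces $v$ to have an incident edge.
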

\begin{remark}
    By \crefthm{obs:ripping_fundamentals}{2} we derive that if $G$ is loopless then $G'$ is loopless.
\end{remark}

We readily extend the definition of ripping to $\Omega$-knitworks as follows.

\begin{definition}\label{def:ripping_knitwork}
    Let $\Omega$ be a well-quasi-order and $\GGG = \big(\bar G, \mu, \m, \Phi\big)$ an $\Omega$-knitwork of index $\ell \geq 1$. Let $v \in V(G) \setminus \dom(\m)$ be loopless. Let $\pi(v)$ be an ordering of $\rho(v)$ such that $\pi(v) = \mu(v)$ if $v \in \dom(\mu)$ and otherwise arbitrarily fixed. Let $\rip(\bar G; \pi(v))$ be a rooted ripping of $v$ with ripped ends $X_v$.
    We define $\rip(\GGG;\pi(v)) \coloneqq (\rip(\bar G, \pi(v)), \mu', \m', \Phi')$ by letting $\dom(\mu') = \dom(\mu)\setminus\{v\}$, $\dom(\m') = \dom(\m)\setminus\{v\}$ and fixing $\mu' \coloneqq \restr{\mu}{\dom(\mu')}$ as well as  $\m' = \restr{\m}{\dom(\m')}$. Similarly $\Phi'$ is defined to agree with $\Phi$ on $V\setminus \{v\}$ and is not defined on $X_v$.
\end{definition}

Using \cref{lem:matching_degree_ones_low_cw} and \cref{def:ripping}, we are ready to prove \cref{thm:bounded_tw_basecase_1}.

\begin{proof}[Proof of \cref{thm:bounded_tw_basecase_1}.]
 By \cref{lem:knitting_immersions_decomposition} it suffices to prove the lemma for controlled $\Omega$-knitworks. To see this note that by \crefthm{obs:bounded_tw_class_then_torso_stitch_bounded_tw}{2} we derive that $\torso(\mathbf{T}(k,d,\Delta,\ell;\Omega)) \subseteq \mathbf{T}(k,d,\Delta,\ell;\Omega)$ while $\stitch(\mathbf{T}(k,d,\Delta,\ell;\Omega))$ is a class of $\Omega$-knitworks on at most $k+1$ vertices  whence the class is well-quasi-ordered by \cref{lem:wqo_finitely_many_vertices}. Thus, we are left to prove the theorem for the torsos which are controlled, i.e., for the class $\mathbf{T}^\star(k,\ell,d,\Delta;\Omega) \subset \mathbf{T}(k,\ell,d,\Delta;\Omega)$ of controlled $(\Omega,\Delta)$-knitworks.

Let $(\GGG_i)_{i \in \N}$ be a sequence of controlled $\Omega$-knitworks $\GGG_i=(\bar G_i,\mu_i,\m_i,\Phi_i) \in \mathbf{T}(k,d,\Delta,\ell;\Omega)$ for every $i \in \N$; in particular $G_i$ is Eulerian for every $i \in \N$.  We continue by filtering the sequence for a ``nice'' subsequence as follows. If there exists an infinite index set $I \subseteq \N$ such that $\GGG_i \in \mathbf{T}(k,0,\Delta,\ell;\Omega)$ for every $i \in I$, then \cref{cor:wqo_bounded_cw_Delta_version} yields $\GGG_i \hookrightarrow \GGG_j$ for some $i<j$. Thus, assume without loss of generality that $G_i$ contains a vertex $x_i$ of degree greater than $\Delta$ for every $i \in \N$. Since $\Omega$ is a well-quasi-order, applying \cref{obs:wqo_yields_infinite_chain} there exists an infinite index set $I \subseteq \N$ such that $(\Phi_i(x_i))_{i \in \N}$ is a chain; for simplicity assume without loss of generality that $I = \N$. Again by \cref{cor:wqo_bounded_cw_Delta_version} we derive the following.

    \begin{claim}\label{claim:bounded_tw_basecase_1}
        If there exist $\Delta' \in 2\N$ and an infinite index set $I \subseteq \N$ such that $\GGG_i \in \mathbf{T}(k,0,\Delta',\ell;\Omega)$ for every $i \in I$, then there exist $i<j$ such that $\GGG_i \hookrightarrow \GGG_j$.
    \end{claim}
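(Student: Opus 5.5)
The claim should follow almost immediately from \cref{cor:wqo_bounded_cw_Delta_version}, applied with $\Delta'$ in place of $\Delta$. The plan is to restrict attention to the subsequence $(\GGG_i)_{i \in I}$, obtain a strong $\Omega$-knitwork immersion between two of its members, and then weaken it.

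First, I would make sure the corollary applies. By hypothesis, every $\GGG_i$ with $i \in I$ lies in $\mathbf{T}(k,0,\Delta',\ell;\Omega)$. So each such $\GGG_i$ is a well-linked $(\Omega,\Delta')$-knitwork of index $\ell$, has treewidth at most $k$, and has no vertex of degree greater than $\Delta'$. In particular the maximum degree of $G_i$ is at most $\Delta'$. The high-degree vertex $x_i$ fixed earlier (of degree greater than $\Delta$) is harmless: it simply has degree at most $\Delta'$ and needs no special treatment.

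For completeness I would recall why the corollary holds in this setting. \cref{lem:qualitative_equivalence_of_tw_and_cw_knitworks} gives $\cw{\GGG_i} \leq \Delta' \cdot k$. The $\GGG_i$ are therefore well-linked $\Omega$-knitworks of index $\ell$ and carving width at most $\Delta' k$, so they lie in $\mathbf{G}(\Delta' k,\ell;\Omega)$. This class is well-quasi-ordered by strong $\Omega$-knitwork immersion by \cref{thm:wqo_bounded_carvingwidth_knitworks_gen}.

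Next, since $I$ is infinite, I enumerate it increasingly as $i_1 < i_2 < \dots$. Well-quasi-ordering yields indices $p<q$ with $\GGG_{i_p} \hookrightarrow \GGG_{i_q}$. Setting $i \coloneqq i_p$ and $j \coloneqq i_q$ gives $i<j$ as natural numbers.

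Finally, every strong $\Omega$-knitwork immersion is in particular an $\Omega$-knitwork immersion, since \cref{def:knitwork_immersion} only adds the requirement that internal vertices avoid the image of $V(H)$. Hence $\GGG_i \hookrightarrow^* \GGG_j$, which is what the proof of \cref{thm:bounded_tw_basecase_1} needs. No step here is a real obstacle; the only point to check carefully is that membership in $\mathbf{T}(k,0,\Delta',\ell;\Omega)$ gives exactly the hypotheses of \cref{cor:wqo_bounded_cw_Delta_version}, namely well-linkedness together with the degree and treewidth bounds.
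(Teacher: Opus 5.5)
Your proposal is correct and follows the paper's approach: the paper derives this claim in one line from \cref{cor:wqo_bounded_cw_Delta_version}, applied with $\Delta'$ to the subsequence indexed by $I$. Your unpacking of that corollary via \cref{lem:qualitative_equivalence_of_tw_and_cw_knitworks} and \cref{thm:wqo_bounded_carvingwidth_knitworks_gen} is accurate, and so is the closing observation that a strong immersion also gives $\hookrightarrow^*$; both are optional elaborations of the same argument.
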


    By \cref{claim:bounded_tw_basecase_1} we derive that for every $\Delta' \geq \Delta$ there are only finitely many $i \in \N$ with $\deg(x_i) \leq \Delta'$. In particular there exists an infinite index set $I \subseteq \N$ such that $(\deg(x_i))_{i \in \N}$ is strictly monotonically increasing. Again for simplicity identify $I \cong \N$. Let $\bar G_i = (G_i, \pi,v_1^i,\ldots,v_\ell^i)$ for every $i \in \N$.

    \begin{claim}\label{claim:bounded_tw_basecase_2}
        There is an infinite index set $I \subseteq \N$ such that $\deg(v_t^i) = \deg(v_t^j)$ for every $i,j \in I$ and every $1 \leq t \leq \ell$.
    \end{claim}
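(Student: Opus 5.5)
The statement is a pigeonhole-type filtering of the current (already filtered) sequence $(\GGG_i)_{i\in\N}$. My plan is to show that each root degree $\deg(v_t^i)$ is bounded independently of $i$, and then apply the pigeonhole principle $\ell$ times, once per root index $t$.

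\textbf{Step 1: bounding the root degrees.} The $\GGG_i$ are controlled, hence $\bar G_i=(G_i,\pi,v_1^i,\ldots,v_\ell^i)$ has singleton root sets. I would argue that every root $v_t^i$ has degree at most $\Delta$.

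The main justification is \cref{def:knitwork_large_degree}. The set $Y_i$ of vertices of degree greater than $\Delta$ satisfies $Y_i\cap\bigcup_{t}X_t=\emptyset$, and the root sets are exactly $\{v_1^i\},\ldots,\{v_\ell^i\}$. So $\deg(v_t^i)=\delta(v_t^i)\le\Delta$ for all $t$ and all $i$.

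The one point needing care is that we reduced to torsos, so the roots here are up-stitch vertices $x_t^*$ of $\torso(\GGG)$ for some $\GGG\in\mathbf{T}(k,d,\Delta,\ell;\Omega)$. For these I would use \crefthm{obs:bounded_tw_class_then_torso_stitch_bounded_tw}{2}, which gives $\torso(\mathbf{T}(k,d,\Delta,\ell;\Omega))\subseteq\mathbf{T}(k,d,\Delta,\ell;\Omega)$. Hence the torso is again an $(\Omega,\Delta)$-knitwork, and the previous paragraph applies to it directly.

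If one prefers not to rely on this closure, there is an alternative. The treewidth bound in \cref{def:knitworks_treewidth} gives $\Abs{X_t}\le k$. The degree of $x_t^*$ in the torso equals $\delta_G(X_t)$, which is at most $\Delta\cdot k$, since $X_t$ contains no high-degree vertex. Either way we obtain a uniform bound $D\coloneqq\max(\Delta,k\Delta)$ on all root degrees, independent of $i$.

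\textbf{Step 2: pigeonhole.} Map each $i$ to the tuple $(\deg(v_1^i),\ldots,\deg(v_\ell^i))\in\{0,\ldots,D\}^\ell$. This is a finite set, so some tuple is attained for infinitely many $i$. Let $I$ be the set of those indices; it satisfies the claim.

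Equivalently, one can filter $\ell$ times, each time applying the pigeonhole principle to $\deg(v_t^i)\in\{0,\ldots,D\}$ and passing to an infinite subset.

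\textbf{Main obstacle.} The only real obstacle is Step 1, specifically making sure the degree bound holds for the torso's root vertices and not merely for vertices of the original knitwork. I would first try the closure route via \cref{obs:bounded_tw_class_then_torso_stitch_bounded_tw}, and fall back on the cut-size estimate $\delta(X_t)\le k\Delta$ if that route does not go through.
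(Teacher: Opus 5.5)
Your proposal is correct and follows the paper's proof: the roots have degree at most $\Delta$ by \cref{def:knitwork_large_degree}, and $\ell$ applications of the pigeonhole principle produce $I$. Your fallback bound $\delta_G(X_t)\le k\Delta$ for the up-stitch roots is a worthwhile safeguard, arguably the more robust justification, because an up-stitch vertex can have degree $\delta_G(X_t)>\Delta$ even though every vertex of $X_t$ has degree at most $\Delta$, and any uniform bound suffices for the pigeonhole step.
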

    \begin{claimproof}
        By \cref{def:knitwork_large_degree} of $(\Omega,\Delta)$-knitworks, we derive that $\deg(v_t^i) \leq \Delta$ for every $i \in \N$. Thus by $\ell$ subsequent applications of the pigeonhole principle we derive the claim.
    \end{claimproof}

   Again without loss of generality we may assume that $I = \N$ satisfies \cref{claim:bounded_tw_basecase_2}. 
    By \cref{def:knitwork_large_degree} we have $x_i \notin \dom(\mu_i)$. For every $i \in \N$ fix an ordering $\pi(x_i)$ of $\rho(x_i)$ and let $$\GGG_i^\star \coloneqq \rip(\GGG_i;\pi(x_i))=\big( \rip(\bar G_i;\pi(x_i)), \mu_i^\star, \m_i^\star, \Phi_i^\star\big)$$ be a ripping of $x_i$ with ripped ends $X_{x_i}$; let $X_{x_i}^-,X_{x_i}^+ \subseteq X_{x_i}$ be the maximum subsets of vertices with out-degree $0$ and in-degree $0$ respectively. (Recall that ripping does not introduce loops by \cref{obs:ripping_fundamentals}).

    Clearly $G_i^\star$ is Eulerian up to $X_{x_i}$---denote its defect by $\tau_i$---and it satisfies $\tw{G_i^\star} \leq \tw{G_i}$ for every $i \in \N$. (Note however that the sequence $(\tau_i)_{i \in \N}$ is strictly monotonically increasing, in particular $(\Abs{X_{x_i}})_{i\in \N}$ is, whence $(\tw{\GGG_i^\star})_{i \in \N}$ is unbounded by \cref{def:knitworks_treewidth}.) By construction we derive that $\Delta(G_i^\star) \leq \Delta$ for every $i \in \N$.

    Thus, since $\tw{G_i^\star} \leq k $ for every $i \in \N$, \cref{thm:qualitative_equivalence_of_tw_and_cw} implies that $\cw{G_i^\star} \leq \Delta \cdot k$. For every $i \in \N$ let $(T_i,\ell_i)$ be a carving of $G_i^\star$ witnessing its carving width. By \cref{lem:matching_degree_ones_low_cw} there exists a perfect matching $M_i \in \operatorname{Match}(X_{x_i}^-,X_{x_i}^+)$ such that $(T_i,\ell_i)$ is a carving of $G_i^\star + M_i$ of width $\leq 2(k+\Delta)$. In particular $G_i^\star + M_i$ is Eulerian of carving width at most $2(k +\Delta)$ for every $i \in \N$. Again by \cref{thm:qualitative_equivalence_of_tw_and_cw} we derive that 
    $$\tw{G_i^\star + M_i} \leq 3(k+\Delta).$$ 
    
 Let $\star^-,\star^+$ be distinct new elements not in $V(\Omega)$ and define $\Omega^\star \coloneqq(V(\Omega) \cup \{\star^-,\star^+\},\preceq^\star)$ to be the well-quasi-order obtained via extension by letting $\preceq^\star$ agree with $\preceq$ on $V(\Omega)\times V(\Omega)$ and making $\star^-,\star^+$ incomparable to every other element. 
    For every $i \in \N$ we define $\Psi_i^\star:V(G_i^\star) \to V(\Omega^\star)$ via $\Psi_i^\star(u) \coloneqq \star^-$ for every $u \in X_{x_i}^-$ and $\Psi_i^\star(u) \coloneqq \star^+$ for every $u \in X_{x_i}^+$ and fixing $\Psi_i^\star(x) \coloneqq \Phi_i^\star(x)$ for every $x \in V(G_i)\setminus\{x_i\}$. 
    Finally we define $\HHH_i \coloneqq \big( (G_i^\star + M_i, \pi,x_1^i,\ldots,x_\ell^i),\mu_i^\star,\m_i^\star,\Psi_i^\star)$; this is an $\Omega^\star$-knitwork by definition.
    
    \begin{claim}\label{claim:bounded_tw_basecase_3}
        $\HHH_i \in \mathbf{T}(3(k+\Delta),0,\Delta,\ell;\Omega^\star)$.
    \end{claim}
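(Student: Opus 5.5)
The plan is to verify the defining conditions of $\mathbf{T}(3(k+\Delta),0,\Delta,\ell;\Omega^\star)$ one by one for $\HHH_i$. Most of them are inherited from $\GGG_i$ via \cref{obs:ripping_fundamentals} and \cref{def:ripping_knitwork}. The only genuinely new ingredients are the matching edges $M_i$ and the new labels $\star^\pm$.

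\emph{Step 1: the underlying digraph is a loopless controlled rooted Eulerian digraph of index $\ell$.}
\begin{itemize}
\item By \cref{obs:ripping_fundamentals}, $G_i^\star$ is loopless. It is Eulerian up to $X_{x_i}$, and every vertex of $X_{x_i}^-$ has in-degree one while every vertex of $X_{x_i}^+$ has out-degree one.
\item Since $M_i$ is a perfect matching in $\operatorname{Match}(X_{x_i}^-,X_{x_i}^+)$, each added edge goes from a vertex of $X_{x_i}^-$ to a vertex of $X_{x_i}^+$. Hence every ripped end becomes Eulerian of degree $2$. No loops appear, because $X_{x_i}^-\cap X_{x_i}^+=\emptyset$.
\item The roots $v_1^i,\ldots,v_\ell^i$ of $\bar G_i$ are singletons distinct from $x_i$, since $x_i$ has degree $>\Delta$ and root vertices may not by \cref{def:knitwork_large_degree}. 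Their incidences are untouched by ripping (\crefthm{obs:ripping_fundamentals}{2}) and by adding $M_i$. So $\rho(v_t^i)$ and $\pi(v_t^i)$ are unchanged, and we keep only these $\ell$ roots. We drop the root $X_{x_i}$ that ripping would add; the paper's notation $x_t^i$ in the definition of $\HHH_i$ should be read as $v_t^i$.
\item Thus $(G_i^\star+M_i,\pi,v_1^i,\ldots,v_\ell^i)$ is a controlled rooted Eulerian digraph of index $\ell$.
\end{itemize}

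\emph{Step 2: the sleeve $(\mu_i^\star,\m_i^\star,\Psi_i^\star)$ is a valid, well-linked $(\Omega^\star,\Delta)$-sleeve with no vertex of degree above $\Delta$.}
\begin{itemize}
\item $\dom(\mu_i^\star)$ and $\dom(\m_i^\star)$ lie in $V(G_i)\setminus\{x_i\}$, and these vertices keep both their cuts $\rho(\cdot)$ and the relevant edge names.
\item Consequently $\mu_i^\star(u)$ is still an ordering of $\rho(u)$, and $\m_i^\star(u)=\m_i(u)$ is still a reliable, well-linked link. Well-linkedness holds because $\GGG_i$ was well-linked and $x_i\notin\dom(\m_i)$.
\item $\Psi_i^\star$ is a partial map into $V(\Omega^\star)$ that avoids the root vertices, since $\Phi_i$ did.
\item Degrees: every vertex other than the ripped ends keeps its degree from $G_i$. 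The only vertex of $G_i$ with degree $>\Delta$ was $x_i$, because $d\le 1$, and it has been removed. The ripped ends have degree $2\le\Delta$; note $\Delta\ge 2$ since $\Delta\in 2\N$ and $x_i$ exists with positive even degree.
\item So there are $0$ vertices of degree $>\Delta$, and the condition of \cref{def:knitwork_large_degree} holds trivially.
\end{itemize}

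\emph{Step 3: treewidth.}
\begin{itemize}
\item Because $\bar H_i$ is controlled, $\torso(\bar H_i)\cong\bar H_i$, and each root set has size $1$.
\item By \cref{def:knitworks_treewidth}, $\tw{\HHH_i}=\max\bigl(\tw{G_i^\star+M_i},1\bigr)$.
\item This is at most $3(k+\Delta)$ by the bound already derived from \cref{lem:matching_degree_ones_low_cw} and \cref{thm:qualitative_equivalence_of_tw_and_cw}.
\end{itemize}

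I expect no real obstacle. The only delicate point is the bookkeeping in Step 1: checking that the ripped-end root $X_{x_i}$ is legitimately discarded once $M_i$ is added, and that the original roots and $\m$-domains are unaffected by ripping.
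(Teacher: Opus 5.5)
Your proposal is correct and is essentially the paper's proof. Both use the treewidth bound from the construction of $M_i$, the maximum degree being at most $\Delta$, and the fact that the sleeve and well-linkedness are inherited because neither the ripped ends nor the edges of $M_i$ touch $\dom(\mu_i^\star)$. You additionally spell out bookkeeping the paper leaves implicit: dropping $X_{x_i}$ as a root, reading $x_t^i$ as $v_t^i$, and the ripped ends becoming Eulerian of degree $2$.

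There are two small caveats.
\begin{itemize}
\item The constant is inherited from a slip in the paper. Since $\cw{G_i^\star}\le \Delta k$ and \cref{lem:matching_degree_ones_low_cw} only doubles this, what is actually derived is $\tw{G_i^\star+M_i}\le 3\Delta k$, not $3(k+\Delta)$. This is harmless, as any bound independent of $i$ suffices for \cref{cor:wqo_bounded_cw_Delta_version}.
\item Your justification of $\Delta\ge 2$ should go through a neighbour of $x_i$, not $x_i$ itself, whose degree gives no lower bound on $\Delta$. Such a neighbour exists because $G_i$ is loopless, it has positive even degree, and since $d\le 1$ its degree is at most $\Delta$.
\end{itemize}
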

    \begin{claimproof}
        By construction $\tw{H_i^\star + M_i} \leq 3(k+\Delta)$ and $\Delta(H_i) \leq \Delta$. Thus we are left to show that it is a well-linked $\Omega^\star$-knitwork of index $\ell$. The index is clear from the definition and the fact that it is a well-linked $\Omega^*$-knitwork follows immediately from the fact that $X_{x_i} \cap \dom_{H_i}(\mu_i^\star) = \emptyset$ and thus the edges $M_i$ have no influence on the well-linkedness nor on the definitions of $\mu_i^\star,\m_i^\star$ and $\Psi_i^\star$. The claim follows.
    \end{claimproof}

    By \cref{cor:wqo_bounded_cw_Delta_version} and \cref{obs:wqo_yields_infinite_chain} we derive that there exists $I \subseteq \N$ such that $\HHH_i \hookrightarrow \HHH_j$ by strong $\Omega^\star$-knitwork immersion for every $i<j$ with $i,j \in I$.

    \begin{claim}
        Let $i<j$ with $i,j \in I$. If $\HHH_i \hookrightarrow \HHH_j$ by $\Omega^\star$-knitwork immersion, then $\GGG_i \hookrightarrow^* \GGG_j$ by $\Omega$-knitwork immersion.
    \end{claim}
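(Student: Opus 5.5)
Given a strong $\Omega^\star$-knitwork immersion $\gamma\colon \HHH_i \hookrightarrow \HHH_j$, I would construct a weak immersion $\gamma'\colon \GGG_i \hookrightarrow^* \GGG_j$ by "un-ripping" both sides. The vertex map is $\gamma'(x_i) \coloneqq x_j$ and $\gamma'(u)\coloneqq\gamma(u)$ for $u \in V(G_i)\setminus\{x_i\}$. The labels $\star^-,\star^+$ are incomparable to every element of $V(\Omega)$ and to each other, so \crefdef{def:knitwork_immersion}{6} forces $\gamma(X^-_{x_i}) \subseteq X^-_{x_j}$ and $\gamma(X^+_{x_i}) \subseteq X^+_{x_j}$. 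For the same reason, no vertex of $G_i$ other than the ripped ends is mapped into $X_{x_j}$. Hence $\gamma'$ is injective on $V(G_i)$, and it respects roots because ripped ends are not roots. The label condition at $x_i$ holds because we arranged $(\Phi_i(x_i))_i$ to be a chain.

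Next I would define $\gamma'$ on edges. Take $e \in E(G_i)$ and its image path $\gamma(e)$ in $G_j^\star+M_j$. If $e$ is incident to $x_i$, say $e\in\rho^-(x_i)$, then in $G_i^\star$ it ends at some $v^-\in X^-_{x_i}$. So $\gamma(e)$ ends at $\gamma(v^-)\in X^-_{x_j}$, which has in-degree one in $G_j^\star$ and out-degree one in $G_j^\star+M_j$. Since the immersion is strong and $v^-$ is a model vertex, that last vertex is not internal, and the path ends with the unique $G_j^\star$-edge into $\gamma(v^-)$, namely an edge of $\rho_{G_j}(x_j)$. Symmetrically, edges of $\rho^+(x_i)$ start with an edge of $\rho^+_{G_j}(x_j)$.

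The remaining issue is the internal vertices of $\gamma(e)$. Any internal visit to a ripped end $w\in X_{x_j}$ must use the matching edge at $w$. So such a visit has the form: an edge into $w^-\in X^-_{x_j}$, then an $M_j$-edge $(w^-,w^+)$, then the edge out of $w^+$. In $G_j$ both $w^-$ and $w^+$ are $x_j$, so I would delete the $M_j$-edge from the sequence. What remains is a walk through $x_j$ in $G_j$ that uses an edge of $\rho^-(x_j)$ followed by an edge of $\rho^+(x_j)$, and all other edges lie in $E(G_j)=E(G_j^\star)$. Doing this for every edge gives edge-disjoint paths in $G_j$, possibly passing through $x_j$ internally, which is why the immersion is only weak. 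This splicing is the step I expect to need the most care: checking that the result is a path in the sense of \cref{def:paths} (edges distinct, consecutive incidences agree), and that the endpoints are right.

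Finally I would verify the $\mu,\m$ conditions of \cref{def:knitwork_immersion}, taking as witness the sub-knitwork of $\GGG_j$ spanned by the image. The sleeves of $\HHH_i$ and $\GGG_i$ agree away from $x_i$, and $x_i\notin\dom(\mu_i)\cup\dom(\m_i)$ by \cref{def:knitwork_large_degree}. The same holds for $x_j$, and the degree-one ripped ends are not in $\dom(\m_j^\star)$. Splicing does not change any transition $M_\gamma(v')$ at a vertex $v'\neq x_j$. Therefore conditions (2)--(5) transfer verbatim from $\gamma$ to $\gamma'$, and the claim follows.
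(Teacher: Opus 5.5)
Your construction is the paper's: keep $\gamma$ away from $x_i$, send $x_i$ to $x_j$, delete the $M_j$-edges from each $\gamma(e)$ with $e\in E(G_i)$, and accept non-strongness at $x_j$. However, the step ``for the same reason, no vertex of $G_i$ other than the ripped ends is mapped into $X_{x_j}$'' is not justified.

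Condition (6) of $\Omega$-knitwork immersion only constrains vertices in $\dom(\Psi_i^\star)$. Off $X_{x_i}$ the map $\Psi_i^\star$ is just $\Phi_i$, which may be partial; for plain Eulerian digraphs it is nowhere defined. So a non-root vertex $u\notin\dom(\Phi_i)\cup\dom(\mu_i)$ with in- and out-degree one may be sent to some $w^-\in X^-_{x_j}$, which also has in- and out-degree one in $G_j^\star+M_j$. The out-edge of $u$ is then routed along the $M_j$-edge at $w^-$.

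This does occur. Suppose $G_i,G_j$ consist of digons through $x_i,x_j$, and the matchings happen to close $H_i,H_j$ into single directed cycles. Then a strong immersion of the shorter cycle into the longer one can place an unlabelled digon vertex on a $\star^-$-end.

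In that case $\gamma'(u)=w^-\notin V(G_j)$, and your splicing fails too. The path $\gamma$ of the out-edge of $u$ begins with an $M_j$-edge, so after deleting it the path starts at $x_j$, not at the image of $u$. Your description of internal visits to ripped ends silently uses that no $\gamma(e)$, $e\in E(G_i)$, starts in $X^-_{x_j}$ or ends in $X^+_{x_j}$, which is exactly this missing fact. The paper's property (i) asserts the same equivalence on the same grounds, so you are not departing from the paper here. Still, the backward direction of that equivalence does not follow from the labels.

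The repair is cheap, because $\mathbf{T}(3(k+\Delta),0,\Delta,\ell;\Omega^\star)$ imposes nothing on the labelling.
\begin{itemize}
\item Add to $\Omega^\star$ one more element $\bot$ with $\bot\preceq^\star a$ for all $a\in V(\Omega)$, and $\bot$ incomparable to $\star^\pm$. A finite extension of a well-quasi-order is still one.
\item Put $\Psi_i^\star(u)\coloneqq\bot$ for every non-root $u\in V(G_i)\setminus(\{x_i\}\cup\dom(\Phi_i))$.
\end{itemize}
Now every non-root vertex outside $X_{x_i}$ carries a label in $V(\Omega)\cup\{\bot\}$, so (6) keeps it out of $X_{x_j}$; roots go to roots anyway. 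A vertex with $\Phi_i(u)\in V(\Omega)$ still lands in $\dom(\Phi_j)$ with $\Phi_i(u)\preceq\Phi_j(\gamma(u))$, since nothing in $V(\Omega)$ lies below $\bot$.

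With this, each $\gamma(e)$, $e\in E(G_i)$, starts outside $X^-_{x_j}$ and ends outside $X^+_{x_j}$, so it neither begins nor ends with an $M_j$-edge. The rest of your argument then goes through as you describe: the splicing, the unchanged transitions at $v'\neq x_j$, and conditions (2)--(6).
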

    \begin{claimproof}
        Let $\gamma: \HHH_i \hookrightarrow \HHH_j$ be a strong $\Omega^\star$-knitwork immersion. By construction $\Psi_i(X_{x_i}^-) = \{\star^-\}$ as well as $\Psi_i(X_{x_i}^+) = \{\star^+\}$ for every $i \in I$. Since $\star^-$ is only comparable to $\star^-$ and $\star^+$ is only comparable to $\star^+$ in $\Omega^\star$, we derive the following.
        \begin{itemize}
            \item[$(i)$] for $u \in V(H_i)$ it holds that $\gamma(u) \in X_{x_j}^- \iff u \in X_{x_i}^-$ as well as $\gamma(u) \in X_{x_j}^+ \iff u \in X_{x_i}^+$.
        \end{itemize}
        Since further for every $i \in I$ every vertex in $X_{x_i}$ is of degree two in $H_i$, we derive the following from Property $(i)$ and the \cref{def:knitwork_immersion} of knitwork immersion.
        \begin{itemize}
            \item[$(ii)$] If $e \in E(H_i)\setminus M_i$ and $u \coloneqq \tail_{H_i}(e) \in X_{x_i}$, then $u \in X_{x_i}^+$ and $\gamma(e)$ is a path starting with $e' \in E(H_j) \setminus M_j$ with $\tail_{H_j}(e') \in X_{x_i}^+$.
        \end{itemize}

        We construct $\gamma^\star\colon V(G_i) \cup E(G_i) \to G_j$ from $\gamma$ as follows. Recall that $V(H_i) = V(G_i) \setminus \{x_i\} \cup X_{x_i}$ for every $i \in I$ by construction. We set
            $$\restr{\gamma^\star}{V(G_i)}(v) \coloneqq \begin{cases} \gamma(v),&\text{if } v \neq x_i\\
        x_j,&\text{if } v=x_i,
            \end{cases}$$
        and hence \crefdef{def:immersion}{1} is satisfied.
 
        Similar to above, recall that $E(H_i) = E(G_i) \cup M_i$ for every $i \in I$ by construction. For $e=(u,v) \in E(H_i)$ let $P_e \coloneqq \gamma(e)$ be the respective path in $H_j$. There are two cases to consider.
        \begin{itemize}
            \item[$e \in M_i$] If $e \in M_i$ then $e \notin E(G_i)$ and we do not define $\gamma^\star$.
            \item[$e \notin M_i$] If $e \notin M_i$, then $e \in E(G_i)$. We define $\gamma^\star(e)$ to be the path obtained from $\gamma(e)$ as follows. If $\gamma(e) \cap M_j = \emptyset$ we simply set $\gamma^\star(e) \coloneqq \gamma(e)$; by construction $\gamma^\star(e)=\gamma(e)$ is a path in $G_j$. If both ends $u,v \notin X_{x_i}$, then by definition $\gamma^\star(u)=\gamma(u)$ as well as $\gamma^\star(v) = \gamma(v)$ and thus $\gamma^\star(e)$ starts in $\gamma^\star(u)$ and ends in $\gamma^\star(v)$; \crefdef{def:immersion}{3} is satisfied. If $u \in X_{x_i}$, say, then $\tail_{G_i}(e)=x_i$ by \cref{def:ripping} and again $\gamma^\star(e) \coloneqq \gamma(e)$ is a path in $G_j$ and it starts in $\gamma^\star(x_i)$ thus $3.$ is satisfied, and analogously if $v \in X_{x_i}$. (Note that not both $u$ and $v$ may be in $X_{x_i}$ since $G_i$ is loopless.)

            Thus assume that $\gamma(e) \cap M_j \neq \emptyset$; let $\gamma(e)=(e_1,\ldots,e_m)_{H_j}$. First note that $e_1,e_m \notin M_j$ by Property $(ii)$. In particular $e_1 \neq e_m$ since we assumed that $\gamma(e) \cap M_j \neq \emptyset$.
            
            We define $\gamma^\star(e)$ to be the subsequence of $\gamma(e)$ obtained by deleting every element of $M_j$ in the sequence. Note that $\gamma^\star(e)$ consists of at least the two edges $e_1,e_m \in E(G_j)$. We claim that $\gamma^\star(e)$ is a path in $G_j$, to see this let $(e_p,e_q) \subseteq \gamma^\star(e)$ be a subsequence of length $2$ (which exists since it contains at least $2$ edges). If $e_q = e_{p+1}$ there is nothing to show since then $\head_{H_j}(e_p) = \head_{G_j}(e_p)$. Otherwise $e_{p+1},\ldots,e_{p+{q-1}} \in M_j$ by definition of $\gamma^\star(e)$. But using Properties $(i)$ and $(ii)$ this implies that $\head_{H_j}(e_p) \in X_{x_i}^-$ as well as $\tail_{H_j}(e_q) \in X_{x_i}^+$. In particular, by construction $\head_{G_j}(e_p) = x_i$ and $\tail_{G_j}(e_q) = x_i$ and thus $(e_p,e_q)$ is a well-defined $2$-path in $G_j$ (here we lose strongness of the immersion). Since all the edges in $\gamma^\star(e)$ are distinct by definition, and since every subsequence of length $2$ is a well-defined path in $G_j$ we derive that $\gamma^\star(e)$ is a path in $G_j$. And similar to above one verifies that $\gamma^\star(e)$ satisfies \crefdef{def:immersion}{3}. 
        \end{itemize}
            Finally it is clear from construction that $\LLL \coloneqq \{\gamma^\star(e) \mid e\in E(G_j)\}$ is a linkage for the paths remain edge-disjoint by construction. Since \ref{def:immersion:1}, \ref{def:immersion:2}, and \ref{def:immersion:3} of \cref{def:immersion} are satisfied we derive that $\gamma^\star:G_i \hookrightarrow^* G_j$ is a weak immersion.

            \smallskip

            By definition of $\gamma^\star$ one easily verifies that it is a \emph{rooted} immersion~$\gamma^\star \colon \bar G_i \hookrightarrow^* \bar G_j$ as follows. Since $\gamma\colon \bar H_i \to \bar H_j$ is a rooted immersion we have by definition that $\gamma^\star(v_t^i) = \gamma(v_t^i) = v_t^j$, whence $\gamma^\star$ satisfies \crefdef{def:rooted_immersion}{2}. And using Property $(ii)$ above $\gamma^\star$ satisfies \crefdef{def:rooted_immersion}{3} since $\gamma^\star(e)$ and $\gamma(e)$ always agree in the first and last edge for every $e \in E(G_i)$; in particular for all edges of $\bigcup_{t=1}^\ell \rho_{G_i}(v_t^i)$.

            \smallskip

            Similarly, one easily verifies that $\gamma^\star$ is an $\Omega$-knitwork immersion since $x_i \notin \dom(\mu_i)$ and $x_j \notin \dom(\mu_j)$ and the immersion ``agrees'' with $\gamma$ away from $x_i$ and $x_j$. Note that by construction $\gamma^\star(x_i) = x_j$ and by our previous filtering we have $\Phi_i(x_i) \preceq \Phi_j(x_j)$. This concludes the proof. 
    \end{claimproof}

The claim immediately implies the lemma.
\end{proof}

We are confident that this technique can be lifted to prove the claim for $2$ and $3$ vertices of large degree using some more technicalities, but it ``fails'' for fixed $d \geq 4$, where new machinery needs to be deployed.

\bibliographystyle{alphaurl}
\newcommand{\etalchar}[1]{$^{#1}$}

\end{document}